\documentclass[%
paper=A4,%
pagesize,%
oneside,%
bibliography=totoc,%
]{scrartcl}

\usepackage[english]{babel}

\usepackage{amsmath,amssymb,amsthm,amsfonts,aliascnt}
\usepackage{thmtools} 
\usepackage{mathtools}
\usepackage{exscale}

\usepackage{bbm}
\usepackage{mathrsfs} 
\usepackage[T1]{fontenc}
\usepackage{lmodern}

\usepackage{graphicx}

\DeclareFontFamily{U}{bigshuffle}{}
\DeclareFontShape{U}{bigshuffle}{m}{n}{
  <5-8> s*[1.7] shuffle7
  <8->  s*[1.7] shuffle10
}{}
\DeclareSymbolFont{BigShuffle}{U}{bigshuffle}{m}{n}
\DeclareMathSymbol\bigshuffle{\mathop}{BigShuffle}{"001}

\usepackage{shuffle}

\usepackage{ifthen}

\usepackage{xcolor}

\usepackage{tabularx,booktabs}

\definecolor{links}{rgb}{0,0.3,0}
\definecolor{mylinks}{rgb}{0.8,0.2,0}
\usepackage[colorlinks=true,citecolor=blue,urlcolor=links,breaklinks=true,linkcolor=mylinks]{hyperref}

\newcommand{\orcidicon}[1]{\href{\orcidurl{#1}}{\protect\includegraphics[height=1.5ex]{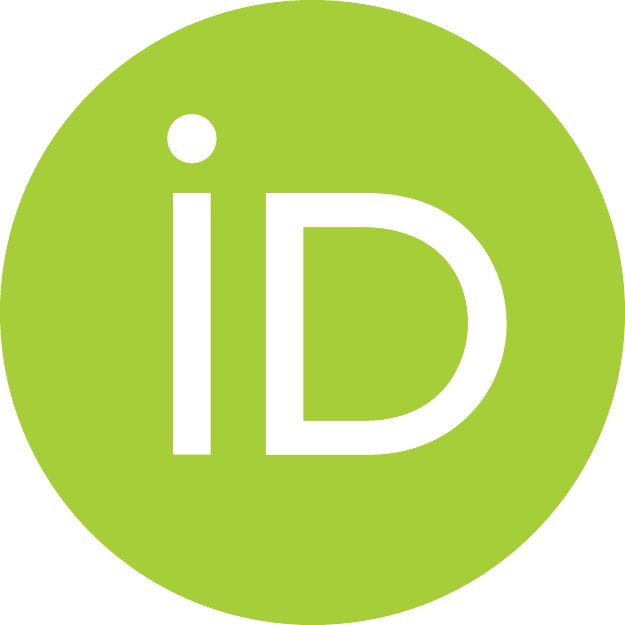}}}

\newcommand{\email}[1]{\href{mailto:#1}{#1}}

\numberwithin{equation}{section}

\usepackage[numbers,sort&compress]{natbib}
\bibliographystyle{JHEPsortdoi}

\newcommand{\Q}{\mathbbm{Q}}

\newcommand{\Z}{\mathbbm{Z}}
\newcommand{\R}{\mathbbm{R}}
\newcommand{\C}{\mathbbm{C}}
\newcommand{\FF}[1]{\mathbbm{F}_{#1}}
\newcommand{\RP}{\mathbbm{RP}}
\newcommand{\Projective}{\mathbbm{P}}


\newcommand{\Tens}[1]{\left\langle #1 \right\rangle}






\newlength{\wurelwidth}

\newcommand{\wurel}[2][=]{\mathrel{\mathop{#1}\limits_{\!\scalebox{0.5}{\makebox[\the\wurelwidth]{#2}}\!}}}


\newcommand{\td}[1][]{\mathrm{d}^{#1}}

\newcommand{\Char}{\chi}
\newcommand{\dChar}{\check{\chi}}

\newcommand{\TwoSum}[2]{\mathop{_{#1}\oplus_{#2}}}


\newcommand{\restrict}[2]{%
{\left. #1 \right|}_{#2}%
}

\DeclareMathOperator*{\Res}{Res}

\DeclareMathOperator{\Aut}{Aut}

\DeclareMathOperator{\RePart}{Re}


\newcommand{\defas}{\mathrel{\mathop:}=}

\newcommand{\set}[1]{
\left\{ #1 \right\}
}
\newcommand{\setexp}[2]{
\left\{ #1\!:\ #2 \right\}
}

\newcommand{\abs}[1]{
\left\lvert #1 \right\rvert
}
\newcommand{\norm}[1]{
\left\lVert #1 \right\rVert
}


\newcommand{\Dim}{d}

\newcommand{\dual}{\star}

\newcommand{\cOne}[1]{c_{#1}}

\newcommand{\Derksen}[1]{\mathcal{G}\left( #1 \right)}

\newcommand{\Crapo}[1]{\beta\left(#1\right)}
\newcommand{\Period}[1]{\mathcal{P}\left( #1 \right)}
\newcommand{\tPeriod}[1]{\mathcal{P}( #1 )}
\newcommand{\Hepp}[2][]{\mathcal{H}\ifthenelse{\equal{#1}{\empty}}{}{_{#1}}\ifthenelse{\equal{#2}{\empty}}{}{\!\left( #2 \right)}}
\newcommand{\HeppFlat}[2][]{\mathcal{H}^{\Flat}\ifthenelse{\equal{#1}{\empty}}{}{_{#1}}\ifthenelse{\equal{#2}{\empty}}{}{\!\left( #2 \right)}}
\newcommand{\HeppComp}[1]{\widehat{\mathcal{H}}\ifthenelse{\equal{#1}{\empty}}{}{\!\left( #1 \right)}}
\newcommand{\PeriodComp}[1]{\widehat{\mathcal{P}}\ifthenelse{\equal{#1}{\empty}}{}{\!\left( #1 \right)}}
\newcommand{\HeppOne}[1]{\mathcal{H}_{\text{\upshape{den}}}\left( #1 \right)}
\newcommand{\HeppOneFlat}[1]{\mathcal{H}^{\Flat}_{\text{\upshape{den}}}\left( #1 \right)}
\newcommand{\HeppOneLoop}[1]{\mathcal{H}_{\log}\left( #1 \right)}


\newcommand{\injects}{\hookrightarrow}

\newcommand{\STrees}[1]{\mathcal{T}_{#1}}
\newcommand{\Flags}[2][]{\mathcal{F}\ifthenelse{\equal{#1}{\empty}}{}{^{#1}}_{#2}}
\newcommand{\Lat}[2][]{\mathcal{L}\ifthenelse{\equal{#1}{\empty}}{}{^{#1}}_{#2}}
\newcommand{\Sing}[1]{\mathcal{S}_{#1}} 
\newcommand{\CycFlats}[1]{\mathcal{Z}_{#1}}
\newcommand{\Flats}[1]{\Lat[\Flat]{#1}}
\newcommand{\Cycs}[1]{\Lat[\OnePI]{#1}}

\newcommand{\UM}[2]{U_{#1}^{#2}}

\newcommand{\OnePI}{\ensuremath{\mathrm{1pi}}}
\newcommand{\Flat}{\mathrm{flat}}
\DeclareMathOperator{\Span}{span}
\newcommand{\closeM}[1]{\Span(#1)}
\DeclareMathOperator{\Cyc}{cyc}
\newcommand{\intM}[1]{\Cyc(#1)}
\DeclareMathOperator{\br}{br}
\newcommand{\Bridges}[1]{\br(#1)}

\newcommand{\CycFlatFactor}[1]{\rho\left( #1 \right)}
\newcommand{\CycFlatFactorDual}[1]{\tilde{\rho}\left( #1 \right)}

\newcommand{\PlogDiv}{\ensuremath{\text{p-log}}}

\newcommand{\perms}[1]{\mathfrak{S}_{#1}}
\newcommand{\HeppSec}[2][]{\mathcal{D}_{#2}\ifthenelse{\equal{#1}{\empty}}{}{^{#1}}}

\newcommand{\TreePath}[2]{P_{#1}^{#2}}

\newcommand{\field}{\phi}

\newcommand{\PsiPol}{\Psi}
\newcommand{\trop}{\mathrm{tr}}
\newcommand{\PsiTrop}{\PsiPol^{\trop}}
\newcommand{\PhiTrop}{\Phi^{\trop}}

\newcommand{\Newton}[1]{\mathcal{N}_{#1}}
\newcommand{\Polar}[1]{\mathcal{N}_{#1}^{\circ}}

\newcommand{\GMat}[1]{\mathcal{M}( #1 )} 
\newcommand{\GGMat}[1]{\mathcal{M}\left( #1 \right)} 
\newcommand{\DMat}[1]{{#1}^{\dual}} 
\newcommand{\Bases}[1]{\mathcal{B}_{#1}} 
\newcommand{\Circuits}[1]{\mathcal{C}_{#1}} 
\newcommand{\Indeps}[1]{\mathcal{I}_{#1}}

\newcommand{\exc}[1]{\delta_{#1}}

\newcommand{\sdc}[2][]{\omega%
\ifthenelse{\equal{#1}{}}{}{_{#1}}%
\ifthenelse{\equal{#2}{}}{}{( #2 )}}
\newcommand{\Dsdc}[2][]{\Delta^{#2}\ifthenelse{\equal{#1}{\empty}}{}{_{#1}}\sdc{}}

\newcommand{\ConvDom}{\Theta}

\newcommand{\asyO}[1]{\mathcal{O}( #1 )}

\DeclareMathOperator{\rk}{rk}
\newcommand{\rank}[1]{\rk(#1)}
\newcommand{\loops}[1]{\ell(#1)}
\newcommand{\loopsIn}[2]{\ell_{#1}(#2)}
\newcommand{\VG}[1]{V_{#1}}

\newcommand{\EG}[1]{E_{#1}}
\newcommand{\nCG}[1]{\kappa\ifthenelse{\equal{#1}{\empty}}{}{(#1)}}
\newcommand{\TriG}{\Graph[0.16]{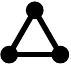}}
\newcommand{\StarG}{\Graph[0.16]{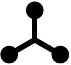}}

\newcommand{\Graph}[2][1.0]{%
\vcenter{\hbox{\includegraphics[scale=#1]{graphs/#2}}}%
}
\newcommand{\gKite}{\Graph[0.3]{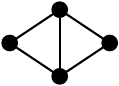}}
\newcommand{\gBox}{\Graph[0.25]{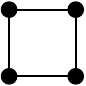}}
\newcommand{\gTri}{\Graph[0.27]{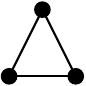}}
\newcommand{\gEdge}{\Graph[0.3]{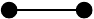}}
\newcommand{\gKfour}{\Graph[0.14]{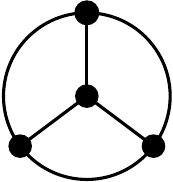}}

\newcommand{\RadInt}[1]{\rho\ifthenelse{\equal{#1}{\empty}}{}{(#1)}}

\DeclareMathOperator{\pr}{pr}

\newcommand{\SP}{x}
\newcommand{\SPlog}{y}
\newcommand{\ind}{a}


\newcommand{\WS}[1]{W_{#1}}
\newcommand{\Fan}[1]{F_{#1}}
\newcommand{\Cycle}[1]{C_{#1}}
\newcommand{\Bond}[1]{D_{#1}}
\newcommand{\Path}[1]{P_{#1}}

\newcommand{\gf}[1]{\mathscr{#1}}

\newcommand{\uv}[1]{\vec{\mathsf{e}}_{#1}}


\newcommand{\vt}[1]{\vec{\mathsf{v}}_{#1}}




\newcommand{\mzv}[2][]{\zeta^{#1}( #2 )}

\newcommand{\apode}{\mathsf{S}}
\newcommand{\buch}[1]{\langle #1 \rangle}


\DeclareMathOperator{\conv}{conv}
\DeclareMathOperator{\Vol}{Vol}

\newcommand{\prO}[1]{\pr^{\bot}_{#1}}


\theoremstyle{plain}
\newtheorem{theorem}{Theorem}[section]
\newtheorem{lemma}[theorem]{Lemma}
\newtheorem{corollary}[theorem]{Corollary}
\newtheorem{proposition}[theorem]{Proposition}
\newtheorem{conjecture}[theorem]{Conjecture}

\theoremstyle{definition}
\newtheorem{example}[theorem]{Example}
\newtheorem{definition}[theorem]{Definition}

\theoremstyle{remark}
\newtheorem{remark}[theorem]{Remark}


\newcommand{\lrs}{\href{http://cgm.cs.mcgill.ca/~avis/C/lrs.html}{\texttt{lrs}}}
\newcommand{\cddrp}{\href{http://www.inf.ethz.ch/personal/fukudak/cdd_home/}{\texttt{cddr+}}}

\pdfsuppresswarningpagegroup=1

\newcommand{\MyTitle}{Hepp's bound for Feynman graphs and matroids}

\title{\MyTitle}
\author{\thanks{%
	University of Oxford, UK,
	\email{erik.panzer@maths.ox.ac.uk}}
	Erik Panzer
	\orcidicon{0000-0002-9897-5812}
}
\date{\today}

\hypersetup{%
	pdftitle = {\MyTitle},
	pdfauthor = {Erik Panzer}
}

\begin{document}
\maketitle
\begin{abstract}
	We study a rational matroid invariant, obtained as the tropicalization of the Feynman period integral. It equals the volume of the polar of the matroid polytope and we give efficient formulas for its computation. This invariant is proven to respect all known identities of Feynman integrals for graphs. We observe a strong correlation between the tropical and transcendental integrals, which yields a method to approximate unknown Feynman periods.
\end{abstract}

\tableofcontents

\section{Introduction}
To a connected graph $G$ with $N$ edges, Kirchhoff \cite{Kirchhoff:GalvanischeStroeme} attached the graph polynomial
\begin{equation}
	\PsiPol_G = \sum_{T \in \STrees{G}} \prod_{e\notin T} \SP_e
	\in \Z[\SP_1,\ldots,\SP_N]
	\label{eq:psipol}%
\end{equation}
given by a sum over the set $\STrees{G}$ of spanning trees.
In the context \cite{Nakanishi:GraphTheoryFeynmanIntegrals,BognerWeinzierl:GraphPolynomials} of perturbative quantum field theory, the variables $x_e$ associated to each edge $e$ are called Schwinger parameters. The scalar Feynman integral encoded by $G$ contributes the \emph{period} \cite{BlochEsnaultKreimer:MotivesGraphPolynomials,Periods}
\begin{equation}
	\Period{G}
	=\left( \prod_{e=1}^{N-1} \int_0^{\infty} \td \SP_e \right) \restrict{\frac{1}{\PsiPol_G^{\Dim/2}}}{\SP_N=1}
	\label{eq:period}%
\end{equation}
to the beta function of the field theory in $\Dim$ dimensions of space-time \cite{Schnetz:Census,Todorov:CausalityPositionSpace,KompanietsPanzer:phi4eps6}.
This integral is well-defined when $G$ is \emph{primitive logarithmically divergent} (\PlogDiv), which means that $\sdc{G}=0$ and $\sdc{\gamma}>0$ for every non-empty, proper subgraph $\gamma \subset G$, where
\begin{equation*}
	\sdc{G} = 
	\abs{G} - \tfrac{\Dim}{2} \cdot \loops{G}
	=
	\#\set{\text{edges in $G$}} - \tfrac{\Dim}{2} \cdot \# \set{\text{loops in $G$}}
\end{equation*}
is called the \emph{superficial degree of convergence} of $G$.
For example, the complete graph $K_4$ with $\abs{K_4}=6$ edges and $\loops{K_4}=3$ loops is {\PlogDiv} in $\Dim=4$ dimensions. Its period is
\begin{equation}
	\Period{K_4}
	= \Period{\Graph[0.25]{w3small}}
	= 6 \mzv{3}
	\approx 7.21
	\label{eq:ws3-period}%
\end{equation}
in terms of the Riemann zeta function. The transcendental numbers \cite{BroadhurstKreimer:KnotsNumbers,PanzerSchnetz:Phi4Coaction,BrownSchnetz:K3phi4} emerging as periods of graphs are extremely difficult to compute exactly, and even approximations are very challenging. For most graphs, the periods thus remain unknown.

\begin{figure}
	\centering%
	$
		\Graph[0.7]{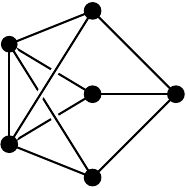}
		\quad\xleftarrow{\text{delete $v$}}\quad
		\Graph[0.7]{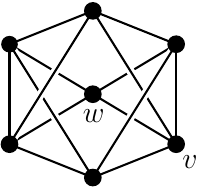}
		\quad\xrightarrow{\text{delete $w$}}\quad
		\Graph[0.7]{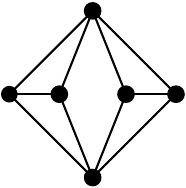}
	$%
	\caption{Two non-isomorphic graphs with the same completion.}%
	\label{fig:completion}%
\end{figure}
This complexity stimulates the search for simpler graph invariants, that are easier to compute, but still capture information about the period \cite{Crump:PhD}. To be meaningful, such invariants should obey period identities. For example, conformal invariance \cite{Broadhurst:5loopsbeyond} equates the periods of the complements $G{\setminus}v$ and $G{\setminus}w$ in \autoref{fig:completion}. This \emph{completion} relation and the \emph{product} identity show that
\begin{equation*}
	\Period{\Graph[0.3]{5Rv}}
	= \Period{\Graph[0.3]{5Rw}}
	= \Period{\Graph[0.25]{w3small}}^2
	= 36 \zeta(3)^2.
\end{equation*}
Further relations include planar \emph{duality}, the \emph{twist} \cite{Schnetz:Census} and the recently discovered \emph{Fourier split} \cite{HuSchnetzShawYeats:Further}, which generalizes the \emph{uniqueness relations} \cite{Kazakov:MethodOfUniqueness}. It is a challenge to construct non-trivial graph invariants with these symmetries.
In fact, apart from the period itself, only two such invariants had been found so far: the $c_2$ invariant \cite{Schnetz:Fq} and the (extended) graph permanent \cite{Crump:ExtendedPermanent,CrumpDeVosYeats:Permanent}. 

The $c_2$ invariant is constructed from the point counts of the hypersurface $\set{\PsiPol_G=0} \subset \FF{q}^N$ over finite fields $\FF{q}$. It is related to the number theory of the period \cite{BrownYeats:SpanningForestPolynomials,BrownSchnetz:ModularForms,PanzerSchnetz:Phi4Coaction,BrownSchnetz:K3phi4}, but several of the symmetries remain conjectural despite recent progress \cite{Yeats:SpecialCaseCompletion,Doryn:4face,Doryn:InvariantInvariant}.
For the permanent (of copies of the incidence matrix), the first four symmetries above are proven. It is not yet clear, however, what the permanent implies for the period.

In this paper, we study a new invariant obtained by a drastic simplification of the period integral: In the spirit of tropical geometry, replace $\Psi_G$ by its maximal monomial,
\begin{equation}
	\PsiTrop_G \defas \max_{T \in \STrees{G}} \prod_{e\notin T} \SP_e.
	\label{eq:psitrop}%
\end{equation}
This function is locally just some monomial, but which particular monomial it is depends on the actual values of the Schwinger parameters.
We refer to the corresponding integral
\begin{equation}
	\Hepp{G} \defas
	\left( \prod_{e=1}^{N-1} \int_0^{\infty} \td \SP_e \right) \restrict{\frac{1}{(\PsiTrop_G)^{\Dim/2}}}{\SP_N=1}
	\in \Q
	\label{eq:hepp}%
\end{equation}
as \emph{the Hepp bound}, which defines a rational number for each {\PlogDiv} graph. It is indeed a bound on the period, since we have $\PsiTrop_G \leq \PsiPol_G \leq \PsiTrop_G \cdot \abs{\STrees{G}}$ and therefore
\begin{equation}
	\Hepp{G}\cdot\abs{\STrees{G}}^{-\Dim/2}
	\leq \Period{G} \leq \Hepp{G}.
	\label{eq:hepp-period-bound}%
\end{equation}
Hepp \cite{Hepp:BP} used this idea to deduce the convergence of the integral $\Period{G}$ from a power-counting argument, by dissecting the integration domain into regions
\begin{equation}
	\HeppSec{\sigma}
	=\big\{x_{\sigma(1)}<\cdots<x_{\sigma(N)}\big\} 
	\subset \R_+^N
	\label{eq:Hepp-sector}%
\end{equation}
according to the permutation $\sigma$ 
of the edges determined by the order of the Schwinger parameters. These regions $\HeppSec{\sigma}$, called \emph{Hepp sectors}, have wide applications to renormalization, regularization and asymptotic expansion of Feynman integrals \cite{CalanRivasseau:Phi44,BergereCalanMalbouisson:Asymptotic,Schultka:ToricFeynman,SmirnovSmirnov:HeppSpeer}.

\paragraph{Symmetries}
The surprising observation is that the crude bound \eqref{eq:hepp} is in fact very well behaved and closely related to the actual period \eqref{eq:period}.
Firstly, we will prove
\begin{theorem}\label{thm:symmetries}
	The Hepp bound respects the five period symmetries from \cite{Schnetz:Census,HuSchnetzShawYeats:Further}.
\end{theorem}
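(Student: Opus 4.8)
The plan is to first reduce $\Hepp{G}$ to a purely combinatorial object and then to recognise each of the five symmetries as an operation under which this object is manifestly invariant. Dissecting the integration domain of \eqref{eq:hepp} into the Hepp sectors \eqref{eq:Hepp-sector}, on each sector $\HeppSec{\sigma}$ the tropical polynomial $\PsiTrop_G$ is a single monomial, whose exponents are governed by the ranks of the subgraphs spanned by the smallest Schwinger parameters. After the standard sector substitution that blows $\HeppSec{\sigma}$ up to a cube, the integral factorises and each factor contributes the reciprocal of a superficial degree of convergence $\sdc{\gamma}$ of a subgraph in a nested chain, so I expect a flag-sum formula
\begin{equation*}
	\Hepp{G}
	= \sum_{\emptyset \subsetneq \gamma_1 \subsetneq \cdots \subsetneq \gamma_k \subsetneq G}
	\ \prod_{i=1}^{k} \frac{1}{\sdc{\gamma_i}} ,
\end{equation*}
summed over maximal flags of flats. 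Since $\sdc{\gamma}=\abs{\gamma}-\tfrac{\Dim}{2}\loops{\gamma}$ depends only on the number of edges and loops of $\gamma$, hence only on the cardinality and rank of the corresponding flat, this exhibits $\Hepp{G}$ as an invariant of the cycle matroid $\GMat{G}$ --- concretely, as the volume of the polar of its matroid polytope.

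With this formula in hand, the \emph{product} and \emph{duality} symmetries are the easy cases. A one-vertex join realises the direct sum of matroids, under which $\sdc{}$ is additive; the flag sum over the interleavings of flags of the two factors then collapses, by a shuffle identity, to the product $\Hepp{G_1}\,\Hepp{G_2}$. Planar duality sends the cycle matroid $\GMat{G}$ to its dual $\DMat{\GMat{G}}$; here the flag sum should be stable under matroid duality once each cyclic flat is paired with its complement, and this is cleanest after passing to a completed, self-dual normalisation of $\Hepp{G}$.

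The remaining three symmetries are most naturally phrased on the completion. The \emph{completion} (conformal) relation is the crux: the two deletions $H{\setminus}v$ and $H{\setminus}w$ of a common completion $H$ (\autoref{fig:completion}) are in general non-isomorphic and carry different matroids, yet must share the same Hepp bound. My plan is to attach a completed Hepp bound $\HeppComp{H}$ to $H$ and to prove that $\Hepp{H{\setminus}v}=\HeppComp{H}$ independently of the deleted vertex $v$; this amounts to reorganising the flag sum on $H{\setminus}v$ into a vertex-symmetric sum intrinsic to $H$. The \emph{twist} and the \emph{Fourier split} would then be realised as operations on $H$ --- respectively a regluing of two subgraphs along four shared vertices, and the split that generalises the method of uniqueness --- and handled by an explicit bijection of the contributing flags that preserves the product $\prod_i 1/\sdc{\gamma_i}$.

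I expect the completion relation to be the main obstacle, because it is the one symmetry that genuinely relates distinct matroids rather than different presentations of a single one: making the flag sum manifestly independent of the deleted vertex is where the real work lies. The twist and the Fourier split should then follow as corollaries of the same completed, duality-aware bookkeeping.
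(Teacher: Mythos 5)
Your reduction step already breaks down. Dissecting into Hepp sectors gives the paper's formula \eqref{eq:multi-hepp-from-sectors}: a sum over \emph{maximal chains of edge subsets}, one edge added at a time, not over flags of flats. If one instead groups sectors by flags of flats, numerator factors necessarily appear: with unit indices in $\Dim=4$, the correct formula \eqref{eq:hepp-flat-flags} carries the factors $\abs{\gamma_1}\cdot\abs{\gamma_2\setminus\gamma_1}\cdots\abs{G\setminus\gamma_{r-1}}$ in each term. The numerator-free flag sum you wrote down is precisely the \emph{improved bound} of \autoref{sec:improved-bounds} ($\HeppOne{G}$ for bridgeless flags, $\HeppOneFlat{G}$ for flats), and the paper shows these \emph{fail} the symmetries you want: $\HeppOne{G}$ violates duality, twist and the product, while $\HeppOneFlat{G}$ even violates completion ($218\neq 216$ for the two graphs of \autoref{fig:completion}). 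So the object your argument reduces to does not satisfy the theorem. A second concrete error is the product step: a one-vertex join is a \emph{direct sum} of cycle matroids, and for direct sums the shuffle identity you invoke (\autoref{lem:dChar=0}) forces the Hepp bound to \emph{vanish}, not factorize --- that is exactly \autoref{prop:hepp-disconnected}. The period product identity concerns $2$-separations, i.e.\ $2$-sums; it holds in the position-space normalization $\HeppComp{}$ (\autoref{thm:hepp-product}, \eqref{eq:Hepp-product}) and is proven by a change of variables in the Mellin integral, not by interleaving flags. Only your duality sketch (pairing each flag with the complementary flag) survives as stated.

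For the three remaining symmetries the proposal restates the goal without supplying a mechanism: ``reorganize the flag sum into a vertex-symmetric sum'' (completion) and ``an explicit bijection of contributing flags'' (twist, Fourier split) are wishes, and no such bijections are known. The paper's proofs of completion and twist require the \emph{functional} Hepp bound $\Hepp{G,\vec{\ind}}$ at generic conformal indices: all poles are simple and sit on hyperplanes $\sdc{\gamma}=0$ for induced $\gamma$ with $\gamma$ and $G/\gamma$ biconnected (\autoref{lem:completion-poles}), residues factorize as in \eqref{eq:hepp-residue}, and an induction on the number of vertices shows the two rational functions have equal residues at every pole; the difference is then a pole-free, homogeneous degree-zero function, hence a constant, killed by a single evaluation. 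This BCFW-style induction cannot even be formulated at unit indices, which is where your whole argument lives. Likewise the Fourier split is not a term-by-term flag bijection --- it dualizes the indices on one side of the $3$-separation --- and the paper instead derives it (\autoref{thm:fourier-split-hepp}) from an identity of spanning-forest polynomials that descends to the tropical polynomials because all their coefficients are positive.
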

This suggests that graphs with equal periods might also have the same Hepp bound. Analogous conjectures were made for the $c_2$ invariant and the permanent mentioned above. We conjecture that, for the Hepp bound, also the converse is true---at least in the case of $\field^4$ theory \cite{KleinertSchulteFrohlinde:CriticalPhi4}.
Concretely, we say that a graph $G$ is in $\field^4$ if it is {\PlogDiv} in $\Dim=4$ dimensions and every vertex has degree at most $4$ (for example $K_4$ from before).
More than a thousand $\field^4$ periods are known \cite{PanzerSchnetz:Phi4Coaction}, and they are all in agreement with:
\begin{conjecture}\label{con:faithful}
	Two $\field^4$ graphs have equal periods if and only if they have equal Hepp bounds.
\end{conjecture}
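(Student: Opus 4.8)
Since the statement is a \emph{conjecture}, I do not expect a complete proof to be within reach; instead I would aim to reduce it to sharper, more tractable sub-problems. The statement is an equivalence, so I would treat the two implications separately. The forward direction, that equal periods force equal Hepp bounds, I would attack through \autoref{thm:symmetries}: the period coincidences observed among $\field^4$ graphs are, to current knowledge, all generated by the five symmetries (completion, product, duality, twist and Fourier split). Granting the (widely believed but open) \emph{completeness} assumption that these five operations generate exactly the relation ``equal period'' on $\field^4$ graphs, any two such graphs with equal period are linked by a finite chain of these moves, and since \autoref{thm:symmetries} shows the Hepp bound is invariant under each move, it is preserved along the chain. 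Thus the forward direction follows from \autoref{thm:symmetries} together with completeness.

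The converse---equal Hepp bounds force equal periods---is the surprising and genuinely hard half, since there is no \emph{a priori} reason a single rational number $\Hepp{G}\in\Q$ should separate graphs with distinct periods. Using the same completeness assumption, I would reduce this to a purely combinatorial statement that avoids transcendence entirely: that the Hepp bound is a \emph{complete invariant} for the equivalence relation generated by the five symmetries on $\field^4$ graphs, i.e.\ that $\Hepp{G}=\Hepp{G'}$ implies $G$ and $G'$ lie in the same symmetry class. Combined with completeness, equal Hepp bounds then yield the same class and hence equal $\Period{G}$. This reformulation has the virtue of replacing a transcendental question by a question about rational numbers and the combinatorics of the matroid polytope.

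To establish this faithfulness I would exploit the explicit formulas for $\Hepp{G}$ as a volume of the polar matroid polytope. The plan is to show that the rational value, together with its behaviour under deletion and contraction of edges (which govern the Hepp sectors \eqref{eq:Hepp-sector} and should satisfy a recursion), records enough of the underlying matroid to pin the graph down up to the five symmetries. The restriction to $\field^4$ is essential here, since the converse is known to fail for general graphs; the degree bound must enter by constraining which matroids arise and thereby rigidifying the reconstruction. A realistic first target would be to prove faithfulness on the subfamilies where completion and the product decomposition already organise the census, and then to incorporate duality, twist and Fourier split.

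The main obstacle is precisely this faithfulness: there is no structural mechanism known that forces a single rational invariant to distinguish all inequivalent $\field^4$ classes, and with infinitely many graphs one cannot settle it by finite checking. Worse, the reduction itself rests on the completeness of the five symmetries, which is open, so the conjecture sits atop another conjecture. I therefore expect the decisive difficulty to be the identification of finer, Hepp-derived invariants---plausibly the full collection of sector contributions, or the face structure of the matroid polytope---rich enough to reconstruct the graph, while remaining controllable by the explicit volume formulas.
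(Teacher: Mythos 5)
The statement you are addressing is a \emph{conjecture}: the paper offers no proof of it, only supporting evidence, namely \autoref{thm:symmetries} (the Hepp bound respects all five period symmetries) and agreement with the more than one thousand $\field^4$ periods known up to $11$ loops. So the question is whether your proposed reduction is sound, and it is not: both halves of your strategy are contradicted by the paper's own data.

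For the converse direction, you propose to show that the Hepp bound is a complete invariant for the equivalence relation generated by the five symmetries, i.e.\ that $\Hepp{G}=\Hepp{G'}$ forces $G$ and $G'$ into the same symmetry class. This sub-problem is not merely hard but false: the pairs in \eqref{eq:hepp-8loop-pairs}, namely $P_{8,30}{\setminus}v$, $P_{8,36}{\setminus}v$ and $P_{8,31}{\setminus}v$, $P_{8,35}{\setminus}v$ (\autoref{fig:8-loop-hepp-coincidences}), have equal Hepp bounds yet are related by none of the five operations, and \autoref{tab:phi4-identities} tallies hundreds of further such unexplained coincidences through $11$ loops. Indeed, the paper emphasizes that the whole interest of \autoref{con:faithful} is that it would yield period identities \emph{beyond} those generated by the known symmetries. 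For the forward direction, your ``completeness'' hypothesis (all period coincidences are generated by the five moves) is not only open but incompatible with the conjecture itself in light of the data: if the conjecture holds, the pairs in \eqref{eq:hepp-8loop-pairs} have equal periods while lying in distinct symmetry classes, so completeness fails. A proof strategy cannot rest on an assumption that the target statement, combined with verified computations, refutes. Any genuine attack on the conjecture must therefore identify a new combinatorial mechanism---one the paper explicitly states is currently unknown---that relates graphs with equal Hepp bounds, rather than routing everything through the five known symmetries.
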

This significant strengthening of \autoref{thm:symmetries} is wrong for the $c_2$ invariant and the permanent, because there exist pairs of $\field^4$ graphs with the same $c_2$ invariant or permanent, but whose periods are known and different. There still seems to be a possibility, however, that $c_2$ and the permanent combined might distinguish periods \cite[Appendix~A]{Crump:ExtendedPermanent}.

The ``faithfulness'' of the Hepp bound according to \autoref{con:faithful} would imply new relations between yet unknown periods, which are not explained by the five operations discussed in \cite{Schnetz:Census,HuSchnetzShawYeats:Further}. The first examples of still unproven, conjectural identities of $\field^4$ periods appear at $8$ loops, where in the notation of \cite{Schnetz:Census} we find two pairs
\begin{equation}
	\Hepp{P_{8,30}{\setminus}v}
	= \Hepp{P_{8,36}{\setminus}v}
	= \tfrac{1724488}{3}
	\quad\ \text{and}\ \quad
	\Hepp{P_{8,31}{\setminus}v}
	= \Hepp{P_{8,35}{\setminus}v}
	= 536760
	\label{eq:hepp-8loop-pairs}%
\end{equation}
of graphs with equal Hepp bounds and thus conjecturally equal periods (see \autoref{fig:8-loop-hepp-coincidences}). Of these four, only $\Period{P_{8,31}} \approx 460.09$ could be computed exactly in \cite{Schnetz:Census}. The combinatorial origin of the equalities \eqref{eq:hepp-8loop-pairs} is currently not understood.

\paragraph{Hepp--Period correlation}
For explicit computations of the Hepp bound, the integral representation \eqref{eq:hepp} is not very practical.
In \autoref{prop:hepp-1pi-flags} we rewrite it as a sum over flags of bridgeless subgraphs, a generalization of ear decompositions. This formula reads
\begin{equation*}
	\Hepp{G} = 
	\sum_{\gamma_1\subsetneq \cdots \subsetneq \gamma_{\ell}=G}
	\frac{
		\abs{\gamma_1} \cdot \abs{\gamma_2\setminus\gamma_1}\cdots \abs{G\setminus \gamma_{\ell-1}}
	}{
		\sdc{\gamma_1} \cdots \sdc{\gamma_{\ell-1}}
	}
\end{equation*}
and allows the calculation of $\Hepp{G}$ for most graphs of interest. We used it to obtain the Hepp bounds of all $\field^4$ graphs with $\loops{G}\leq 11$ loops. For example, we find
\begin{equation}
	\Hepp{K_4} 
	= \Hepp{\Graph[0.25]{w3small}}
	= 84, \label{eq:ws3-hepp}%
\end{equation}
which should be compared with the much smaller period \eqref{eq:ws3-period}. So the Hepp bound is very crude indeed and it can exceed the period by several orders of magnitude. Surprisingly, this bound nonetheless allows us to predict the numeric values of periods within a range of a few percent.
Namely, we observe that the period is very strongly correlated with the Hepp bound, as illustrated in \autoref{fig:hepp-period}. At higher loop orders, a smooth curve interpolating the known periods then gives estimates for unknown periods.

\begin{figure}%
	\centering%
	\includegraphics[width=0.8\textwidth]{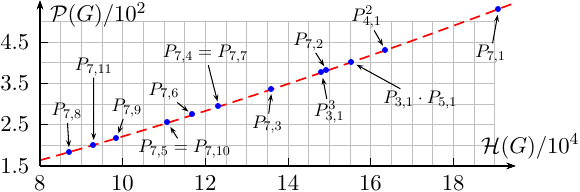}%
\caption{The period as a function of the Hepp bound for all $\field^4$ periods at $7$ loops, the dashed line is a power law fit $\Period{G} \approx 3.96/10^5 \cdot [\Hepp{G}]^{1.3495}$. The graphs are labelled according to \cite{Schnetz:Census}, see \autoref{tab:hepp-7loop} for details.}%
	\label{fig:hepp-period}%
\end{figure}
It is remarkable that the rational number $\Hepp{G}$, which is easy to compute for any graph, gives such a sensitive measure of the intricate period integral $\Period{G}$. This connection was exploited in \cite{KompanietsPanzer:phi4eps6} to estimate the contributions from higher orders in perturbation theory to a calculation of the beta function in $\field^4$ theory, and we are optimistic that generalizations and refinements of this method can provide a new approach to the numeric evaluation of Feynman integrals, efficient even at large loop orders.

\paragraph{Geometry}
The correlation of $\Hepp{G}$ with $\Period{G}$ is so far an empirical observation, but it appears to be related to a geometric interpretation of the Hepp bound. The approximation of Feynman integrals by the method of sector decomposition uses a resolution of singularities \cite{BognerWeinzierl:ResolutionOfSingularities} of the graph hypersurface $\set{\PsiPol_G=0}$. This is achieved most efficiently by a triangulation of the normal fan of the Newton polytope of $\PsiPol_G$ \cite{KanekoUeda:GeometricSector,SchlenkZirke:SecDec3.0,BHJKSZ:SecDec3.0,Schultka:ToricFeynman}. It is also known as the \emph{spanning tree polytope} of $G$, and we define it as the convex hull
\begin{equation}
	\Newton{G} \defas \conv \setexp{\vt{T}}{T \in \STrees{G}}
	\subset \R^N
	\label{eq:Newton}%
\end{equation}
of the characteristic vectors $\vt{T}$ of the spanning trees $T$, with coordinates $\vt{T,e}=1$ for all edges $e\in T$ in the tree, and $\vt{T,e}=-1$ whenever $e\notin T$. The \emph{polar} of this polytope is
\begin{equation}
	\Polar{G}
	= \bigcap_{T \in \STrees{G}} \setexp{\vec{\SPlog}}{\vec{\SPlog} \cdot \vt{T} \leq 1}
	\subset \R^N.
	\label{eq:polar}%
\end{equation}

\begin{theorem}\label{thm:Hepp=Vol}
	For a graph $G$ with $N$ edges that is {\PlogDiv} in $\Dim=4$ dimensions, we have
	\begin{equation}
		\Hepp{G} = (N-1)! \cdot \Vol \left( \Polar{G} \cap \set{\SPlog_N = 0} \right).
		\label{eq:Hepp=Vol}%
	\end{equation}
\end{theorem}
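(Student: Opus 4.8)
The plan is to pass to logarithmic Schwinger coordinates, simplify the resulting integrand using the $\pm 1$ structure of the vectors $\vt{T}$, and recognise the outcome as the exponential of a support function, so that the volume drops out of a classical layer-cake identity.

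First I would substitute $\SP_e = e^{\SPlog_e}$ for $e=1,\dots,N-1$, turning the domain $\R_+^{N-1}$ into $\R^{N-1}$ and the normalisation $\SP_N=1$ into $\SPlog_N=0$. The measure picks up the Jacobian $\prod_{e=1}^{N-1}\SP_e = \exp\!\bigl(\sum_{e}\SPlog_e\bigr)$ (using $\SPlog_N=0$), while \eqref{eq:psitrop} gives $\log\PsiTrop_G = \max_{T}\sum_{e\notin T}\SPlog_e$. Setting $S=\sum_{e}\SPlog_e$ and using $\sum_{e\notin T}\SPlog_e = \tfrac12\bigl(S-\vt{T}\cdot\vec{\SPlog}\bigr)$, the full exponent of the integrand (here the value $\Dim/2=2$ is essential, since it cancels the $\tfrac12$) becomes
\[
S-2\max_{T}\sum_{e\notin T}\SPlog_e = S-\bigl(S-\min_{T}\vt{T}\cdot\vec{\SPlog}\bigr) = \min_{T}\vt{T}\cdot\vec{\SPlog},
\]
so that $\Hepp{G}=\int_{\R^{N-1}}\exp\!\bigl(\min_{T}\vt{T}\cdot\vec{\SPlog}\bigr)\,\td\SPlog_1\cdots\td\SPlog_{N-1}$, the integral being taken over $\set{\SPlog_N=0}$.

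Next I would introduce the support function $\bar h(\vec{\SPlog})=\max_{T}\vt{T}\cdot\vec{\SPlog}$ of $\Newton{G}$, restricted to $\set{\SPlog_N=0}$. Since $\min_{T}\vt{T}\cdot\vec{\SPlog}=-\bar h(-\vec{\SPlog})$, the measure-preserving reflection $\vec{\SPlog}\mapsto-\vec{\SPlog}$ rewrites the Hepp bound as $\int_{\R^{N-1}}e^{-\bar h}$. The function $\bar h$ is convex and positively homogeneous of degree one (a maximum of linear forms), and by \eqref{eq:polar} its unit sublevel set is exactly $\set{\bar h\leq 1}=\Polar{G}\cap\set{\SPlog_N=0}$. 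Writing $K$ for this set and using homogeneity to identify $\set{\bar h\leq s}=sK$, the layer-cake computation
\[
\int_{\R^{N-1}}e^{-\bar h}\,\td\vec{\SPlog} = \int_{0}^{\infty}e^{-s}\,\Vol(sK)\,\td s = (N-1)!\,\Vol(K)
\]
gives precisely \eqref{eq:Hepp=Vol}.

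The step I expect to be the main obstacle is not this chain of identities but the verification of the hypotheses that make the last one valid: one must show that $K$ is a genuine bounded convex body with the origin in its relative interior, equivalently that $\bar h(\vec{\SPlog})>0$ for all $\vec{\SPlog}\neq0$ in $\set{\SPlog_N=0}$. This strict positivity is exactly the convergence of the integral \eqref{eq:hepp}, and I would extract it from the {\PlogDiv} hypothesis: the condition $\sdc{\gamma}>0$ on proper subgraphs controls the growth of $\PsiTrop_G$ in every coordinate direction, guaranteeing decay of $e^{-\bar h}$ and boundedness of $K$. The value $\sdc{G}=0$ explains the cross-section: since $\sum_{e}\vt{T,e}=2\abs{T}-N=\sdc{G}=0$, every vertex $\vt{T}$ lies on the hyperplane orthogonal to $(1,\dots,1)$, so $\Polar{G}$ is invariant under translation along $(1,\dots,1)$ and is therefore an unbounded cylinder whose bounded transversal base is the slice $\set{\SPlog_N=0}$ appearing in \eqref{eq:Hepp=Vol}.
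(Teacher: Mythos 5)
Your proposal is correct and takes essentially the same route as the paper: in \autoref{sec:geometry} the paper likewise passes to logarithmic coordinates, identifies the exponent of the integrand with the support function of $\Newton{G}$ restricted to the slice $\set{\SPlog_N=0}$ (\autoref{def:sdc-vec}, \autoref{cor:hepp-exp}), and concludes that the integral equals $(N-1)!$ times the volume of the polar slice (\autoref{cor:hepp-volume}). The only differences are cosmetic: you derive the layer-cake identity $\int e^{-\bar h} = (N-1)!\,\Vol(K)$ by hand where the paper cites Lasserre's formula, and your verification that {\PlogDiv} forces $\bar h>0$ (equivalently, boundedness of the polar slice) is exactly the paper's convergence-domain analysis in \autoref{cor:convergence-domain} and \autoref{cor:conv-conv}.
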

The asymptotic growth of period integrals implies that the volume of the polytope $\Polar{G}$ is concentrated on directions near the facet normals $\vt{T}$, like a cross-polytope. Dually, $\Newton{G}$ behaves like a cube in the sense that its volume is concentrated in the corners. We show in \autoref{lem:Period-SPlog} that the period integral \eqref{eq:period} can be written as an integral of a log-concave function over the polytope $\Polar{G}$, and argue that this explains at least qualitatively the correlation between the period and $\Hepp{G}$.

An important tool for our proofs of the symmetries is a functional generalization of the Hepp bound: Instead of the mere number \eqref{eq:hepp}, we consider the rational function
\begin{equation}
	\Hepp{G,\vec{\ind}}
	\defas
	\left( \prod_{e=1}^{N-1} \int_0^{\infty} \SP_e^{\ind_e-1} \td \SP_e \right)
	\left. \frac{1}{(\PsiTrop_{G})^{\Dim/2}} \right|_{x_N=1}
	\in \Q(\ind_1,\ldots,\ind_N)
	\label{eq:hepp-mellin}%
\end{equation}
given by the Mellin transform of $(\PsiTrop_G)^{-\Dim/2}$. This is a well-defined rational function for \emph{arbitrary} biconnected graphs, and not just $\PlogDiv$ graphs. The period symmetries from \cite{Schnetz:Census,HuSchnetzShawYeats:Further} extend to this functional setting, and we prove them in this generality.
We exploit that $\Hepp{G,\vec{\ind}}$ is a function with simple poles, located on hyperplanes $\setexp{\vec{\ind}}{\sdc{\gamma}=0}$ for suitable graphs $\gamma \subset G$, where the superficial degree of convergence is the linear function
\begin{equation*}
	\sdc{\gamma} 
	= \sum_{e \in \gamma} \ind_e - \frac{\Dim}{2} \cdot \loops{\gamma}.
\end{equation*}
The Hepp bound has a pole at $\sdc{\gamma}=0$ precisely when $\gamma$ and its quotient $G/\gamma$ are biconnected. Such subgraphs correspond to a facet of the Newton polytope, and it factorizes as $\Newton{\gamma} \times \Newton{G/\gamma} \subset \partial \Newton{G}$.
Generalizing \eqref{eq:Hepp=Vol}, the function $\Hepp{G,\vec{\ind}}$ is the volume of the polar of the translated Newton polytope $\vec{a} + \Newton{G}$, such that the facets of $\Newton{G}$ correspond directly to the poles of the Hepp bound. Hence the residues are products
\begin{equation}
	\Res_{\sdc{\gamma} = 0}
	\Hepp{G,\vec{\ind}}
	= \Hepp{\gamma,\vec{\ind}'} 
	\cdot \Hepp{G/\gamma, \vec{\ind}''}
	\label{eq:intro-residues}%
\end{equation}
and separate the dependence on variables $\vec{\ind}'=(\ind_e)_{e \in \gamma}$ associated to the subgraph and the quotient, $\vec{\ind}''=(\ind_e)_{e \in G\setminus \gamma}$. 
This gives a tool for inductive proofs of identities of rational functions, similar to the BCFW recursion \cite{BrittoCachazoFengWitten:DirectProof}.

The same mechanism of associating rational functions with simple poles and factorizing residues to polytopes is also used for tree level scattering amplitudes, under the name \emph{canonical forms} \cite{ArkaniBaiLam:PositiveCanonical}. In that context, the Mellin variables $\ind_e$ play the role of Mandelstam invariants obtained from momenta of particles; the factorization above is interpreted as \emph{unitarity}, and the fact that only simple poles occur is attributed to \emph{locality}.

\paragraph{Matroid invariants}
The Hepp bound \eqref{eq:hepp-mellin} is not restricted to graphs and extends to all matroids. In this paper we work in this more general universe, with the sole exception of three symmetries: completion, twist and Fourier-split are only defined for graphs.

Our definition of the Hepp bound gives zero whenever a matroid is not connected. Otherwise, let $\Sing{M}$ denote the set of submatroids $\gamma\subsetneq M$ such that both $\gamma$ and $M/\gamma$ are connected. These $\gamma$ label the facets of $\Newton{M}$, and we will show:
\begin{proposition}
	The Hepp bound $\Hepp{M,\vec{\ind}}$ of a connected matroid $M$ is a non-zero rational function with simple poles, precisely on the hypersurfaces $\sdc{\gamma}=0$ for $\gamma \in \Sing{M}$, and factorizing residues as in \eqref{eq:intro-residues}.
\end{proposition}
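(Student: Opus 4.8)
The plan is to evaluate $\Hepp{M,\vec{\ind}}$ explicitly by Hepp-sector decomposition and then read off its analytic structure. Passing to logarithmic coordinates $\SPlog_e=\log\SP_e$ (with $\SPlog_N=0$), the integrand of \eqref{eq:hepp-mellin} becomes $\exp\big(\sum_{e}\ind_e\SPlog_e-\tfrac{\Dim}{2}\max_{B\in\Bases{M}}\sum_{e\notin B}\SPlog_e\big)$, and I would split $\R^{N-1}$ into the sectors $\HeppSec{\sigma}$ of \eqref{eq:Hepp-sector}. On each sector the maximum defining $\PsiTrop_{M}$ is attained at the complement of the minimal-weight basis, which the greedy algorithm builds by scanning the edges in the order prescribed by $\sigma$; hence $\PsiTrop_{M}$ is a single monomial there, and for every initial segment $S^\sigma_p=\set{\sigma(1),\ldots,\sigma(p)}$ the number of its edges outside that basis equals the nullity $\loops{\restrict{M}{S^\sigma_p}}=\abs{S^\sigma_p}-\rk(S^\sigma_p)$. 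In the gap coordinates $t_p=\SPlog_{\sigma(p+1)}-\SPlog_{\sigma(p)}>0$ the exponent is linear, and the elementary cone integral $\int_{\R_{>0}^{N-1}}e^{\sum_p\mu_pt_p}\,\td t=\prod_p(-1/\mu_p)$ evaluates each sector. The rank identity above identifies $\mu_p=-\sdc{\restrict{M}{S^\sigma_p}}$ (the fixing $\SP_N=1$ projectivises the integral, so I read $\Hepp{M,\vec{\ind}}$ on the scaling hyperplane $\sdc{M}=0$, on which the a-priori contraction-type denominators of segments through the edge $N$ collapse to restriction type), giving
\begin{equation*}
	\Hepp{M,\vec{\ind}}=\sum_{\sigma\in\perms{N}}\ \prod_{p=1}^{N-1}\frac{1}{\sdc{\restrict{M}{S^\sigma_p}}}.
\end{equation*}
Since every denominator is a linear form $\sdc{\gamma}$ and the nested segments $S^\sigma_1\subsetneq\cdots\subsetneq S^\sigma_{N-1}$ have distinct supports, no factor repeats within a sector; hence $\Hepp{M,\vec{\ind}}$ is a rational function with at worst simple poles, located among the hyperplanes $\sdc{\gamma}=0$ for nonempty proper restrictions $\gamma$.

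To locate the poles precisely and prove \eqref{eq:intro-residues}, I would compute $\Res_{\sdc{\gamma}=0}$ from this formula. A sector term has a pole on $\sdc{\gamma}=0$ exactly when $\gamma=S^\sigma_p$ is an initial segment, i.e. when $\sigma$ orders the edges of $\gamma$ before those of $M\setminus\gamma$. For such $\sigma$ the larger segments split as $S^\sigma_q=\gamma\sqcup T$ with $T\subseteq M\setminus\gamma$, and the additivity
\begin{equation*}
	\sdc{\restrict{M}{\gamma\sqcup T}}=\sdc{\gamma}+\sdc{\restrict{(M/\gamma)}{T}},
\end{equation*}
which follows from $\rk(\gamma\cup T)=\rk(\gamma)+\rk_{M/\gamma}(T)$, turns every such denominator on $\sdc{\gamma}=0$ into a degree of convergence of the quotient $M/\gamma$. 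Thus the residue of each surviving sector term factors into a product over an ordering $\sigma_1$ of $\gamma$ and an ordering $\sigma_2$ of $M\setminus\gamma$; summing over all $\sigma$ through $\gamma$ the two orderings range independently, and the sum factorizes into the two sector formulas, yielding
\begin{equation*}
	\Res_{\sdc{\gamma}=0}\Hepp{M,\vec{\ind}}=\Hepp{\gamma,\vec{\ind}'}\cdot\Hepp{M/\gamma,\vec{\ind}''},
\end{equation*}
exactly \eqref{eq:intro-residues}.

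I would then combine this with the convention that $\Hepp{\cdot}$ vanishes on disconnected matroids to pin down the pole set and non-vanishing by induction on $\abs{M}$. Because the poles are simple, $\sdc{\gamma}=0$ carries a genuine pole iff the residue above is nonzero, i.e.\ iff both factors are nonzero; the vanishing convention forces both $\gamma$ and $M/\gamma$ connected, that is $\gamma\in\Sing{M}$ (so apparent poles at disconnected restrictions cancel, the residue there picking up a vanishing factor), while conversely for $\gamma\in\Sing{M}$ the inductive hypothesis makes both factors nonzero and the pole genuine. The same induction gives $\Hepp{M}\ne0$: a connected $M$ with at least two edges has $\dim\Newton{M}=N-1$, hence a facet, so $\Sing{M}\ne\emptyset$, whence $\Hepp{M}$ has a nonzero residue and cannot vanish, the case $\abs{M}=1$ being immediate. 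Geometrically this is the statement that $\Hepp{M,\vec{\ind}}$ is the canonical volume of the polar dual of $\vec{\ind}+\Newton{M}$, whose poles sit on the facet hyperplanes of the matroid polytope---indexed by $\Sing{M}$ after Feichtner--Sturmfels---and factorize because such a facet is the product $\Newton{\gamma}\times\Newton{M/\gamma}$.

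The main obstacle I anticipate is the first step: establishing the per-sector evaluation in full matroid generality. I must verify that the greedy monomial is constant on each open sector and that its exponents on the initial segments are controlled by the rank function as claimed, and---more delicately---handle the projective nature of the integral, restoring $\ind_N$ by homogeneity and working on $\sdc{M}=0$ so that the contraction-type denominators $\sdc{M/S^\sigma_p}$ arising from segments containing the edge $N$ collapse onto restriction type via $\sdc{M/\gamma}=\sdc{M}-\sdc{\gamma}$ (a coincidence of a restriction form with a contraction form off this slice would require a separator of $M$, excluded by connectivity). Once the summand is in the displayed product form, the residue factorization and the identification of the pole locus with $\Sing{M}$ follow formally, as above.
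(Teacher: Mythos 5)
Your route is essentially the paper's own: the Hepp-sector evaluation reproduces its combinatorial definition \eqref{eq:multi-hepp-from-sectors} (via Kruskal's greedy algorithm, \autoref{lem:Kruskal}), your residue computation is its \autoref{lem:hepp-residue}, and your induction on $\abs{M}$ for non-vanishing is its \autoref{cor:hepp-nonzero}. The genuine gap is your appeal to ``the convention that $\Hepp{\cdot}$ vanishes on disconnected matroids.'' This is not a convention but a theorem (\autoref{prop:hepp-disconnected}), and it carries the entire weight of the word ``precisely'' in the statement. Your own residue computation exhibits $\Res_{\sdc{\gamma}=0}\Hepp{M,\vec{\ind}}$ as an honest product of two sector sums, one over $\perms{\gamma}$ and one over $\perms{M/\gamma}$; to conclude that there is \emph{no} pole at $\sdc{\gamma}=0$ when $\gamma\notin\Sing{M}$, you must prove that the sector sum of a disconnected matroid $A\oplus B$ vanishes identically on $\set{\sdc{A\oplus B}=0}$, and no labelling convention can substitute for that cancellation (already for two isolated edges it is the identity $\tfrac{1}{\ind_1}+\tfrac{1}{\ind_2}=\tfrac{\ind_1+\ind_2}{\ind_1\ind_2}$, which vanishes only because of the constraint). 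The paper proves it with shuffle machinery: the increment word of a permutation of $A\oplus B$ is a shuffle of increment words of $A$ and of $B$, and $\dChar$ annihilates shuffle products of words whose letters sum to zero (\autoref{lem:dChar=0}, resting on the multiplicativity \eqref{eq:Char-shuffle}). Your induction cannot supply this either: the inductive hypothesis concerns connected matroids, whereas the needed vanishing concerns disconnected ones; nor can one argue ``all residues vanish, hence no poles, hence zero by negative homogeneity,'' because the factorization \eqref{eq:intro-residues} itself fails for disconnected matroids, as the paper remarks explicitly after \autoref{lem:hepp-residue}.

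A smaller repair: ``the nested segments have distinct supports, so no factor repeats'' does not yet give simple poles, since two distinct linear forms can be proportional and then vanish on the same hyperplane. What is true, and what you need, is that for $\emptyset\neq\gamma_1\subsetneq\gamma_2\subsetneq M$ the forms $\sdc{\gamma_1}$ and $\sdc{\gamma_2}$ are never proportional: comparing the coefficients of $\ind_e$ for $e\in\gamma_1$ and for $e\in\gamma_2\setminus\gamma_1$ in a putative relation $\sdc{\gamma_1}=t\,\sdc{\gamma_2}$ forces $1=0$. The paper's \autoref{lem:simple-poles} gets this more cleanly by noting that $\vec{\ind}\mapsto(\sdc{M^{\sigma}_1},\ldots,\sdc{M^{\sigma}_{N-1}},\Dim)$ is a linear isomorphism, so the denominators of each sector term are independent coordinates. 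By contrast, the other delicate point you flagged---that on $\sdc{M}=0$ a coincidence $\sdc{\delta}\propto\sdc{\gamma}$ for some $\delta$ in the flag forces $\delta\in\set{\gamma,\gamma^c}$ and then $M\cong\gamma\oplus\gamma^c$---is handled correctly by your ``separator'' remark and matches the determinant argument in \autoref{lem:hepp-residue}.
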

Matroid polytopes have been studied extensively, but little seems to be known about their polars. Our findings suggest that the volume of the (sliding) polar $\Polar{M}(\vec{\ind})$, as a rational function, is a very interesting matroid invariant. It has a rich structure and it determines the matroid completely (\autoref{rem:inverse-Mellin}). Furthermore, the symmetries of the Hepp bound show that polar volumes are subject to more identities than the volumes of the matroid polytopes themselves.

\subsection*{Outline of the paper}
This article aims to be broadly accessible and includes relevant definitions and results from the combinatorial literature. The focus here is on the mathematical properties of the Hepp bound, but the particle physicist will recognize the motivation and applications.

We give a combinatorial definition of the Hepp bound for arbitrary matroids in \autoref{sec:basics}, which is consistent with the Mellin integral, and obtain its poles and the factorization of residues discussed above. We compute the Hepp bound of uniform matroids, and illustrate relations to Crapo's and Derksen's matroid invariants in \autoref{sec:other-invariants}.

The remaining sections are essentially independent of each other:
Formulas in terms of flags of bridgeless submatroids or flats are derived in \autoref{sec:flags} and applied to compute the Hepp bound of all wheel graphs.
The five period symmetries are proven for the Hepp bound in \autoref{sec:symmetries}.
We report the results for $\field^4$ graphs in \autoref{sec:phi4}, addressing the correlation with the period and unexplained identities, and we discuss improvements of the Hepp bound.
The convex geometric point of view is worked out in \autoref{sec:geometry}.

\subsection*{Acknowledgments}
I am indebted to Karen Yeats for invitations to SFU Vancouver in 2016, where the groundwork for this research was laid, and to the University of Waterloo in 2018. At these workshops I benefited especially from extended discussions with Karen, Michael Borinsky and Iain Crump.
Furthermore I thank the organizers and the participants of the \emph{summer school on structures in local quantum field theory} in \href{https://www.houches-school-physics.com/}{Les Houches} and the \href{https://indico.cern.ch/event/646820/}{\emph{Amplitudes} workshop} at SLAC for chances to present aspects of this work in 2018.

This project and my understanding was further shaped by inputs from Marko Berghoff, Jacob Bourjaily, David Broadhurst, Francis Brown, Yoav Len, Carlos Mafra, Alejandro Morales, Max M\"{u}hlbauer, Matteo Parisi, Giulio Salvatori, Johannes Schlenk, Oliver Schnetz and Konrad Schultka.

Finally I thank the \href{https://eth-its.ethz.ch/}{Institue for Theoretical Studies} at ETH Z\"{u}rich for hospitality and perfect working conditions during final stages of the preparation of this manuscript in 2019, at the thematic programme \href{https://eth-its.ethz.ch/activities/talks-in-theoretical-sciences1/Modular_forms_periods_and_scattering_amplitudes.html}{\emph{Modular forms, periods and scattering amplitudes}}.

\section{Definition and basic properties}
\label{sec:basics}

Our construction is motivated by the Feynman integrals of perturbative quantum field theory \cite{Nakanishi:GraphTheoryFeynmanIntegrals,Brown:PeriodsFeynmanIntegrals,Speer:GeneralizedAmplitudes}. The period $\Period{G}$ defined in \eqref{eq:period} is only one particular integral that can be associated to a graph $G$. More generally, one considers the Mellin transform
\begin{equation}
	\Period{G,\vec{\ind}}
	\defas
	\left( \prod_{e=1}^{N-1} \int_0^{\infty} \SP_e^{\ind_e-1} \td \SP_e \right)
	\left. \frac{1}{\Psi_{G}^{\Dim/2}} \right|_{x_N=1}
	\label{eq:period-mellin}%
\end{equation}
of $\PsiPol_G^{-\Dim/2}$ as a multivariate function of the variables $\vec{\ind}=(\ind_1,\ldots,\ind_N)$, which we call \emph{indices}.
In particle physics, they are the exponents of the momentum space propagators attached to each edge of the graph. The dependence of \eqref{eq:period-mellin} on $\vec{\ind}$ is very useful to understand Feynman integrals. For example, difference equations with respect to the indices are heavily used in practical calculations \cite{ChetyrkinTkachov:IBP,Grozin:IBP}. When amplitudes in string theories are seen as Mellin transforms, then the indices $\vec{\ind}$ play the role of particle momenta \cite{ArkaniHamedHeLam:Stringy}.

The function \eqref{eq:period-mellin} is called the \emph{analytically regularized} Feynman integral. If the graph $G$ is biconnected, this integral converges for suitable indices, and it extends to a unique meromorphic function of the indices, with poles on families of hyperplanes \cite{Speer:GeneralizedAmplitudes}.
\begin{example}\label{ex:period-bubble}
	The cycle $\Cycle{2}=\Graph[0.3]{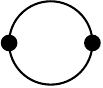}$ with two edges (also called ``bubble'') has $\loops{C_2}=1$ loop and it is {\PlogDiv} when $\ind_1,\ind_2>0$ and $\frac{\Dim}{2}=\ind_1+\ind_2$. Since $C_2$ has precisely two spanning trees $\set{e_1}$ and $\set{e_2}$, consisting of the individual edges, the graph polynomial is $\PsiPol_{C_2} = \SP_1 + \SP_2$. The regularized Feynman integral therefore becomes
	\begin{equation*}
		\Period{\Graph[0.3]{bubble},\vec{\ind}}
		=
		\int_0^{\infty} \frac{\SP_1^{\ind_1-1} \td \SP_1}{(\SP_1+1)^{a_1+a_2}}
		= \frac{\Gamma(\ind_1)\Gamma(\ind_2)}{\Gamma(\ind_1+\ind_2)},
	\end{equation*}
	which is meromorphic in $\vec{\ind} \in \C^2$ with poles on the hyperplanes $\ind_1,\ind_2 \in \Z_{\leq 0}$.
\end{example}

The Hepp bound \eqref{eq:hepp-mellin} is the variant of \eqref{eq:period-mellin} obtained by replacing the graph polynomial $\PsiPol_G$ with its tropical analogue $\PsiTrop_G$. This yields a \emph{rational} function of the indices, which captures precisely the first pole in each family of singularities of $\Period{G,\vec{\ind}}$.
\begin{example}\label{ex:hepp-bubble}
	The tropical graph polynomial of the bubble is $\PsiTrop_{C_2} = \max\set{\SP_1,\SP_2}$.
	Whenever both indices are positive, $\ind_1>0$ and $\ind_2 > 0$, the Hepp bound integral
	\begin{equation*}
		\Hepp{\Graph[0.3]{bubble},\vec{\ind}}
		= \int_0^{\infty} \frac{\SP_1^{\ind_1-1} \td \SP_1}{(\max \set{\SP_1,1})^{a_1+a_2}}
		= \int_0^1 \SP_1^{\ind_1-1} \td \SP_1 + \int_1^{\infty} \SP_1^{-\ind_2-1} \td \SP_1
		= \frac{\ind_1+\ind_2}{\ind_1\ind_2}
	\end{equation*}
	is absolutely convergent. It has poles at $\ind_1=0$ and $\ind_2=0$, and it can be obtained formally by replacing each gamma function in $\Period{C_2,\vec{\ind}}$ by its first pole, $\Gamma(s) \mapsto 1/s$.
\end{example}
The Hepp bound emerges as the \emph{tropical limit} of the Feynman integral: A substitution $\SP_e^{\varepsilon} \mapsto \SP_e$ transforms
$
	\SP_e^{\varepsilon \ind -1}\td \SP_e
	\mapsto\SP_e^{\ind -1}\td \SP_e/\varepsilon
$
and
$
	\PsiPol_{G}^{\varepsilon}
	\mapsto
	(\sum_{T\in \STrees{G}} \prod_{e \notin T} \SP_e^{1/\varepsilon}\big)^{\varepsilon}
$, the latter converges to $\PsiTrop_G$ for positive $\varepsilon \rightarrow 0$, and so the integral \eqref{eq:period-mellin} turns into the Hepp bound:
\begin{equation}
	\Hepp{G,\vec{\ind}}
	= \lim_{\varepsilon\rightarrow 0}
	\left[
		\varepsilon^{N-1}
		\Period{G,\varepsilon\vec{\ind}}
	\right]
	.
	\label{eq:tropical-limit}%
\end{equation}

The inequality $\PsiTrop_G\leq \PsiPol_G \leq \PsiTrop_G \cdot \abs{\STrees{G}}$ underlying \eqref{eq:hepp-period-bound} shows that both Mellin integrals \eqref{eq:hepp-mellin} and \eqref{eq:period-mellin} have the same domain of convergence. Outside this domain, the integrals define the Hepp bound and Feynman integral only indirectly via analytic continuation.
Our first goal is a definition of the Hepp bound as an explicit rational function, valid for all indices, and without reference to integrals.

\begin{remark}\label{rem:inverse-Mellin}
	If $G$ is biconnected with $N\geq 2$ edges, we will see that the Mellin integrals converge for suitable indices. The inverse Mellin transform (an integral over $\vec{\ind}$) then allows us to recover the function $\vec{\SP}\mapsto\PsiTrop_G(\vec{\SP})$ from the Hepp bound $\Hepp{G,\vec{\ind}}$. Every spanning tree dictates $\PsiTrop_G$ in some domain of the Schwinger parameters, so we can obtain the set $\STrees{G}$ from the tropical graph polynomial. Similarly, we can reverse engineer the spanning trees from the Feynman integral $\Period{G,\vec{\ind}}$. This may be illustrated as
	\begin{equation*}
		\Hepp{G,\vec{\ind}}
		\longleftrightarrow \PsiTrop_G(\vec{\SP})
		\longleftrightarrow \STrees{G}
		\longleftrightarrow \PsiPol_G(\vec{\SP})
		\longleftrightarrow
		\Period{G,\vec{\ind}},
	\end{equation*}
	where each arrow $A\longleftrightarrow B$ indicates that $A$ determines $B$ and vice versa. We see that the rational function $\Hepp{G,\vec{\ind}}$ completely determines the Feynman integral $\Period{G,\vec{\ind}}$ as a function of $\vec{\ind}$. This does not, however, impinge on \autoref{con:faithful}, which is a statement about special values at $\ind_1=\cdots=\ind_N=1$, and only for certain graphs ({\PlogDiv} in $\field^4$).
\end{remark}

\subsection{Combinatorial definition}

We consider arbitrary undirected graphs, which may have multiple edges between the same pair of vertices, and edges with both endpoints at the same vertex (self-loops) are also allowed. We write $\abs{G}\defas \abs{\EG{G}}$ for the number of edges, which is often also denoted by $N$.
The \emph{loop number} $\loops{G}$ is the first Betti number of the graph, which is
\begin{equation}
	\loops{G} = \abs{G} - \abs{\VG{G}} + \nCG{G}
	\label{eq:euler}%
\end{equation}
in terms of the number $\abs{\VG{G}}$ of vertices and the number $\nCG{G}$ of connected components.
The \emph{superficial degree of convergence} of a subgraph $\gamma\subseteq G$ is the linear function
\begin{equation}
	\sdc{\gamma} = \sdc[\vec{\ind}]{\gamma}
	\defas \sum_{e\in \gamma} \ind_e - \frac{\Dim}{2} \cdot \loops{\gamma},
	\label{eq:sdc}%
\end{equation}
and we will always impose the condition $\sdc{G}=0$ called `logarithmic divergence'.
For graphs with loops, it means that the dimension is determined by the indices as
\begin{equation*}
	\Dim=2\frac{\ind_1+\cdots+\ind_N}{\loops{G}}.
\end{equation*}
For forests (graphs without loops), the dimension disappears from \eqref{eq:sdc} and plays no role. The condition $\sdc{G}=0$ then imposes the constraint $\ind_1+\cdots+\ind_N=0$.
\begin{definition}\label{def:multi-hepp-from-sectors}
	If $G$ is a graph with $N\geq 1$ edges and we are given a permutation $\sigma$ of its edges, we denote by $G^{\sigma}_k \defas \set{\sigma(1),\ldots,\sigma(k)}$ the subgraphs formed by the first $k$ edges in the order $\sigma$.
	The Hepp bound of $G$ is the homogeneous rational function
	\begin{equation}
		\Hepp{G,\vec{\ind}}
		\defas \sum_{\sigma \in \perms{N}} \frac{1}{\sdc{G^{\sigma}_1}\cdots \sdc{G^{\sigma}_{N-1}}}
		\label{eq:multi-hepp-from-sectors}%
	\end{equation}
	of degree $1-N$, obtained by summing over all $N!$ permutations.
	For a single edge $N=1$, the empty product in the denominator is defined as unity such that $\Hepp{G,\ind_1} = 1$.
\end{definition}
\begin{example}\label{ex:hepp-bubble-flags}
	For the bubble from \autoref{ex:hepp-bubble}, there are only two $N!=2$ permutations to consider. The graphs $G^{(1,2)}_1 = \set{1}$ and $G^{(2,1)}_1 = \set{2}$ formed by the first edges have no loops and we recover the result $\Hepp{\Cycle{2},\vec{\ind}} = \frac{1}{\ind_1} + \frac{1}{\ind_2}$ from $\sdc{\set{e}}=\ind_e$.
\end{example}
\begin{figure}
	\centering
	\begin{tabular}{ccccccccc}
		$\Graph[0.6]{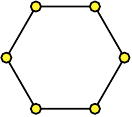}$ &
		$\Graph[0.6]{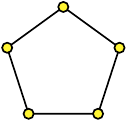}$ &
		$\Graph[0.6]{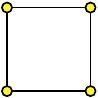}$ &
		$\Graph[0.6]{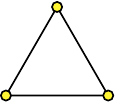}$ &
		$\Graph[0.6]{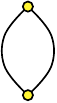}$ &
		$\Graph[0.6]{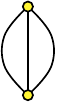}$ &
		$\Graph[0.6]{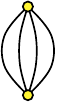}$ &
		$\Graph[0.6]{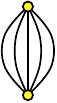}$ &
		$\Graph[0.6]{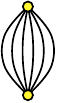}$
		\\
		$ \Cycle{6}$ &
		$ \Cycle{5}$ &
		$ \Cycle{4}$ &
		$ \Cycle{3}$ &
		$ \Cycle{2}=\Bond{2}$ &
		$ \Bond{3}$ &
		$ \Bond{4}$ &
		$ \Bond{5}$ &
		$ \Bond{6}$
	\end{tabular}%
	\caption{The first few cycle/polygon graphs $\Cycle{n}$, and the bonds/dipoles/melons $\Bond{n}$.}%
	\label{fig:cycles-bonds}%
\end{figure}
\begin{example}\label{ex:hepp-cycle-1}
	Consider any cycle $\Cycle{N}$ (\autoref{fig:cycles-bonds}) with unit indices $\ind_1=\cdots=\ind_N=1$. Because every proper subgraph is a forest, we get $\sdc{G^{\sigma}_k}=k$ in \eqref{eq:multi-hepp-from-sectors}. So each of the $N!$ summands contributes $1/(N-1)!$ and the total Hepp bound is $\Hepp{\Cycle{N}} = N = \Dim/2$.
\end{example}
\begin{example}\label{ex:2-forest}
	If $G=\Graph[0.3]{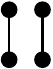}$ consists of two isolated edges, we obtain the same expression $\frac{1}{\ind_1}+\frac{1}{\ind_2}$ from the sum \eqref{eq:multi-hepp-from-sectors}. But in this case without loops, we consider it as a function on the hyperplane $0=\sdc{G}=\ind_1+\ind_2$ where it vanishes. Hence we find $\Hepp{\Graph[0.3]{twosticks},\vec{\ind}}=0$.
\end{example}
In the same way, we will see later that $\Hepp{G,\vec{\ind}}=0$ for all forests with $N \geq 2$ edges. The Hepp bound is therefore only really interesting for graphs with loops.

\subsection{The Mellin integral}
In order to relate \autoref{def:multi-hepp-from-sectors} to the integral \eqref{eq:hepp-mellin} for arbitrary graphs, we define the tropical graph polynomial $\PsiTrop_G$ for disconnected graphs similarly as in \eqref{eq:psitrop}, but with a sum over all spanning forests. In particular, $\PsiTrop_G=1$ whenever $G$ is itself a forest.

It follows from \eqref{eq:euler} that the graph polynomial is homogeneous of degree $\loops{G}$, so
\begin{equation*}
	\PsiTrop_G(\lambda \SP_1,\ldots,\lambda \SP_N) = \lambda^{\loops{G}} \cdot \PsiTrop_G(\SP_1,\ldots,\SP_N)
\end{equation*}
for all positive $\lambda>0$ and $\vec{\SP}\in \R_+^N$.
The function $(\PsiTrop_G)^{-\Dim/2}\prod_e \SP_e^{\ind_e}$ is therefore homogeneous of degree $\sdc{G}$, and the condition $\sdc{G}=0$ ensures that the Mellin integral \eqref{eq:hepp-mellin} is in fact an integral over projective space, written in the chart $\SP_N=1$. It is therefore irrelevant which edge we choose to label $N$, and we can write \eqref{eq:hepp-mellin} symmetrically as
\begin{equation*}
	\Hepp{G,\vec{\ind}} =
	\int_{\Projective^G} \frac{\Omega(\vec{\ind})}{(\PsiTrop_G)^{\Dim/2}}
	.
\end{equation*}
The integration domain is the positive orthant $\Projective^G \defas \setexp{[\SP_1:\cdots:\SP_N]}{\SP_1,\ldots,\SP_N>0} \subset \RP^{N-1}$ inside real projective space, and $\Omega(\vec{\ind})$ denotes the $N-1$ form
\begin{equation}
	\Omega(\vec{\ind})
	\defas \left( \prod_{e=1}^N \SP_e^{\ind_e-1} \right) \sum_{e=1}^N (-1)^{e-1} \SP_e \bigwedge_{f\neq e} \td \SP_f.
	\label{eq:volume-form}%
\end{equation}
To compute the integral, we subdivide the domain $\Projective^G$ into the Hepp sectors \eqref{eq:Hepp-sector}:
\begin{equation*}
	\Hepp{G,\vec{\ind}} = \sum_{\sigma \in \perms{N}} \int_{\HeppSec{\sigma}} \frac{\Omega(\vec{\ind})}{\left( \PsiTrop_G \right)^{\Dim/2}}.
\end{equation*}
Each summand is an integral over the projective simplex with $0<\SP_{\sigma(1)}<\cdots<\SP_{\sigma(N)}$. Hepp noted that within every sector $\HeppSec{\sigma}$, there is a unique spanning tree $T_{\sigma} \in \STrees{G}$ that dominates all others, i.e.\ the function $\PsiTrop_G$ is given by a fixed monomial inside the sector:
\begin{equation*}
	\PsiTrop_G(x) \Big|_{x \in \HeppSec{\sigma}}  = \prod_{e \notin T_{\sigma}} \SP_e.
	\tag{$\ast$} \label{eq:psitrop|sector=monomial}%
\end{equation*}
Indeed, the dominating spanning tree $T_{\sigma}$ (or forest, if $G$ is disconnected) is nothing but the minimum weight spanning tree with respect to the edge weights $\log \SP_e$, and following Kruskal \cite{Kruskal:SSTandTSP} this spanning tree is uniquely determined by the total order $\sigma$ of the weights:
\begin{lemma}[Kruskal's greedy algorithm]
	\label{lem:Kruskal}%
	If we are given a total order of the edge weights, hence a permutation $\sigma \in \perms{N}$ of the edges, 
	then the minimum weight spanning forest $T_{\sigma}$ consists of precisely those edges that do not increase the loop number:
	\begin{equation*}
		T_{\sigma} = \setexp{\sigma(k)}{\loops{G^{\sigma}_k}=\loops{G^{\sigma}_{k-1}}}
		\in \STrees{G}.
	\end{equation*}
\end{lemma}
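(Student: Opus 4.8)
The statement is precisely the correctness of Kruskal's greedy algorithm for the graphic matroid of $G$, so the plan is to (i)~recognise the prescribed set as the greedy output, (ii)~verify that it lies in $\STrees{G}$, and (iii)~show it has minimum weight. Throughout I use the edge weights $w_e = \log \SP_e$, which inside a sector \eqref{eq:Hepp-sector} satisfy $w_{\sigma(1)} < \cdots < w_{\sigma(N)}$, together with the total weight $w(F) = \sum_{e \in F} w_e$; recall that maximising $\PsiTrop_G = \max_T \prod_{e \notin T} \SP_e$ over spanning trees amounts to minimising $\sum_{e \in T} w_e$. The first move is to translate the loop-number condition: since $G^{\sigma}_k$ has one more edge than $G^{\sigma}_{k-1}$ on the same vertex set, Euler's formula \eqref{eq:euler} gives $\loops{G^{\sigma}_k} - \loops{G^{\sigma}_{k-1}} = 1 - \big(\nCG{G^{\sigma}_{k-1}} - \nCG{G^{\sigma}_k}\big)$. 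Hence $\loops{G^{\sigma}_k} = \loops{G^{\sigma}_{k-1}}$ holds exactly when adding $\sigma(k)$ drops the number of connected components by one, i.e.\ when $\sigma(k)$ joins two distinct components of $G^{\sigma}_{k-1}$ (equivalently, does not close a cycle and is not a self-loop). Thus the prescribed $T_\sigma$ is exactly the set of edges accepted by the greedy algorithm that scans the edges in the order $\sigma$.

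Next I would check $T_\sigma \in \STrees{G}$. By induction on $k$, the edges of $T_\sigma$ among $\set{\sigma(1),\ldots,\sigma(k)}$ form a spanning forest of $G^{\sigma}_k$: an accepted edge merges two components and so preserves acyclicity, whereas a rejected edge has both endpoints in one component and therefore leaves the component partition unchanged. Consequently the components of $T_\sigma$ coincide with those of $G$, and $\abs{T_\sigma \cap \set{\sigma(1),\ldots,\sigma(k)}} = \rank{G^{\sigma}_k}$ for every $k$; at $k = N$ this says $T_\sigma$ is a maximal forest, hence a spanning forest of $G$.

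For minimality, let $F$ be any spanning forest and put $t_k = \abs{T_\sigma \cap \set{\sigma(1),\ldots,\sigma(k)}} = \rank{G^{\sigma}_k}$ and $f_k = \abs{F \cap \set{\sigma(1),\ldots,\sigma(k)}}$. Acyclicity of $F$ gives $f_k \le \rank{G^{\sigma}_k} = t_k$, while $t_0 = f_0 = 0$ and $t_N = f_N = \rank{G}$ since both are bases. Writing $w(T_\sigma) = \sum_k w_{\sigma(k)}(t_k - t_{k-1})$ and similarly for $F$, Abel summation collapses the difference to $w(F) - w(T_\sigma) = \sum_{k=1}^{N-1}(w_{\sigma(k+1)} - w_{\sigma(k)})(t_k - f_k)$, a sum of nonnegative terms because the weight gaps are positive and $t_k \ge f_k$. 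Hence $T_\sigma$ minimises the weight; moreover, since the gaps are strictly positive, equality forces $t_k = f_k$ for all $k$ and thus $F = T_\sigma$, which also yields uniqueness.

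The Euler-formula bookkeeping and the spanning-forest claim are routine; the only genuine content is the optimality argument of the third paragraph, the matroid-greedy (Rado--Edmonds) exchange phenomenon. The point to handle with care is to treat disconnected $G$ uniformly by working with spanning forests and the rank of the graphic matroid rather than with spanning trees, so that the boundary values $t_N = f_N = \rank{G}$ hold and the Abel summation closes.
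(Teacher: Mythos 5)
Your proposal is correct, but there is nothing in the paper to compare it against: the paper does not prove \autoref{lem:Kruskal} at all, it simply quotes the result from the literature (Kruskal's original paper, and later \cite{Edmonds:Greedy} for the matroid version). Your argument is the standard greedy-exchange proof, and all three steps check out: the Euler-formula translation (acceptance $\Leftrightarrow$ the edge merges two components $\Leftrightarrow$ the rank increases) is right, the inductive spanning-forest claim is right, and the Abel-summation step is valid, since $f_k \le \rank{G^{\sigma}_k} = t_k$ for any forest $F$ and the weight gaps $w_{\sigma(k+1)}-w_{\sigma(k)}$ are strictly positive inside an open Hepp sector. Two features of your write-up are worth highlighting. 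First, the strictness of the gaps gives you $t_k=f_k$ for all $k$ and hence $F=T_\sigma$ at equality, i.e.\ \emph{uniqueness} of the minimizer---this is exactly what the paper needs when it asserts that within every sector there is a unique dominating spanning tree, so it is good that you did not stop at mere optimality. Second, your optimality argument uses nothing graph-specific: the only input is that $\abs{F \cap G^{\sigma}_k} \le \rank{G^{\sigma}_k}$ for every independent set $F$, together with $t_k = \rank{G^{\sigma}_k}$ for the greedy set, both of which are pure matroid statements. Hence your proof extends verbatim to arbitrary matroids, which is precisely the generality the paper invokes later (with a citation rather than a proof) when it claims that \autoref{cor:convergence-domain} holds for all matroids because the greedy algorithm does.
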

The edges $e \notin T_{\sigma}$ contributing to the dominating monomial \eqref{eq:psitrop|sector=monomial} are therefore precisely those edges $e=\sigma(k)$ at which the loop number $\loops{G^{\sigma}_k}=1+\loops{G^{\sigma}_{k-1}}$ increases.
In the affine chart $\SP_{\sigma(N)}=1$, the integral over a Hepp sector can therefore be written as
\begin{equation*}
	\int_{\HeppSec{\sigma}} \frac{\Omega(\vec{a})}{\left( \PsiTrop_G \right)^{\Dim/2}}
	= \int_{0<\SP_{\sigma(1)}<\cdots<\SP_{\sigma(N)}=1}
	\prod_{k=1}^{N-1} \SP_{\sigma(k)}^{\ind_{\sigma(k)}-1-\frac{\Dim}{2}\big( \loops{G^{\sigma}_k}-\loops{G^{\sigma}_{k-1}} \big)} \td \SP_{\sigma(k)}.
\end{equation*}
Changing variables to $y_k=x_{\sigma(k)}/x_{\sigma(k+1)}$, this evaluates to the summand in \eqref{eq:multi-hepp-from-sectors}:
\begin{equation*}
	\prod_{e=1}^{N-1}
	\int_0^1 y_k^{\big(\sum_{i=1}^k \ind_{\sigma(i)}\big)-1-\frac{\Dim}{2}\loops{G^{\sigma}_k}} \td y_{k}
	= \frac{1}{\sdc{G^{\sigma}_1} \!\cdots \sdc{G^{\sigma}_{N-1}}}.
\end{equation*}
This integral converges precisely when all real parts $\RePart \sdc{G^{\sigma}_1},\ldots,\RePart \sdc{G^{\sigma}_{N-1}}>0$ are positive. We can summarize this calculation as follows:
\begin{proposition}\label{cor:convergence-domain}
	The Mellin integrals \eqref{eq:hepp-mellin} and \eqref{eq:period-mellin} converge precisely for those indices $\vec{\ind} \subset \C^N$ whose real part lies in the open convex polyhedral cone
	\begin{equation}
		\ConvDom \defas \bigcap_{\emptyset \neq \gamma \subsetneq G} \setexp{\vec{\ind} \in \R^N}{\sdc[\vec{\ind}]{\gamma} > 0}
		\subseteq \R^N.
		\label{eq:convergence-domain}%
	\end{equation}
	For such $\vec{\ind}$, the Hepp bound integral \eqref{eq:hepp-mellin} coincides with the function in \autoref{def:multi-hepp-from-sectors}.
\end{proposition}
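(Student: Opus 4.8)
The plan is to assemble the sector-by-sector computation carried out above into a single statement about the whole projective integral, so almost all the analytic work is already done; what remains is bookkeeping. The first ingredient I would isolate is a purely \emph{combinatorial} fact that explains the shape of $\ConvDom$: as $\sigma$ ranges over $\perms{N}$ and $k$ over $1,\ldots,N-1$, the initial segments $G^{\sigma}_k=\set{\sigma(1),\ldots,\sigma(k)}$ range over \emph{all} nonempty proper edge subsets $\gamma\subsetneq G$, since any such $\gamma$ with $\abs{\gamma}=k$ is obtained by listing its edges first in $\sigma$. Consequently the family of conditions ``$\RePart\sdc{G^{\sigma}_k}>0$ for every $\sigma$ and every $1\le k\le N-1$'' is literally the single condition $\RePart\sdc{\gamma}>0$ for all $\emptyset\neq\gamma\subsetneq G$, i.e.\ $\RePart\vec{\ind}\in\ConvDom$ as in \eqref{eq:convergence-domain}.

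Next I would treat convergence. Because $\sdc{G}=0$ turns the integrand into a projective $(N-1)$-form, the domain $\Projective^G$ decomposes, up to a measure-zero set of boundary walls where two Schwinger parameters coincide, into the disjoint Hepp sectors $\HeppSec{\sigma}$. For complex indices the absolute value of the integrand is controlled by real parts alone, since $\PsiTrop_G$ restricts to a single positive monomial on each sector, so $\bigl|(\PsiTrop_G)^{-\Dim/2}\bigr|=(\PsiTrop_G)^{-\RePart(\Dim/2)}$ and $\abs{\SP_e^{\ind_e-1}}=\SP_e^{\RePart\ind_e-1}$. Hence absolute convergence of the full integral is equivalent to the simultaneous absolute convergence of the finitely many sector integrals, each of which was shown above to equal $\bigl(\sdc{G^{\sigma}_1}\cdots\sdc{G^{\sigma}_{N-1}}\bigr)^{-1}$ and to converge exactly when $\RePart\sdc{G^{\sigma}_k}>0$ for all $k<N$. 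Combined with the first paragraph this gives the forward direction: on $\ConvDom$ every sector converges, the sum of the $N!$ sector values converges, and it is by construction the function $\sum_{\sigma}\bigl(\sdc{G^{\sigma}_1}\cdots\sdc{G^{\sigma}_{N-1}}\bigr)^{-1}$ of \autoref{def:multi-hepp-from-sectors}, i.e.\ \eqref{eq:multi-hepp-from-sectors}.

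For the word ``precisely'' I would supply the divergence direction: if $\RePart\sdc{\gamma}\le0$ for some $\emptyset\neq\gamma\subsetneq G$, choose $\sigma$ and $k=\abs{\gamma}$ with $G^{\sigma}_k=\gamma$; the integral of the \emph{nonnegative} function $\abs{\text{integrand}}$ over $\HeppSec{\sigma}$ then diverges, and since the integral of the absolute value over all of $\Projective^G$ dominates the contribution of that one sector, the Mellin integral \eqref{eq:hepp-mellin} fails to converge absolutely. Finally, the Feynman integral \eqref{eq:period-mellin} is handled by transfer through the sandwich $\PsiTrop_G\le\PsiPol_G\le\PsiTrop_G\cdot\abs{\STrees{G}}$ from \eqref{eq:hepp-period-bound}: raising to the power $-\RePart(\Dim/2)$ multiplies the integrand by a factor pinned between two positive constants, so the two Mellin integrands are absolutely integrable on exactly the same set, namely $\RePart\vec{\ind}\in\ConvDom$.

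I expect the only genuinely delicate point to be the bookkeeping for complex indices, i.e.\ cleanly separating the real parts that govern absolute convergence from the imaginary parts (which contribute only an oscillatory phase), together with the measure-zero justification that makes the sector decomposition exact. The combinatorial identification of $\set{G^{\sigma}_k}$ with all proper subgraphs, the divergence direction, and the sandwich transfer to \eqref{eq:period-mellin} are then routine.
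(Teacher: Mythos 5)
Your proposal is correct and follows essentially the same route as the paper: the Hepp-sector decomposition with Kruskal's lemma reducing $\PsiTrop_G$ to a single monomial per sector, the explicit sector evaluation $\bigl(\sdc{G^{\sigma}_1}\cdots\sdc{G^{\sigma}_{N-1}}\bigr)^{-1}$ with its convergence condition on real parts, the identification of initial segments $G^{\sigma}_k$ with all proper subsets, and the sandwich $\PsiTrop_G\leq\PsiPol_G\leq\PsiTrop_G\cdot\abs{\STrees{G}}$ to transfer the statement to \eqref{eq:period-mellin}. You merely make explicit some bookkeeping (the measure-zero boundary walls, the divergence direction) that the paper leaves implicit in its summary.
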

The characterization of the convergence of Feynman integrals in terms of \emph{power counting} conditions $\sdc{\gamma}>0$ is fundamental for renormalization in physics \cite{Dyson:SmatrixQED,Weinberg:HighEnergy}. Many of these constraints are redundant, however. The independent constraints are given in \eqref{eq:conv-minimal}; for more general Feynman integrals with kinematics see \cite{Speer:SingularityStructureGenericFeynmanAmplitudes,Speer:GeneralizedAmplitudes}.

In the sequel we will mostly use the combinatorial formula \eqref{eq:multi-hepp-from-sectors}, which allows us to ignore questions of convergence. However, the convergence domain $\ConvDom$ is non-empty in all cases of interest (\autoref{lem:convdom-nonempty}), and we may thus use the Mellin integral freely.
\begin{example}\label{ex:hepp-cycle}
	We compute the Hepp bound of the cycle graph $\Cycle{N}$ as a function of the indices. Its $N$ spanning trees are the edge complements $T=\Cycle{N}\setminus\set{e}$ and thus $\PsiTrop = \max\set{ \SP_1,\ldots,\SP_N}$. 
	The integral over the domain where $\PsiTrop=\SP_k$ is maximal gives
	\begin{equation*}
		\int_{\PsiTrop = \SP_k}
		\frac{\Omega(\vec{\ind})}{(\PsiTrop)^{\Dim/2}}
		= \left( \prod_{e \neq k} \int_0^{\SP_k} \SP_e^{\ind_e-1} \td \SP_e \right)
		\frac{\SP_k^{\ind_k-1}}{\SP_k^{\Dim/2}} \bigg|_{\SP_k=1}
		= \prod_{e \neq k} \frac{1}{\ind_e},
	\end{equation*}
	computed in the chart $\SP_k=1$.
	Adding all these contributions, we find in generalization of \autoref{ex:hepp-bubble} and \autoref{ex:hepp-cycle-1} that the full Hepp bound function is given by
	\begin{equation}
		\Hepp{\Cycle{N},\vec{\ind}}
		= \frac{\ind_1+\cdots+\ind_N}{\ind_1 \cdots \ind_N}
		= \frac{\Dim/2}{\ind_1\cdots \ind_N}.
		\label{eq:hepp-cycle}%
	\end{equation}
\end{example}

\subsection{Matroids}

The Hepp bound depends only on the set of spanning trees, so it is not sensitive to the full combinatorial structure of a graph. This suggests a generalization, and indeed the weaker notion of a matroid is sufficient.
We use standard terminology as in \cite{Oxley:MatroidTheory}.

A matroid $M=(\EG{M},\Indeps{M})$ consists of a \emph{ground set} $\EG{M}$ and a non-empty family $\Indeps{M}$ of subsets of $\EG{M}$, called the \emph{independent sets}, such that
\begin{enumerate}
	\item Every subset $\delta\subset \gamma$ of an independent set $\gamma\in\Indeps{M}$ is independent: $\delta\in\Indeps{M}$.
	\item If $\delta,\gamma\in\Indeps{M}$ are independent and $\abs{\delta}<\abs{\gamma}$, then we can find an element $e\in \gamma\setminus\delta$ such that $\delta \cup \set{e} \in \Indeps{M}$ is independent.
\end{enumerate}
\begin{example}
	Every graph defines the \emph{cycle matroid} $\GMat{G}=(\EG{G},\Indeps{G})$ on the edges $\EG{G}$ as ground set \cite{Tutte:LecturesOnMatroids,Oxley:MatroidTheory}. Its independent sets 
$
	\Indeps{G} = \setexp{\gamma \subseteq \EG{G}}{\loops{\gamma}=0}
$
	are the forests (loopless subgraphs) of $G$.
	It is well understood when two graphs share isomorphic cycle matroids \cite{Whitney:2isomorphic,Truemper:OnWhitney}, and this is exploited in practical calculations \cite{ManteuffelStuderus:Reduze2}.
\end{example}
Matroids that come from graphs in this way are called \emph{graphic}, and most matroids are not graphic. Even non-graphic matroids do arise in Feynman integral calculations \cite{KreimerYeats:TensorMatroids}.
\begin{example}
	The uniform matroid $\UM{n}{r}$ with rank $0 \leq r \leq n$ is defined on the ground set $\set{1,\ldots,n}$ and its independent subsets are precisely all subsets of size at most $r$. The extremes $\UM{n}{n}$ and $\UM{n}{0}$ are the cycle matroids of forests and collections of self-loops, respectively. The only other graphic uniform matroids are the cycles $\UM{n}{n-1} \cong \GMat{\Cycle{n}}$ and the \emph{bonds} (also called \emph{dipoles}) $\UM{n}{1} \cong \GMat{\Bond{n}}$ illustrated in \autoref{fig:cycles-bonds}.
\end{example}
The maximal independent sets $\Bases{M} \subseteq \Indeps{M}$ of a matroid are called its \emph{bases}, and in the case of the cycle matroid the bases are precisely the spanning trees. The (tropical) graph polynomial therefore generalizes naturally to the (tropical) \emph{matroid polynomial}\footnote{%
	Graph polynomials can also be interpreted as \emph{configuration polynomials} \cite{Patterson:SingularStructure,DenhamSchulzeWalther:MatroidConf}. Applied to matroids, these polynomials are typically not unique and different from \eqref{eq:matroid-psi}; they agree only for regular matroids. It not clear if a sensible Hepp bound can be defined for configuration polynomials.
}
\begin{equation}
	\PsiPol_M \defas \sum_{T \in \Bases{M}} \prod_{e\notin T} \SP_e
	\quad\text{and}\quad
	\PsiTrop_M \defas \max_{T \in \Bases{M}} \prod_{e\notin T} \SP_e,
	\label{eq:matroid-psi}%
\end{equation}
and the Mellin integral \eqref{eq:hepp-mellin} may thus be considered for an arbitrary matroid. We can also extend the combinatorial \autoref{def:multi-hepp-from-sectors} to all matroids: The \emph{rank} of a submatroid $\gamma \subseteq \EG{M}$ is the maximal size of an independent set contained in it,
\begin{equation*}
	\rank \gamma = \max_{F \in \Indeps{M}, F \subseteq \gamma} \abs{F}.
\end{equation*}
The surplus of edges is the \emph{corank} $\loops{\gamma} \defas \abs{\EG{\gamma}}-\rank{\gamma}$, and in the case of a graphic matroid it is precisely the loop number \eqref{eq:euler}.
Using the corank, the superficial degree of convergence \eqref{eq:sdc} defines a linear function $\sdc{\gamma}$ for each submatroid, and so \eqref{eq:multi-hepp-from-sectors} defines the Hepp bound for all matroids.

\begin{example}\label{ex:hepp-uniform}%
	The uniform matroid $M=\UM{n}{r}$ with rank $0 < r < n$ has $\ell=n-r$ loops, such that $\frac{\Dim}{2}=\frac{n}{\ell}$ for unit indices $\ind_1=\cdots=\ind_n=1$.
	Every subset $\gamma\subset \UM{n}{r}$ with $k=\abs{\gamma}\leq r$ elements has rank $k$, and for $k>r$, the rank of $\gamma$ is $r$. Every permutation of the edges therefore produces the same sequence of superficial degrees of convergence,
	\begin{equation*}
		\sdc{M^{\sigma}_k} = \begin{cases}
			k & \text{for $1\leq k \leq r$ and} \\
			k-(k-r)\frac{n}{\ell} = 
			(n-k)\frac{r}{\ell} & \text{for $r< k <n$.} \\
		\end{cases}
	\end{equation*}
	Therefore the Hepp bound of the uniform matroid with unit indices is given by
	\begin{equation}
		\Hepp{\UM{n}{r}}
		= 
		\frac{n!}{(r-1)!\ell!}
		\left( \frac{\ell}{r} \right)^{\ell}.
		\label{eq:hepp-uniform}%
	\end{equation}
	This reproduces the cycles $\Hepp{\Cycle{n}} = \Hepp{\UM{n}{n-1}} = n$ from \autoref{ex:hepp-cycle-1}. For the smallest non-graphic matroid we get $\Hepp{\UM{4}{2}} = 12$.
\end{example}
\begin{remark}\label{rem:um-minimal}
	The uniform matroid is the unique minimizer of the Hepp bound among all matroids of fixed rank and size: If $M$ has $n$ elements and rank $r$, then $\Hepp{M} \geq \Hepp{\UM{n}{r}}$, with equality if and only if $M\cong \UM{n}{r}$. This follows from $\Bases{M} \subseteq \Bases{\UM{n}{r}}$ and \eqref{eq:matroid-psi}.
\end{remark}
Almost all results in this paper apply to arbitrary matroids; in fact, the only exception are the completion and twist symmetries in \autoref{sec:symmetries}, which we only define for graphs.
In particular, the compatibility of the Mellin integral and the combinatorial definition as stated in \autoref{cor:convergence-domain} holds for arbitrary matroids.
This hinges on the fact that the greedy algorithm from \autoref{lem:Kruskal} works for arbitrary matroids \cite{Edmonds:Greedy}.

We find it convenient, however, to use graph-inspired notation. We refer to the elements $e\in\EG{M}$ of the ground set as \emph{edges} and denote bases by $T\in \Bases{M}$. We also write $e\in M$ for $e\in\EG{M}$ and more generally we denote submatroids as $\gamma \subseteq M$. The number of edges is $N=\abs{M}=\abs{\EG{M}}$, and we write the group of permutations of the edges as $\perms{M}$.

\subsection{Convergence}
\begin{definition}\label{def:direct-sum}
	The \emph{direct sum} of two matroids $A$ and $B$ is the matroid $M=A \oplus B$ on the disjoint union $\EG{M} = \EG{A} \sqcup \EG{B}$ such that a subset $\gamma\subseteq M$ is independent precisely when $\gamma \cap A$ and $\gamma\cap B$ are independent in $A$ and $B$, respectively.
	A matroid $M$ is \emph{disconnected} if it can be written as a direct sum of two non-empty, proper submatroids. If no such decomposition exists, $M$ is \emph{connected}.
\end{definition}
For example, note that the uniform matroid $\UM{n}{n} \cong (\UM{1}{1})^{\oplus n}$ of a forest is a direct sum of $n$ copies of the single edge $\UM{1}{1} = \GMat{\gEdge}$. 
Similarly, we have $\UM{n}{0} \cong (\UM{1}{0})^{\oplus n}$ for a union of self-loops $\UM{1}{0} \cong \GMat{\Graph[0.3]{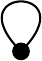}}$.
Both matroids $\UM{n}{0}$ and $\UM{n}{n}$ are therefore disconnected when $n\geq 2$. All other uniform matroids $\UM{n}{r}$ ($0<r<n$) are connected.
\begin{lemma}\label{lem:convdom-nonempty}
	The convergence domain $\ConvDom$ of the Mellin integral \eqref{eq:hepp-mellin} for a matroid, given by \eqref{eq:convergence-domain}, is non-empty precisely when the matroid is connected.
\end{lemma}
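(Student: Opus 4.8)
The plan is to read $\ConvDom$ as a linear feasibility problem governed by the corank function $\loops{\cdot}$. Write $E=\EG{M}$ and $\ind(\gamma)\defas\sum_{e\in\gamma}\ind_e$; recall that on $\ConvDom$ the dimension is fixed by the convention $\sdc{M}=0$. The easy direction is that a disconnected matroid has $\ConvDom=\emptyset$: if $M=A\oplus B$ with $A,B$ non-empty and proper, then cardinality and rank add over the direct sum, so $\loops{\EG{A}}+\loops{\EG{B}}=\loops{M}$ and hence $\sdc{\EG{A}}+\sdc{\EG{B}}=\sdc{M}=0$ identically in $\vec{\ind}$. A point of $\ConvDom$ would force both summands to be strictly positive, which is impossible.

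For the converse I first normalise. Since every $\sdc{\gamma}$ is homogeneous in $\vec{\ind}$ and a connected matroid with $N\geq 2$ edges has $\loops{M}\geq 1$ (a matroid with $\loops{M}=0$ is a direct sum of coloops, hence disconnected for $N\geq 2$; the case $N=1$ is trivial as then $\ConvDom=\R$), I may fix $\Dim/2=1$, so that the task becomes to find $\vec{\ind}$ with $\ind(E)=\loops{M}$ and $\ind(\gamma)>\loops{\gamma}$ for every proper non-empty $\gamma$. I would then use two classical facts. The corank is supermodular, $\loops{A}+\loops{B}\leq\loops{A\cup B}+\loops{A\cap B}$ (dual to submodularity of the rank), equivalently the marginal $\loops{X\cup e}-\loops{X}$ is non-decreasing in $X$. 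By Edmonds' greedy algorithm this implies that, for any ordering $e_1,\dots,e_N$ of $E$, the vector with components $\ind_{e_i}\defas\loops{S_i}-\loops{S_{i-1}}$ (where $S_i=\set{e_1,\dots,e_i}$) telescopes to $\ind(E)=\loops{M}$ and satisfies $\ind(\gamma)\geq\loops{\gamma}$ for every $\gamma$, so it lies in the polytope $B\defas\setexp{\vec{\ind}}{\ind(E)=\loops{M},\ \ind(\delta)\geq\loops{\delta}\ \forall\delta\subseteq E}$. Secondly, connectivity of $M$ is exactly the statement that $M$ has no proper non-empty separator, i.e.\ $\loops{\gamma}+\loops{E\setminus\gamma}<\loops{M}$ for every proper non-empty $\gamma$; equality here would give $\rank{\gamma}+\rank{E\setminus\gamma}=\rank{M}$ and hence a direct-sum splitting, while supermodularity excludes ``$>$''.

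The construction is then short: for each proper non-empty $\gamma$, running the greedy algorithm on an ordering that lists $E\setminus\gamma$ first yields $\vec{\ind}^{(\gamma)}\in B$ with $\ind^{(\gamma)}(\gamma)=\loops{M}-\loops{E\setminus\gamma}>\loops{\gamma}$, by telescoping and the separator inequality. Their average $\vec{\ind}^{\ast}$ lies in $B$ by convexity, and for each proper non-empty $\gamma$ the number $\ind^{\ast}(\gamma)-\loops{\gamma}$ is an average of the non-negative quantities $\ind^{(\gamma')}(\gamma)-\loops{\gamma}$ (non-negative because each $\vec{\ind}^{(\gamma')}\in B$) whose $\gamma'=\gamma$ term is strictly positive; hence $\ind^{\ast}(\gamma)>\loops{\gamma}$ for all proper non-empty $\gamma$, that is $\vec{\ind}^{\ast}\in\ConvDom$. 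I expect the main obstacle to be the clean derivation of the greedy inequality $\ind(\gamma)\geq\loops{\gamma}$ from supermodularity, together with the identification of matroid connectivity with the strict separator inequality; both are standard in polymatroid and matroid theory but deserve a careful statement in this setting.
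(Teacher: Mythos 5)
Your proof is correct, and it takes a genuinely different route from the paper's. The paper handles the disconnected direction exactly as you do, but for the converse it exhibits a \emph{single} explicit point: $\vec{o} \defas \sum_{T\in\Bases{M}} \uv{T^c}$, the sum of all cobasis indicator vectors (so $\Dim = 2\abs{\Bases{M}}$), and then verifies directly that $\sdc[\vec{o}]{\gamma} = \sum_{T}\left(\rank{\gamma}-\abs{\gamma\cap T}\right) \geq 0$, with equality forcing $\abs{\gamma\cap T}=\rank{\gamma}$ for \emph{every} basis and hence $M=\gamma\oplus\gamma^c$, contradicting connectedness. You instead build, for each potential tight constraint $\gamma$, a greedy vertex of the corank base polytope adapted to $\gamma$ (ordering $E\setminus\gamma$ first), and average; strictness at $\gamma$ comes from its own tailored vertex via the strict separator inequality, while feasibility of all other constraints comes from the supermodular greedy lemma. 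The two proofs are secretly cousins: by the paper's \autoref{lem:Kruskal}, your increment vector $\ind_{e_i}=\loops{S_i}-\loops{S_{i-1}}$ is exactly $\uv{T_\sigma^c}$ for the Kruskal basis $T_\sigma$ of that ordering, so both arguments produce a positive combination of cobasis indicators lying in the interior of the cone $\overline{\ConvDom}$ of \autoref{cor:conv-conv} --- the paper averages over \emph{all} cobases at once, you over a family of at most $2^N-2$ carefully chosen ones. What each buys: the paper's vector makes the verification a three-line computation with no machinery beyond $\max_T\abs{\gamma\cap T}=\rank{\gamma}$; your argument is longer but modular, and it generalizes verbatim to any supermodular set function (polymatroid base polytopes), isolating the single matroid-specific input as ``connected $\Leftrightarrow$ $\loops{\gamma}+\loops{E\setminus\gamma}<\loops{M}$ for all proper non-empty $\gamma$''. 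The two steps you flag as needing care are indeed standard and hold: the greedy inequality $\ind(\gamma)\geq\loops{\gamma}$ follows by comparing each increment $\loops{S_i}-\loops{S_{i-1}}$ with $\loops{(\gamma\cap S_{i-1})\cup e_i}-\loops{\gamma\cap S_{i-1}}$ (monotone marginals) and telescoping, and the separator characterization is the equality case $\rank{\gamma}+\rank{E\setminus\gamma}=\rank{M}$ of submodularity.
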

\begin{proof}
	If $M$ is disconnected, let $M\cong A\oplus B$ with non-empty $A,B\subsetneq M$. Since $\loops{M} = \loops{A} + \loops{B}$, we note $0=\sdc{M} = \sdc{A} + \sdc{B}$, so at least one of $\sdc{A}$ and $\sdc{B}$ is not positive. This implies $\ConvDom=\emptyset$.
	Now assume that $M$ is connected, and consider the vector
	\begin{equation*}
		\vec{o} \defas \sum_{T \in \Bases{M}} \uv{T^c}
		= \sum_{i \in M} \uv{i} \cdot \abs{\setexp{T\in\Bases{M}}{i \notin T}}.
	\end{equation*}
	Its entries sum to $\abs{\Bases{M}}\loops{M}$ because $\abs{T^c}=\loops{M}$, and so $\sdc[\vec{o}]{M}=0$ in $\Dim=2\abs{\Bases{M}}$ dimensions.
	For a subset $\gamma\subseteq M$, recall that $\max_T \abs{\gamma \cap T} = \rank{\gamma} = \abs{\gamma} - \loops{\gamma}$. Hence
	\begin{equation*}
		\sdc[\vec{o}]{\gamma}
		= \sum_{T\in\Bases{M}} \abs{\gamma \setminus T} - \abs{\Bases{M}} \loops{\gamma}
		= \sum_{T\in\Bases{M}} \left( \rk{\gamma} - \abs{\gamma \cap T} \right)
		\geq 0
	\end{equation*}
	and equality holds only if $\abs{\gamma \cap T}= \rank{\gamma}$ for every basis $T$. But in this situation we get $\abs{\gamma^c \cap T} = \abs{T} - \abs{\gamma \cap T} = \rank{M} - \rank{\gamma}$ for all bases, such that $\rank{\gamma^c} = \rank{M} - \rank{\gamma}$, which implies that $M = \gamma \oplus \gamma^c$. Since $M$ is connected, this is impossible unless $\gamma=\emptyset$ or $\gamma=M$. We conclude that we have the strict inequality $\sdc[\vec{o}]{\gamma}>0$ for all $\emptyset \neq \gamma \subsetneq M$.
\end{proof}
The connectedness of a graph $G$ is \emph{not} the same as connectedness of the cycle matroid. For instance, adding an isolated vertex disconnects a graph, but does not change $\GMat{G}$; and a tree with $\geq2$ edges is a connected graph with disconnected cycle matroid.
\begin{definition}\label{def:separation}
	A \emph{separation} of a graph $G$ is a partition $\EG{G}=A \sqcup B$ of its edges into two non-empty sets, which meet in at most one vertex (see \autoref{fig:separations}). We call $G$ \emph{separable} if a separation exists; otherwise, we say $G$ is \emph{nonseparable}. A graph is \emph{biconnected} if it is connected (as a graph) and still remains connected after deleting any vertex.
\end{definition}
\begin{figure}
	\centering
	\includegraphics[scale=0.55]{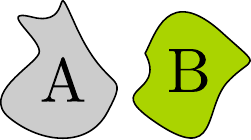}\qquad
	\includegraphics[scale=0.55]{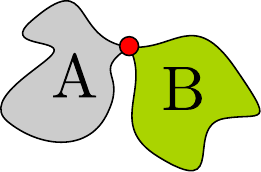}\qquad
	\includegraphics[scale=0.55]{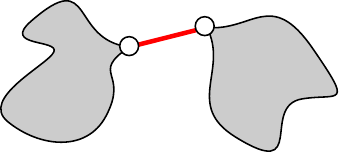}\qquad
	\includegraphics[scale=0.55]{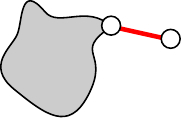}\qquad
	\includegraphics[scale=0.55]{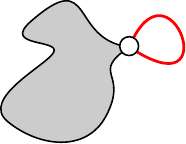}
	\caption{A separation can be disconnected in the graph sense (far left) or meet in a single vertex (called articulation point, left). Special instances are bridges (centre) including the case $B\cong \gEdge$ (right), and self-loops $B\cong \Graph[0.3]{1rose}$ (far right).}%
	\label{fig:separations}%
\end{figure}
A graph $G$ is nonseparable if and only if it is either a graph with at most one edge, or a union of isolated vertices and one biconnected component without self-loops. This characterizes graphs with connected cycle matroids \cite[Proposition~4.1.7]{Oxley:MatroidTheory}:
\begin{lemma}\label{lem:cycle-matroid-connected}
	The cycle matroid $\GMat{G}$ is disconnected if and only if $G$ is separable.
\end{lemma}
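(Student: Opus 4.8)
The plan is to prove the equivalent statement that $\GMat{G}$ is \emph{connected} if and only if $G$ is nonseparable, using throughout that the circuits of the cycle matroid $\GMat{G}$ are exactly the cycles of $G$ (with self-loops counting as circuits of size one). The two directions use quite different ideas, so I would treat them separately. Throughout, let $G_A$ and $G_B$ denote the subgraphs spanned by the edges in $A$ and in $B$, and write $V(G_A), V(G_B)$ for their vertex sets.

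For the direction \textbf{separable $\Rightarrow$ disconnected}, suppose $\EG{G}=A\sqcup B$ is a separation. The key claim is that every cycle of $G$ lies entirely in $A$ or entirely in $B$. To see this, traverse a cycle $C$ once around and read off the cyclic sequence of edge colours; whenever two consecutive edges switch colour they meet at a vertex of $V(G_A)\cap V(G_B)$. A cyclic binary sequence has an even number of colour changes, so if both colours occur there are at least two, and since the cycle has degree two at each of its vertices, two distinct changes must occur at two distinct vertices. A bichromatic cycle would thus force $\abs{V(G_A)\cap V(G_B)}\ge 2$, contradicting \autoref{def:separation}. With all cycles monochromatic, a set $F\subseteq\EG{G}$ is a forest if and only if $F\cap A$ and $F\cap B$ are forests, which is precisely the criterion of \autoref{def:direct-sum} for $\GMat{G}=\restrict{\GMat{G}}{A}\oplus\restrict{\GMat{G}}{B}$. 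As $A,B\neq\emptyset$, this exhibits $\GMat{G}$ as disconnected.

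For the converse \textbf{nonseparable $\Rightarrow$ connected}, I would invoke the standard characterization of matroid connectivity \cite{Oxley:MatroidTheory}: a matroid is connected exactly when any two of its elements lie in a common circuit. For the cycle matroid this reduces the claim to showing that any two edges of a nonseparable graph lie on a common cycle. Using the description of nonseparable graphs recalled just before the lemma, such a $G$ is either a graph with at most one edge---where $\GMat{G}$ has at most one element and is connected by default---or consists of isolated vertices together with a single biconnected, self-loop-free component $H$. Since isolated vertices leave $\GMat{G}=\GMat{H}$ unchanged, it suffices to handle biconnected $H$, and there I would use the classical fact that in a biconnected graph any two edges lie on a common cycle, so any two elements of $\GMat{H}$ share a circuit and $\GMat{H}$ is connected.

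The main obstacle is this hard direction, and concretely the input ``in a biconnected graph any two edges lie on a common cycle'', together with the matroid characterization of connectivity via common circuits. These are exactly the ingredients that make graph $2$-connectivity equivalent to matroid connectivity, and together they constitute \cite[Proposition~4.1.7]{Oxley:MatroidTheory}; I would either cite this directly or supply the short ear-decomposition (Menger) argument for the common-cycle statement. By contrast, the easy direction is entirely elementary once the monochromatic-cycle claim is in place and requires no structure theory.
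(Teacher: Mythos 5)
Your proposal is correct, but there is nothing in the paper to compare it against line by line: the paper does not prove this lemma at all, it simply cites \cite[Proposition~4.1.7]{Oxley:MatroidTheory}. What you have done is reconstruct that cited result, and the reconstruction is sound. Your easy direction is genuinely elementary and valid in the multigraph setting of the paper: the parity argument showing every cycle is monochromatic works for parallel edges (a bichromatic $2$-cycle has two colour changes at its two distinct endpoints) and trivially for self-loops, and the resulting monochromaticity of circuits is exactly the independence criterion of \autoref{def:direct-sum}, since a subset of $\EG{G}$ is a forest precisely when it contains no cycle. Your hard direction correctly leans on two standard facts---that a matroid with at least two elements is connected iff every pair of elements lies in a common circuit, and that any two edges of a biconnected multigraph lie on a common cycle---after using the structural description of nonseparable graphs that the paper states just before the lemma; you also correctly dispose of the degenerate cases ($\leq 1$ edge, isolated vertices) where the common-circuit criterion does not apply under the paper's convention that such matroids are connected. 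The honest caveat in your last paragraph is exactly right: those two facts together \emph{are} Oxley's proposition, so one either cites it (as the paper does) or supplies the Menger/ear-decomposition argument; either choice closes the proof. The net effect of your route is a self-contained treatment of the easy implication plus a transparent isolation of where genuine structure theory enters, which is more informative than the paper's bare citation.
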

For physical applications, we are therefore only interested in biconnected graphs with $N\geq 2$ edges and no self-loops. Such graphs are necessarily $2$-edge connected (bridgeless), referred to as ``one-particle irreducible'' (1PI) in field theory (see \autoref{sec:1pi-flags}).

\subsection{Zeroes and shuffles}
If a matroid $M$ is connected, \autoref{lem:convdom-nonempty} shows that there exist points $\vec{o}\in\ConvDom\neq \emptyset$ for which the integral \eqref{eq:hepp-mellin} converges (\autoref{cor:convergence-domain}). Since the integrand is positive, the corresponding values of the Hepp bound $\Hepp{M,\vec{o}}>0$ are also positive. In particular, the rational function $\Hepp{M,\vec{\ind}}$ from \eqref{eq:multi-hepp-from-sectors} is non-zero.

For disconnected $M$, the Mellin integral does not make sense ($\ConvDom=\emptyset$). In this case, the Hepp bound is the zero function:
\begin{theorem}\label{prop:hepp-disconnected}
	The rational function $\Hepp{M,\vec{\ind}}$ defined in \eqref{eq:multi-hepp-from-sectors} is identically zero on the space $\set{\sdc[\vec{\ind}]{G}=0}$ if and only if the matroid $M$ is disconnected.
\end{theorem}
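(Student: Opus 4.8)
The statement is an ``if and only if'', and the forward direction is already at hand. If $M$ is connected, then by \autoref{lem:convdom-nonempty} there is a point $\vec{o}\in\ConvDom\neq\emptyset$, and by \autoref{cor:convergence-domain} the rational function $\Hepp{M,\vec{\ind}}$ equals the absolutely convergent integral \eqref{eq:hepp-mellin} there; its integrand is strictly positive, so $\Hepp{M,\vec{o}}>0$ and $\Hepp{M,\vec{\ind}}$ is not identically zero. The plan is therefore to prove the converse: if $M$ is disconnected, then $\Hepp{M,\vec{\ind}}$ vanishes on $\set{\sdc{M}=0}$.

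So write $M=A\oplus B$ with non-empty proper $A,B$ as in \autoref{def:direct-sum}, and keep $\tfrac{\Dim}{2}$ as a free parameter, so that $\sdc{\gamma}$ is a genuine linear form and $\Hepp{A,\vec{\ind}'}$, $\Hepp{B,\vec{\ind}''}$, $\Hepp{M,\vec{\ind}}$ are honest rational functions sharing the same $\Dim$. First I would record that in a direct sum every submatroid splits, $\gamma=(\gamma\cap A)\oplus(\gamma\cap B)$, with additive corank, whence $\sdc$ is additive: $\sdc{\gamma}=\sdc{\gamma\cap A}+\sdc{\gamma\cap B}$. It is then convenient to introduce the full product $P_M\defas\sum_{\sigma\in\perms{M}}\prod_{k=1}^{N}\sdc{M^{\sigma}_k}^{-1}$, so that $\Hepp{M,\vec{\ind}}=\sdc{M}\cdot P_M$, the extra factor $\sdc{M^{\sigma}_N}^{-1}=\sdc{M}^{-1}$ being common to every permutation.

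The heart of the argument is a shuffle evaluation of $P_M$. Every $\sigma\in\perms{M}$ interleaves an order $\alpha$ on $A$ with an order $\beta$ on $B$, and for fixed $\alpha,\beta$ the initial segment is $M^{\sigma}_k=A^{\alpha}_{a_k}\oplus B^{\beta}_{b_k}$ for a monotone lattice path $(a_k,b_k)$. Writing $x_a\defas\sdc{A^{\alpha}_a}$ and $y_b\defas\sdc{B^{\beta}_b}$ (so that $x_0=y_0=0$ and $x_p=\sdc{A}$, $y_q=\sdc{B}$ with $p=\abs{A}$, $q=\abs{B}$), the sum over interleavings becomes the lattice-path sum $F(p,q)=\sum_{(0,0)\to(p,q)}\prod_{(a,b)\neq(0,0)}(x_a+y_b)^{-1}$. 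Splitting off the last step gives the recursion $F(p,q)=(x_p+y_q)^{-1}\bigl(F(p-1,q)+F(p,q-1)\bigr)$ with boundary data $F(p,0)=\prod_{a\leq p}x_a^{-1}$ and $F(0,q)=\prod_{b\leq q}y_b^{-1}$, and an induction whose inductive step is the partial-fraction collapse $\tfrac{1}{x_p+y_q}\bigl(\tfrac1{x_p}+\tfrac1{y_q}\bigr)=\tfrac1{x_p y_q}$ yields the closed form $F(p,q)=\prod_{a=1}^{p}x_a^{-1}\prod_{b=1}^{q}y_b^{-1}$. Summing over $\alpha$ and $\beta$ independently factorizes $P_M=P_A\cdot P_B$ (noting $P_A=\sdc{A}^{-1}\Hepp{A,\vec{\ind}'}$, and likewise for $B$), that is $\Hepp{M,\vec{\ind}}=\sdc{M}\,\sdc{A}^{-1}\Hepp{A,\vec{\ind}'}\,\sdc{B}^{-1}\Hepp{B,\vec{\ind}''}$.

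To finish I restore the constraint. Because $\sdc{M}=\sdc{A}+\sdc{B}$, the scalar prefactor equals $\tfrac1{\sdc{A}}+\tfrac1{\sdc{B}}$, and on the hyperplane $\set{\sdc{M}=0}$ the substitution $\sdc{B}=-\sdc{A}$ annihilates it. Since $\Hepp{A,\vec{\ind}'}\Hepp{B,\vec{\ind}''}$ does not have $\set{\sdc{M}=0}$ among its polar hyperplanes---those are cut out by the forms $\sdc{A^{\alpha}_a}=0$ and $\sdc{B^{\beta}_b}=0$, which are not proportional to $\sdc{M}$ because $A,B$ are non-empty---the product is regular at a generic point of the hyperplane and hence vanishes identically there. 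The main obstacle is the shuffle computation of $F(p,q)$; once its telescoping closed form is established, both the factorization and the vanishing are immediate, and as a bonus the displayed formula is precisely the direct-sum case of the residue factorization \eqref{eq:intro-residues}.
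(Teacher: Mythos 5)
Your proof is correct and takes essentially the same route as the paper: the lattice-path sum $F(p,q)$ with its telescoping step $\tfrac{1}{x_p+y_q}\bigl(\tfrac{1}{x_p}+\tfrac{1}{y_q}\bigr)=\tfrac{1}{x_p y_q}$ is precisely the paper's multiplicativity of the map $\Char$ (\autoref{lem:Char-shuffle}) applied to the shuffle of the increment words of $A$ and $B$, and your extraction of the overall factor $\sdc{M}$ together with the regularity of $\Hepp{A,\vec{\ind}'}\Hepp{B,\vec{\ind}''}/(\sdc{A}\sdc{B})$ along $\set{\sdc{M}=0}$ is exactly the content of \autoref{lem:dChar=0}. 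The forward direction, via positivity of the Mellin integral on the non-empty cone $\ConvDom$ (\autoref{lem:convdom-nonempty} and \autoref{cor:convergence-domain}), coincides with the paper's argument verbatim.
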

\begin{corollary}\label{cor:hepp-zero-forest}
	The Hepp bound of a loopless matroid (forest) is constant: For a single edge, $\Hepp{M,\ind_1}=1$, and $\Hepp{M,\vec{\ind}}=0$ for $\abs{M} \geq 2$ edges. This generalizes \autoref{ex:2-forest}.
\end{corollary}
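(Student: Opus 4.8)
The plan is to reduce the statement to \autoref{prop:hepp-disconnected} after recognizing that a loopless matroid is forced to be the free matroid. First I would observe that a loopless matroid $M$ with $n\defas\abs{M}$ edges has corank $\loops{M}=\abs{M}-\rank{M}=0$, hence $\rank{M}=n$, so the whole ground set $\EG{M}$ is itself independent. By the hereditary axiom for $\Indeps{M}$, every subset of $\EG{M}$ is then independent, which means $M\isomorph\UM{n}{n}\isomorph(\UM{1}{1})^{\oplus n}$ is the free matroid on $n$ elements, the direct sum of $n$ single edges $\UM{1}{1}=\GMat{\gEdge}$.

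For a single edge, $n=1$, the claim is immediate from \autoref{def:multi-hepp-from-sectors}: the permutation group $\perms{M}$ is trivial and the denominator $\sdc{G^{\sigma}_1}\cdots\sdc{G^{\sigma}_{N-1}}$ in \eqref{eq:multi-hepp-from-sectors} is an empty product (the index $k$ runs over $1,\ldots,N-1=0$), which by the stated convention equals unity. Hence $\Hepp{M,\ind_1}=1$, with no dependence on $\ind_1$.

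For $n\geq 2$, the free matroid decomposes as $\UM{n}{n}\isomorph\UM{1}{1}\oplus\UM{n-1}{n-1}$, a direct sum of two non-empty proper submatroids, so $M$ is disconnected in the sense of \autoref{def:direct-sum} (exactly as already noted in the text for $\UM{n}{n}$ with $n\geq 2$). Applying \autoref{prop:hepp-disconnected} then yields at once that $\Hepp{M,\vec{\ind}}$ is identically zero on the hyperplane $\set{\sdc[\vec{\ind}]{M}=0}$, which for a forest reads $\ind_1+\cdots+\ind_n=0$ since all coranks vanish. The case $n=2$ recovers \autoref{ex:2-forest}.

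Since the corollary is a direct consequence of the preceding theorem, I do not expect any genuine obstacle. The only two points requiring care are bookkeeping ones: correctly handling the empty-product convention in the single-edge base case, and making explicit the step ``loopless $\Rightarrow$ free $\Rightarrow$ disconnected once $n\geq 2$'' that lets \autoref{prop:hepp-disconnected} apply.
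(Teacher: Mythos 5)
Your proof is correct and takes essentially the same route the paper intends: the text had already observed that a loopless matroid is $\UM{n}{n}\cong(\UM{1}{1})^{\oplus n}$ and hence disconnected for $n\geq 2$, so the claim follows at once from the ``disconnected $\Rightarrow$ zero'' direction of \autoref{prop:hepp-disconnected} (whose proof via shuffles does not rely on this corollary, so there is no circularity), with the single-edge case settled by the empty-product convention in \autoref{def:multi-hepp-from-sectors}. Your explicit step ``loopless $\Rightarrow$ ground set independent $\Rightarrow$ free matroid'' is sound bookkeeping that the paper leaves implicit.
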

The vanishing $\Hepp{M,\vec{\ind}}=0$ of \eqref{eq:hepp-cycle} in zero dimensions is thus a general fact:
\begin{corollary}\label{cor:hepp(d=0)=0}
	If $M$ has at least two edges, then its Hepp bound vanishes at $\Dim=0$.
\end{corollary}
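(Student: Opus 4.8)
The plan is to deduce this from the already-established forest case in \autoref{cor:hepp-zero-forest}. The decisive observation is that setting $\Dim=0$ in the superficial degree of convergence \eqref{eq:sdc} annihilates the corank term: for every submatroid $\gamma\subseteq M$ one obtains $\restrict{\sdc{\gamma}}{\Dim=0}=\sum_{e\in\gamma}\ind_e$, a linear form that depends only on the ground set of $\gamma$ and is entirely insensitive to the matroid structure of $M$. In particular it is indifferent to whether $M$ has loops.

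Carrying this substitution into the combinatorial formula \eqref{eq:multi-hepp-from-sectors}, I would compute
\[
\restrict{\Hepp{M,\vec{\ind}}}{\Dim=0} = \sum_{\sigma\in\perms{N}} \frac{1}{\prod_{k=1}^{N-1}\big(\ind_{\sigma(1)}+\cdots+\ind_{\sigma(k)}\big)}.
\]
The right-hand side no longer refers to $M$ at all, and it is precisely the expression produced by \eqref{eq:multi-hepp-from-sectors} for \emph{any} loopless matroid on the same $N$ edges---such as the forest $\UM{N}{N}$---for which $\loops{\gamma}=0$ makes $\sdc{\gamma}=\sum_{e\in\gamma}\ind_e$ hold identically, at every dimension. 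Thus the two rational functions literally agree.

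It remains only to check that the comparison takes place on the same hyperplane. The logarithmic divergence condition $\sdc{G}=0$ specializes at $\Dim=0$ to $\ind_1+\cdots+\ind_N=0$, which is exactly the constraint under which the forest Hepp bound is evaluated in \autoref{cor:hepp-zero-forest}. Since $N=\abs{M}\geq 2$, that corollary gives $\restrict{\Hepp{M,\vec{\ind}}}{\Dim=0}=0$, as claimed.

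There is no genuine obstacle here: the corollary is a pure specialization of \autoref{def:multi-hepp-from-sectors}, and the only point demanding a moment's attention is that the identification of the two functions must respect the ambient constraint, which the computation above confirms. Alternatively one could bypass \autoref{cor:hepp-zero-forest} and appeal directly to \autoref{prop:hepp-disconnected}, observing that a loopless matroid on $\geq 2$ edges is disconnected; routing through the forest corollary is merely the shortest path.
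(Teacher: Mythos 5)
Your proof is correct and follows essentially the same route as the paper: at $\Dim=0$ the degree of convergence $\sdc{\gamma}=\sum_{e\in\gamma}\ind_e$ becomes blind to the matroid structure, so the Hepp bound coincides with that of a forest on the same ground set, and \autoref{cor:hepp-zero-forest} (under the matching constraint $\ind_1+\cdots+\ind_N=0$) gives the vanishing. Your extra verification of the ambient constraint and the noted alternative via \autoref{prop:hepp-disconnected} are just explicit spellings of what the paper's one-line proof leaves implicit.
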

\begin{proof}
	In zero dimensions, $\sdc{\gamma} = \sum_{e\in \gamma} \ind_e$ is blind to the structure of the graph and the same as if $M$ were a forest.
\end{proof}
To prove \autoref{prop:hepp-disconnected}, we exploit a property of the rational functions
\begin{equation}
	\dChar(s_1,\ldots,s_N) \defas \frac{1}{s_1(s_1+s_2)\cdots(s_1+\cdots+s_{N-1})}
	\in \Q(s_1,\ldots,s_N)
	\label{eq:dChar}%
\end{equation}
that furnish the summand of \eqref{eq:multi-hepp-from-sectors}: If $\Dsdc{\sigma}=\buch{\Dsdc[1]{\sigma},\ldots,\Dsdc[N]{\sigma}}$ denotes the increments
\begin{equation*}
	\Dsdc[k]{\sigma}
	\defas \sdc{M^{\sigma}_k} - \sdc{M^{\sigma}_{k-1}}
	= \begin{cases}
		\ind_{\sigma(k)}
		& \text{if $\loops{M^{\sigma}_k} = \loops{M^{\sigma}_{k-1}}$ and} \\
		\ind_{\sigma(k)} - \frac{\Dim}{2}
		& \text{if $\loops{M^{\sigma}_k} = 1+\loops{M^{\sigma}_{k-1}}$} \\
	\end{cases}
\end{equation*}
of the superficial degree of convergence, then the Hepp bound is precisely $\sum_{\sigma} \dChar(\Dsdc{\sigma})$. This sum vanishes for the $2$-forest in \autoref{ex:2-forest} due to $s_1+s_2=0$ and the factorization
\begin{equation*}
	\dChar(s_1,s_2) + \dChar(s_2,s_1)
	= \frac{1}{s_1} + \frac{1}{s_2}
	= \frac{s_1+s_2}{s_1 s_2}.
	\tag{$\ast$}%
	\label{eq:dChar-2}%
\end{equation*}
To state and generalize such identities, it is convenient to extend \eqref{eq:dChar} linearly and to view it as a function $\dChar\colon \Z\Tens{S} \longrightarrow \Q(S)$ on the space of all finite linear combinations
\begin{equation*}
	\Z\Tens{S}
	= \Z \oplus \bigoplus_{k \geq 1} \bigoplus_{s_1,\ldots,s_k \in S} \Z \buch{s_1,\ldots,s_k}
\end{equation*}
of words $\buch{s_1,\ldots,s_k}$ in the letters $S = \{\ind_e,\ind_e-\frac{\Dim}{2}\colon 1\leq e \leq N\}$. We set $\dChar(\buch{})\defas 0$ for the empty word $k=0$. The left side of \eqref{eq:dChar-2} can now be written as $\dChar(\buch{s_1,s_2} + \buch{s_2,s_1})$.
\begin{definition}
	The $(n,m)$-shuffles $\perms{n,m}$ are those permutations $\sigma \in \perms{n+m}$ that maintain the order among the first $n$ elements and also among the last $m$ elements such that
		$\sigma^{-1}(1)<\cdots<\sigma^{-1}(n)$ and
		$\sigma^{-1}(n+1)<\cdots<\sigma^{-1}(n+m)$.
	The shuffle product of two words is $\buch{s_1,\ldots,s_n} \shuffle \buch{s_{n+1},\ldots,s_{n+m}} = \sum_{\sigma \in \perms{n,m}} \buch{s_{\sigma(1)},\ldots,s_{\sigma(n+m)}}$.
\end{definition}
\begin{lemma}
	\label{lem:dChar=0}%
	For a word $w=\buch{s_1,\ldots,s_k}$, let $\abs{w} \defas s_1+\cdots+s_k$ denote the sum of its letters. If $v$ and $w$ are two non-empty words with $\abs{v}+\abs{w}=0$, then
	$ \dChar(v \shuffle w) = 0$.
\end{lemma}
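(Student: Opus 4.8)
The plan is to factor $\dChar$ through a ``completed'' variant that is a genuine homomorphism for the shuffle product. I introduce
\[
\Char(\buch{s_1,\ldots,s_N}) \defas \prod_{k=1}^{N} \frac{1}{s_1+\cdots+s_k},
\]
which differs from $\dChar$ only in that it retains the final factor, so that $\dChar(w)=\abs{w}\cdot\Char(w)$ for every non-empty word $w$. Extending $\Char$ linearly to $\Z\Tens{S}$, the key claim is the shuffle identity $\Char(v\shuffle w)=\Char(v)\cdot\Char(w)$.

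First I would establish this identity. The cleanest rigorous route is induction on the total length $n+m$, peeling the last letter: appending a letter $z$ to a word $W$ satisfies $\Char(Wz)=\Char(W)/(\abs{W}+z)$, and the recursive definition of the shuffle product gives $(ua)\shuffle(vb)=(u\shuffle vb)\,a+(ua\shuffle v)\,b$ for last letters $a,b$. Since every word occurring in $u\shuffle vb$ has the same total letter sum $\abs{u}+\abs{vb}$, and likewise for $ua\shuffle v$, the append rule pulls out a common denominator $\abs{ua}+\abs{vb}$, and a short manipulation reduces the claim to the inductive hypotheses $\Char(u\shuffle vb)=\Char(u)\Char(vb)$ and $\Char(ua\shuffle v)=\Char(ua)\Char(v)$. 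Conceptually this is just the classical fact that a product of two nested-simplex integrals decomposes into the shuffle chambers, since $\Char$ has the Laplace representation $\int_{u_1>\cdots>u_N>0}\exp(-\sum_k s_k u_k)\,\td u_1\cdots\td u_N$ on the region where all partial sums have positive real part.

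With the homomorphism property in hand, the conclusion is immediate. Every word in $v\shuffle w$ is a rearrangement of the combined letters of $v$ and $w$, hence has total sum $\abs{v}+\abs{w}$ regardless of the shuffle. Factoring this common scalar out of $\dChar=\abs{\cdot}\,\Char$ yields the rational-function identity
\[
\dChar(v\shuffle w)=(\abs{v}+\abs{w})\cdot\Char(v\shuffle w)=(\abs{v}+\abs{w})\cdot\Char(v)\,\Char(w).
\]
Imposing $\abs{v}+\abs{w}=0$ kills the prefactor, so it only remains to verify that $\Char(v)\Char(w)$ stays finite along this hyperplane. The poles of $\Char(v)$ sit on the loci $s_1+\cdots+s_j=0$ with $j\le n$, the outermost being $\abs{v}=0$; but on $\set{\abs{v}+\abs{w}=0}$ one has $\abs{v}=-\abs{w}$, and because $w$ is non-empty, $\abs{w}$ is a non-trivial linear form, so $\abs{v}\neq0$ (and $\abs{w}\neq0$) at a generic point. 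Thus $\Char(v)\Char(w)$ is generically finite there and the product vanishes identically; equivalently, $\dChar(v\shuffle w)$ has no pole along $\abs{v}+\abs{w}=0$ because its denominators involve only proper initial partial sums, so restricting to the hyperplane is legitimate and gives $0$.

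The main obstacle is the shuffle identity for $\Char$ in the second paragraph; everything else is bookkeeping. The only place the hypothesis enters is the finiteness check at the end, and it is exactly the non-emptiness of \emph{both} $v$ and $w$ that prevents a pole of $\Char(v)\Char(w)$ from cancelling the zero contributed by the prefactor $\abs{v}+\abs{w}$.
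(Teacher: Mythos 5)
Your proposal is correct and takes essentially the same route as the paper: your completed map is exactly the paper's $\Char$ from \eqref{eq:Char}, your last-letter induction is the paper's proof of the shuffle multiplicativity \eqref{eq:Char-shuffle} in \autoref{lem:Char-shuffle}, and your conclusion via $\dChar(v \shuffle w) = (\abs{v}+\abs{w})\,\Char(v)\Char(w)$ restricted to $\abs{v}+\abs{w}=0$ is precisely the paper's final step. Your explicit check that $\Char(v)\Char(w)$ has no compensating pole on that hyperplane (which uses the non-emptiness of both words) is a welcome elaboration of what the paper compresses into ``take the limit $\abs{a}+\abs{b}\rightarrow 0$''.
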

\begin{proof}[Proof of \autoref{prop:hepp-disconnected}]
The loop number of a direct sum $M=A\oplus B$ is additive: For any submatroid $\gamma \subseteq M$, we have
	$\loops{\gamma} = \loops{\gamma \cap A} + \loops{\gamma \cap B}$
and therefore also
\begin{equation*}
	\sdc{\gamma}  = \sdc{\gamma \cap A} + \sdc{\gamma \cap B}.
\end{equation*}
So if the edge $\sigma(k)$ of a permutation $\sigma$ of $M$ belongs to $A$, then the increment $\Dsdc[k]{\sigma}$ depends only on the set $M^{\sigma}_k \cap A$. Let  $\set{i_1<\cdots<i_n} = \sigma^{-1}(A)$ denote the places where $A$ appears in $\sigma$, and write $\alpha=(\sigma(i_1),\ldots,\sigma(i_n)) \in \perms{A}$ for the total order induced on $A$. In the same way, $\sigma$ determines a permutation $\beta=(\sigma(j_1),\ldots,\sigma(j_m)) \in \perms{B}$. Then
	\begin{equation*}
		\Dsdc[i_k]{\sigma} = \Dsdc[k]{\alpha}
		\quad\text{(for $1\leq k \leq n$)}
		\quad\text{and}\quad
		\Dsdc[j_k]{\sigma} = \Dsdc[k]{\beta}
		\quad\text{(for $1\leq k \leq m$)}
	\end{equation*}
	show that the increment word $\Dsdc{\sigma}$ is an $(n,m)$-shuffle of the increments $\Dsdc{\alpha}$ and $\Dsdc{\beta}$. Summing over all $\alpha$ and $\beta$, and applying $\dChar$, we conclude that
	\begin{equation*}
		\Hepp{M,\vec{\ind}}
		=
		\sum_{\sigma \in \perms{M}} \dChar(\Dsdc{\sigma})
		=
		\sum_{\alpha \in \perms{A}}
		\sum_{\beta \in \perms{B}} 
		\dChar\Big(
			\Dsdc{\alpha} \shuffle \Dsdc{\beta}
		\Big)
		= 0
	\end{equation*}
	due to $\sdc{M}=0$ and \autoref{lem:dChar=0}.
	This shows that disconnectedness is sufficient to ensure $\Hepp{M,\vec{\ind}}=0$. For connected $M$, however, $\Hepp{M,\vec{\ind}}$ cannot be identically zero, because it takes positive values on $\ConvDom$, which is a non-empty set due to \autoref{lem:convdom-nonempty}.
\end{proof}
Observe that $\dChar(s_1,\ldots,s_N)$ does not depend on the last letter $s_N$.
We define the linear map $\Char\colon \Z\Tens{S} \longrightarrow \Q(S)$ by adding one more denominator to \eqref{eq:dChar},
\begin{equation}
	\Char(s_1,\ldots,s_k) \defas \frac{1}{s_1(s_1+s_2)\cdots(s_1+\cdots+s_k)}
	\in \Q(s_1,\ldots,s_k)
	\label{eq:Char}%
\end{equation}
for all $k \geq 1$ and setting $\Char(\buch{})\defas 1$ for the empty word. We think of $\dChar$ as the residue of $\Char$ when all letters sum to zero, and we will frequently use the relations
\begin{equation*}
	\dChar(s_1,\ldots,s_k)
	= \Char(s_1,\ldots,s_{k-1})
	= (s_1+\cdots+s_k) \Char(s_1,\ldots,s_k).
\end{equation*}
They translate \eqref{eq:dChar-2} into
$
	\Char(\buch{s_1} \shuffle \buch{s_2})
	= \frac{1}{s_1(s_1+s_2)} + \frac{1}{s_2(s_1+s_2)}
	= \frac{1}{s_1 s_2}
	= \Char(s_1) \Char(s_2)
$,
and the generalization of this identity to all shuffle products will be very useful.
\begin{proposition}\label{lem:Char-shuffle}%
	The map $\Char$ is multiplicative: For arbitrary words $a$ and $b$, we have
	\begin{equation}
		\Char(a \shuffle b)
		= \Char(a)  \Char(b).
		\label{eq:Char-shuffle}%
	\end{equation}%
\end{proposition}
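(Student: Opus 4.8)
The plan is to prove the multiplicativity $\Char(a \shuffle b) = \Char(a)\Char(b)$ by induction on the total length $\abs{a}_{\text{len}} + \abs{b}_{\text{len}}$ of the two words, peeling off the \emph{last} letters. The base cases are immediate: if either word is empty, then $a \shuffle b$ is just the other word and $\Char(\buch{}) = 1$ by convention, so the identity reads $\Char(a) = 1 \cdot \Char(a)$. The key structural feature I would exploit is the recursive form of the shuffle product with respect to last letters. Writing $a = a' \concat \buch{s}$ and $b = b' \concat \buch{t}$ (where $\concat$ denotes concatenation), the shuffle decomposes as
\begin{equation*}
	a \shuffle b = (a' \shuffle b) \concat \buch{s} + (a \shuffle b') \concat \buch{t},
\end{equation*}
according to whether the last letter of the shuffled word comes from $a$ or from $b$.

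The second ingredient is a multiplicative recursion for $\Char$ itself under appending a last letter. From the definition \eqref{eq:Char}, appending a letter to a word $w = \buch{s_1,\ldots,s_{k-1}}$ gives
\begin{equation*}
	\Char(w \concat \buch{s_k})
	= \frac{1}{s_1 + \cdots + s_k} \cdot \Char(w)
	= \frac{\Char(w)}{\abs{w} + s_k},
\end{equation*}
where $\abs{\cdot}$ is the letter-sum notation from Lemma \ref{lem:dChar=0}. Applying this to each of the two terms above, and noting that the total letter-sum of any word appearing in $a \shuffle b$ equals $\abs{a} + \abs{b}$, I would compute
\begin{equation*}
	\Char(a \shuffle b)
	= \frac{\Char(a' \shuffle b)}{\abs{a}+\abs{b}}
	+ \frac{\Char(a \shuffle b')}{\abs{a}+\abs{b}}.
\end{equation*}
By the induction hypothesis each of $\Char(a' \shuffle b)$ and $\Char(a \shuffle b')$ factorizes. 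Writing $\Char(a) = \Char(a')/\abs{a}$ and $\Char(b) = \Char(b')/\abs{b}$ (the same append-recursion read backwards, with $s = $ last letter of $a$, etc.), the right-hand side becomes
\begin{equation*}
	\frac{\abs{a}\,\Char(a)\Char(b) + \abs{b}\,\Char(a)\Char(b)}{\abs{a}+\abs{b}}
	= \Char(a)\Char(b),
\end{equation*}
which closes the induction.

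The one point requiring care — and the main obstacle — is the bookkeeping of letter-sums in the denominators: the append-recursion introduces $\abs{a}+\abs{b}$ as the new denominator, and I must verify that the induction hypothesis applied to the two shorter shuffles produces exactly the factors $\abs{a}$ and $\abs{b}$ needed to cancel it after recombination. This is purely formal but must be done on the level of the linear map $\Char\colon \Z\Tens{S} \to \Q(S)$, extending the recursion by linearity to the formal sum defining $a \shuffle b$; the delicate step is that $\Char$ is applied termwise, so the denominator $\abs{a}+\abs{b}$ is genuinely common to every word in the shuffle (which holds because shuffling permutes letters without changing their multiset, hence preserves the total sum). Once this common-denominator fact is recorded, the algebraic recombination is routine. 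An alternative, cleaner route avoiding induction would be to recognize $\Char$ as the evaluation of iterated integrals $\Char(s_1,\ldots,s_k) = \int_{0<t_1<\cdots<t_k<1}\prod_j t_j^{s_1+\cdots+s_j-1}\,\td t_j$ (matching the Mellin derivation in the excerpt), whereby \eqref{eq:Char-shuffle} is the classical shuffle identity for iterated integrals; but the self-contained induction above is more elementary and I would present that.
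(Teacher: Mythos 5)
Your proof is correct and follows essentially the same route as the paper: induction on word length, the last-letter recursion for the shuffle product, the append-letter recursion $\Char(w\concat\buch{s}) = \Char(w)/(\abs{w}+s)$ applied termwise (valid because shuffling preserves the letter sum), and the final recombination using $\Char(a') = \abs{a}\,\Char(a)$ and $\Char(b') = \abs{b}\,\Char(b)$. The only cosmetic difference is notational (the paper writes $a\alpha \shuffle b\beta$ for what you write as $(a'\concat\buch{s}) \shuffle (b'\concat\buch{t})$), so no further changes are needed.
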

\begin{proof}
	The claim is trivial when $a$ or $b$ are the empty word, so we proceed by induction on the lengths of the words. For letters $\alpha$ and $\beta$, the shuffle product solves the recursion
	$ a\alpha \shuffle b\beta = (a\alpha \shuffle b)\beta + (a\shuffle b\beta)\alpha$, because the final letter must be either $\alpha$ or $\beta$. Thus,
	\begin{equation*}
		\Char(a\alpha \shuffle b\beta)
		= \frac{\Char(a \shuffle b\beta) + \Char(a\alpha \shuffle b)}{\abs{a}+\alpha+\abs{b}+\beta}
		= \frac{\Char(a)  \Char(b\beta) + \Char(a\alpha) \Char(b)}{\abs{a}+\alpha+\abs{b}+\beta}
	\end{equation*}
	where $\abs{a}$ denotes the sum of all letters in $a$. The second step invokes the claim for shorter words than on the left.
	Now expand $\Char(a\alpha)=(\abs{a}+\alpha)\Char(a)$ and $\Char(b\beta)=(\abs{b}+\beta)\Char(\beta)$.
\end{proof}
\begin{proof}[Proof of \autoref{lem:dChar=0}]
	The general identity $\dChar(a\shuffle b) = (\abs{a}+\abs{b}) \Char(a)\Char(b)$ of rational functions in the letters of $a$ and $b$ follows from \eqref{eq:Char-shuffle}. Take the limit $\abs{a}+\abs{b}\rightarrow 0$.
\end{proof}
\begin{remark}
	The property \eqref{eq:Char-shuffle} is called \emph{symmetral} in the language of moulds \cite[Example~3.2 and Section~A.3]{ChapotonHivertNovelliThibon:OpMould}, and the proof above was given in \cite[Lemme~II.38]{Cresson:CalculMoulien}. Parke--Taylor factors fulfil a closely related identity \cite[Equation~(3.15)]{AHBCPT:onMHVbeyond} underlying the Kleiss--Kuijf relations \cite{KleissKuijf:5jet,DelDucaDixonMaltoni:KKproof}, see \cite[Section~4.1]{MafraSchlotterer:BGandBCJ}.
	The special case $\Char(\buch{s_1}\shuffle\cdots\shuffle\buch{s_n})=1/(s_1\cdots s_n)$ is used in the factorization of infrared divergences, see \cite[Equation~(6.75)]{PeskinSchroeder:QFT}.
\end{remark}
As a further application of the multiplicativity of $\Char$, we compute the Hepp bound of all uniform matroids for arbitrary indices. This generalizes \autoref{ex:hepp-uniform} and \eqref{eq:hepp-cycle}.
Let
\begin{equation*}
	\apode \buch{s_1,\ldots,s_n} \defas (-1)^n \buch{s_n,\ldots,s_1}
\end{equation*}
denote the \emph{antipode} of the shuffle algebra, with $\apode(\apode w) = w$ and $\apode(v\shuffle w) = (\apode v)(\apode w)$.
\begin{lemma}\label{lem:dChar-apode}%
	When the letters of a word $w$ sum to zero, $\abs{w}=0$, then $\dChar(w) = -\dChar(\apode w)$.
	More explicitly, under the constraint that $s_1+\cdots+s_n=0$, we have the identity
	\begin{equation}
		\dChar(s_1,\ldots,s_n)
		=-(-1)^n \dChar(s_n,\ldots,s_1)
		=(-1)^{n-1} \Char(s_n,\ldots,s_2)
		.
		\label{eq:dChar-apode}%
	\end{equation}
\end{lemma}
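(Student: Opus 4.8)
The plan is to prove the identity by \emph{direct computation} from the definitions \eqref{eq:dChar} and \eqref{eq:Char}, using the hypothesis $s_1+\cdots+s_n=0$ only at one point: to convert partial sums taken from the right end of the word into negatives of partial sums taken from the left end. Abbreviate the prefix sums by $p_k \defas s_1+\cdots+s_k$, so that the constraint reads $p_n=0$ and, by definition, $\dChar(s_1,\ldots,s_n) = 1/(p_1 p_2\cdots p_{n-1})$.

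First I would dispose of the second equality, which is in fact unconditional. The word $\buch{s_n,\ldots,s_1}$ has length $n$, and the relation $\dChar(t_1,\ldots,t_k)=\Char(t_1,\ldots,t_{k-1})$ stated after \eqref{eq:Char} gives $\dChar(s_n,\ldots,s_1)=\Char(s_n,\ldots,s_2)$ by simply dropping the final letter $s_1$. Hence $-(-1)^n\dChar(s_n,\ldots,s_1) = (-1)^{n-1}\Char(s_n,\ldots,s_2)$, and the whole lemma reduces to the first equality. This first equality is exactly the abstract restatement $\dChar(w)=-\dChar(\apode w)$ once one unwinds $\apode\buch{s_1,\ldots,s_n}=(-1)^n\buch{s_n,\ldots,s_1}$ together with the linearity of $\dChar$.

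The core step is then the index reflection. The reversed word $\buch{s_n,\ldots,s_1}$ has $j$-th prefix sum $s_n+s_{n-1}+\cdots+s_{n-j+1}$, which is a \emph{suffix} sum of the original word; using $p_n=0$ this equals $p_n-p_{n-j}=-p_{n-j}$. Feeding this into \eqref{eq:dChar} for the reversed word, the denominator runs over $j=1,\ldots,n-1$ and so equals $\prod_{j=1}^{n-1}(-p_{n-j}) = (-1)^{n-1}\prod_{i=1}^{n-1}p_i$, the sign pulling out because there are $n-1$ factors and $\{\,n-j : 1\le j\le n-1\,\}=\{1,\ldots,n-1\}$. Therefore $\dChar(s_n,\ldots,s_1)=(-1)^{n-1}\dChar(s_1,\ldots,s_n)$, and rearranging delivers both $\dChar(s_1,\ldots,s_n)=-(-1)^n\dChar(s_n,\ldots,s_1)$ (since $-(-1)^n(-1)^{n-1}=1$) and $\dChar(s_1,\ldots,s_n)=(-1)^{n-1}\Char(s_n,\ldots,s_2)$.

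There is no genuinely hard step: the argument is pure bookkeeping of partial sums and signs, and the single substantive input is that $p_n=0$ turns each suffix sum into the negative of the complementary prefix sum. The only place to be careful is the reflection $k\mapsto n-k$ and the resulting overall sign $(-1)^{n-1}$, which I would verify against the base case $n=2$: there $\dChar(s_1,s_2)=1/s_1$, while $(-1)^{1}\Char(s_2)=-1/s_2=1/s_1$ under $s_1+s_2=0$, confirming the sign. As a conceptual cross-check (though heavier than necessary), the same identity also follows from the antipode axiom of the shuffle Hopf algebra combined with the multiplicativity of $\Char$ from \autoref{lem:Char-shuffle}, but the elementary computation above is the shortest route.
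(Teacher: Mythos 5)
Your proof is correct, and it takes a genuinely different route from the paper's. The paper argues Hopf-algebraically: it quotes the standard antipode recursion $\apode w = -w -\sum_{k=1}^{n-1} \buch{ s_1,\ldots, s_k} \shuffle \apode \buch{s_{k+1},\ldots, s_n}$, applies $\dChar$, and observes that every shuffle term vanishes by \autoref{lem:dChar=0}, since each is a shuffle of two non-empty words whose letters jointly sum to zero; what remains is exactly $\dChar(\apode w)=-\dChar(w)$. You instead compute both sides from the definition \eqref{eq:dChar}: writing $p_k = s_1+\cdots+s_k$, the $j$-th prefix sum of the reversed word is a suffix sum of the original, equal to $p_n-p_{n-j}=-p_{n-j}$ under the constraint $p_n=0$, so reversal just multiplies the denominator by $(-1)^{n-1}$; and the second equality in \eqref{eq:dChar-apode} is the unconditional relation $\dChar(t_1,\ldots,t_k)=\Char(t_1,\ldots,t_{k-1})$ applied to the reversed word. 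Your sign bookkeeping ($-(-1)^n(-1)^{n-1}=1$) and the check at $n=2$ are both right, and the argument is valid as an identity of rational functions on the hyperplane $s_1+\cdots+s_n=0$, where none of $p_1,\ldots,p_{n-1}$ vanishes identically. What your route buys is self-containedness: it uses neither \autoref{lem:dChar=0} nor the multiplicativity of $\Char$ from \autoref{lem:Char-shuffle}, both of which sit upstream of the paper's one-line proof. What the paper's route buys is conceptual economy within its framework: since \autoref{lem:dChar=0} is needed anyway for \autoref{prop:hepp-disconnected}, the identity there appears as a purely structural fact --- $\dChar$ intertwines the shuffle antipode with a sign on the weight-zero subspace --- with no explicit manipulation of partial sums.
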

\begin{proof}
	The recursion
	$\apode w = -w -\sum_{k=1}^{n-1} \buch{ s_1,\ldots, s_k} \shuffle \apode \buch{s_{k+1},\ldots, s_n}$
	for the antipode is well known.
	The shuffle products cancel due to \autoref{lem:dChar=0}.
\end{proof}
\begin{proposition}\label{lem:Hepp-Unr}
	The Hepp bound of a uniform matroid $\UM{n}{r}$ with rank $0<r<n$ can be computed as a sum over all subsets of size $r$: Let $\ind_{\gamma}\defas \sum\limits_{e\in \gamma} \ind_e$ and $\ind^{\gamma} \defas \prod\limits_{e\in \gamma}\ind_e$, then
	\begin{equation}
		\Hepp{\UM{n}{r},\vec{\ind}} =
		\sum_{\substack{\gamma \subset \set{1,\ldots,n}\\\abs{\gamma}=r}}
		\frac{\ind_{\gamma}}{\ind^{\gamma} \prod_{e\notin {\gamma}} (\frac{\Dim}{2}-\ind_e)}
		.
		\label{eq:hepp-Unr}%
	\end{equation}
\end{proposition}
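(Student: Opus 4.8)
The plan is to evaluate the defining sum \eqref{eq:multi-hepp-from-sectors} by grouping the permutations according to the set of edges they list first, and to recognise the resulting blocks as shuffle products. For $M=\UM{n}{r}$ the corank of $M^{\sigma}_k$ is $\max(0,k-r)$, so by \autoref{lem:Kruskal} the increment $\Dsdc[k]{\sigma}$ is the plain letter $\ind_{\sigma(k)}$ for $k\le r$ and the shifted letter $\ind_{\sigma(k)}-\frac{\Dim}{2}$ for $k>r$; crucially, the type of letter depends only on the position $k$. Fixing the basis $\gamma=\set{\sigma(1),\ldots,\sigma(r)}=T_{\sigma}$ and summing over the internal orders of $\gamma$ and of its complement, the contributing increment words assemble into the concatenation $U_{\gamma}\concat V_{\gamma}$ of the shuffle products $U_{\gamma}=\bigshuffle_{e\in\gamma}\buch{\ind_e}$ and $V_{\gamma}=\bigshuffle_{e\notin\gamma}\buch{\ind_e-\frac{\Dim}{2}}$, because $\buch{t_1}\shuffle\cdots\shuffle\buch{t_k}$ is precisely the sum over all orderings of the letters. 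Linearity of $\dChar$ then gives $\Hepp{\UM{n}{r},\vec{\ind}}=\sum_{\abs{\gamma}=r}\dChar(U_{\gamma}\concat V_{\gamma})$, and it remains to identify each summand with the corresponding term of \eqref{eq:hepp-Unr}.

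The engine of the proof is a \emph{block absorption} lemma: writing $P\defas\ind_{\gamma}$ for the sum of the letters of $U_{\gamma}$, I claim that
\begin{equation*}
	\Char\big(U_{\gamma}\concat\buch{w_1,\ldots,w_m}\big)=\Char(U_{\gamma})\cdot\Char\big(\buch{P+w_1,w_2,\ldots,w_m}\big)
\end{equation*}
for arbitrary letters $w_i$. This follows by induction on $m$ from the back-peeling relation $\Char(s_1,\ldots,s_k)=\frac{1}{s_1+\cdots+s_k}\Char(s_1,\ldots,s_{k-1})$: every word in $U_{\gamma}\concat\buch{w_1,\ldots,w_m}$ has the same letter-sum and the same last letter $w_m$, so peeling it off reduces to the case $m-1$, matched by the same peel on the right; the base case $m=0$ is $\Char(\buch{})=1$. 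Extending the identity linearly in its second argument lets me insert the whole block $V_{\gamma}$, and multiplicativity of $\Char$ (\autoref{lem:Char-shuffle}) evaluates $\Char(U_{\gamma})=\prod_{e\in\gamma}\ind_e^{-1}=1/\ind^{\gamma}$.

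Now set $q_e\defas\ind_e-\frac{\Dim}{2}$ for $e\notin\gamma$, so that the standing condition $\sdc{M}=0$ reads $P+Q=0$ with $Q\defas\sum_{e\notin\gamma}q_e$. Treating the indices as generic (so that $\sdc{M}$ is a nonzero linear form), I use that every word of $U_{\gamma}\concat V_{\gamma}$ has letter-sum $\sdc{M}$, whence $\dChar(U_{\gamma}\concat V_{\gamma})=\sdc{M}\cdot\Char(U_{\gamma}\concat V_{\gamma})=\sdc{M}\cdot\Char(U_{\gamma})\,\Char(\buch{P}\bullet V_{\gamma})$ by block absorption, where $\buch{P}\bullet V_{\gamma}$ adds $P$ to the first letter of each word of $V_{\gamma}$. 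Each such word has final partial sum $P+Q=\sdc{M}$, so the explicit factor $\sdc{M}$ cancels this last denominator and turns $\sdc{M}\,\Char$ into $\dChar$; only then is it safe to impose $\sdc{M}=0$, giving $\dChar(U_{\gamma}\concat V_{\gamma})=\Char(U_{\gamma})\,\dChar(\buch{P}\bullet V_{\gamma})$. The words of $\buch{P}\bullet V_{\gamma}$ have vanishing letter-sum, so \autoref{lem:dChar-apode} reverses each of them and drops the (now leading) shifted letter; regrouping the sum over orderings of $\gamma^{c}$ yields $(-1)^{\ell-1}\sum_{f\notin\gamma}\Char\big(\bigshuffle_{e\notin\gamma,\,e\neq f}\buch{q_e}\big)$ with $\ell=n-r$, which multiplicativity evaluates to $(-1)^{\ell-1}\sum_f\prod_{e\neq f}q_e^{-1}=(-1)^{\ell-1}Q/\prod_{e\notin\gamma}q_e$.

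Collecting the pieces with $Q=-P=-\ind_{\gamma}$, $\Char(U_{\gamma})=1/\ind^{\gamma}$ and $\prod_{e\notin\gamma}q_e=(-1)^{\ell}\prod_{e\notin\gamma}(\frac{\Dim}{2}-\ind_e)$, all signs collapse and $\dChar(U_{\gamma}\concat V_{\gamma})=\ind_{\gamma}/\big(\ind^{\gamma}\prod_{e\notin\gamma}(\frac{\Dim}{2}-\ind_e)\big)$; summing over the $r$-subsets $\gamma$ produces \eqref{eq:hepp-Unr}. I expect the main obstacle to be the block-absorption step together with the surrounding pole bookkeeping: one must keep $\sdc{M}$ formally nonzero so that $\Char(U_{\gamma}\concat V_{\gamma})$ is defined, let the explicit factor $\sdc{M}$ cancel the spurious pole, and specialise to $\sdc{M}=0$ only afterwards, all while tracking the signs introduced by $q_e=-(\frac{\Dim}{2}-\ind_e)$ and by the antipode. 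By contrast, the initial regrouping and the appeals to \autoref{lem:Char-shuffle} and \autoref{lem:dChar-apode} are routine.
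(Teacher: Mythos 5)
Your proof is correct and follows essentially the same route as the paper: group the permutations by the $r$-subset $\gamma$ of leading edges, factor out $\Char\big(\bigshuffle_{e\in\gamma}\buch{\ind_e}\big)=1/\ind^{\gamma}$ via multiplicativity, and evaluate the remaining tail with the antipode identity (\autoref{lem:dChar-apode}) plus multiplicativity once more. The only difference is bookkeeping: the paper prepends the extra letter $\buch{\ind_\gamma}$, so the antipode collapses the tail to a single shuffle over all of $\gamma^{c}$ in one step, whereas you absorb $\ind_\gamma$ into the first letter and therefore need the additional regrouping over $f$ and the relation $Q=-P$ at the end---the two are the same identity, since $\ind_\gamma\,\dChar\big(\buch{\ind_\gamma}v\big)=\dChar\big(\buch{P}\bullet v\big)$ for every word $v$.
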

\begin{proof}
	Recall that every flag $M^{\sigma}_{\bullet}$ has the same rank sequence, and the increments are
	\begin{equation*}
		\Dsdc{\sigma}
		= \buch{ \ind_{\sigma(1)},\ldots,\ind_{\sigma(r)},
			\ind_{\sigma(r+1)}-\tfrac{\Dim}{2},
			\ldots,
			\ind_{\sigma(n)}-\tfrac{\Dim}{2}
		}.
	\end{equation*}
	Consider any submatroid $\gamma$ of $\UM{n}{r}$ with $r$ elements; note that $\gamma \cong \UM{r}{r}$ is a forest with $\sdc{\gamma}=\ind_{\gamma}$.
	The flags through $\gamma=M^{\sigma}_{r}=\set{\sigma(1),\ldots,\sigma(r)}$ are in bijection with pairs of permutations $(\sigma|_{\gamma},\sigma|_{M\setminus\gamma})$ of $\gamma$ and its complement. The sum of all these pairs adds
	\begin{equation*}
		\sum_{\sigma\colon M^{\sigma}_r=\gamma} \dChar(\Dsdc{\sigma})
		=
		\Char\Big(\bigshuffle_{e\in \gamma} \buch{\ind_{e}} \Big) 
		\cdot \ind_{\gamma} \cdot
		\dChar\Big( \buch{\ind_{\gamma}} \Big[ \bigshuffle_{e \notin \gamma} \buch{\ind_e-\tfrac{\Dim}{2} } \Big] \Big)
	\end{equation*}
	to the Hepp bound. The first term on the right is $1/\ind^{\gamma}$ by \eqref{eq:Char-shuffle}. With \eqref{eq:dChar-apode} we rewrite the last term as $(-1)^{n-r} \Char(\shuffle_{e \notin \gamma} \buch{\ind_e-\frac{\Dim}{2}})$ and apply  the multiplicativity once more.
\end{proof}
In the special case of cycles $\UM{n}{n-1} \cong \GMat{\Cycle{n}}$, the sum goes over edge complements $\gamma = \set{1,\ldots,n}\setminus\set{e}$. Since $0=\sdc{\Cycle{n}}=\ind_{\gamma} + \ind_e - \Dim/2$, the summand simplifies to $1/\ind^{\gamma}$ and we recover \eqref{eq:hepp-cycle}. Analogously, the Hepp bound of a bond $\UM{n}{1} \cong \GMat{\Bond{n}}$ becomes
	\begin{equation}
		\Hepp{\Bond{n},\vec{\ind}} = \frac{\Dim/2}{(\Dim/2-\ind_1)\cdots(\Dim/2-\ind_n)}.
		\label{eq:hepp-bond}%
	\end{equation}

\subsection{Poles and factorizations}

\begin{lemma}\label{lem:simple-poles}
	The singularities of the Hepp bound of a matroid $M$ are a subset of the hyperplanes $\set{\sdc{\gamma}=0}$ where $\emptyset \neq \gamma \subsetneq M$. All poles are simple.
\end{lemma}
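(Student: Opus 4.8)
The plan is to show that each summand in the combinatorial definition \eqref{eq:multi-hepp-from-sectors} can only contribute poles along the stated hyperplanes, and then argue that after summing over permutations no higher-order poles survive. Recall that $\Hepp{M,\vec{\ind}} = \sum_{\sigma}\dChar(\Dsdc{\sigma})$, where $\dChar(s_1,\ldots,s_N) = 1/(s_1(s_1+s_2)\cdots(s_1+\cdots+s_{N-1}))$. Each partial sum $s_1+\cdots+s_k$ equals $\sdc{M^{\sigma}_k}$, the superficial degree of convergence of the flag element $M^{\sigma}_k = \set{\sigma(1),\ldots,\sigma(k)}$. Hence every denominator factor of every summand is of the form $\sdc{\gamma}$ for some submatroid $\emptyset\neq\gamma\subsetneq M$, which immediately confines all poles of the total function to the hyperplanes $\set{\sdc{\gamma}=0}$.

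The nontrivial part is the \emph{simplicity} of the poles. Within a single summand $\dChar(\Dsdc{\sigma})$, the factor $\sdc{\gamma}$ can appear at most once, since the flag $M^{\sigma}_\bullet$ is strictly increasing and passes through any given subset $\gamma$ at most once (namely at $k=\abs{\gamma}$, and only if $\gamma=M^{\sigma}_{\abs{\gamma}}$). So each individual term has at most a simple pole along $\set{\sdc{\gamma}=0}$. The danger is that summing over permutations could in principle produce a double pole if several factors collide on the \emph{same} hyperplane. First I would fix a hyperplane $H=\set{\sdc{\gamma_0}=0}$ and ask whether two distinct linear factors $\sdc{\gamma}$ and $\sdc{\gamma'}$ appearing in one summand can both restrict to $H$. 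Since the $\sdc{\cdot}$ are linear functions of $\vec{\ind}$ that are generically in distinct directions, two factors $\sdc{\gamma}$, $\sdc{\gamma'}$ cut out the same hyperplane only when they are proportional; I would verify that the flag structure forbids two \emph{proportional} distinct partial-sum factors inside a single $\dChar(\Dsdc{\sigma})$, so that no summand has a double pole along $H$ to begin with.

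The remaining and \emph{main obstacle} is to rule out a second-order pole arising from cancellation failure across summands: even if each term is simple along $H$, the Laurent expansion of the sum near $H$ could a priori blow up if the terms reinforced. The clean way around this is to invoke the residue factorization \eqref{eq:intro-residues}: the residue of $\Hepp{M,\vec{\ind}}$ along $\set{\sdc{\gamma}=0}$ (for $\gamma\in\Sing{M}$, i.e.\ $\gamma$ and $M/\gamma$ both connected) equals $\Hepp{\gamma,\vec{\ind}'}\cdot\Hepp{M/\gamma,\vec{\ind}''}$, which is itself a \emph{finite} rational function generically on $H$. A pole of order $\geq 2$ would force this residue to itself have a pole along $H$, but $H$ is the locus $\sdc{\gamma}=0$ and the factorized residue depends on the independent index groups $\vec{\ind}'$ and $\vec{\ind}''$, not on the combination $\sdc{\gamma}$; hence it is regular there. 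This shows the pole order is exactly one on every hyperplane that is genuinely singular, and at most one elsewhere.

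Concretely I would structure the argument as follows: (i) read off from \eqref{eq:multi-hepp-from-sectors} that all denominator factors are partial sums $\sdc{M^{\sigma}_k}$, placing all singularities on the hyperplanes $\set{\sdc{\gamma}=0}$; (ii) observe that within a fixed permutation the flag is a chain, so no submatroid $\gamma$ appears as a flag element more than once and each summand carries at most a simple factor along any given hyperplane; (iii) appeal to the factorization of residues established earlier to conclude that the residue along $\set{\sdc{\gamma}=0}$ is a genuine rational function regular on that hyperplane, precluding poles of order $\geq 2$. I expect step (iii) to do the real work, while (i) and (ii) are bookkeeping. An alternative to (iii), avoiding forward reference to the full residue factorization, would be a direct multiplicity count using the shuffle/multiplicativity machinery of \autoref{lem:Char-shuffle}: grouping permutations by the position at which the flag first crosses $\gamma$ and using $\Char(a\shuffle b)=\Char(a)\Char(b)$ to resum, one sees the dangerous factor $1/\sdc{\gamma}$ is produced linearly, never quadratically. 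I would present whichever of these the paper's logical order permits.
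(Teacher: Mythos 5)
Your step (i) is exactly the paper's first observation, and your concern in step (ii) is well placed: the real issue is whether two \emph{distinct} flag elements $\gamma\neq\gamma'$ of a single summand could have proportional linear forms $\sdc{\gamma}\propto\sdc{\gamma'}$ on the constraint space, which would give that summand a double pole. But this is precisely the point you leave as ``I would verify'' --- and it is the entire content of the paper's proof. The paper settles it by noting that, for $\loops{M}>0$, the linear map $(\ind_1,\ldots,\ind_N)\mapsto(\sdc{M^{\sigma}_1},\ldots,\sdc{M^{\sigma}_{N-1}},\Dim)$ is invertible, with explicit inverse $\ind_{\sigma(k)}=\sdc{M^{\sigma}_k}-\sdc{M^{\sigma}_{k-1}}+\tfrac{\Dim}{2}(\loops{M^{\sigma}_k}-\loops{M^{\sigma}_{k-1}})$, so the $N-1$ denominator factors of each summand are independent coordinates on $\C^N$, in particular pairwise non-proportional and non-degenerate; the forest case $\loops{M}=0$, where $\Dim$ drops out and this map does not exist, is dispatched separately via \autoref{cor:hepp-zero-forest}. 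Without this (or an equivalent) verification, your proposal has no proof of simplicity at all.

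The step you instead call the ``main obstacle'' is not an obstacle, and your resolution of it does not work. A \emph{finite} sum of rational functions, each with pole order at most one along a hyperplane $H$, has pole order at most one along $H$: Laurent orders never increase under addition, and ``reinforcement'' can only make the coefficient of $1/\sdc{\gamma}$ nonzero, never produce $1/\sdc{\gamma}^2$. So once (ii) is actually proven, nothing remains to do. Moreover, the appeal to the residue factorization \eqref{eq:hepp-residue} cannot substitute for (ii): that proposition is proven later, only for \emph{connected} $M$, and --- more fundamentally --- its statement does not constrain the pole order. If ``residue'' means the Laurent coefficient $c_{-1}$ along $\sdc{\gamma}=0$, then its finiteness and factorization say nothing about $c_{-2}$; if it means $\restrict{(\sdc{\gamma}\cdot\Hepp{M,\vec{\ind}})}{\sdc{\gamma}=0}$, then invoking the formula already presupposes the simplicity you are trying to establish. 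Your shuffle-based alternative suffers the same defect: it re-derives residue data, which is information about $c_{-1}$, while the lemma is a statement about the absence of $c_{-2}$; only a factor-by-factor analysis of the summands, as in the paper, delivers that.
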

\begin{proof}
	If $M$ is a forest, then there are no poles due to \autoref{cor:hepp-zero-forest}, so let $\loops{M}>0$.
	By \autoref{def:multi-hepp-from-sectors} the first claim is obvious.
	For all summands $\sigma$ of \eqref{eq:multi-hepp-from-sectors}, the linear map
	\begin{equation*}
		\C^N \ni (\ind_1,\ldots,\ind_N)
		\mapsto
		(\sdc{M^{\sigma}_1},\ldots,\sdc{M^{\sigma}_{N-1}},\Dim) \in \C^N
	\end{equation*}
	has inverse $\ind_{\sigma(k)}=\sdc{M^{\sigma}_k}-\sdc{M^{\sigma}_{k-1}}+\frac{\Dim}{2}(\loops{M^{\sigma}_k}-\loops{M^{\sigma}_{k-1}})$, where $\sdc{M^{\sigma}_0}=\loops{M^{\sigma}_0}=0$ for $k=1$ and $\sdc{M^{\sigma}_N}=0$ at $k=N$. All factors in the denominator of the summand $\sigma$ are therefore independent coordinates on $\C^N$.
\end{proof}
Most of these potential poles are actually absent due to cancellations in the sum \eqref{eq:multi-hepp-from-sectors}. The residues can be expressed in terms of sub- and quotient matroids.
\begin{definition}
	The \emph{quotient} (contraction) of a matroid $M$ by a subset $\gamma \subseteq M$ is the matroid $M/\gamma$ on the complement $\EG{M/\gamma} = \EG{M}\setminus \gamma$ such that every $\delta \subseteq \EG{M/\gamma}$ has corank
	\begin{equation}
		\loopsIn{M/\gamma}{\delta} = \loops{\delta \cup \gamma} - \loops{\gamma}.
		\label{eq:loops-quotient}%
	\end{equation}
\end{definition}
\begin{figure}
	\centering
	$ G=\Graph[0.4]{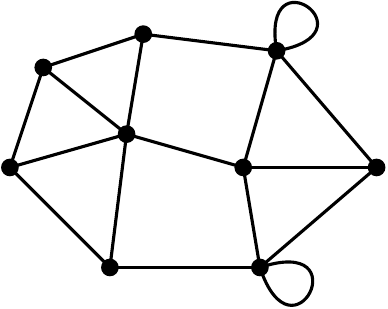}$
	\qquad $\supset$\qquad
	$\gamma=\Graph[0.4]{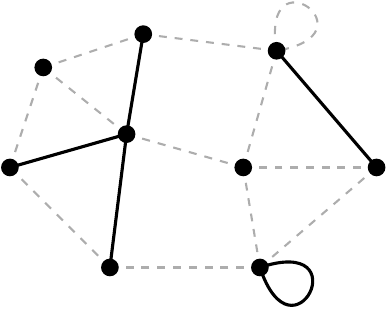}$
	\qquad$\mapsto$\qquad
	$ G/\gamma = \Graph[0.45]{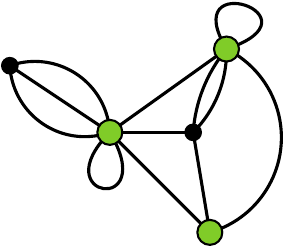}$
	\caption{A subgraph $\gamma\subset G$ with three non-trivial connected components. In the quotient, these components correspond to the highlighted vertices.}%
	\label{fig:quotient}%
\end{figure}
\begin{example}
	Given a graph $G$ and a subgraph $\gamma$, the \emph{quotient graph} $G/\gamma$ is obtained by contracting each connected component of $\gamma$ to a single vertex (see \autoref{fig:quotient}). This construction computes the matroid quotient: $\GMat{G}/\GMat{\gamma} \cong \GMat{G/\gamma}$.
\end{example}
\begin{proposition}\label{lem:hepp-residue}
	Given a connected matroid $M$ and a submatroid $\emptyset \neq \gamma \subsetneq M$, let $\vec{\ind}_{\gamma}=(\ind_e)_{e \in \gamma}$ denote only those indices that belong to $\gamma$, and write $\vec{\ind}_{\gamma^c}=(\ind_e)_{e \notin \gamma}$ for the rest such that $\vec{\ind}=(\vec{\ind}_{\gamma},\vec{\ind}_{\gamma^c})$.
	Then the residue of the Hepp bound at $\sdc{\gamma}=0$ factorizes:
	\begin{equation}
		\Res_{\sdc{\gamma}=0} \Hepp{M,\vec{\ind}_{\gamma},\vec{\ind}_{\gamma^c}} = \Hepp{\gamma,\vec{\ind}_{\gamma}} \Hepp{M/\gamma,\vec{\ind}_{\gamma^c}}.
		\label{eq:hepp-residue}%
	\end{equation}
\end{proposition}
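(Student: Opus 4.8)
The plan is to compute the residue directly from the combinatorial formula $\Hepp{M,\vec{\ind}} = \sum_{\sigma\in\perms{M}}\dChar(\Dsdc{\sigma})$ of \eqref{eq:multi-hepp-from-sectors}, summand by summand. Set $r\defas\abs{\gamma}$. By \autoref{lem:simple-poles} the factors $\sdc{M^{\sigma}_1},\ldots,\sdc{M^{\sigma}_{N-1}}$ are independent coordinates, so the summand $\dChar(\Dsdc{\sigma})$ has a pole along $\set{\sdc{\gamma}=0}$ if and only if $\sdc{\gamma}$ occurs among them, that is if and only if $M^{\sigma}_k=\gamma$ for some $k$. Since $M^{\sigma}_k$ is the set of the first $k$ edges, this forces $k=r$ and $M^{\sigma}_r=\gamma$. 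Hence only those permutations whose first $r$ edges are exactly the edges of $\gamma$ contribute, and each such $\sigma$ splits uniquely into a permutation $\alpha\in\perms{\gamma}$ of its first $r$ entries and a permutation $\beta\in\perms{M/\gamma}$ of its last $N-r$ entries (identifying the complement $\gamma^c$ with the ground set of $M/\gamma$). This gives a bijection between contributing $\sigma$ and pairs $(\alpha,\beta)$.

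Next I would show that the increment word is the \emph{concatenation} $\Dsdc{\sigma}=vw$ with $v\defas\Dsdc{\alpha}$ computed in $\gamma$ and $w\defas\Dsdc{\beta}$ computed in $M/\gamma$. For $k\leq r$ this is immediate, because $M^{\sigma}_k\subseteq\gamma$ and the corank increment is the same whether taken in $M$ or in $\gamma$, so $\Dsdc[k]{\sigma}=\Dsdc[k]{\alpha}$. For $k>r$ I use the defining corank formula \eqref{eq:loops-quotient}: writing $j=k-r$, one has $\loops{M^{\sigma}_k}=\loopsIn{M/\gamma}{(M/\gamma)^{\beta}_{j}}+\loops{\gamma}$, so the increment $\loops{M^{\sigma}_k}-\loops{M^{\sigma}_{k-1}}$ in $M$ telescopes the constant $\loops{\gamma}$ away and equals the corresponding corank increment in $M/\gamma$. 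Since the letter $\Dsdc[k]{\sigma}$ equals $\ind_{\sigma(k)}$ or $\ind_{\sigma(k)}-\tfrac{\Dim}{2}$ according to this increment, it matches the corresponding letter of $w$.

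The heart of the argument is then a residue identity for $\dChar$ of a concatenation. The word $vw$ has length $N$, so $\dChar(vw)$ has $N-1$ denominator factors, which I split at position $r$. The first $r$ partial sums run $v_1,\,v_1+v_2,\,\ldots,\,v_1+\cdots+v_r=\abs{v}$ and produce $\Char(v)=\dChar(v)/\abs{v}$, using the relation $\dChar(v)=(v_1+\cdots+v_r)\Char(v)$; the remaining $N-1-r$ factors are $\prod_{i=1}^{N-1-r}(\abs{v}+w_1+\cdots+w_i)^{-1}$. As $\abs{v}=\sdc{\gamma}$ is exactly the variable in which the residue is taken, and the factor $1/\abs{v}$ is the unique source of the pole, extracting the simple pole gives $\Res_{\sdc{\gamma}=0}\dChar(vw)=\dChar(v)\,\dChar(w)$, the shifts by $\abs{v}$ collapsing to the denominator factors of $\dChar(w)$ at $\abs{v}=0$. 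Summing over the pairs then factors the residue,
\[
\Res_{\sdc{\gamma}=0} \Hepp{M,\vec{\ind}} = \sum_{\alpha\in\perms{\gamma}}\sum_{\beta\in\perms{M/\gamma}}\dChar(\Dsdc{\alpha})\,\dChar(\Dsdc{\beta}) = \Hepp{\gamma,\vec{\ind}_{\gamma}}\,\Hepp{M/\gamma,\vec{\ind}_{\gamma^c}},
\]
which is consistent with the logarithmic-divergence constraints because $\sdc{M/\gamma}=\sdc{M}-\sdc{\gamma}=-\sdc{\gamma}$ vanishes together with $\sdc{\gamma}$.

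I expect the main obstacle to be the bookkeeping in the second step: confirming that the quotient's corank recursion \eqref{eq:loops-quotient} reproduces the loop increments of $M$ beyond position $r$ exactly, so that every $-\tfrac{\Dim}{2}$ correction in the letters lands in the correct slot of $w$. Once the concatenation structure is secured, identifying the contributing permutations and evaluating the $\dChar$ residue are routine given \autoref{lem:simple-poles} and the elementary relations between $\Char$ and $\dChar$.
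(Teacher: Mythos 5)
Your route is the same as the paper's---work summand by summand in \eqref{eq:multi-hepp-from-sectors}, identify the contributing permutations as pairs $(\alpha,\beta)\in\perms{\gamma}\times\perms{M/\gamma}$, check via \eqref{eq:loops-quotient} that the increment word concatenates as $\Dsdc{\sigma}=\Dsdc{\alpha}\,\Dsdc{\beta}$, and peel off the factor $1/\sdc{\gamma}$ to factor the residue---and your second, third and fourth steps are correct and match the paper's displayed computation. The gap is in the first step. From \autoref{lem:simple-poles} you infer that a summand $\dChar(\Dsdc{\sigma})$ has a pole along $\set{\sdc{\gamma}=0}$ \emph{if and only if} $M^{\sigma}_k=\gamma$ for some $k$. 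But that lemma only says that the denominator factors of a \emph{single} summand are independent coordinates; it says nothing about when the hyperplanes $\set{\sdc{\delta}=0}$ for \emph{different} subsets $\delta$ coincide. A summand contributes to the residue whenever one of its factors $\sdc{M^{\sigma}_k}$ is \emph{proportional} to $\sdc{\gamma}$ (so that the vanishing loci agree), not only when it is literally the linear form $\sdc{\gamma}$, and ruling out accidental proportionality is precisely the content of the step you skipped.

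This is where the paper works and where connectedness of $M$ enters: after eliminating $\Dim$, the coefficients of $\loops{M}\sdc{\delta}$ take exactly two values, $\loops{M}-\loops{\delta}$ on $\delta$ and $-\loops{\delta}$ off $\delta$, so proportionality $\sdc{\delta}\propto\sdc{\gamma}$ forces $\delta=\gamma$ or $\delta=\gamma^c$; and $\sdc{\gamma^c}\propto\sdc{\gamma}$ happens precisely when $\loops{M}=\loops{\gamma}+\loops{\gamma^c}$, i.e.\ when $M\cong\gamma\oplus\gamma^c$ is disconnected. Your proof never uses the hypothesis that $M$ is connected, yet the proposition is false without it: for the forest with two isolated edges and $\gamma=\set{1}$ (the remark following the proposition in the paper), the constraint $\sdc{M}=0$ gives $\ind_2=-\ind_1$, so the flag through $\gamma^c=\set{2}$ also blows up on $\set{\ind_1=0}$ and cancels the contribution of the flag through $\gamma$; the residue of $\Hepp{M,\vec{\ind}}=0$ is then $0$, while the right-hand side of \eqref{eq:hepp-residue} equals $1$. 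Since your argument as written would ``prove'' the formula in that case as well, the missing proportionality-plus-connectedness argument is a genuine gap rather than a cosmetic omission; once it is inserted, the rest of your proof goes through.
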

\begin{proof}
	Note that $\abs{M}\geq 2$, so we must have $\loops{M} \geq 1$ for $M$ to be connected. The linear function $\loops{M}\sdc{\gamma}=\vec{\ind} \cdot \vec{c}$ of $\vec{\ind}$ has exactly two different coefficients, $c_e=\loops{M}-\loops{\gamma}$ for $e \in \gamma$ and $c_e = -\loops{\gamma}$ when $e\notin \gamma$. Only one other submatroid yields the same partition, namely the complement $\gamma^c=M\setminus \gamma$. But $\sdc{\gamma}$ and $\sdc{\gamma^c}$ are linearly independent, since
	\begin{equation*}
		\det 
		\begin{pmatrix}
			\loops{M}-\loops{\gamma} & -\loops{\gamma} \\
			-\loops{\gamma^c} & \loops{M} -\loops{\gamma^c} \\
		\end{pmatrix}
		= \loops{M} \big[ \loops{M}-\loops{\gamma}-\loops{\gamma^c} \big]
		\neq 0
	\end{equation*}
	because $\loops{M} = \loops{\gamma} + \loops{\gamma^c}$ would imply that $M\cong \gamma \oplus \gamma^c$ is disconnected. It follows that a summand $\sigma$ in \eqref{eq:multi-hepp-from-sectors} is singular on $\sdc{\gamma}=0$ only if its flag goes through $\gamma = M^{\sigma}_k$ at $k=\abs{\gamma}$. So $\alpha \defas \sigma|_{\set{1,\ldots,k}}$ is a permutation of $\gamma$, and we can view $\beta\defas\sigma|_{\set{k+1,\ldots,N}}$ as a permutation of the quotient $Q\defas M/\gamma$. We can therefore write
	\begin{equation*}
		\Res_{\sdc{\gamma}=0} \Hepp{M,\vec{\ind}}
		= \Res_{\sdc{\gamma}=0}
		\bigg(
			\sum_{\alpha \in \perms{\gamma}} \dChar(\Dsdc{\alpha}) 
		\bigg)
		\frac{1}{\sdc{\gamma}}
		\bigg( 
			\sum_{\beta \in \perms{Q}}
			\prod_{i=1}^{N-k-1}
			\frac{1}{\sdc{\gamma} + \sdc{Q^{\beta}_i}}
		\bigg)
	\end{equation*}
	because we have $\sdc{M^{\sigma}_{k+i}}=\sdc{\gamma} + \sdc{Q^{\beta}_i}$ due to \eqref{eq:loops-quotient}. The sum over $\alpha$ gives $\Hepp{\gamma,\vec{\ind}_{\gamma}}$, and similarly we get $\Hepp{Q,\vec{\ind}_Q}$ from the sum over $\beta$, since $\sdc{\gamma} = 0$ on the pole.
\end{proof}
\begin{remark}
	Formula \eqref{eq:hepp-residue} is wrong for disconnected matroids. The forest $G=\Graph[0.3]{twosticks}$ has a subgraph $\gamma=\set{1}\cong \gEdge$ with quotient $\set{2}\cong\gEdge$. The right hand side of \eqref{eq:hepp-residue} gives $1$ for the residue of $\Hepp{G,\vec{\ind}}=0$ at $\sdc{\gamma}=\ind_1=0$. This contradiction arises because also the subgraph $\gamma=\set{2}$ gives vanishing $\sdc{\set{2}}=0$ on $\sdc{\gamma}=0$, since $0=\sdc{G}=\ind_1+\ind_2$.
\end{remark}
\begin{corollary}
	If $M$ is a connected matroid with at least two edges, and $e\in M$, then
	\begin{equation}
		\Res_{\ind_e=0} \Hepp{M,\vec{\ind},\ind_e}
		= \Hepp{M/e,\vec{\ind}}
		\quad\text{and}\quad
		\Res_{\ind_e=\Dim/2} \Hepp{M,\vec{\ind},\ind_e}
		= \Hepp{M\setminus e,\vec{\ind}}.
		\label{eq:hepp-res-e}%
	\end{equation}
\end{corollary}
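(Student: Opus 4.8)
The plan is to obtain both identities in \eqref{eq:hepp-res-e} as the two extreme special cases of the residue factorization \eqref{eq:hepp-residue} from \autoref{lem:hepp-residue}: the contraction comes from the one-element submatroid $\gamma=\{e\}$, and the deletion from its complement $\gamma=M\setminus e$. Before invoking the proposition I would record that a connected matroid with $\abs{M}\geq 2$ has neither loops nor coloops, since a loop or coloop $e$ is always a direct summand $M\cong\{e\}\oplus(M\setminus e)$, contradicting connectedness. In particular $\loops{M}\geq 1$ (so that $\Dim$ is genuinely fixed by $\sdc{M}=0$), $\loops{\{e\}}=0$, and $\loops{M\setminus e}=\loops{M}-1$.

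For the left identity I would apply \autoref{lem:hepp-residue} with $\gamma=\{e\}$. As $e$ is not a loop, $\sdc{\gamma}=\ind_e$, so the singular hyperplane $\sdc{\gamma}=0$ is exactly $\{\ind_e=0\}$; moreover $\gamma=\{e\}$ is a single-element matroid, whose Hepp bound is $1$ by the $N=1$ convention of \autoref{def:multi-hepp-from-sectors}, and $M/\gamma=M/e$. Thus \eqref{eq:hepp-residue} specialises to $\Res_{\ind_e=0}\Hepp{M,\vec{\ind},\ind_e}=\Hepp{M/e,\vec{\ind}}$. (Should $M/e$ be disconnected, both sides vanish by \autoref{prop:hepp-disconnected}, so no separate case is needed.)

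For the right identity I would instead take $\gamma=M\setminus e$. Its quotient $M/(M\setminus e)$ is the single-element matroid on $\{e\}$, again of Hepp bound $1$, so \eqref{eq:hepp-residue} reads $\Res_{\sdc{M\setminus e}=0}\Hepp{M,\vec{\ind},\ind_e}=\Hepp{M\setminus e,\vec{\ind}}$. It then remains only to locate this pole. Writing $\sum_{f\neq e}\ind_f=\tfrac{\Dim}{2}\loops{M}-\ind_e$ from $\sdc{M}=0$ and using $\loops{M\setminus e}=\loops{M}-1$, I compute $\sdc{M\setminus e}=\tfrac{\Dim}{2}-\ind_e$; hence the hyperplane $\sdc{M\setminus e}=0$ is precisely $\{\ind_e=\Dim/2\}$, and substituting this identification gives the right identity of \eqref{eq:hepp-res-e}.

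The one delicate point --- and the only genuine obstacle --- is the meaning of the residue in the deletion case. Since $\sdc{M\setminus e}=\tfrac{\Dim}{2}-\ind_e$ and $\ind_e-\tfrac{\Dim}{2}$ differ by a sign, the residue has to be taken along the hyperplane with respect to the linear form $\sdc{M\setminus e}$, exactly as in \autoref{lem:hepp-residue}; this is what produces $\Hepp{M\setminus e}$ with a plus sign and the correct normalisation, rather than a spurious numerical factor coming from the Jacobian of the change of variables. I would sanity-check the whole computation on the bond $\UM{n}{1}$, where the closed form \eqref{eq:hepp-bond} lets one verify directly that $\Res_{\sdc{M\setminus e}=0}\Hepp{\UM{n}{1},\vec{\ind}}=\Hepp{\UM{n-1}{1},\vec{\ind}}$. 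Beyond this bookkeeping, the corollary is an immediate specialisation of the already-established factorisation and needs no new estimates or constructions.
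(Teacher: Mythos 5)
Your proof is correct and follows essentially the same route as the paper: both specialize the residue factorization \eqref{eq:hepp-residue} of \autoref{lem:hepp-residue} to $\gamma=\set{e}$ and $\gamma=M\setminus e$, use connectedness to rule out self-loops and bridges (giving $\sdc{\set{e}}=\ind_e$ and $\sdc{M\setminus e}=\tfrac{\Dim}{2}-\ind_e$ via $\sdc{M}=0$), and invoke the $N=1$ convention $\Hepp{\set{e},\ind_e}=1$ for the subgraph and quotient factors. Your extra remark pinning down the residue convention (taking it with respect to the linear form $\sdc{M\setminus e}$, as in the proposition) is a legitimate clarification of a point the paper leaves implicit, and your parenthetical about the disconnected-quotient case is consistent with \autoref{prop:hepp-disconnected}.
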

\begin{proof}
	Since $M$ is connected, $e$ is not a self-loop and thus $\sdc{\set{e}} = \ind_e$. This proves the first claim, because $\Hepp{\set{e},\ind_e}=1$. Similarly, $e$ cannot be a bridge, so we must have $\loops{M\setminus e} = \loops{M}-1$ and therefore $\sdc{M\setminus e} = \sdc{M} + \Dim/2 - \ind_e$. Now use $\sdc{M}=0$.
\end{proof}
This yields another proof of one half of \autoref{prop:hepp-disconnected}, namely that, if $M$ is connected, then $\Hepp{M,\vec{\ind}}$ is not the zero function. We use the following fact:
\begin{lemma}[{\cite[Claim~6.5]{Tutte:ConnectivityMatroids}}]
	If a connected matroid $M$ and an edge $e\in M$ are given, then at least one of $M\setminus e$ and $M/e$ is also connected.
\end{lemma}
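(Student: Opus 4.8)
The plan is to prove the contrapositive, working throughout with the rank function. Write $E=\EG{M}$ and, for a subset $X\subseteq E$, set $\lambda(X)\defas\rank{X}+\rank{E\setminus X}-\rank{M}$. This \emph{connectivity function} is symmetric, $\lambda(X)=\lambda(E\setminus X)$, and submodular because the rank function is; moreover the equivalence recorded in the proof of \autoref{lem:convdom-nonempty} shows that a bipartition $E=X\sqcup(E\setminus X)$ yields a direct-sum splitting $M\isomorph(M|X)\oplus(M|(E\setminus X))$ exactly when $\lambda(X)=0$. Thus $M$ is connected if and only if $\lambda(X)\geq 1$ for every $\emptyset\neq X\subsetneq E$. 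If $\abs{M}=1$ both minors are empty and the claim is trivial, so I assume $\abs{M}\geq 2$; then connectedness forces $e$ to be neither a loop nor a coloop, i.e. $\rank{\set{e}}=1$ and $\rank{E\setminus e}=\rank{M}$. It therefore suffices to show: if $M\setminus e$ and $M/e$ are \emph{both} disconnected, then $M$ is disconnected.

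So let $X$ be a nontrivial separator of $M\setminus e$ and $Y$ one of $M/e$, both subsets of $E\setminus e$; write $X^{c}=(E\setminus e)\setminus X$ and $Y^{c}=(E\setminus e)\setminus Y$ for complements taken inside $E\setminus e$. The first step is a \textbf{lifting} observation. Since deletion leaves ranks of subsets of $E\setminus e$ unchanged, $\lambda_{M\setminus e}(X)=\rank{X}+\rank{X^{c}}-\rank{M}=0$, and because $E\setminus(X\cup e)=X^{c}$ one finds $\lambda(X\cup e)=\lambda_{M\setminus e}(X)+\big(\rank{X\cup e}-\rank{X}\big)$. Hence if $e\in\closeM{X}$ then $\lambda(X\cup e)=0$, so $X\cup e$ is a nontrivial separator of $M$ and we are done; the same applies to $X^{c}$. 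I may therefore assume $e\notin\closeM{X}$ and $e\notin\closeM{X^{c}}$, which forces $\rank{A\cup e}=\rank{A}+1$ for every $A\subseteq X$ and every $A\subseteq X^{c}$. Dually, using $\rank{M/e}=\rank{M}-1$, the condition $\lambda_{M/e}(Y)=0$ reads $\rank{Y\cup e}+\rank{Y^{c}\cup e}=\rank{M}+1$; were $e$ in the span of neither $Y$ nor $Y^{c}$ this would give $\rank{Y}+\rank{Y^{c}}=\rank{M}-1$, contradicting $\rank{Y}+\rank{Y^{c}}\geq\rank{E\setminus e}=\rank{M}$. After possibly swapping $Y\leftrightarrow Y^{c}$ I may thus assume $e\in\closeM{Y}$.

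The second step combines the two splittings on their common refinement. Put $A_{1}=X\cap Y$, $A_{2}=X\cap Y^{c}$, $A_{3}=X^{c}\cap Y$ and $A_{4}=X^{c}\cap Y^{c}$. Additivity of the rank across the deletion splitting $X\mid X^{c}$ gives $\rank{Y}=\rank{A_{1}}+\rank{A_{3}}$ and $\rank{Y^{c}}=\rank{A_{2}}+\rank{A_{4}}$; additivity of $S\mapsto\rank{S\cup e}-1$ across the contraction splitting $Y\mid Y^{c}$, together with the relations $\rank{A\cup e}=\rank{A}+1$ for $A\subseteq X$ or $A\subseteq X^{c}$ from the previous paragraph, gives $\rank{X}=\rank{A_{1}}+\rank{A_{2}}$ and $\rank{X^{c}}=\rank{A_{3}}+\rank{A_{4}}$, whence $\rank{M}=\rank{X}+\rank{X^{c}}=\sum_{i}\rank{A_{i}}$. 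Finally, computing $\rank{E\setminus A_{2}}$ and $\rank{E\setminus A_{4}}$ through the contraction splitting and using $e\in\closeM{Y}$ (so $\rank{Y\cup e}=\rank{Y}$) yields $\rank{E\setminus A_{2}}=\rank{A_{1}}+\rank{A_{3}}+\rank{A_{4}}$ and $\rank{E\setminus A_{4}}=\rank{A_{1}}+\rank{A_{2}}+\rank{A_{3}}$, so that $\lambda(A_{2})=\lambda(A_{4})=0$. Since $A_{2}\sqcup A_{4}=Y^{c}$ is non-empty, at least one of $A_{2},A_{4}$ is a nontrivial separator of $M$, contradicting the connectedness of $M$.

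I expect the genuine obstacle to be this final bookkeeping rather than any single identity: one must arrange the span conditions on $e$ so that its $\pm 1$ contributions to the various ranks cancel, and one must certify that the resulting candidate $A_{2}$ or $A_{4}$ is really a \emph{nontrivial} separator (nonempty and proper), which is why I keep both pieces of $Y^{c}$ in play. A convenient way to shorten the case analysis is to invoke self-duality: $(M\setminus e)^{\dual}=M^{\dual}/e$ and $M$ is connected iff $M^{\dual}$ is, so deletion and contraction may be interchanged and only ``half'' of the span cases need separate treatment.
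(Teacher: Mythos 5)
Your proposal is correct, and it is worth noting that the paper itself offers \emph{no} proof of this lemma: it is quoted as Claim~6.5 of Tutte's \emph{Connectivity in matroids}, so there is no in-paper argument to compare against. What you have written is a complete, self-contained proof in the contrapositive, via the connectivity function $\lambda(X)=\rank{X}+\rank{E\setminus X}-\rank{M}$. I checked the steps: the reduction to $e$ being neither a loop nor a coloop is right (either would make $\set{e}$ a separator of $M$ when $\abs{M}\geq 2$); the lifting identity $\lambda(X\cup e)=\lambda_{M\setminus e}(X)+\rank{X\cup e}-\rank{X}$ is correct because deletion preserves ranks and $\rank{M\setminus e}=\rank{M}$; the dichotomy forcing $e\in\closeM{Y}$ or $e\in\closeM{Y^c}$ follows exactly as you say from $\rank{Y}+\rank{Y^c}\geq\rank{E\setminus e}=\rank{M}$; and the rank bookkeeping on the common refinement $A_1,\dots,A_4$ (additivity of rank across the deletion split, additivity of $S\mapsto\rank{S\cup e}-1$ across the contraction split) does yield $\rank{M}=\sum_i\rank{A_i}$ and then $\lambda(A_2)=\lambda(A_4)=0$, with $A_2\sqcup A_4=Y^c\neq\emptyset$ and both contained in $E\setminus e$, so one of them is a genuine nontrivial separator of $M$. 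This is essentially the modern standard proof (the one found, in equivalent form, in Oxley's treatment of Tutte's theorem via circuits or via the connectivity function), as opposed to Tutte's original formulation; since the paper only needs the statement as a black box to prove \autoref{cor:hepp-nonzero}, your argument fills in precisely what the paper delegates to the literature.
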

\begin{corollary}\label{cor:hepp-nonzero}
	If $M$ is a connected matroid with at least one edge, then $\Hepp{M,\vec{\ind}}\neq 0$.
\end{corollary}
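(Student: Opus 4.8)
The plan is to argue by induction on the number of edges $N=\abs{M}$, using the residue formula \eqref{eq:hepp-res-e} together with the quoted Tutte lemma as the inductive engine. This gives a proof independent of the positivity argument behind \autoref{prop:hepp-disconnected}: rather than exhibiting a point of $\ConvDom$ where the integrand is positive, I would certify non-vanishing by producing a genuine pole.

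For the base case $N=1$, \autoref{def:multi-hepp-from-sectors} gives $\Hepp{M,\ind_1}=1\neq 0$ directly (the empty denominator product is unity), so there is nothing further to check. For the inductive step, suppose $N\geq 2$ and that the statement holds for all connected matroids with fewer than $N$ edges. Fix any edge $e\in M$. By Tutte's lemma at least one of $M\setminus e$ and $M/e$ is connected, and both of these matroids have $N-1\geq 1$ edges, so the induction hypothesis applies to whichever one is connected. If $M/e$ is connected, then $\Hepp{M/e,\vec{\ind}}\neq 0$ by induction, and the first identity in \eqref{eq:hepp-res-e} reads $\Res_{\ind_e=0}\Hepp{M,\vec{\ind},\ind_e}=\Hepp{M/e,\vec{\ind}}\neq 0$. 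If instead $M\setminus e$ is connected, the second identity in \eqref{eq:hepp-res-e} gives $\Res_{\ind_e=\Dim/2}\Hepp{M,\vec{\ind},\ind_e}=\Hepp{M\setminus e,\vec{\ind}}\neq 0$. In either case $\Hepp{M,\vec{\ind}}$ has a non-vanishing residue along the relevant hyperplane, hence a genuine (simple, by \autoref{lem:simple-poles}) pole, so it cannot be the zero function.

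The logical crux is the final implication, which I would state explicitly: a rational function that vanishes identically has vanishing residue along every hyperplane, so a non-zero residue is incompatible with vanishing. Note that \eqref{eq:hepp-res-e} is exactly the tool available here, since it requires $M$ connected with at least two edges, which is precisely the standing hypothesis in the inductive step. The main point requiring care, I expect, is bookkeeping the ambient constraint $\sdc{M}=0$ under which all these functions live: one must check that the residue in \eqref{eq:hepp-res-e} is taken transversally to that hyperplane, so that the non-zero residue genuinely certifies a pole of the restricted function rather than an artefact of the coordinate $\ind_e$ chosen to vary. This transversality is already implicit in the derivation of \eqref{eq:hepp-res-e} from \autoref{lem:hepp-residue}, so no new work beyond the induction is needed.
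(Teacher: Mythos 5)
Your proposal is correct and matches the paper's own proof essentially line for line: induction on the number of edges with base case $\Hepp{M,\ind_1}=1$, Tutte's lemma to guarantee that $M/e$ or $M\setminus e$ is connected, and the residue formula \eqref{eq:hepp-res-e} to certify a non-vanishing residue, hence a non-zero function. Your extra remark on transversality of the residue to the constraint $\sdc{M}=0$ is a fine point of care, but as you note it is already handled in the derivation of \eqref{eq:hepp-res-e} from \autoref{lem:hepp-residue}, so nothing beyond the paper's argument is needed.
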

\begin{proof}
	The case $\abs{M}=1$ of a single edge is $\Hepp{M,\ind_1} = 1 \neq 0$. We proceed by induction over the number of edges. Suppose $\abs{M} \geq 2$ and pick any $e \in M$. If $ M/e$ is connected, then we know by induction that $\Hepp{M/e,\vec{\ind}} \neq 0$. If $M\setminus e$ is connected, we may similarly assume that $\Hepp{M\setminus e,\vec{\ind}}\neq 0$. In both cases, \eqref{eq:hepp-res-e} shows that the Hepp bound of $M$ cannot be the zero function, because it has a non-vanishing residue.
\end{proof}
\begin{corollary}
	The Hepp bound of a connected matroid $M$ has a pole on the hypersurface $\sdc{\gamma}=0$ if, and only if, both $\gamma$ and its quotient $M/\gamma$ are connected.
\end{corollary}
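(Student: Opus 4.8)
The plan is to read the answer directly off the residue factorization of \autoref{lem:hepp-residue}, combined with the connectedness dichotomy of \autoref{prop:hepp-disconnected}. By \autoref{lem:simple-poles}, every singularity of $\Hepp{M,\vec{\ind}}$ is a \emph{simple} pole supported on one of the hyperplanes $\set{\sdc{\gamma}=0}$ with $\emptyset\neq\gamma\subsetneq M$. A simple pole along such a hyperplane is present if and only if the corresponding residue is not the zero function, so the entire statement reduces to deciding when
\begin{equation*}
	\Res_{\sdc{\gamma}=0} \Hepp{M,\vec{\ind}} \neq 0 .
\end{equation*}

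First I would invoke \autoref{lem:hepp-residue}, which applies because $M$ is connected and $\emptyset\neq\gamma\subsetneq M$, to factor this residue as the product $\Hepp{\gamma,\vec{\ind}_{\gamma}}\cdot\Hepp{M/\gamma,\vec{\ind}_{\gamma^c}}$. On the residue locus we have $\sdc{M/\gamma}=\sdc{M}-\sdc{\gamma}=0$, so each factor is evaluated on its own logarithmic-divergence hypersurface, and the two are rational functions in the \emph{disjoint} variable sets $\vec{\ind}_{\gamma}=(\ind_e)_{e\in\gamma}$ and $\vec{\ind}_{\gamma^c}=(\ind_e)_{e\notin\gamma}$, respectively. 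Both sit inside the field $\Q(\ind_1,\ldots,\ind_N)$, which is an integral domain, and therefore their product vanishes if and only if at least one of the two factors vanishes.

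It then remains only to test each factor separately. Since $\emptyset\neq\gamma\subsetneq M$, both $\gamma$ and the quotient $M/\gamma$ are matroids with at least one edge, so \autoref{cor:hepp-nonzero} (the connected half of \autoref{prop:hepp-disconnected}) gives $\Hepp{\gamma,\vec{\ind}_{\gamma}}\neq 0$ precisely when $\gamma$ is connected and $\Hepp{M/\gamma,\vec{\ind}_{\gamma^c}}\neq 0$ precisely when $M/\gamma$ is connected. Chaining these equivalences, the residue---and hence the pole along $\set{\sdc{\gamma}=0}$---is present exactly when both $\gamma$ and $M/\gamma$ are connected, which is the assertion.

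I do not anticipate a genuinely hard step, since the substance has already been carried out in \autoref{lem:hepp-residue} and \autoref{prop:hepp-disconnected}. The only point that deserves explicit care is the observation that a product of two non-zero rational functions in complementary variable sets cannot cancel; this ``no zero divisors'' fact is what guarantees that the residue simultaneously detects the connectedness of \emph{both} pieces, rather than one being obscured by an accidental coincidence. As a byproduct, this identifies the pole locus of $\Hepp{M,\vec{\ind}}$ with the family $\setexp{\sdc{\gamma}=0}{\gamma\in\Sing{M}}$ promised in the introduction.
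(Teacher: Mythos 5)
Your proposal is correct and takes essentially the same route as the paper, whose entire proof reads ``Apply \autoref{prop:hepp-disconnected} to the right-hand side of \eqref{eq:hepp-residue}.'' You have merely made explicit the supporting details the paper leaves implicit---that poles are simple by \autoref{lem:simple-poles} (so pole presence is equivalent to a non-vanishing residue) and that a product of non-zero elements of the field $\Q(\ind_1,\ldots,\ind_N)$ cannot vanish.
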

\begin{proof}
	Apply \autoref{prop:hepp-disconnected} to the right-hand side of \eqref{eq:hepp-residue}.
\end{proof}
Together with \autoref{lem:simple-poles}, this completely characterizes the poles of the Hepp bound:%
\begin{definition}\label{def:singularities}
	Given a connected matroid $M$, a \emph{singularity} of $M$ is a non-empty submatroid $\gamma \subsetneq M$ such that $\gamma$ and $M/\gamma$ are connected. We denote them as the set
	\begin{equation}
		\Sing{M} \defas \setexp{\emptyset \neq \gamma \subsetneq M}{\text{$\gamma$ and $M/\gamma$ are connected}}.
		\label{eq:singularities}%
	\end{equation}
\end{definition}
\begin{corollary}\label{cor:hepp-poles}
	The Hepp bound of a connected matroid $M$ is a non-zero rational function with simple poles, precisely on the hypersurfaces $\sdc{\gamma}=0$ for $\gamma \in \Sing{M}$.
\end{corollary}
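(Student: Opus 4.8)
The plan is to read the statement off by assembling the facts already established, since this corollary is essentially a consolidation of the preceding results. First I would invoke \autoref{cor:hepp-nonzero}: as $M$ is connected with at least one edge, $\Hepp{M,\vec{\ind}}$ is a non-zero rational function, which takes care of the non-vanishing assertion. Next, \autoref{lem:simple-poles} already confines every singularity to a hyperplane $\set{\sdc{\gamma}=0}$ with $\emptyset\neq\gamma\subsetneq M$ and guarantees that each pole is simple. Hence the only remaining task is to decide, for each such $\gamma$, whether the potential pole on $\set{\sdc{\gamma}=0}$ is actually present.

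To settle which hyperplanes carry a genuine pole, I would feed the residue factorization \eqref{eq:hepp-residue} of \autoref{lem:hepp-residue} into \autoref{prop:hepp-disconnected}. The residue at $\sdc{\gamma}=0$ equals $\Hepp{\gamma,\vec{\ind}_{\gamma}}\,\Hepp{M/\gamma,\vec{\ind}_{\gamma^c}}$, and by \autoref{prop:hepp-disconnected} each factor vanishes identically exactly when the corresponding matroid is disconnected. Since $\emptyset\neq\gamma\subsetneq M$, both $\gamma$ and $M/\gamma$ have at least one edge, so the product is non-zero if and only if $\gamma$ and $M/\gamma$ are both connected, that is, precisely when $\gamma\in\Sing{M}$ by \eqref{eq:singularities}. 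This identifies the pole locus exactly.

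The point to handle with a little care—rather than a real obstacle—is that a non-vanishing residue truly certifies a pole. This is exactly where the factorization does the work: the two factors are honest non-zero rational functions in the disjoint variable sets $\vec{\ind}_{\gamma}$ and $\vec{\ind}_{\gamma^c}$, so their product cannot vanish identically, and the simple pole predicted by \autoref{lem:simple-poles} is not cancelled. Conversely, if either $\gamma$ or $M/\gamma$ is disconnected, the respective factor vanishes identically, the residue is zero, and—poles being simple by \autoref{lem:simple-poles}—there is no singularity on $\set{\sdc{\gamma}=0}$. Collecting non-vanishing, simplicity, and the precise pole set yields the claimed characterization.
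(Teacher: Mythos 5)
Your proposal is correct and takes essentially the same route as the paper, which obtains this corollary by combining \autoref{lem:simple-poles}, \autoref{cor:hepp-nonzero}, and the intermediate corollary whose entire proof is ``apply \autoref{prop:hepp-disconnected} to the right-hand side of \eqref{eq:hepp-residue}''. Your extra care in checking that a non-vanishing residue genuinely certifies a pole (and that the factorized residue, being a product of non-zero rational functions in disjoint variable sets, cannot vanish identically) only makes explicit a step the paper leaves implicit.
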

\begin{example}\label{ex:uniform-poles}
	All submatroids and quotients of $\UM{n}{r}$ are themselves uniform: If $\gamma \subseteq \UM{n}{r}$ has $k=\abs{\gamma}\leq r$ elements, then $\gamma\cong\UM{k}{k} \cong (\UM{1}{1})^{\oplus k}$; if $k \geq r$, then $\gamma\cong \UM{k}{r}$.
	The respective quotients are $\UM{n}{r}/\UM{k}{k} \cong \UM{n-k}{r-k}$ and $\UM{n}{r}/\UM{k}{r} \cong \UM{n-k}{0} \cong (\UM{1}{0})^{\oplus (n-k)}$.
	So only individual edges $\set{e}$ and their complements $\set{e}^c = \set{1,\ldots,n} \setminus \set{e}$ are singular, such that
	\begin{equation*}
		\Sing{\UM{n}{r}} =
		\setexp{\set{e},\set{e}^c}{1\leq e \leq n}
		\quad\text{if we assume $2\leq r\leq n-2$.}
	\end{equation*}
	The $2n$ residues on $\ind_e = 0$ and $\ind_e=\Dim/2$ in \eqref{eq:hepp-res-e} are therefore non-zero, and these poles of $\Hepp{\UM{n}{r},\vec{\ind}}$ align perfectly with the denominators in the formula \eqref{eq:hepp-Unr}. In the case $r=n-1$ of cycles $\UM{n}{n-1}\cong \GMat{\Cycle{n}}$, the edge complements $\Cycle{n}\setminus\set{e} \cong \Path{n}$ are paths and therefore disconnected as matroids, $\set{e}^c \cong \UM{n-1}{n-1} \cong (\UM{1}{1})^{\oplus(n-1)}$. Therefore,
	\begin{equation*}
		\Sing{\UM{n}{n-1}} = \setexp{\set{e}}{1\leq e \leq n}
	\end{equation*}
	shows that $\Hepp{\Cycle{n},\vec{\ind}}$ only has the poles on $\ind_e=0$, as is obvious from \eqref{eq:hepp-cycle}. For a bond $\UM{n}{1}\cong\GMat{\Bond{n}}$, edges have disconnected quotients $\UM{n}{1}/\set{e} \cong  (\UM{1}{0})^{\oplus(n-1)}$ and $\Sing{\UM{n}{1}}$ consists only of the $n$ complements $\set{e}^c$. Indeed, we only see poles at $\ind_e=\Dim/2$ in \eqref{eq:hepp-bond}.
\end{example}
The precise knowledge of the singularities of the Hepp bound also tells us the facets of the convergence cone $\ConvDom$ from \eqref{eq:convergence-domain}. If $\sdc[\vec{\ind}]{\gamma}>0$ for all singular $\gamma \in \Sing{M}$, then the Hepp bound is finite for these indices $\vec{\ind}$. Approaching the boundary $\partial \ConvDom$ where the Mellin integral \eqref{eq:hepp-mellin} diverges therefore implies that $\sdc[\vec{\ind}]{\gamma}\rightarrow 0$ for at least one singular $\gamma$.
\begin{corollary}
	The convergence domain of a connected matroid $M$ is equal to the following intersection of half-spaces, and none of these inequalities is redundant:
	\begin{equation}
		\ConvDom = \bigcap_{\gamma \in \Sing{M}} \setexp{\vec{\ind}\in \R^N}{\sdc[\vec{\ind}]{\gamma}>0}
		\subseteq \R^N.
		\label{eq:conv-minimal}%
	\end{equation}
\end{corollary}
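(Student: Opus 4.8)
The statement splits into an equality of cones and a non-redundancy claim, and I would treat them by separate arguments; write $\ConvDom'\defas\bigcap_{\gamma\in\Sing{M}}\setexp{\vec{\ind}}{\sdc[\vec{\ind}]{\gamma}>0}$ for the right-hand side. The inclusion $\ConvDom\subseteq\ConvDom'$ is immediate from \eqref{eq:convergence-domain}, since $\Sing{M}$ is only a subfamily of all nonempty proper submatroids. For the reverse inclusion I would argue analytically. By \autoref{cor:hepp-poles} the rational function $\Hepp{M,\vec{\ind}}$ is finite and continuous on the open set $\ConvDom'$, because all of its poles lie on the hyperplanes $\sdc{\gamma}=0$ with $\gamma\in\Sing{M}$. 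Suppose for contradiction that some $\vec{b}\in\ConvDom'\setminus\ConvDom$ existed, and fix $\vec{o}\in\ConvDom$, which is available by \autoref{lem:convdom-nonempty}. As $\ConvDom'$ is convex, the segment $[\vec{o},\vec{b}]$ stays inside $\ConvDom'$, and because $\vec{b}\notin\ConvDom$ this segment meets $\partial\ConvDom$ in a first point $\vec{p}\in\ConvDom'$.

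At $\vec{p}$ all constraints satisfy $\sdc[\vec{p}]{\gamma}\ge0$, with equality for at least one nonempty proper $\gamma$; choosing a permutation $\sigma$ with $M^{\sigma}_{\abs{\gamma}}=\gamma$, the corresponding Hepp sector contributes $+\infty$, so the positive integral $\Hepp{M,\vec{p}}$ diverges. Approaching $\vec{p}$ from the $\vec{o}$-side of the segment, the integrand $\Omega(\vec{\ind})/(\PsiTrop_M)^{\Dim/2}$ is positive on $\Projective^G$ and converges pointwise, so Fatou's lemma forces $\Hepp{M,\vec{\ind}}\to+\infty$. This contradicts \autoref{cor:convergence-domain}, by which the integral equals the rational function on $\ConvDom$, a function continuous with a finite value at $\vec{p}\in\ConvDom'$. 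Hence $\ConvDom'\subseteq\ConvDom$ and the two cones coincide; I expect this Landau-type positivity argument to be the routine half.

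For non-redundancy I must show that each $\gamma_0\in\Sing{M}$ defines a genuine facet, i.e.\ produce a point $\vec{b}$ with $\sdc[\vec{b}]{\gamma_0}=0$ but $\sdc[\vec{b}]{\gamma}>0$ for every other singular $\gamma$. Since $\gamma_0$ and $M/\gamma_0$ are connected, \autoref{lem:convdom-nonempty} supplies $\vec{u}\in\ConvDom(\gamma_0)$ and $\vec{w}\in\ConvDom(M/\gamma_0)$; after rescaling the two cones so that the dimensions $\Dim$ they induce agree, I set $\vec{b}\defas(\vec{u},\vec{w})$. By construction $\sdc[\vec{b}]{\gamma_0}=0$, and since $M$ is connected $\sdc[\vec{b}]{\gamma_0^c}=\frac{\Dim}{2}(\loops{M}-\loops{\gamma_0}-\loops{\gamma_0^c})>0$. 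For any other singular $\gamma$, writing $A=\gamma\cap\gamma_0$ and $B=\gamma\setminus\gamma_0$, supermodularity of the matroid corank gives $\loops{\gamma}\le\loops{A}+\loops{\gamma_0\cup B}-\loops{\gamma_0}$, which together with \eqref{eq:loops-quotient} yields $\sdc[\vec{b}]{\gamma}\ge\sdc[\vec{u}]{A}+\sdc[\vec{w}]{B}$, the second summand being the degree of convergence of $B$ computed inside $M/\gamma_0$. The right-hand side is strictly positive because $\vec{u}\in\ConvDom(\gamma_0)$ and $\vec{w}\in\ConvDom(M/\gamma_0)$, the only exception being $\gamma=\gamma_0^c$, which was already treated. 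Conceptually, $\vec{b}$ realizes the fact that this facet of $\ConvDom$ is precisely the convergence domain of the factorized residue $\Hepp{\gamma_0,\vec{\ind}}\Hepp{M/\gamma_0,\vec{\ind}}$ of \autoref{lem:hepp-residue}.

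I expect the non-redundancy half to be the main obstacle. The delicate points are the corank supermodularity estimate that propagates strict positivity to every remaining singular constraint, the separate treatment of the complement $\gamma_0^c$, and the degenerate cases in which $\gamma_0$ or $M/\gamma_0$ is a single edge of corank zero, where the rescaling that matches $\Dim$ on the two factors has to be arranged by hand. The equality of the cones, by contrast, is essentially forced once the pole locations of \autoref{cor:hepp-poles} and the positivity of the integrand are in hand.
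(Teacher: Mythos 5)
Your proof is correct, and it is more self-contained than the paper's own treatment. For the equality of the two cones you argue exactly as the paper does in the paragraph preceding the corollary: the poles of the rational function $\Hepp{M,\vec{\ind}}$ lie only on the hyperplanes $\sdc{\gamma}=0$ with $\gamma\in\Sing{M}$ (\autoref{cor:hepp-poles}), while the positive Mellin integral must blow up at $\partial\ConvDom$; your segment-plus-Fatou argument is just a careful version of that two-sentence sketch. The genuine difference is in the non-redundancy half. The paper never constructs facet points: it remarks that the corollary ``amounts to a well-known description of matroid polytopes'' and leans on the cited facet descriptions of Feichtner--Sturmfels and Fujishige quoted in \autoref{sec:geometry}. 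You instead prove irredundancy from scratch, gluing $\vec{u}\in\ConvDom(\gamma_0)$ and $\vec{w}\in\ConvDom(M/\gamma_0)$ --- both non-empty by \autoref{lem:convdom-nonempty} precisely because $\gamma_0\in\Sing{M}$ --- into a point $\vec{b}$ on the hyperplane $\sdc{\gamma_0}=0$ that satisfies every other constraint strictly, with submodularity of the rank (supermodularity of the corank, combined with \eqref{eq:loops-quotient}) propagating the strict positivity, and the complement $\gamma_0^c$ handled by the same connectedness observation $\loops{M}>\loops{\gamma_0}+\loops{\gamma_0^c}$ that the paper uses inside the proof of \autoref{lem:hepp-residue}. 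This buys something real: your construction actually reproves the cited facet description (which \autoref{sec:geometry} claims can be derived from the pole structure but never details --- note that the pole at $\sdc{\gamma_0}=0$ alone does not force that hyperplane to touch $\overline{\ConvDom}$ in codimension one, so some such construction is genuinely needed), and it stays entirely within the paper's toolkit. Two small points to tighten: the inequality $\sdc[\vec{b}]{\gamma}\geq\sdc[\vec{u}]{A}+\sdc[\vec{w}]{B}$ uses $\Dim>0$, which you should justify (it follows from the positivity of the singleton constraints in whichever factor has positive corank); and of your two flagged degenerate cases only ``$\gamma_0$ a single edge'' can occur --- a quotient $M/\gamma_0$ of corank zero would make the remaining edge a coloop of $M$, impossible in a connected matroid with $N\geq 2$ elements --- and there the matching of dimensions is automatic, since $b_e=0$ is forced and $\loops{M/e}=\loops{M}$.
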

This amounts to a well-known description of matroid polytopes, see \autoref{cor:conv-conv}.

\subsection{Other matroid invariants}
\label{sec:other-invariants}
As explained in \autoref{rem:inverse-Mellin}, we can recover a connected matroid $M$ from $\Hepp{M,\vec{\ind}}$. In principle, every invariant of $M$ can therefore be calculated from its Hepp bound; but in practice it may not be obvious how to achieve this efficiently. It seems worthwhile, then, to identify the aspects of the function $\Hepp{M,\vec{\ind}}$ that are encoded in other matroid invariants, and to exhibit their connection as explicitly as possible.
We merely sketch a glimpse here and limit our discussion to the invariants of Crapo and Derksen.

Recall that the increments of the superficial degree of convergence associate a sum
\begin{equation*}
	\Phi(M)
	\defas
	\sum_{\sigma \in \perms{M}} \Dsdc{\sigma}
	\quad\in\quad
	\Z \big\langle \ind_e,\ind_e-\tfrac{\Dim}{2}\colon e\in M \big\rangle
\end{equation*}
of words with letters of the form $\ind_e$ and $\ind_e-\Dim/2$ to every matroid. To obtain the Hepp bound, we apply the map $\dChar$ or $\Char$ from \eqref{eq:dChar} and \eqref{eq:Char} to this sum,
\begin{equation*}
	\Hepp{M,\vec{\ind}}
	= \Res_{\sdc{M}=0} \Char(\Phi(M))
	= \dChar(\Phi(M))|_{\sdc{M}=0}
	.
\end{equation*}
If we set all indices to $\ind_e=1$, then the words in $\Phi(M)$ contain only two letters, $\buch{1}$ and $\buch{1-\Dim/2}$. This specializes at $\Dim=2$ to the invariant studied by Derksen \cite{Derksen:SymQSymMatroids},
\begin{equation}
	\Derksen{M}
	\defas 
	\sum_{\sigma\in\perms{M}}
	\buch{
		\rank{M^{\sigma}_1},
		\rank{M^{\sigma}_2}-\rank{M^{\sigma}_1},
		\ldots
	}
	\in \Z\Tens{0,1}
	\label{eq:Derksen}%
\end{equation}
which is universal for \emph{valuative} matroid invariants \cite{DerksenFink:ValuativeInvariants} with values in $\Q$. It thus determines several other matroid invariants, like the Tutte polynomial \cite{Crapo:Tutte}, however it cannot distinguish all non-isomorphic matroids \cite[Example~3.5]{Derksen:SymQSymMatroids}. It is thus impossible to reconstruct the full Hepp bound function $\Hepp{M,\vec{\ind}}$ from $\Derksen{M}$, but, whenever defined, we find the special value $\Hepp{M}$ at unit indices $\ind_e=1$.
\begin{example}
	Every order on the uniform matroid $\UM{n}{r}$ yields the same rank sequence:
	\begin{equation*}
		\Derksen{\UM{n}{r}}
		= n!\,\buch{\underbrace{1,\ldots,1}_{r},\underbrace{0,\ldots,0}_{n-r}}.
	\end{equation*}
\end{example}
\begin{example}\label{ex:Derksen-K4}
	Consider the complete graph $K_4$. The first $3$ edges $\gamma=\set{\sigma(1),\sigma(2),\sigma(3)}$ of any permutation $\sigma\in\perms{6}$ either form one of the $4$ triangles $\gamma\cong \Cycle{3}$, or one of the $\abs{\STrees{K_4}}=16$ spanning trees. The corresponding rank increments are $\buch{1,1,0,1,0,0}$ and $\buch{1,1,1,0,0,0}$, respectively. Each of these appears $3!\cdot 3!$ times for each fixed $\gamma$, because the edges of $\gamma$ and its complement may be permuted arbitrarily, and we conclude
	\begin{equation*}
		\Derksen{\Graph[0.25]{w3small}}
		= 144\,\buch{1,1,0,1,0,0} + 576 \,\buch{1,1,1,0,0,0}.
	\end{equation*}
\end{example}
\begin{lemma}\label{lem:Derksen-Hepp}%
	If the Hepp bound \eqref{eq:multi-hepp-from-sectors} is defined for unit indices, then it can be obtained as $\Hepp{M} = h(\Derksen{M})$ from Derksen's invariant, via a linear map $h$ defined on words as
\begin{equation*}
	h(\buch{r_1,\ldots,r_n})
	\defas
	\prod_{k=1}^{n-1} \frac{1}{k-\frac{\Dim}{2} \sum_{1\leq j \leq k}(1-r_j)}
	\quad\text{where}\quad
	\frac{\Dim}{2} \defas \frac{n}{\sum_{k=1}^n(1-r_k)}
	.
\end{equation*}
\end{lemma}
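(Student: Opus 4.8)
The plan is to evaluate the defining sum \eqref{eq:multi-hepp-from-sectors} at unit indices and to recognise each summand as the value of $h$ on the rank-increment word of the corresponding permutation. Setting $\ind_e=1$, any permutation $\sigma$ has rank increments $r_k\defas\rank{M^{\sigma}_k}-\rank{M^{\sigma}_{k-1}}\in\set{0,1}$, and the corank of the $k$-th initial segment is $\loops{M^{\sigma}_k}=k-\rank{M^{\sigma}_k}=\sum_{j=1}^{k}(1-r_j)$. Substituting this into the definition \eqref{eq:sdc} of the superficial degree of convergence gives
\begin{equation*}
	\sdc{M^{\sigma}_k}=k-\tfrac{\Dim}{2}\sum_{j=1}^{k}(1-r_j),
\end{equation*}
so that the $\sigma$-summand $1/\big(\sdc{M^{\sigma}_1}\cdots\sdc{M^{\sigma}_{N-1}}\big)$ is exactly $h(\buch{r_1,\ldots,r_n})$.

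Before summing I would verify that the value of $\tfrac{\Dim}{2}$ built into $h$ is consistent across all words. Since every permutation produces a full flag, $\sum_{k=1}^{n}(1-r_k)=n-\rank{M}=\loops{M}$ is the same for every word occurring in $\Derksen{M}$; hence $\tfrac{\Dim}{2}=n/\loops{M}$ is word-independent and coincides with the dimension forced by the logarithmic divergence $\sdc{M}=0$ at unit indices. Summing the identity above over all $\sigma$ and invoking the linearity of $h$ together with the definition \eqref{eq:Derksen} of Derksen's invariant then yields $\Hepp{M}=\sum_{\sigma}h(\buch{r_1,\ldots,r_n})=h(\Derksen{M})$.

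The step requiring genuine care, and the one I expect to be the main obstacle, is the finiteness of the individual summands: if some factor $\sdc{M^{\sigma}_k}$ vanished, then $h$ would assign $\infty$ to a word appearing in $\Derksen{M}$ and the termwise expression $h(\Derksen{M})$ would fail to represent the finite number $\Hepp{M}$. A vanishing factor means that some proper initial segment $\gamma=M^{\sigma}_k$ satisfies $\sdc{\gamma}=0$. For the graphs of primary interest this cannot happen: a $\PlogDiv$ graph has $\sdc{\gamma}>0$ for every proper non-empty $\gamma$, so the unit point lies in $\ConvDom$ and every summand is finite and positive, which settles the case of interest outright. To obtain the stated generality I would establish the purely combinatorial claim that a connected $M$ whose Hepp bound is defined at unit indices admits no proper $\gamma$ with $\sdc{\gamma}=0$ at all. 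Here I would use that $\sdc$ is submodular (because $\rank$ is and $\tfrac{\Dim}{2}>0$): the tight sets $\set{\gamma:\sdc{\gamma}=0}$ are then closed under union and intersection, and from this lattice I expect to extract a tight $\gamma'$ that is connected with connected quotient $M/\gamma'$, i.e.\ $\gamma'\in\Sing{M}$; the minimal non-empty tight set is connected by the additivity of $\sdc$ over direct sums, and an analogous argument on the quotient side produces the required facet. By \autoref{cor:hepp-poles} such a $\gamma'$ is an actual pole through the unit point, contradicting definedness. This propagation from an arbitrary tight set to a tight singular set is the delicate point; once termwise finiteness is secured, the matching of the first two paragraphs completes the proof.
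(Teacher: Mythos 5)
Your first two paragraphs are precisely the paper's proof, which consists of the single observation that at unit indices $\loops{M^{\sigma}_k}=k-(r_1+\cdots+r_k)$, so $\sdc{M^{\sigma}_k}=k-\frac{\Dim}{2}\sum_{j\leq k}(1-r_j)$ and each summand of \eqref{eq:multi-hepp-from-sectors} is $h$ of the corresponding rank-increment word; summing over $\sigma$ and comparing with \eqref{eq:Derksen} finishes the argument (the word-independence of $\Dim/2$, which you check, is implicit there).

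The third paragraph, which you single out as the main obstacle, addresses a non-issue created by reading the hypothesis more strongly than intended. ``The Hepp bound \eqref{eq:multi-hepp-from-sectors} is defined for unit indices'' refers to that displayed formula: it is the requirement that no denominator $\sdc{M^{\sigma}_k}$ vanishes at $\ind_e=1$, i.e.\ exactly the termwise finiteness you set out to derive. Indeed it must be read this way, since the words of $\Derksen{M}$ are in bijection with the summands of \eqref{eq:multi-hepp-from-sectors}: if some initial segment had $\sdc{\gamma}=0$ at the unit point, then $h(\Derksen{M})$ would itself contain infinite terms and the asserted identity would not even parse, regardless of whether the rational function $\Hepp{M,\vec{\ind}}$ happens to be regular there. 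So the hypothesis and the finiteness of $h(\Derksen{M})$ are one and the same condition, not something to be proved. Moreover, the repair you sketch for your stronger reading has a genuine gap: submodularity of $\sdc$ makes the tight sets $\set{\gamma\colon \sdc{\gamma}=0}$ closed under union and intersection only when $\sdc{\gamma}\geq 0$ holds for \emph{all} subsets, i.e.\ when the unit point lies in $\overline{\ConvDom}$. Regularity of the rational function at the unit point does not grant this, because it allows $\sdc{\delta}<0$ for some $\delta\in\Sing{M}$, and in that mixed-sign regime a tight set can arise by cancellation without any tight set among its parts --- for instance, in $\Dim=4$ a triple edge ($\sdc=-1$) disjoint from a triangle ($\sdc=+1$) gives a disconnected subset with $\sdc=0$ whose components are not tight, so the lattice/minimality argument never gets started. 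None of this damages the lemma, since the hypothesis already is the termwise statement; the third paragraph should simply be dropped rather than repaired.
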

\begin{proof}
	With $r_k = \rank{M^{\sigma}_k} - \rank{M^{\sigma}_{k-1}}$ we get $\loops{M^{\sigma}_k} = k - \rank{M^{\sigma}_k}=k-(r_1+\cdots+r_k)$, such that $\sdc{M^{\sigma}_k} = k-\frac{\Dim}{2}\sum_{j=1}^k (1-r_j)$ in \eqref{eq:multi-hepp-from-sectors}.
\end{proof}
\begin{example}
	From $h(1,1,0,1,0,0)=1/4$ and $h(1,1,1,0,0,0)=1/12$ we infer that $\Hepp{K_4}=h(\Derksen{K_4})= 144/4 + 576/12 = 84$ as claimed in \eqref{eq:ws3-hepp}.
\end{example}
Crapo \cite{Crapo:HigherInvariant} defined a non-negative integer $\Crapo{M} \in \Z_{\geq 0}$ for every matroid $M$ as
\begin{equation}
	\Crapo{M}
	= (-1)^{\rank{M}} \sum_{\gamma \subseteq M} (-1)^{\abs{\gamma}} \rank \gamma.
	\label{eq:Crapo}%
\end{equation}
This is the coefficient of $x$ in the Tutte polynomial $T_M(x,y)$ and it also appears as the first coefficient of Speyer's invariant \cite{Speyer:MatroidKtheory}. Some of its remarkable properties are:
\begin{enumerate}
	\item When $M$ has at least two edges, then $\Crapo{M}=0$ precisely when $M$ is disconnected.
	\item If $M$ has at least two edges and $M^{\dual}$ denotes its dual, then $\Crapo{M^{\dual}} = \Crapo{M}$.
	\item For a $2$-sum (see \autoref{def:2sum}), $\Crapo{A \TwoSum{e}{f} B} = \Crapo{A} \Crapo{B}$ is multiplicative \cite{Brylawski:CombinatorialSeriesParallel}.
\end{enumerate}
We already saw that the Hepp bound shares the same vanishing property, and \autoref{sec:symmetries} proves that it also behaves in the same way for duals and 2-sums. This very close analogy suggests that Crapo's invariant is a special value of the Hepp bound, and indeed it is.
\begin{lemma}
	\label{lem:Hepp-Crapo}%
	The Hepp bound of a connected matroid $M$ on $N\geq 2$ edges vanishes to first order on the hyperplane $\set{\ind_1+\cdots+\ind_N=0}$ where $\Dim=0$. Concretely, assume that $\sdc{\gamma} \rightarrow \sum_{e\in \gamma} \ind_e \neq 0$ stays non-zero in the limit $\Dim\rightarrow 0$, for all non-empty $\gamma\subsetneq M$. Then
	\begin{equation}
		\Hepp{M,\vec{\ind}}
		=
		\frac{\Dim/2}{\ind_1\cdots\ind_N} (-1)^{\loops{M}+1}\Crapo{M}
		+ \asyO{\Dim^2}
		\quad\text{as}\quad
		\Dim\rightarrow 0.
		\label{eq:Hepp-Crapo}%
	\end{equation}
\end{lemma}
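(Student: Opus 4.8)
The plan is to expand \eqref{eq:multi-hepp-from-sectors} to first order around $\Dim=0$ and match the linear coefficient to Crapo's invariant. Since the summand of \eqref{eq:multi-hepp-from-sectors} only involves $\sdc{M^\sigma_k}$ for $k\le N-1$, I would first regard its right-hand side as a rational function $H(\vec{\ind},\Dim)$ of the $N+1$ \emph{independent} variables $\ind_1,\dots,\ind_N,\Dim$, and view the Hepp bound as the restriction of $H$ to the hyperplane $\sdc{M}=0$, i.e.\ $\ind_1+\cdots+\ind_N=\tfrac{\Dim}{2}\loops{M}$. Writing $t\defas\Dim/2$, the key preliminary step is the value at $t=0$: every increment $\Dsdc[k]{\sigma}$ degenerates to $\ind_{\sigma(k)}$, the sum over $\sigma$ becomes the full shuffle $\bigshuffle_{e}\buch{\ind_e}$, and the identity $\dChar(a\shuffle b)=(\abs{a}+\abs{b})\Char(a)\Char(b)$ (from the proof of \autoref{lem:dChar=0}) together with the multiplicativity \eqref{eq:Char-shuffle} yields the closed form $H(\vec{\ind},0)=\bigl(\sum_e\ind_e\bigr)/(\ind_1\cdots\ind_N)$. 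This already reproves \autoref{cor:hepp(d=0)=0}, since the numerator vanishes on the constraint at $t=0$.

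Because the constraint ties $\sum_e\ind_e$ to $t$, the first-order term receives contributions both from the explicit $t$-dependence of $H$ and from its implicit dependence through one eliminated variable, say $\ind_N=t\loops{M}-\sum_{e<N}\ind_e$. The chain rule gives $\frac{d}{dt}\Hepp{M,\vec{\ind}}\big|_{t=0}=\partial_tH|_0+\loops{M}\,\partial_{\ind_N}H|_0$, evaluated on $\{\sum_e\ind_e=0\}$. The second summand is immediate from the closed form above: $\partial_{\ind_N}H|_0=1/(\ind_1\cdots\ind_N)$ on $\{\sum\ind=0\}$, so it contributes a clean $\loops{M}/(\ind_1\cdots\ind_N)$. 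Getting this bookkeeping right is the one delicate point, but it resolves transparently.

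The substantive computation is $\partial_tH|_0=\sum_\sigma\bigl(\prod_{k<N}\ind_{M^\sigma_k}^{-1}\bigr)\sum_{j<N}\loops{M^\sigma_j}/\ind_{M^\sigma_j}$, where $\ind_{M^\sigma_k}=\sum_{i\le k}\ind_{\sigma(i)}$ is the partial index sum. I would reorganize this by the subset $\gamma=M^\sigma_j$: permutations passing through $\gamma$ factor into an ordering of $\gamma$ and an ordering of $\gamma^c$, the differentiated factor supplies $\loops{\gamma}/\ind_\gamma$, \autoref{lem:Char-shuffle} collapses the $\gamma$-orderings to $1/\ind^\gamma$, and the $\gamma^c$-orderings assemble into $\Xi(\gamma)\defas\sum_{\beta\in\perms{\gamma^c}}\dChar(\ind_\gamma+\ind_{\beta(1)},\ind_{\beta(2)},\ldots,\ind_{\beta(m)})$, $m=\abs{\gamma^c}$, a sum with first letter shifted by $\ind_\gamma$. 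The crux is evaluating $\Xi(\gamma)$ on $\{\sum\ind=0\}$: there the letters of each word sum to $\ind_\gamma+\ind_{\gamma^c}=0$, so the antipode identity \eqref{eq:dChar-apode} applies and \emph{discards the shifted letter}, leaving $(-1)^{m-1}\Char$ of the reversed remainder; summing over $\beta$ by \autoref{lem:Char-shuffle} gives $\Xi(\gamma)=(-1)^{m}\ind_\gamma/\ind^{\gamma^c}$. This apode collapse—which makes the $\ind_\gamma$ factors cancel and the whole permutation sum telescope—is the main obstacle and the step I expect to require the most care.

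Assembling the pieces, each subset contributes $(-1)^{N-\abs{\gamma}}\loops{\gamma}/(\ind^\gamma\ind^{\gamma^c})$, and since $\gamma,\gamma^c$ partition $M$ every denominator equals $\ind_1\cdots\ind_N$. Adding the $\loops{M}/(\ind_1\cdots\ind_N)$ term from the constraint exactly supplies the missing $\gamma=M$ summand, so that
\[
\frac{d}{dt}\Hepp{M,\vec{\ind}}\Big|_{t=0}
=\frac{(-1)^N}{\ind_1\cdots\ind_N}\sum_{\gamma\subseteq M}(-1)^{\abs{\gamma}}\loops{\gamma}.
\]
Finally I would convert the corank sum to Crapo's invariant: using $\rank{\gamma}=\abs{\gamma}-\loops{\gamma}$ together with the elementary vanishing $\sum_{\gamma\subseteq M}(-1)^{\abs{\gamma}}\abs{\gamma}=0$ for $N\ge2$, the definition \eqref{eq:Crapo} rewrites the right-hand side as $(-1)^{\loops{M}+1}\Crapo{M}/(\ind_1\cdots\ind_N)$. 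Multiplying by $t=\Dim/2$ and observing that the expansion is genuinely analytic near $t=0$—the hypothesis $\ind_\gamma\neq0$ for proper nonempty $\gamma$ keeps each denominator $\sdc{M^\sigma_k}$ away from zero, so the remainder is $\asyO{\Dim^2}$—produces the asserted formula \eqref{eq:Hepp-Crapo}.
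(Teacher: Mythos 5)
Your proposal is correct and follows essentially the same route as the paper's proof: both expand the sector sum \eqref{eq:multi-hepp-from-sectors} to first order in $\Dim$, regroup permutations by the distinguished subset $\gamma = M^{\sigma}_j$, collapse the two halves using the shuffle multiplicativity \eqref{eq:Char-shuffle} and the antipode identity \eqref{eq:dChar-apode}, and convert $\sum_{\gamma}(-1)^{N-\abs{\gamma}}\loops{\gamma}$ into Crapo's invariant via $\rank{\gamma}=\abs{\gamma}-\loops{\gamma}$ (your explicit use of $\sum_{\gamma}(-1)^{\abs{\gamma}}\abs{\gamma}=0$ for $N\geq 2$ just makes a step the paper glosses over precise). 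The only difference is bookkeeping: you keep $(\vec{\ind},\Dim)$ independent and recover the constraint through the chain rule, so your term $\loops{M}\,\partial_{\ind_N}H$ is exactly the paper's zeroth-order contribution $\sum_{\sigma}\dChar(\ind_{\sigma(1)},\ldots,\ind_{\sigma(N)})=\tfrac{\Dim}{2}\loops{M}/(\ind_1\cdots\ind_N)$ evaluated on $\sdc{M}=0$, and your merged first letter $\ind_{\gamma}+\ind_{\beta(1)}$ plays the role of the paper's prepended letter $\ind_{\gamma}$ in the antipode step.
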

\begin{proof}
	Since $\ind_{\gamma} \defas \sum_{e\in \gamma} \ind_e \neq 0$, we may expand $1/\sdc{\gamma} = 1/\ind_{\gamma} + \frac{\Dim}{2} \loops{\gamma} /\ind_{\gamma}^2 + \asyO{\Dim^2}$ for small $\Dim$. The Hepp bound \eqref{eq:multi-hepp-from-sectors} thus becomes
	\begin{equation*}
		\Hepp{M,\vec{\ind}}
		= \sum_{\sigma \in \perms{M}} 
		\dChar(\ind_{\sigma(1)},\ldots,\ind_{\sigma(N)}) \left\{ 
			1
			+
			\frac{\Dim}{2}\sum_{k=1}^{N-1} \frac{\loops{\set{M^{\sigma}_k}}}{\ind_{M^{\sigma}_k}}
		\right\}
		+\asyO{\Dim^2}.
	\end{equation*}
	Because of $\ind_1+\cdots+\ind_N = \frac{\Dim}{2} \loops{M}$ and \eqref{eq:Char-shuffle}, the first summand in braces gives
	\begin{equation*}
		\sum_{\sigma \in \perms{M}} 
		(\ind_{\sigma(1)}+\cdots+\ind_{\sigma(N)})
		\Char(\ind_{\sigma(1)},\ldots,\ind_{\sigma(N)})
		= \frac{\Dim}{2} \loops{M} \Char(\buch{\ind_1} \shuffle \ldots \shuffle \buch{\ind_N})
		= \frac{\Dim}{2} \frac{\loops{M}}{\ind_1\cdots\ind_N}.
	\end{equation*}
	We group the remaining summands in braces by the submatroid $M^{\sigma}_k$. To obtain $M^{\sigma}_k = \gamma$, the first $k=\abs{\gamma}$ elements of $\sigma$ must form a permutation $\tau$ of $\gamma$, and the remaining $N-k$ edges $\rho$ permute the complement $M\setminus \gamma$. All these contributions can thus be written as
	\begin{equation*}
		\frac{\Dim}{2} \loops{\gamma}
		\sum_{\tau \in \perms{\gamma}} \Char(\ind_{\tau(1)},\ldots,\ind_{\tau(k)})
		\sum_{\rho \in \perms{M\setminus\gamma}}
		\dChar(\ind_{\gamma},\ind_{\rho(1)},\ldots,\ind_{\rho(N-k)})
		.
	\end{equation*}
	The sum over $\tau$ is a shuffle product and equals $1/(\prod_{e\in \gamma} \ind_e)$ according to \eqref{eq:Char-shuffle}. For $\tau$, we use the antipode \eqref{eq:dChar-apode} to rewrite the summand as $(-1)^{N-k} \Char(\ind_{\rho(N-k)},\ldots,\ind_{\rho(1)})$ and see a shuffle product again. Collecting all contributions, we obtain
	\begin{equation*}
		\Hepp{M,\vec{\ind}}
		=
		\frac{\Dim/2}{\ind_1\cdots\ind_N}
		\sum_{\emptyset \neq \gamma \subseteq M}
		(-1)^{N-\abs{\gamma}}
		\loops{\gamma}
		+
		\asyO{\Dim^2}.
	\end{equation*}
	With $\rank{\gamma} = \abs{\gamma}-\loops{\gamma}$, we recognize Crapo's definition \eqref{eq:Crapo}.
\end{proof}
\begin{example}\label{ex:Crapo-cycle-bond}
	From \eqref{eq:hepp-cycle} and \eqref{eq:hepp-bond}, we see that cycles and bonds have beta invariant $\Crapo{\GMat{\Cycle{n}}} = \Crapo{\GMat{\Bond{n}}} = 1$. More generally, note that $\Crapo{M}=1$ if and only if $M$ is series--parallel \cite[Proposition~8]{Crapo:HigherInvariant}.
\end{example}
In \autoref{def:Hepp-pos} we introduce a variation $\HeppComp{M,\vec{\ind}}$ of the Hepp bound, that evaluates at $\Dim=0$ precisely to $(-1)^{\rank{M}+1} \Crapo{M}$. We can derive the above facts 1.\ through 3.\ for Crapo's invariant from the corresponding symmetries of the Hepp bound $\HeppComp{M,\vec{\ind}}$.

However, this argument does not apply to completion (see \autoref{rem:crapo-completion}), and Crapo's invariant violates this symmetry. For example, the graphs from \autoref{fig:completion} give the values
\begin{equation}
	\Crapo{ \Graph[0.3]{5Rw} }
	= 4
	\neq
	6
	= \Crapo{ \Graph[0.3]{5Rv} }
	.
	\label{eq:crapo-completion-counterexample}%
\end{equation}
These are computed with the contraction-deletion formula $\Crapo{M} = \Crapo{M/e} + \Crapo{M{\setminus}e}$, which applies whenever $e$ is neither a self-loop nor a bridge \cite[Theorem~I]{Crapo:HigherInvariant}.

\section{Flag formulas}
\label{sec:flags}

The formula \eqref{eq:multi-hepp-from-sectors} has $N!$ summands, one for each flag $\emptyset \neq \gamma_1\subsetneq \cdots \subsetneq \gamma_N=G$ of subgraphs $\gamma_k=G^{\sigma}_k$. This is very inefficient and hides the structure and simplicity of results like \eqref{eq:hepp-cycle}. Below we will partition all flags into families of subsets that are easily summed, and thereby derive expressions for the Hepp bound with much fewer terms. 

In \autoref{sec:1pi-flags} we give a formula summing over flags of bridgeless matroids, which is particularly efficient for small loop number $\ell$ and for example gives \eqref{eq:hepp-cycle} on the nose as a single term. It  yields an algorithm that computes the Hepp bound in $\asyO{N^{\ell+2}}$ steps. Dually, flags of flats are most efficient for small ranks, see \autoref{sec:flat-flags}.

On the level of the integral \eqref{eq:hepp}, the flag formulas correspond to a decomposition of the integration domain into fewer sectors, that each combine many individual Hepp sectors. In \autoref{sec:phi4} we apply these sectors to the period itself to get improved bounds.

\subsection{Bridges and ears}
\label{sec:1pi-flags}
\begin{definition}
	A \emph{circuit} $C\subseteq M$ of a matroid is a minimal dependent set, and we write $\Circuits{M}$ for the set of all circuits.
	An edge $e\in\EG{M}$ is called a \emph{bridge} (also \emph{coloop} and \emph{isthmus}) if it is not contained in any circuit; equivalently, if it is contained in every basis.
	We say that a matroid $M$ is \emph{bridgeless} (or \OnePI) when it does not have any bridges.
\end{definition}
Each bridge $e$ corresponds to a direct summand $M|_e\cong \UM{1}{1}$ of $M=(M\setminus e) \oplus M|_e$.
Connected matroids are thus always bridgeless, except for $M\cong\UM{1}{1}\cong\GMat{\gEdge}$. 
Our use of `{\OnePI}' as a synonym for bridgeless stems from particle physics, where graphs $G$ with bridgeless matroids $M=\GMat{G}$ play a special role and are called \emph{1-particle irreducible}, see \cite[Section~5.8]{Borinsky:PhdBook} and \cite{JacksonKempfMorales:RobustLegendre}. 
Note that in our terminology, {\OnePI} does not require connectedness in any sense: direct sums of bridgeless matroids remain bridgeless.

\begin{figure}
	$ K_4\setminus\set{e} = \Graph[0.4]{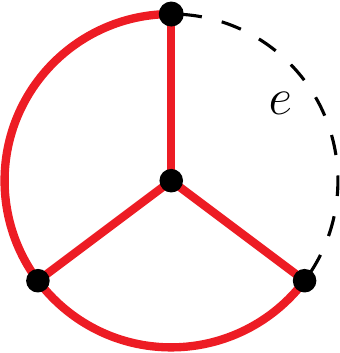}$
	\hfill
	$K_4\setminus\set{v} = \Graph[0.4]{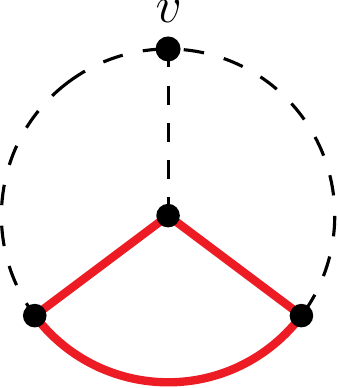}$
	\hfill
	$K_4\setminus\set{e,f} = \Graph[0.4]{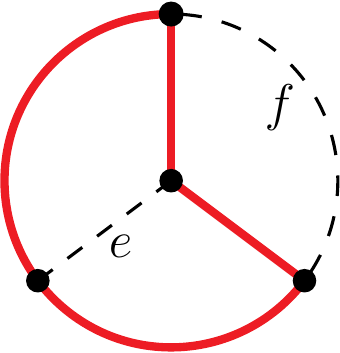}$
	\caption{The three types of bridgeless subgraphs of $K_4$ are highlighted (solid edges).}%
	\label{fig:ws3-flags}%
\end{figure}
A bridge $e$ is characterized by the equivalent conditions $\rank{M\setminus e}=\rank{M}-1$ and $\loops{M\setminus e} = \loops{M}$, and hence a bridgeless matroid is a minimal subset for its loop number:
\begin{equation*}
	\text{$M$ is bridgeless}
	\quad\Leftrightarrow\quad
	\loops{M\setminus e}<\loops{M}\ \text{holds for all $e\in\EG{M}$.}
\end{equation*}
Let $\Bridges{M} \subset \EG{M}$ denote the set of all bridges of $M$. Its complement $\intM{M}$ is the largest bridgeless submatroid of $M$ and it consists of the union of all circuits:
\begin{equation}
	\intM{M}
	\defas M\setminus \Bridges{M}
	= \bigcup_{C\in\Circuits{M}} C
	.
	\label{eq:intM}%
\end{equation}
Bridgeless graphs enter the study of Feynman periods through the desingularization of graph hypersurfaces, where they are referred to as \emph{motic graphs} in \cite[Definition~3.1]{Brown:FeynmanAmplitudesGalois} and \emph{core graphs} in \cite{BlochKreimer:MixedHodge}. They label singular loci and after blowing-up, the \emph{flags} (maximal chains) of bridgeless graphs correspond to the deepest strata of the boundary divisor \cite[Lemma~7.4]{BlochEsnaultKreimer:MotivesGraphPolynomials}. It is therefore not surprising that they can also organize the Hepp bound:
\begin{proposition}\label{prop:hepp-1pi-flags}
	For a connected matroid $M$ on $N$ edges with $\ell=\loops{M}\geq 1$ loops, let
	\begin{equation*}
		\Flags[\OnePI]{M}
		\defas \setexp{
			\emptyset = \gamma_0 \subsetneq \gamma_1 \subsetneq \cdots \subsetneq \gamma_{\ell}=M
		}{
			\text{each $\gamma_k$ is bridgeless with $\loops{\gamma_k}=k$}
		}
	\end{equation*}
	denote the set of flags of bridgeless submatroids of $M$. For any nested subsets $\delta\subseteq \gamma \subseteq M$, let $\ind_{\gamma/\delta} \defas \sum_{e \in \gamma\setminus\delta} \ind_e$ denote the sum of the indices of the additional edges in $\gamma$. Then
	\begin{equation}
		\Hepp{M,\vec{\ind}}
		=
		\frac{1}{\ind_1\cdots \ind_N}
		\sum_{\gamma_{\bullet} \in \Flags[\OnePI]{M}} 
		\frac{
			\ind_{\gamma_1/\gamma_0} \cdots \ind_{\gamma_{\ell}/\gamma_{\ell-1}}
		}{
			\sdc{\gamma_1} \cdots \sdc{\gamma_{\ell-1}}
		}.
		\label{eq:hepp-1pi-flags}%
	\end{equation}
\end{proposition}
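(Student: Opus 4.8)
The plan is to prove \eqref{eq:hepp-1pi-flags} by induction on the loop number $\ell=\loops{M}$, peeling off one ear at a time. Two preparations make the induction go through. First, I prove the identity as an equality of rational functions in $(\vec{\ind},\Dim)$ \emph{without} imposing $\sdc{M}=0$, because the inductive step evaluates the formula for a submatroid at the dimension inherited from $M$ rather than at its own; the stated formula is then the specialisation to $\sdc{M}=0$. Second, I run the induction over \emph{all} bridgeless matroids rather than only connected ones: the penultimate cores produced below may be disconnected, and at the dimension inherited from $M$ their Hepp bound is nonzero and must be retained (it vanishes only on their own convergence locus, cf.\ \autoref{prop:hepp-disconnected}). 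The base case $\ell=1$ is a single circuit $M\cong\Cycle{N}$, where the only flag is $\emptyset\subsetneq M$ and both sides equal $\ind_M/\ind^M$ by \eqref{eq:hepp-cycle} (the dimension enters only through the final, discarded letter of each $\Dsdc{\sigma}$).

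For $\ell\ge 2$ I would partition $\perms{M}$ according to $\eta\defas\intM{M\setminus\sigma(N)}$, the bridgeless core left after deleting the last edge together with its bridges. As $M^{\sigma}_{N-1}=M\setminus\sigma(N)$, Kruskal's lemma (\autoref{lem:Kruskal}) identifies $\eta$ with the penultimate core $\gamma_{\ell-1}$ of the greedy flag of $\sigma$, so $\eta$ depends only on which edge is last. The matroid input here is that $B\defas M\setminus\eta$ is a \emph{single} circuit of the contraction $M/\eta$; consequently every proper subset of $B$ is independent over $\eta$, the value $\intM{M\setminus f}=\eta$ is the same for all $f\in B$, and the sets $\{\sigma:\sigma(N)\in B\}$ genuinely partition $\perms{M}$ as $\eta$ ranges over the bridgeless, loop-$(\ell-1)$ cores of the form $\intM{M\setminus f}$.

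The heart of the proof is a shuffle factorisation on each fibre. For $\sigma$ with $\sigma(N)\in B$ let $\alpha\in\perms{\eta}$ and $\beta\in\perms{B}$ be the induced orders. Monotonicity of the relative rank under contraction shows that a proper subset of the circuit $B$ remains independent over every subset of $\eta$; hence the $B$-edges inserted before $\sigma(N)$ are bridges that leave the loop jumps of the $\eta$-edges untouched and never complete a loop. Therefore the $\eta$-letters assemble into $\Dsdc{\alpha}$, the $B$-letters into the cycle word $\buch{\ind_{\beta(1)},\ldots,\ind_{\beta(b-1)},\ind_{\beta(b)}-\tfrac{\Dim}{2}}$ of $M/\eta\cong\Cycle{b}$ with $b=\abs{B}$, and $\Dsdc{\sigma}$ is one of their shuffles; the fibre condition $\sigma(N)\in B$ says precisely that the last letter is the cycle's jump letter. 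Establishing this correspondence between fibres and constrained shuffles for arbitrary matroids is the step I expect to be the main obstacle, since it rests entirely on the circuit-and-bridge analysis just indicated.

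Granting the correspondence, the resummation uses only tools already at hand. Since $\dChar$ discards the last letter, deleting the jump letter turns each fibre term into $\Char$ of a shuffle of $\Dsdc{\alpha}$ with $\buch{\ind_{\beta(1)},\ldots,\ind_{\beta(b-1)}}$; summing over all shuffles and applying multiplicativity $\Char(a\shuffle b)=\Char(a)\Char(b)$ (\autoref{lem:Char-shuffle}) factorises the fibre sum as $\bigl(\sum_{\alpha}\Char(\Dsdc{\alpha})\bigr)\bigl(\sum_{\beta}\Char(\ind_{\beta(1)},\ldots,\ind_{\beta(b-1)})\bigr)$. The relation $\dChar(w)=\abs{w}\Char(w)$ with $\abs{\Dsdc{\alpha}}=\sdc{\eta}$ makes the first factor $\Hepp{\eta,\vec{\ind}}/\sdc{\eta}$, while the second is $\sum_{\beta}\dChar(\ind_{\beta(1)},\ldots,\ind_{\beta(b)})=\Hepp{\Cycle{b},\vec{\ind}}=\ind_{B}/\ind^{B}=\Hepp{M/\eta,\vec{\ind}}$ by \eqref{eq:hepp-cycle}. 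Summing over $\eta$ yields the factorised recursion $\Hepp{M,\vec{\ind}}=\sum_{\eta}\Hepp{\eta,\vec{\ind}}\,\Hepp{M/\eta,\vec{\ind}}/\sdc{\eta}$, whose summands are the residues supplied by \autoref{lem:hepp-residue}. I would finish by inserting the inductive flag formula for $\Hepp{\eta,\vec{\ind}}$, combining $\ind^{\eta}\ind^{B}=\ind^{M}$ and $\sdc{\eta}=\sdc{\gamma_{\ell-1}}$, and observing that adjoining $\eta\subsetneq M$ to each bridgeless flag of $\eta$ produces exactly the members of $\Flags[\OnePI]{M}$ with $\gamma_{\ell-1}=\eta$; this reassembles the right-hand side of \eqref{eq:hepp-1pi-flags}.
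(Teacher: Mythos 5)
Your proposal is correct and takes essentially the same route as the paper's own proof: you partition the permutations according to the penultimate bridgeless core determined by the last edge (the paper's greedy flag map), show that the complementary edges other than the last one never complete a loop and hence contribute plain letters $\ind_f$, factor each fibre using the shuffle multiplicativity of $\Char$ from \autoref{lem:Char-shuffle}, and then recurse — which is exactly the paper's argument, phrased there as an iteration over truncated flags rather than a formal induction on $\ell$. The two points you single out for extra care (working with the identity unconstrained in $\Dim$ so that the inductive hypothesis can be evaluated at the dimension inherited from $M$, and running the induction over possibly disconnected bridgeless cores) are genuine subtleties that the paper's proof handles only implicitly, and your circuit-and-bridge analysis of $B=M\setminus\eta$ correctly fills in what the paper asserts without proof.
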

\begin{example}\label{ex:K4-1pi-flags}
	Consider the complete graph $K_4=\gKfour$ on four vertices, with unit indices $\ind_1=\cdots=\ind_6=1$ and thus in $\Dim=4$ dimensions.
	The only bridgeless subgraphs are:
	\begin{itemize}
		\item six edge complements $K_4\setminus\set{e} \cong \gKite$ with two loops and $\sdc{\gKite}=5-\frac{4}{2}\cdot 2=1$,
		\item four triangles $\gTri \cong K_4\setminus\set{v}$ with $\sdc{\gTri}=1$ from removing a vertex, and
		\item three squares $\gBox\cong K_4\setminus\set{e,f}$ with $\sdc{\gBox}=2$ by deleting two non-adjacent edges.
	\end{itemize}
	These are illustrated in \autoref{fig:ws3-flags}, and we can form $\big|\Flags[\OnePI]{K_4}\big|=18$ different flags. They come in two types, and their contributions to the Hepp bound $\Hepp{K_4}=84=6\cdot 12 + 2\cdot 6$ are
	\begin{itemize}
		\item $\frac{3\cdot 2 \cdot 1}{1\cdot 1}=6$ for each of the 12 flags $\gTri \subset \gKite \subset \gKfour$, and
		\item $\frac{4\cdot 1 \cdot 1}{2\cdot 1}=2$ for each of the 6 flags $\gBox \subset \gKite \subset \gKfour $.
	\end{itemize}
\end{example}
\begin{remark}
	Every biconnected graph admits a flag of biconnected (hence bridgeless) graphs \cite[Theorem~19]{Whitney:NonSeparablePlanar}, as in the example above. Such flags are called \emph{open ear decompositions}, and this notion generalizes to connected matroids. However, the sum \eqref{eq:hepp-1pi-flags} will typically involve more general bridgeless flags, as in \autoref{fig:perm-to-flag} ($\gamma^{\sigma}_2$ is separable).
\end{remark}
\begin{figure}\centering
	$ G=\Graph[0.9]{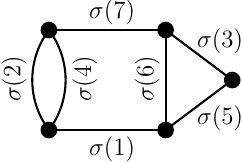}$
	\quad \scalebox{1.6}{$\rightarrow$} \quad
	$ \begin{aligned}
		G^{\sigma}_4 &=\Graph[0.45]{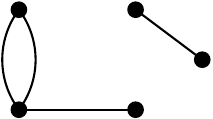} \\
		G^{\sigma}_6 &=\Graph[0.45]{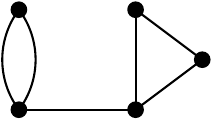}
	\end{aligned} $
	\quad \scalebox{1.6}{$\rightarrow$} \quad
	$ \begin{aligned}
		\gamma_1^{\sigma} &=\Graph[0.45]{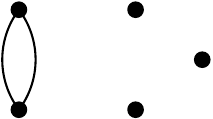} \\
		\gamma_2^{\sigma} &=\Graph[0.45]{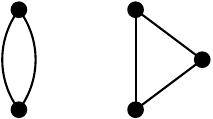} 
	\end{aligned} $
	\caption{In the order $\sigma$ of the edges of the depicted graph $G$, the loop number increases at $i_1=4$, $i_2=6$ and $i_3=7$. The associated flag is $\gamma_1^{\sigma}\subset \gamma_2^{\sigma} \subset \gamma_3^{\sigma}=G$.}%
	\label{fig:perm-to-flag}%
\end{figure}
\begin{proof}[Proof of \autoref{prop:hepp-1pi-flags}]
	Each permutation $\sigma \in \perms{M}$ defines a bridgeless flag as follows (see \autoref{fig:perm-to-flag}): Let $i_1<\ldots<i_{\ell}$ denote the positions of edges that add a loop:
	\begin{equation*}
		\loops{M^{\sigma}_{i_k}} = 1+\loops{M^{\sigma}_{i_{k-1}}}
		\quad\text{for every}\quad
		1 \leq k \leq \ell.
	\end{equation*}
	The corresponding bridgeless subsets $\gamma_k^{\sigma} = \intM{M^{\sigma}_{i_k}} \subseteq M^{\sigma}_{i_k}$ give rise to a map
	\begin{equation*}
		F_M \colon \perms{M} \longrightarrow \Flags[\OnePI]{M}, \quad
		\sigma \mapsto
		\left( \gamma^{\sigma}_{1} \subsetneq \cdots \subsetneq \gamma^{\sigma}_{\ell} \right)
		,
	\end{equation*}
	and we ask which permutations $\sigma$ lie in the preimage of a given bridgeless flag $\gamma_{\bullet}\in\Flags[\OnePI]{M}$. The last edge $e=\sigma(N) \in S\defas M\setminus M'$ must belong to the complement of $M'\defas \gamma_{\ell-1}$, and all remaining edges $S\setminus\set{e}$ are bridges of $M\setminus e$. Those may appear in any order and at arbitrary positions in $\sigma$, without changing the associated bridgeless flag.
	So if we write $\sigma'$ for the order of the edges of $M'$ as they appear in $\sigma$ and we fix some $\tau \in \perms{M'}$ with $F_{M'}(\tau) = \gamma'_{\bullet} \defas (\gamma_1\subsetneq\cdots\subsetneq M')$, then the set $\setexp{\sigma \in \perms{M}}{\sigma'=\tau\ \text{and}\ \sigma(N)=e}$ is in bijection with the shuffles of $\tau$ and the elements of $S\setminus \set{e}$. The sum over these $\sigma$ is
	\begin{align*}
		\sum_{ F_M(\sigma)=\gamma_{\bullet}}
		\dChar\left(\Dsdc{\sigma}\right)
		&=
		\sum_{ F_{M'}(\tau) = \gamma_{\bullet}'}\ 
		\sum_{e \in S}
		\Char\left(
			\Dsdc{\tau}
			\shuffle
			\bigshuffle_{e \neq f \in S} \ind_f
		\right)
		\\ &
		=
		\frac{\ind_{M/M'}}{
			\ind^S
		}
		\frac{1}{\sdc{M'}}
		\sum_{F_{M'}(\tau) = \gamma_{\bullet}'}
		\dChar\left(\Dsdc{\tau} \right)
	\end{align*}
	with $\ind^S\defas \prod_{e \in S} \ind_e$, where we used the multiplicativity \eqref{eq:Char-shuffle}. This reduces the sum over $\sigma\in F_M^{-1}(\gamma_{\bullet})$ to the preimages $\tau\in F_{M'}^{-1}(\gamma_{\bullet}')$ of the truncated flag $\gamma_{\bullet}'$ of length $\ell-1$, and iteration of this rule eventually leads to \eqref{eq:hepp-1pi-flags}.
\end{proof}
The length of the bridgeless flags is given by the loop number $\ell$, and hence the formula \eqref{eq:hepp-1pi-flags} tends to be particularly efficient for small $\ell$. In particular, the case $M=\UM{N}{N-1}\cong\GMat{\Cycle{N}}$ of a single loop results in a unique flag and gives directly the result \eqref{eq:hepp-cycle}.
\begin{example}
	The two-loop graphs $D_3(i,j,k)$ from \autoref{fig:k33} consist of three paths with $i$, $j$ and $k$ edges between shared endpoints. Their only bridgeless subgraphs are the three cycles $\Cycle{j+k}$, $\Cycle{i+k}$ and $\Cycle{i+j}$ that are left over after deleting the edges of one of the paths. Hence the sum in \eqref{eq:hepp-1pi-flags} has merely three terms, and for unit indices we obtain
	\begin{equation}
		\Hepp{D_3(i,j,k)}
		= 
			 \frac{i(j+k)}{j+k-\Dim/2}
			+\frac{j(i+k)}{i+k-\Dim/2}
			+\frac{k(i+j)}{i+j-\Dim/2}
		.
		\label{eq:D3ijk}%
	\end{equation}
\end{example}
\begin{figure}
	\centering
	$D_3(i,j,k) = \Graph[0.6]{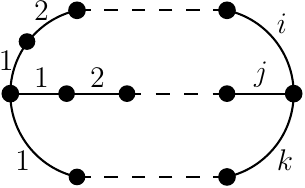} $
	\hfill
	$K_{3,3}^{-} = \Graph[0.5]{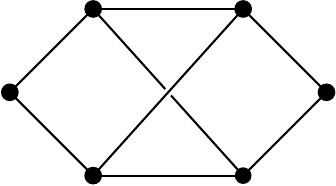}$
	\hfill
	$K_{3,3} = \Graph[0.4]{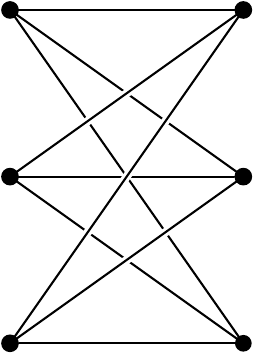}$
	\caption{The family $D_3(i,j,k)$ of all biconnected two-loop graphs, the complete bipartite graph $K_{3,3}$ and its depletion $K_{3,3}^-=K_{3,3}\setminus e$ by one edge.}%
	\label{fig:k33}%
\end{figure}
In more interesting cases, however, the number of bridgeless flags can become huge: Fix a basis $b\in\Bases{M}$ and consider an order $\tau$ of its complement $\EG{M}\setminus b=\set{\tau(1),\ldots,\tau(\ell)}$. 
Let $C_k^{\tau}$ denote the unique circuit contained in $b\cup\set{\tau(k)}$, then 
$\gamma_k^{\tau} \defas C^{\tau}_1\cup\ldots\cup C^{\tau}_k$
is bridgeless for each $1\leq k \leq \ell$, with $\loops{\gamma^{\tau}_k}=k$ loops, and so we obtain a bridgeless flag $\gamma_{\bullet}^{\tau} \in \Flags[\OnePI]{M}$. Since $\set{\tau(k)}=\gamma_k^{\tau}\setminus (\gamma^{\tau}_{k-1} \cup b)$, this construction yields an injection $\perms{\ell} \injects \Flags[\OnePI]{M}$ and we conclude that every matroid has $|\Flags[\OnePI]{M}| \geq \ell!$ bridgeless flags.

But it is not necessary to explicitly enumerate the flags, due to the recursive structure of \eqref{eq:hepp-1pi-flags}: Summing only over the penultimate element $\gamma=\gamma_{\ell-1}$ of the flag, we see that
\begin{equation}
	\Hepp{M,\vec{\ind}}
	= \sum_{\substack{
		\text{bridgeless}\ \gamma \subset M\\
		\text{with}\
		\loops{\gamma}=\loops{M}-1
	}}
		\frac{\ind_{M/\gamma}}{
			\ind^{\gamma^c}
		}
		\frac{\Hepp[\Dim]{\gamma,\vec{\ind}}}{\sdc{\gamma}}
	\label{eq:hepp-flag-recursive}%
\end{equation}
where the subscript in $\Hepp[\Dim]{\gamma}$ indicates that this Hepp bound is to be computed in the dimension $\Dim$ determined by $\sdc{M}=0$. This gives a result different from the actual Hepp bound $\Hepp{\gamma}$ of $\gamma$ by itself, since the latter imposes another dimension where $\sdc{\gamma}=0$.
\begin{remark}
	We may expand $\gamma$ in \eqref{eq:hepp-flag-recursive} into its connected components, similar to \eqref{eq:hyperplane-block-recursion}.
\end{remark}
\begin{example}\label{ex:R10-flags}
	The smallest non-graphic regular matroid is called $R_{10}$ \cite{Seymour:DecompositionRegular}. It has rank $5$ and may be represented by the $10$ vectors $\setexp{\uv{i}+\uv{j}+\uv{k}}{1\leq i<j<k\leq 5} \subset \FF{2}^{5}$ over the field $\FF{2}=\Z/2\Z$ with two elements. 
	The complement of every edge $e$ is isomorphic to the graphic matroid
	$R_{10} \setminus e \cong \GMat{K_{3,3}}$
	of the complete bipartite graph $K_{3,3}$. With unit indices, $\sdc{R_{10}}=0$ in $\Dim=4$ dimensions, and hence the Hepp bound of $R_{10}$ is
	\begin{equation*}
		\Hepp{R_{10}} 
		= 10\cdot \frac{\Hepp[4]{K_{3,3}}}{\sdc{K_{3,3}}}
		= 10 \cdot \Hepp[4]{K_{3,3}}
		= 10 \cdot 9 \cdot \frac{\Hepp[4]{}(K_{3,3}^-)}{\sdc{K_{3,3}^{-}}}
		= 45 \cdot \Hepp[4]{}(K_{3,3}^-)
	\end{equation*}
	in terms of the graph $K_{3,3}^- \cong K_{3,3}\setminus e$ depicted in \autoref{fig:k33}. It has two bridgeless subgraphs of the form $D_3(2,2,2)=\Graph[0.2]{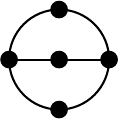}$ and four subgraphs isomorphic to $D_3(3,1,3)=\Graph[0.22]{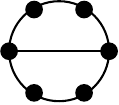}$, hence
	\begin{equation*}
		\Hepp[4]{}\big(K_{3,3}^{-}\big)
		=
		2 \cdot 2 \cdot \frac{\Hepp[4]{\Graph[0.2]{d222}}}{\sdc{\Graph[0.2]{d222}}}
		+
		4 \cdot 1 \cdot \frac{\Hepp[4]{\Graph[0.22]{d313}}}{\sdc{\Graph[0.22]{d313}}}
		= 4 \cdot \frac{12}{2} + 4 \cdot \frac{27/2}{3}
		= 42
	\end{equation*}
	according to \eqref{eq:D3ijk} and we conclude that $\Hepp{R_{10}} = 45\cdot 42 = 1890$.
\end{example}
\begin{remark}\label{rem:Derksen-1pi-recursion}
	The recursion underlying \eqref{eq:hepp-flag-recursive} can also be applied to the Derksen invariant: Indeed, the proof of \autoref{prop:hepp-1pi-flags} readily demonstrates that $\Derksen{M}$ is completely determined by the lattice $\Cycs{M}$ of its bridgeless submatroids, and we have
\begin{equation}
	\Derksen{M} = \sum_{\substack{\text{bridgeless}\ \gamma \subset M \\ \loops{\gamma}=\loops{M}-1}}
	\abs{M/\gamma}! \cdot \left[ 
		\Derksen{\gamma}
		\shuffle
		\buch{1}^{\abs{M/\gamma}-1}
	\right]
	\buch{0}.
	\label{eq:Derksen-from-1PI}%
\end{equation}
\end{remark}

From an algorithmic point of view, this method to compute the Hepp bound can be implemented as a traversal of the Hasse diagram of the lattice
\begin{equation*}
	\Cycs{M} \defas \setexp{\emptyset \subseteq \gamma \subseteq \EG{M}}{\gamma\ \text{is \OnePI}}
\end{equation*}
of bridgeless submatroids. Starting from its maximum, which is $M$ itself, this lattice can be explored efficiently in a top-down approach as follows:

Given a bridgeless matroid $\gamma$, we call two edges $e$ and $f$ equivalent if $e$ is a bridge of $\gamma\setminus \set{f}$, in other words, if $\rank{\gamma\setminus\set{e,f}}<\rank{\gamma}$. This is an equivalence relation, and we can compute the corresponding partition $\EG{\gamma} = S_1\sqcup\ldots\sqcup S_k$ into equivalence classes $S_i$ using less than $\abs{\gamma}^2$ calls to the rank function. The bridgeless submatroids of $\gamma$ with loop number $\loops{\gamma}-1$, that is the maximal elements below $\gamma$ in $\Cycs{M}$, are precisely the complements $\gamma\setminus S_i$. So the recursion \eqref{eq:hepp-flag-recursive} has $k\leq \abs{\gamma}$ summands.
\begin{corollary}
	Let $K=|\Cycs{M}| \leq 2^N$. Then the Hepp bound of $M$ can be computed in $\asyO{K\cdot N^2}$ many steps, provided that the $K$ values of $\Hepp[\Dim]{\gamma,\vec{\ind}}$ can be stored in memory.
\end{corollary}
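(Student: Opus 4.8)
The plan is to read the corollary as a complexity bound on the memoized evaluation of the recursion \eqref{eq:hepp-flag-recursive} over the lattice $\Cycs{M}$, and to assemble the per-node cost already isolated in the text preceding the statement. First I would organize the computation as a top-down dynamic program on the Hasse diagram of $\Cycs{M}$: starting from the maximum $M$, I compute $\Hepp[\Dim]{\gamma,\vec{\ind}}$ for each bridgeless submatroid $\gamma$ from the values at its covers, i.e.\ the bridgeless submatroids of loop number $\loops{\gamma}-1$ sitting directly below it. Since the right-hand side of \eqref{eq:hepp-flag-recursive} refers only to such covers, a downward induction on the loop number shows that every element of $\Cycs{M}$ is reached as an argument exactly once, so with the stored $K$ values each node is evaluated a single time.

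Next I would bound the work at a single node $\gamma$. By the equivalence-relation construction preceding the statement, the partition $\EG{\gamma}=S_1\sqcup\cdots\sqcup S_k$ into classes $e\sim f \iff \rank{\gamma\setminus\set{e,f}}<\rank{\gamma}$ is obtained with fewer than $\abs{\gamma}^2\leq N^2$ evaluations of the rank function, and its classes yield precisely the $k\leq\abs{\gamma}\leq N$ covers $\gamma\setminus S_i$ entering \eqref{eq:hepp-flag-recursive}. It then remains to check that assembling the recursion itself is cheap: for each summand the numerator $\ind_{M/\gamma}=\ind_{S_i}$ and the factor $\ind^{\gamma^c}=\prod_{e\in S_i}\ind_e$ cost $\asyO{\abs{S_i}}$, and since the $S_i$ partition $\EG{\gamma}$ these total $\asyO{N}$ over all summands; the remaining factor $\sdc{\gamma}$ simplifies, using $\sdc{M}=0$ and $\loops{\gamma}=\loops{M}-1$, to an expression linear in the $\ind_{S_i}$, again $\asyO{N}$ in total. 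Hence the cost at $\gamma$ is $\asyO{N^2}$, dominated by the rank calls.

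Summing over the $K=\abs{\Cycs{M}}\leq 2^N$ nodes then gives the claimed $\asyO{K\cdot N^2}$ steps. The only point that needs care---and which I expect to be the main (if modest) obstacle---is the bookkeeping guaranteeing that the downward traversal visits each bridgeless submatroid exactly once and that the covers produced by the equivalence classes are exactly those indexing \eqref{eq:hepp-flag-recursive}; both follow from \autoref{prop:hepp-1pi-flags} and the characterization of covers given just above the statement, so no genuinely new idea is required beyond the counting.
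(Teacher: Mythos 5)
Your proof is correct and takes essentially the same route as the paper: the corollary is precisely the memoized top-down traversal of $\Cycs{M}$ described in the paragraph preceding it, with the $\asyO{N^2}$ per-node cost coming from the rank-oracle calls that determine the equivalence classes $S_i$ and hence the covers entering \eqref{eq:hepp-flag-recursive}. Your additional verification that assembling the sum itself (the factors $\ind_{S_i}$, $\ind^{\gamma^c}$ and $\sdc{\gamma}$) costs only $\asyO{N}$ per node is a harmless elaboration of a point the paper leaves implicit.
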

We stress that the lattice $\Cycs{M}$ grows slowly from the top down: each element $\gamma$ has at most $\abs{\gamma}$ children in the Hasse diagram. In contrast, there is no such bound in the bottom-up direction: For example, the number of circuits (sets directly above $\emptyset \in \Cycs{M}$) in a connected matroid can be exponentially large (take cycles in complete graphs).
\begin{corollary}\label{cor:Hepp-polynomial-time}
	Every connected matroid with $N$ elements and $\ell$ loops has at most $N(N-1)\cdots (N-\ell+1)$ bridgeless flags. Consequently, the Hepp bound of matroids with bounded loop number is computable in polynomial time in $N$.
\end{corollary}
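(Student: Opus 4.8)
The plan is to prove the counting bound first and then read off the complexity statement. For the count I would build each bridgeless flag $\emptyset=\gamma_0\subsetneq\cdots\subsetneq\gamma_{\ell}=M$ from the top down and bound the number of choices available at every step. The essential input is the description of the children in $\Cycs{M}$ recalled just above: if $\gamma$ is bridgeless with $\loops{\gamma}=k\geq1$ and $\EG{\gamma}=S_1\sqcup\cdots\sqcup S_j$ is its partition into equivalence classes, then the bridgeless submatroids $\delta\subsetneq\gamma$ with $\loops{\delta}=k-1$ are precisely the complements $\gamma\setminus S_i$. Since the $S_i$ are nonempty and partition $\EG{\gamma}$, there are $j\leq\abs{\gamma}$ of them; hence $\gamma$ has at most $\abs{\gamma}$ children, and each child $\gamma\setminus S_i$ has strictly fewer edges than $\gamma$.

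First I would formalise this as a descending walk. Choosing $\gamma_{\ell-1},\gamma_{\ell-2},\ldots,\gamma_1$ in turn, the $i$-th choice picks a child $\gamma_{\ell-i}$ of $\gamma_{\ell-i+1}$ and so admits at most $\abs{\gamma_{\ell-i+1}}$ options. Because every step deletes a nonempty class, the edge count drops by at least one each time, giving $\abs{\gamma_{\ell-i+1}}\leq N-(i-1)$. Multiplying the per-step bounds over $i=1,\ldots,\ell-1$ yields
\[
	\bigl|\Flags[\OnePI]{M}\bigr|\leq\prod_{i=1}^{\ell-1}(N-i+1)=N(N-1)\cdots(N-\ell+2),
\]
and since a connected matroid with $\ell\geq1$ loops has $\rank{M}=N-\ell\geq1$, hence $N-\ell+1\geq1$, this is at most the claimed $N(N-1)\cdots(N-\ell+1)$.

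For the complexity claim I would evaluate the recursion \eqref{eq:hepp-flag-recursive} with memoisation over the bridgeless submatroids that actually occur. Every such $\gamma$ is reachable from $M$ by descending through children, and since every bridgeless matroid with at least one loop has a child while the bottom of each maximal chain is $\emptyset$, each occurring $\gamma$ lies in at least one bridgeless flag. The number of occurring submatroids is therefore bounded by the number of (flag, position) pairs, $(\ell+1)\bigl|\Flags[\OnePI]{M}\bigr|\leq(\ell+1)N(N-1)\cdots(N-\ell+1)$, which is polynomial in $N$ for fixed $\ell$. For each occurring $\gamma$ the recursion needs its children, obtained from the equivalence-class partition using fewer than $\abs{\gamma}^2\leq N^2$ evaluations of the rank function, together with $\asyO{N}$ arithmetic on the forms $\sdc{\gamma_k}$ and the index products appearing in \eqref{eq:hepp-flag-recursive}. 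The total cost is thus $\asyO{\ell N^{\ell+2}}$ rank calls, polynomial in $N$ whenever $\ell$ is bounded.

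The main obstacle is the sharp falling-factorial estimate rather than the cruder $N^{\ell}$: it hinges on combining two facts—that $\gamma$ has at most $\abs{\gamma}$ children, and that passing to a child strictly decreases the number of edges—so that the per-step branching factors telescope into consecutive integers $N,N-1,\ldots$. The remaining care is to guarantee that the memoised recursion only ever visits flag-reachable bridgeless submatroids, and never enumerates the full lattice $\Cycs{M}$, whose cardinality may be exponential in $N$.
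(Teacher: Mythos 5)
Your proof is correct and takes essentially the same route as the paper: a top-down count of flags using the fact, established in the paragraph preceding the corollary, that each bridgeless $\gamma$ has at most $\abs{\gamma}$ children $\gamma\setminus S_i$ in $\Cycs{M}$, each with strictly fewer edges, followed by a memoised top-down evaluation of \eqref{eq:hepp-flag-recursive} with fewer than $\abs{\gamma}^2$ rank-oracle calls per visited node. (One cosmetic slip: a connected matroid can have rank $0$, namely $\UM{1}{0}$, but the inequality $N-\ell+1\geq 1$ you need holds anyway since $\ell\leq N$.)
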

More formally, if the matroid is given by its rank function as an oracle, then $\asyO{N^{\ell+2}}$ oracle calls are sufficient to determine the Hepp bound. By \autoref{cor:hepp-volume}, this gives a polynomial time algorithm for the calculation of the volume of the polar of the matroid polytope. For matroids polytopes themselves, such a result is well known \cite{DeLoeraHawsKoeppe:Ehrhart}.

\subsection{Flats and cuts}
\label{sec:flat-flags}

\begin{definition}
	A subset $\gamma\subseteq \EG{M}$ of a matroid is called a \emph{flat} (or \emph{closed}) if it is maximal for its rank, so that $\rank{\gamma\cup e}>\rank{\gamma}$ for every $e\in\EG{M}{\setminus}\gamma$. The set of flats of $M$ forms a lattice $\Flats{M}$, and the \emph{span} or \emph{closure} of a subset $\gamma$ of $\EG{M}$ is the unique minimal flat $\closeM{\gamma}$ that contains $\gamma$. 
	The flats $\gamma$ with $\rank{\gamma} = \rank{M}-1$ are called \emph{hyperplanes} (also \emph{copoints}), and the complements $M\setminus \gamma$ of hyperplanes are the \emph{cocircuits}.
\end{definition}
In the case of graphic matroids, a flat is a subgraph $\gamma \subset G$ such that each connected component $\delta$ of $\gamma$ is \emph{(vertex-)induced}, saying that $\delta$ contains all edges of $G$ that have both endpoints in $\delta$. The hyperplanes $\gamma$ of a connected graph $G$ consist of precisely two components and correspond to vertex bipartitions $\VG{G}=S \sqcup T$ (cuts) for which both parts $S,T\neq\emptyset$ induce connected subgraphs. Hence, hyperplane complements (circuits) $G\setminus \gamma$ are the minimal edge-cuts, also called \emph{bonds} \cite{Tsuchiya:BondModular,Tsuchiya:BondLattices}. For $3$-connected $G$, the vertex complements $G\setminus v$ are precisely the connected hyperplanes \cite[Theorem~1]{Sachs:GraphsMatroidsLattices}.
\begin{figure}
	\centering
	\begin{tabular}{cccccc}
	$\Graph[0.6]{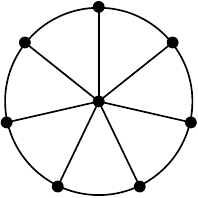}$ &
	$\supset$ &
	$\Graph[0.6]{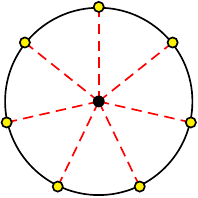}$ &
	$\Graph[0.6]{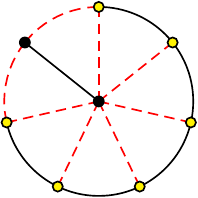}$ &
	$\Graph[0.6]{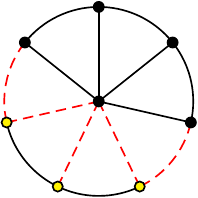}$ &
	$\Graph[0.6]{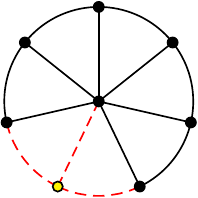}$ \\
	$ \WS{7}$ &
	&
	$\Cycle{7} \sqcup \Path{1}$ &
	$\Path{6} \sqcup \Fan{1}$ &
	$\Path{3} \sqcup \Fan{4}$ &
	$\Path{1} \sqcup \Fan{6}$ 
	\end{tabular}
	\caption{Some cuts of the wheel with seven spokes. The dashed edges are cut and separate the wheel into two parts, indicated by differently drawn vertices.
	}%
	\label{fig:wheel-cuts}%
\end{figure}
\begin{example}
	The wheel graph $\WS{n}$ with $n=\loops{\WS{n}}$ spokes and loops has essentially two types of minimal cuts, see \autoref{fig:wheel-cuts}:
	Either the hub is dissected from the entire rim cycle $\Cycle{n}$, or a path $\Path{k}$ on $k$ vertices in the rim gets separated from a fan $\Fan{n-k}$.
\end{example}
The minimal element of $\Flats{M}$ is the unique flat of rank zero, namely the set $\closeM{\emptyset}$ which consists of the self-loops of $M$. For $M$ connected with rank at least one, this is the empty set. In this case, the set of flags (maximal chains) of flats is
\begin{equation*}
	\Flags[\Flat]{M} 
	\defas \setexp{
		\emptyset = \gamma_0 \subsetneq \gamma_1 \subsetneq \cdots \subsetneq \gamma_{\rank{M}}=M
	}{
		\text{each $\gamma_k$ is a flat with $\rank{\gamma_k}=k$}
	}.
\end{equation*}
These flags are known to encode a matroid in a very interesting way and have received more attention than the bridgeless case \cite{BorovikGelfandVinceWhite:FlatsFlagPolytope,Hampe:IntersectionRing}. Our following observation that the flags of flats directly determine the Hepp bound is very much in the spirit of \cite{Hampe:IntersectionRing,BoninKung:Ginvariant}.
\begin{remark}
	In the position space theory of Feynman integrals \cite{Berghoff:WonderfulQFT,BergbauerBrunettiKreimer:RenResSing}, flats of Feynman graphs are called \emph{saturated graphs} and used to define an arrangement of linear spaces that are blown up to obtain a wonderful compactification.
\end{remark}
\begin{proposition}\label{prop:hepp-flat-flags}
	For a connected matroid $M$ on $N$ edges with rank $r=\rank{M} \geq 1$,
	\begin{equation}
		\Hepp{M,\vec{\ind}}
		= 
		\frac{1}{\ind_1^{\dual}\cdots \ind_N^{\dual}}
		\sum_{\gamma_{\bullet} \in \Flags[\Flat]{M}}
		\frac{\ind_{\gamma_1}^{\dual} \ind_{\gamma_2/\gamma_1}^{\dual}\cdots \ind_{M/\gamma_{r-1}}^{\dual}}{\sdc{\gamma_1}\cdots\sdc{\gamma_{r-1}}}
		\label{eq:hepp-flat-flags}%
	\end{equation}
	where $\ind^{\dual}_{\gamma/\delta} \defas \sum_{e\in\gamma\setminus \delta} \ind_e^{\dual}$ denotes the sum of the \emph{dual indices} $\ind_e^{\dual} \defas \frac{\Dim}{2}-\ind_e$ of all edges of $\gamma$ that are not already in $\delta\subset \gamma$ (see also \autoref{sec:duality}).
\end{proposition}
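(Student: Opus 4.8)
The plan is to deduce \eqref{eq:hepp-flat-flags} from the bridgeless‑flag formula of \autoref{prop:hepp-1pi-flags} by matroid duality: the flats of $M$ are exactly the complements of the bridgeless submatroids of the dual matroid $M^{\dual}$, so the two flag formulas are dual to one another. First I will show that the Hepp bound is invariant under simultaneously dualizing the matroid and replacing every index by its dual $\ind_e^{\dual} = \tfrac{\Dim}{2}-\ind_e$, and then apply \eqref{eq:hepp-1pi-flags} to $M^{\dual}$.

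The key is a linear identity for the superficial degrees of convergence. For any $\gamma\subseteq M$ the dual rank function gives $\loopsIn{M^{\dual}}{\gamma^c} = \rank{M}-\rank{\gamma}$, and using the logarithmic divergence $\sum_e \ind_e = \tfrac{\Dim}{2}\loops{M}$ a direct computation yields
\begin{equation*}
	\sum_{e\in\gamma^c}\ind_e^{\dual} - \tfrac{\Dim}{2}\loopsIn{M^{\dual}}{\gamma^c}
	= \sum_{e\in\gamma}\ind_e - \tfrac{\Dim}{2}\loops{\gamma}
	= \sdc{\gamma};
\end{equation*}
that is, the superficial degree of convergence of $\gamma^c$, computed in $M^{\dual}$ with the dual indices, equals $\sdc{\gamma}$. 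Taking $\gamma=\emptyset$ shows $\sdc{M^{\dual}}=0$, so the logarithmic divergence and the dimension $\Dim$ are preserved; moreover $M^{\dual}$ is connected with $\loops{M^{\dual}}=\rank{M}\geq 1$, since connectedness is self-dual. Now reverse permutations: for $\bar\sigma(k)\defas\sigma(N+1-k)$ one has $M^{\bar\sigma}_k = (M^{\sigma}_{N-k})^c$, and applying the identity factorwise shows that the summand of $\bar\sigma$ in $\Hepp{M^{\dual},\vec{\ind}^{\dual}}$ coincides with the summand of $\sigma$ in $\Hepp{M,\vec{\ind}}$. Summing over $\perms{M}$ gives $\Hepp{M,\vec{\ind}} = \Hepp{M^{\dual},\vec{\ind}^{\dual}}$.

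It remains to expand the right-hand side with \autoref{prop:hepp-1pi-flags}. A subset $\gamma\subseteq M$ is a flat precisely when $\gamma^c$ contains no coloop of $M^{\dual}$, i.e.\ when $\gamma^c$ is a union of circuits of $M^{\dual}$ (a bridgeless submatroid); and the rank identity above turns $\rank{\gamma}=k$ into $\loopsIn{M^{\dual}}{\gamma^c}=r-k$. Hence $\gamma_{\bullet}\mapsto(\delta_j)_{j}$ with $\delta_j\defas\gamma_{r-j}^c$ is a bijection $\Flags[\Flat]{M}\to\Flags[\OnePI]{M^{\dual}}$. Under it $\delta_k\setminus\delta_{k-1}=\gamma_{r-k+1}\setminus\gamma_{r-k}$, so the numerator factors $\ind^{\dual}_{\delta_k/\delta_{k-1}}$ of \eqref{eq:hepp-1pi-flags} applied to $M^{\dual}$ are, after reindexing, exactly the factors $\ind^{\dual}_{\gamma_1},\ind^{\dual}_{\gamma_2/\gamma_1},\dots,\ind^{\dual}_{M/\gamma_{r-1}}$ of \eqref{eq:hepp-flat-flags}; the prefactor $1/(\ind_1^{\dual}\cdots\ind_N^{\dual})$ matches; and by the $\sdc$-identity each denominator factor $\sdc{\delta_k}$ (computed in $M^{\dual}$) equals $\sdc{\gamma_{r-k}}$, reproducing $\sdc{\gamma_1}\cdots\sdc{\gamma_{r-1}}$. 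This is precisely \eqref{eq:hepp-flat-flags}.

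The algebra is routine; I expect the real care to lie in the matroid-theoretic correspondence and in the bookkeeping of the order reversal. One must verify cleanly that the complements of the rank-$k$ flats of $M$ are exactly the corank-$(r-k)$ bridgeless submatroids of $M^{\dual}$, so that maximal flat flags of $M$ match maximal bridgeless flags of $M^{\dual}$ of the correct length $r=\loops{M^{\dual}}$, and that the reversal places each numerator and denominator factor in the right slot. If one prefers to avoid duality altogether, the formula can instead be proven directly by mirroring the proof of \autoref{prop:hepp-1pi-flags}: send each $\sigma\in\perms{M}$ to the flag of running closures $\closeM{M^{\sigma}_{j}}$ taken at the rank-increasing positions $j$, partition $\perms{M}$ into the fibres of this map, and evaluate each fibre with the shuffle-multiplicativity of $\Char$ from \autoref{lem:Char-shuffle}.
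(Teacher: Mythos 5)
Your proof is correct, but it takes a genuinely different route from the one in the paper. The paper proves \eqref{eq:hepp-flat-flags} directly: it partitions $\perms{M}$ by the flag of running closures $\closeM{M^{\sigma}_{i_k}}$ at the rank-increasing positions, and evaluates each fibre with the antipode identity \eqref{eq:dChar-apode} together with the shuffle-multiplicativity \eqref{eq:Char-shuffle} of $\Char$ --- i.e.\ exactly the ``direct'' alternative you sketch in your closing paragraph. Your main argument instead derives the formula from \autoref{prop:hepp-1pi-flags} by dualizing: you re-prove the duality $\Hepp{M,\vec{\ind}}=\Hepp{M^{\dual},\vec{\ind}^{\dual}}$ (which is the paper's \autoref{lem:hepp-duality}, proven later and independently of \autoref{sec:flags}, so there is no circularity), verify via $\ell^{\dual}(\gamma^c)=\rank{M}-\rank{\gamma}$ that $\sdc[\vec{\ind}^{\dual}]{\gamma^c}=\sdc[\vec{\ind}]{\gamma}$ and that rank-$k$ flats of $M$ correspond to corank-$(r{-}k)$ bridgeless submatroids of $M^{\dual}$, and then transport the bridgeless-flag formula term by term. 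All of these steps check out; indeed the paper's own \autoref{rem:dual-1pi-flat} records precisely this correspondence, so your route is anticipated by the author, just not used as the proof. What each approach buys: yours is shorter and reuses structural results (duality plus \autoref{prop:hepp-1pi-flags}) rather than repeating a shuffle computation, while the paper's direct argument keeps \autoref{sec:flags} self-contained (duality only appears in \autoref{sec:symmetries}) and its technique of running closures carries over to the Derksen recursion \eqref{eq:Derksen-from-flats} and the cyclic-flat formula of \autoref{sec:cyclic-flats}. One small imprecision: ``$\gamma^c$ contains no coloop of $M^{\dual}$'' should read ``no coloop of the restriction of $M^{\dual}$ to $\gamma^c$''; the correct statement (bridgelessness of the submatroid, equivalently being a union of circuits) is what your rank identity actually establishes, so this is a wording issue rather than a gap.
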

\begin{proof}
	Given an order $\sigma \in \perms{M}$, consider the positions $i_1<\ldots<i_r$ of the edges which increase the rank: $\rank{M^{\sigma}_{i_k}}=1+\rank{M^{\sigma}_{i_{k-1}}}$.
	The flats $\gamma_k^{\sigma} \defas \closeM{M^{\sigma}_{i_k}}$ form a flag
	\begin{equation*}
		(\gamma_1^{\sigma} \subsetneq \cdots \subsetneq \gamma_{r}^{\sigma})
		\in \Flags[\Flat]{M}
	\end{equation*}
	and we like to sum over all permutations $\sigma$ that produce a given, fixed flag $\gamma_{\bullet} \in \Flags[\Flat]{M}$.
	Since $M$ has no self-loops, note $i_1=1$ so $\Dsdc{\sigma}$ starts with $\buch{\ind_e}$ for $e\defas \sigma(1)$.
	To satisfy $\gamma^{\sigma}_1 = \gamma_1$, we must have $e\in\gamma_1$. The remaining edges $f\in\gamma_1\setminus e$ each add a loop and increment the superficial degree of convergence by $\ind_f-\frac{\Dim}{2}$.
	Their positions in $\sigma$ do not affect the flag $\gamma_{\bullet}^{\sigma}$.
	So if we fix the order $\tau = \sigma|_{M\setminus \gamma_1}$ of the edges not in $\gamma_1$ and write $u$ for the subsequence of $\Dsdc{\sigma}$ given by their increments, then the sum over such $\sigma$ contributes
	\begin{equation*}
		\sum_{e\in \gamma_1} \dChar\bigg(
			\buch{\ind_e } 
			\Big( 
				u 
				\shuffle 
				\bigshuffle_{f\in\gamma_1\setminus e} 
				\buch{ \ind_f-\tfrac{\Dim}{2} } 
			\Big)
		\bigg)
		=
		\sum_{e\in \gamma_1} \Char\bigg(
			\apode u
			\shuffle 
			\bigshuffle_{f\in\gamma_1\setminus e} 
			\apode \buch{\ind_f-\tfrac{\Dim}{2}}
		\bigg).
	\end{equation*}
	Here we reversed the order of arguments using \eqref{eq:dChar-apode} and passed from $\dChar$ to $\Char$, which drops the final letter $\apode\buch{\ind_e}=-\buch{\ind_e}$. Exploiting the multiplicativity \eqref{eq:Char-shuffle}, this is
	\begin{equation*}
		=
		\dChar\Big(
			( \apode u )
			\buch{ \sdc{\gamma_1} }
		\Big)
		\sum_{e\in \gamma_1}
		\prod_{f\in\gamma_1\setminus e} \frac{1}{\ind_f^{\dual}}
		=
		\dChar\Big(
			\buch{ \sdc{\gamma_1} }
			u
		\Big)
		\frac{\ind_{\gamma_1}^{\dual}}{\prod_{f\in\gamma_1} \ind_f^{\dual}},
	\end{equation*}
	where we inserted the final letter $\buch{\sdc{\gamma_1}}$ in order to balance the word and apply \eqref{eq:dChar-apode} once more. This argument iterates and proves the claim: At the next step, we consider the first letter of $u=\buch{\ind_{\tau(1)}}u'$ and sum over $\tau(1)\in \gamma_2\setminus\gamma_1$, writing $\dChar\big( \buch{\sdc{\gamma_1}}\buch{\ind_{\tau(1)}}u'\big) = \frac{1}{\sdc{\gamma_1}} \dChar\big(\buch{\sdc{\gamma_1}+\ind_{\tau(1)}} u' \big)=\frac{1}{\sdc{\gamma_1}} \Char(\apode u')$ to then apply the analogous steps as above.
\end{proof}
The formula \eqref{eq:hepp-flat-flags} is most efficient for matroids of low rank. For bonds $\UM{n}{1}$, it gives a single term $(\Dim/2)/\prod_e \ind_e^{\dual}$, which reproduces \eqref{eq:hepp-bond}. 
Dually to \autoref{sec:1pi-flags}, the lattice of flats grows slowly from the bottom up: each flat $\gamma\in\Flats{M}$ is covered by at most $\abs{M\setminus\gamma}$ flats, namely $\closeM{\gamma\cup e_1},\ldots,\closeM{\gamma\cup e_k}$ where $M\setminus\gamma=\set{e_1,\ldots,e_k}$.
\begin{corollary}
	A connected matroid on $N$ elements with rank $r$ has no more than $N(N-1)\cdots(N-r+1)$ flags of flats. The Hepp bound of matroids with bounded rank can be computed in polynomial time in $N$.
\end{corollary}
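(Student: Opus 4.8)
The plan is to mirror the proof of \autoref{cor:Hepp-polynomial-time}, replacing bridgeless submatroids by flats and the loop number by the rank.

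First I would bound the number $\abs{\Flags[\Flat]{M}}$ of flags of flats by controlling the branching at each rank level. Such a flag is a chain $\emptyset = \gamma_0 \subsetneq \cdots \subsetneq \gamma_r = M$, built by successively passing from a flat $\gamma_{k-1}$ of rank $k-1$ to a flat $\gamma_k$ of rank $k$ covering it. Since $\gamma_{k-1}$ has rank $k-1$, it contains an independent set of that size, so $\abs{\gamma_{k-1}} \geq k-1$ and hence $\abs{M \setminus \gamma_{k-1}} \leq N-(k-1)$. By the observation preceding the statement, every flat covering $\gamma_{k-1}$ is of the form $\closeM{\gamma_{k-1} \cup e}$ for some $e \in M\setminus\gamma_{k-1}$, so there are at most $N-(k-1)$ choices for $\gamma_k$. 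Multiplying these bounds over $k=1,\ldots,r$ gives $\abs{\Flags[\Flat]{M}} \leq N(N-1)\cdots(N-r+1)$. The last factor is an overcount, since $\gamma_r=M$ is forced, but it remains a valid upper bound.

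Second, I would deduce the complexity claim from \autoref{prop:hepp-flat-flags}. Each flag contributes a single summand in \eqref{eq:hepp-flat-flags}, a product of at most $r$ factors $1/\sdc{\gamma_k}$ together with dual-index sums, all of which are read off from the ranks of the $\gamma_k$. When $r$ is fixed, the flag count $N(N-1)\cdots(N-r+1)=\asyO{N^r}$ is polynomial in $N$. To avoid enumerating all flats, I would traverse $\Flats{M}$ from the bottom up exactly as the covers were described: from a flat $\gamma$ one forms the candidate covers $\closeM{\gamma\cup e}$ for $e\in M\setminus\gamma$. Computing one closure costs $\asyO{N}$ rank-oracle calls and there are at most $N$ candidates per node, so the whole traversal, and hence the Hepp bound, is obtained in $\asyO{N^{r+2}}$ oracle calls.

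I do not expect a genuine obstacle here, as the argument is dual to \autoref{cor:Hepp-polynomial-time}. The points requiring a little care are purely bookkeeping: handling the two fixed endpoints $\gamma_0=\emptyset$ (a flat precisely because $M$ is connected of positive rank, so $\closeM{\emptyset}=\emptyset$) and $\gamma_r=M$, and noting that distinct edges $e$ may produce the same closure $\closeM{\gamma\cup e}$ — this only decreases the number of covers and so strengthens, rather than threatens, the bound.
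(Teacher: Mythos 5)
Your proof is correct and follows exactly the route the paper intends: the corollary is stated without explicit proof because it follows from the observation immediately preceding it (each flat $\gamma$ has at most $\abs{M\setminus\gamma}$ covers, namely the closures $\closeM{\gamma\cup e}$), combined with $\abs{\gamma_{k-1}}\geq k-1$, dually to \autoref{cor:Hepp-polynomial-time}. Your complexity count of $\asyO{N^{r+2}}$ oracle calls via a bottom-up traversal also matches the paper's stated $\asyO{N^{\ell+2}}$ bound for the bridgeless case.
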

For computations it is convenient to exploit the recursive structure of \eqref{eq:hepp-flat-flags}: Let us denote by $\HeppFlat[\Dim]{M,\vec{\ind}}$ the result of this formula in a fixed dimension $\Dim$, lifting the constraint $\sdc{M}=0$. Since the penultimate element $\gamma_{r-1}$ of a flag of flats is a hyperplane,
\begin{equation}
	\HeppFlat[\Dim]{M,\vec{\ind}}
	= \sum_{\text{hyperplane}\ \gamma\subset M}
	\frac{\HeppFlat[\Dim]{\gamma,\vec{\ind}}}{\sdc{\gamma}}
	\frac{\ind_{M/\gamma}^{\dual}}{\prod_{e\notin \gamma} \ind_e^{\dual}}.
	\label{eq:hyperplane-recursion}%
\end{equation}
If $M=A\oplus B$ is a direct sum, its flats $\alpha\cup\beta \in \Flats{A\oplus B} \cong \Flats{A} \times \Flats{B}$ are pairs of flats $\alpha,\beta$ of the summands. Consequently, the flags $\Flags[\Flat]{M}$ are in bijection with the shuffles of flags in $\Flags[\Flat]{A}$ with flags in $\Flags[\Flat]{B}$. The multiplicativity \eqref{eq:Char-shuffle} then shows that
\begin{equation}
	\frac{\HeppFlat[\Dim]{A\oplus B,\vec{\ind}}}{\sdc{A\oplus B}}
	= \frac{\HeppFlat[\Dim]{A,\vec{\ind}}}{\sdc{A}}
	\cdot
	  \frac{\HeppFlat[\Dim]{B,\vec{\ind}}}{\sdc{B}}.
	\label{eq:hepp-flats-components}%
\end{equation}
\begin{example}
	For a forest $\gamma\cong\UM{n}{n} \cong (\UM{1}{1})^{\oplus n}$, the formula \eqref{eq:hepp-flat-flags} is easily evaluated to
	\begin{equation}
		\frac{\HeppFlat[\Dim]{\gamma,\vec{\ind}}}{\sdc{\gamma}} = \frac{1}{\ind_1\cdots\ind_n}
		\quad\text{where}\quad
		\sdc{\gamma} = \ind_1+\cdots+\ind_n,
		\label{eq:hepp-flats-forest}%
	\end{equation}
	using either \eqref{eq:hyperplane-recursion} or \eqref{eq:hepp-flats-components}.
	The hyperplanes of the cycle $M=\GMat{\Cycle{n}}\cong\UM{n}{n-1}$ are precisely the forests $\gamma=M\setminus\set{e,f}$ obtained by deleting any pair of edges. So by \eqref{eq:hyperplane-recursion},
	\begin{equation}
		\HeppFlat[\Dim]{\Cycle{n},\vec{\ind}}
		= 
		\sum_{1\leq e<f\leq n}
		\bigg( \prod_{k \neq e,f} \frac{1}{\ind_k} \bigg)
		\frac{\ind_e^{\dual}+\ind_f^{\dual}}{\ind_e^{\dual}\ind_f^{\dual}}
		= \frac{1}{\ind_1\cdots \ind_n} 
		\sum_e \frac{\ind_e}{\ind_e^{\dual}} \sum_{f\neq e} \ind_f
		.
		\label{eq:hepp-flats-cycle}%
	\end{equation}
	Note that the sum over $f$ gives $\sdc{\Cycle{n}}+\ind_e^{\dual}$, such that the double sum can be written as $\frac{\Dim}{2} + \sdc{\Cycle{n}}+\sdc{\Cycle{n}} \sum_e \frac{\ind_e}{\ind_e^{\dual}}$. So we recover \eqref{eq:hepp-cycle} in the dimension where $\sdc{\Cycle{n}}=0$.
\end{example}
\begin{corollary}
	Let $M$ denote a matroid of rank $\rank{M} \geq 1$, and given any submatroid $\gamma \subset M$, write $\gamma = \gamma_1 \oplus \cdots \oplus \gamma_{\nCG{}}$ for its $\nCG{}=\nCG{\gamma}$ connected components. Then
	\begin{equation}
		\HeppFlat[\Dim]{M,\vec{\ind}}
		= \sum_{\text{hyperplane}\ \gamma\subset M}
		\frac{\ind_{M/\gamma}^{\dual}}{\prod_{e\notin \gamma} \ind_e^{\dual}}
		\prod_{k=1}^{\nCG{\gamma}}
		\frac{\HeppFlat[\Dim]{\gamma_k,\vec{\ind}}}{\sdc{\gamma_k}}
		.
		\label{eq:hyperplane-block-recursion}%
	\end{equation}
\end{corollary}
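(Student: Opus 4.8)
The plan is to obtain \eqref{eq:hyperplane-block-recursion} by feeding the multiplicativity \eqref{eq:hepp-flats-components} into the hyperplane recursion \eqref{eq:hyperplane-recursion}. The latter already presents $\HeppFlat[\Dim]{M,\vec{\ind}}$ as a sum over hyperplanes $\gamma \subset M$ in which each summand carries the prefactor $\ind_{M/\gamma}^{\dual}/\prod_{e\notin\gamma}\ind_e^{\dual}$ that appears verbatim in \eqref{eq:hyperplane-block-recursion}, together with the single factor $\HeppFlat[\Dim]{\gamma,\vec{\ind}}/\sdc{\gamma}$. Hence the entire task reduces to factorizing this last quantity over the connected components of the hyperplane $\gamma$.

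First I would recall that every matroid is the direct sum of its connected components, so $\gamma = \gamma_1 \oplus \cdots \oplus \gamma_{\nCG{\gamma}}$ is an honest direct-sum decomposition. The identity \eqref{eq:hepp-flats-components} is stated for two summands $A \oplus B$; I would promote it to arbitrarily many summands by induction on $\nCG{\gamma}$, using associativity of $\oplus$: applying \eqref{eq:hepp-flats-components} to $\gamma = \gamma_1 \oplus (\gamma_2 \oplus \cdots \oplus \gamma_{\nCG{\gamma}})$ peels off the factor $\HeppFlat[\Dim]{\gamma_1,\vec{\ind}}/\sdc{\gamma_1}$, and iterating yields
\begin{equation*}
	\frac{\HeppFlat[\Dim]{\gamma,\vec{\ind}}}{\sdc{\gamma}}
	= \prod_{k=1}^{\nCG{\gamma}} \frac{\HeppFlat[\Dim]{\gamma_k,\vec{\ind}}}{\sdc{\gamma_k}}.
\end{equation*}
Here the denominators are compatible because $\sdc{\gamma} = \sum_{k} \sdc{\gamma_k}$, which follows from the additivity $\loops{\gamma} = \sum_k \loops{\gamma_k}$ of the corank under direct sums together with the linearity of \eqref{eq:sdc} in the indices. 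Substituting this product back into each summand of \eqref{eq:hyperplane-recursion} reproduces \eqref{eq:hyperplane-block-recursion} term by term.

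I do not expect a genuine obstacle: the statement is a bookkeeping combination of two results already in hand, and the only steps demanding a line of justification are the inductive extension of \eqref{eq:hepp-flats-components} to several summands and the handling of the degenerate terms. The latter are controlled by the empty-product conventions: when the hyperplane is $\gamma=\emptyset$ (which for connected $M$ occurs exactly when $\rank{M}=1$), there are no components, the empty product equals $1$, and the prefactor collapses so that the single summand reproduces the base value $\HeppFlat[\Dim]{M,\vec{\ind}}$ correctly. When $M$ is connected with $\rank{M}\geq 1$, so that $M$ and hence every flat $\gamma$ are loopless, each component $\gamma_k$ is nonempty with $\rank{\gamma_k}\geq 1$, which guarantees that all the factors $\HeppFlat[\Dim]{\gamma_k,\vec{\ind}}$ and the denominators $\sdc{\gamma_k}$ are genuinely defined.
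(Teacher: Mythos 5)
Your proof is correct and takes essentially the same approach the paper intends: the corollary is presented there as an immediate consequence of substituting the direct-sum multiplicativity \eqref{eq:hepp-flats-components}, iterated over the connected components of each hyperplane $\gamma$, into the hyperplane recursion \eqref{eq:hyperplane-recursion}. Your supporting details (additivity of $\sdc{\gamma}$ over components and the empty-product conventions for the degenerate hyperplane $\gamma=\emptyset$) are exactly the bookkeeping the paper leaves implicit.
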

\begin{table}
	\centering
	\begin{tabular}{ccccccc}
		\toprule
		$n$ & $\Dim$ & $f_3$ & $f_4$ & $f_5$ & $f_6$ & $\Hepp{K_n}$ \\
		\midrule
		$3$ & $6$ & $3$ & & & & $3$ \\
		$4$ & $4$ & $3$ & $14$ & & & $84$ \\
		$5$ & $\frac{10}{3}$ & $3$ & $11$ & $\frac{265}{4}$ & & $\frac{5\cdot 3^7\cdot 53}{2^5}$ \\[2pt]
		$6$ & $3$ & $3$ & $10$ & $\frac{130}{3}$ & $312$ & $ 2^{16}\cdot 3^2 \cdot 5 \cdot 13$ \\
		\bottomrule
	\end{tabular}
	\qquad
	$
	\begin{aligned}
		f_3 &= 3 \\
		f_4 &= \tfrac{8n-18}{n-3} \\
		f_5 &= \tfrac{5(n-2)(25n-72)}{6(n-3)(n-4)} \\
		f_6 &= \tfrac{3(n-2)(36n^3-323n^2+948n-900)}{2(n-3)^2(n-4)(n-5)}
	\end{aligned}
	$ %
	\caption{The Hepp bounds of complete graphs with up to $n=6$ vertices.}%
	\label{tab:Hepp-Kn}%
\end{table}
\begin{example}
	The complete graph $K_n$ has $\loops{K_n} = \binom{n-1}{2}$ loops and for unit indices we find $\sdc{K_n} = \binom{n}{2}-\frac{\Dim}{2} \binom{n-1}{2}$.
	Every cut consists of two smaller complete graphs, hence \eqref{eq:hyperplane-block-recursion} yields a quadratic recursion. We can state it as follows: Set $\Dim=\frac{2n}{n-2}$ and define
	\begin{equation*}
		f_2 \defas 1
		\quad\text{and}\quad
		f_k \defas
		\frac{k f_{k-1}}{\sdc{K_{k-1}}}
		+
		\sum_{i=2}^{k-2} i \frac{f_i}{\sdc{K_i}} \frac{f_{k-i}}{\sdc{K_{k-i}}}
		\quad\text{for}\quad 3\leq k \leq n,
	\end{equation*}
	then $\Hepp{K_n} = (n-1)! (\frac{\Dim}{2}-1)^{-\loops{K_n}} f_n$.
	We give the results for small $n$ in \autoref{tab:Hepp-Kn}.
\end{example}
We use the recursion \eqref{eq:hyperplane-block-recursion} also to compute the Hepp bounds of all wheels. Recall that for graphs, the sum over hyperplanes $\gamma$ is a sum over minimal cuts, and the product over $k$ runs over the biconnected components (blocks) of $\gamma$. The computation is tractable because a wheel has only few connected flats (induced biconnected subgraphs).
\begin{proposition}\label{prop:hepp-wheels}%
	The Hepp bounds of the wheel graphs with unit indices are
	\begin{equation}
		\Hepp{\WS{n}} 
		= \frac{2n}{n-2} + \frac{1}{4^{n-1}} \sum_{k=1}^n \binom{2n-2k}{n-k} \binom{2k}{k} k\cdot 9^{n-k}
		\quad\text{for every}\quad
		n \geq 3
		,
		\label{eq:Hepp-WS-explicit}%
	\end{equation}
	and they grow asymptotically like $\Hepp{\WS{n}} \sim \frac{3\cdot 9^n}{8\sqrt{2\pi n}}$ for large $n$.
	Their generating function 
	$
		\gf{W}(z)
		\defas
		\sum_{n=3}^{\infty} \Hepp{\WS{n}} z^n
		= 84 z^3 + 572 z^4 + \tfrac{13240}{3} z^5 + 35463 z^6 + \asyO{z^7}
	$
	has the form
	\begin{equation}
		\gf{W}(z)
		= \frac{2z}{1-z}-4z-14z^2-4z^2 \log(1-z) + \frac{2z}{\sqrt{(1-9z)(1-z)^3}}.
		\label{eq:hepp-wheels}%
	\end{equation}
\end{proposition}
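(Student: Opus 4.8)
The plan is to use the flag-of-flats recursion \eqref{eq:hyperplane-block-recursion}, which is efficient here because the only connected flats (induced biconnected subgraphs) of a wheel are the rim cycle and the fans $\Fan{m}$. At unit indices the constraint $\sdc{\WS{n}}=0$ forces $\Dim=4$, so every dual index equals $\ind_e^{\dual}=\tfrac{\Dim}{2}-1=1$ and the prefactor $\ind_{M/\gamma}^{\dual}/\prod_{e\notin\gamma}\ind_e^{\dual}$ in \eqref{eq:hyperplane-block-recursion} reduces to the number $\abs{M\setminus\gamma}$ of cut edges. I also record, at unit indices in $\Dim=4$, the values $\sdc{\Fan{m}}=1$, $\sdc{\Cycle{n}}=n-2$ and $\sdc{\Path{\ell}}=\ell-1$, together with $\HeppFlat[4]{\Path{\ell}}/\sdc{\Path{\ell}}=1$ from \eqref{eq:hepp-flats-forest} and $\HeppFlat[4]{\Cycle{n}}=n(n-1)$ from \eqref{eq:hepp-flats-cycle}. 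A connected bipartition of $\WS{n}$ either separates the hub from the entire rim---removing the $n$ spokes and leaving the flat $\Cycle{n}$---or separates a contiguous arc of $k$ rim vertices from the rest; since the complement of an arc on a cycle is again an arc, the latter leaves the flat $\Path{k}\oplus\Fan{n-k}$, has $k+2$ cut edges, and occurs in $n$ rotational positions for each $1\le k\le n-1$. Substituting into \eqref{eq:hyperplane-block-recursion} yields the master identity
\[
\Hepp{\WS{n}}=\frac{n^{2}(n-1)}{n-2}+n\sum_{k=1}^{n-1}(k+2)\,\HeppFlat[4]{\Fan{n-k}},
\]
which I would cross-check against $\Hepp{\WS{3}}=\Hepp{K_4}=84$ and $\Hepp{\WS{4}}=572$.

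The core step is to compute the fan values $a_m\defas\HeppFlat[4]{\Fan{m}}$, with $a_0\defas1$. Applying \eqref{eq:hyperplane-block-recursion} to $\Fan{m}$, every cut removes a contiguous arc of $\ell$ path-vertices; an \emph{interior} arc splits the remaining fan into two fans meeting only at the hub, so the flat is the matroid direct sum $\Path{\ell}\oplus\Fan{p}\oplus\Fan{q}$ with $p+q+\ell=m$, and the number of cut edges is $\ell+[p\ge1]+[q\ge1]$ (with $[\,\cdot\,]$ an Iverson bracket). This yields the quadratic recursion
\[
a_m=\sum_{\substack{p,q\ge0\\ p+q\le m-1}}\bigl((m-p-q)+[p\ge1]+[q\ge1]\bigr)a_p a_q .
\]
With $A(z)=\sum_{m\ge0}a_m z^m$ and $P(z)=\sum_{k\ge1}(k+2)z^k=\tfrac{z(3-2z)}{(1-z)^2}$, I would translate this into the algebraic equation $z(3-2z)A^{2}-(1-z^{2})A+(1-z)^{2}=0$; its discriminant factors as $(1-z)^{3}(1-9z)$, and the branch with $A(0)=1$ is
\[
A(z)=\frac{(1-z)\bigl[(1+z)-\sqrt{(1-z)(1-9z)}\bigr]}{2z(3-2z)} .
\]

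Finally I would form the wheel generating function $\gf{W}(z)=\sum_{n\ge3}\Hepp{\WS{n}}z^{n}$. Since $\sum_{k=1}^{n-1}(k+2)a_{n-k}=[z^{n}]\bigl(P(z)(A(z)-1)\bigr)$, the factor $n$ in the master identity turns the fan sum into the pointing $z\frac{d}{dz}\bigl(P(A-1)\bigr)$ up to low-order corrections, while $\tfrac{n^{2}(n-1)}{n-2}=n^{2}+n+2+\tfrac{4}{n-2}$ contributes a rational function together with $4\sum_{n\ge3}\tfrac{z^{n}}{n-2}=-4z^{2}\log(1-z)$. Reducing the remaining algebra with the quadratic for $A$---so that the surviving branch cut appears as $\sqrt{(1-9z)(1-z)^{3}}$ in the denominator---should collapse everything to \eqref{eq:hepp-wheels}. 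The explicit formula \eqref{eq:Hepp-WS-explicit} then follows by coefficient extraction: for $n\ge3$ the elementary and logarithmic terms combine to $\tfrac{2n}{n-2}$, and expanding $(1-9z)^{-1/2}(1-z)^{-3/2}$ as a binomial convolution produces the stated sum. The asymptotics come from singularity analysis: the dominant singularity sits at $z=\tfrac19$, where $\gf{W}(z)\sim\tfrac{3}{8\sqrt2}(1-9z)^{-1/2}$, so the transfer theorem gives $\Hepp{\WS{n}}\sim\tfrac{3\cdot9^{n}}{8\sqrt{2\pi n}}$.

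The main obstacle is the combinatorial bookkeeping behind the fan recursion---above all, recognising that cutting an interior arc of a fan produces a matroid direct sum of two smaller fans (because the two pieces share only the hub), and treating the degenerate cases $\Fan{0}$, $\Fan{1}$ and $\Path{1}$ consistently. Once the quadratic for $A(z)$ is established, assembling and simplifying $\gf{W}(z)$ is a finite if lengthy manipulation, and the passage to \eqref{eq:Hepp-WS-explicit} and to the asymptotics are routine applications of coefficient extraction and singularity analysis.
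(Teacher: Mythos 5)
Your proposal is correct and takes essentially the same route as the paper's own proof: the same cut recursion \eqref{eq:hyperplane-block-recursion} specialized to wheels and fans, the identical master identity $\Hepp{\WS{n}}=\tfrac{n^2(n-1)}{n-2}+n\sum_{k=1}^{n-1}(k+2)\HeppFlat[4]{\Fan{n-k}}$, and an equivalent fan recursion leading to the same algebraic equation (your $A(z)$ is the paper's $\gf{F}(z)+1$, and your quadratic $z(3-2z)A^2-(1-z^2)A+(1-z)^2=0$ is exactly the paper's equation for $\gf{F}$ after this substitution, with the same discriminant $(1-z)^3(1-9z)$). The assembly of $\gf{W}(z)$, the binomial-convolution coefficient extraction, and the singularity analysis at $z=\tfrac19$ likewise mirror the paper's argument.
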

\begin{figure}
	\centering
	\begin{tabular}{ccccc}
		$\Graph[0.8]{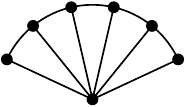}$ &
		$\supset$ &
		$\Graph[0.8]{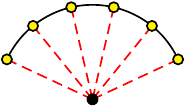}$ &
		$\Graph[0.8]{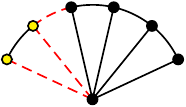}$ &
		$\Graph[0.8]{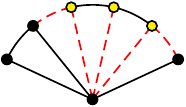}$
		\\
		$\Fan{6}$ & &
		$\Path{6}$ &
		$\Path{2}\sqcup\Fan{4}$ &
		$\Fan{2}\sqcup\Path{3}\sqcup\Fan{1}$
	\end{tabular}%
	\caption{Some cuts of the fan with $6$ spokes and the corresponding block decomposition.}%
	\label{fig:fan-cuts}%
\end{figure}
\begin{proof}
	For unit indices, the wheel $\WS{n}$ is defined in dimension $\Dim=4$ and thus $\ind_e^{\dual}=\ind_e=1$ for all $2n$ edges.
	As illustrated in \autoref{fig:wheel-cuts}, for a wheel the recursion \eqref{eq:hyperplane-block-recursion} gives
	\begin{equation*}
		\Hepp{\WS{n}} = n \frac{\HeppFlat[4]{\Cycle{n}}}{\sdc{\Cycle{n}}}
		+ n \sum_{k=1}^{n-1} (k+2) \HeppFlat[4]{\Fan{n-k}},
		\tag{$\ast$}%
		\label{eq:wheel-hepp-cuts}%
	\end{equation*}
	where the first factor of $n$ is the size of the cut (all spokes) and the factor $n$ in front of the sum over $k$ accounts for the different copies of the fan $\Fan{n-k}$ with $n-k$ spokes obtained by rotations.
	Note that the paths $\Path{k}$ in the rim give the trivial contribution $\HeppFlat[4]{\Path{k}}/\sdc{\Path{k}}=1$ from \eqref{eq:hepp-flats-forest}, and every fan has $\sdc{\Fan{n-k}}=1$. Let us write
	\begin{equation*}
		\gf{C}(z)
		\defas \sum_{n=3}^{\infty} \frac{\HeppFlat[4]{\Cycle{n}}}{\sdc{\Cycle{n}}} z^n
		= \sum_{n=3}^{\infty} \frac{n(n-1)}{n-2} z^n
		= \frac{z^3(4-3z)}{(1-z)^2} - 2z^2 \log(1-z)
	\end{equation*}
	for the generating function of the cycles according to \eqref{eq:hepp-flats-cycle}, and $\gf{F}(z) \defas \sum_{n\geq 1} \HeppFlat[4]{\Fan{n}} z^n$ for the generating series of the fans. Then the recurrence \eqref{eq:wheel-hepp-cuts} can be written as
	\begin{equation*}
		\gf{W}(z) 
		=
		z\frac{\td}{\td z} \left( 
			\gf{C}(z)
			+ \frac{z(3-2z)}{(1-z)^2} \gf{F}(z)
			- 3 z^2
		\right),
		\tag{$\dagger$}%
		\label{eq:wheel-hepp-gf-cuts}%
	\end{equation*}
	where the factor in front of $\gf{F}(z)$ is $\sum_{k=1}^{\infty} (k+2) z^{k}$. To determine $\gf{F}(z)$, consider the cuts of a fan as illustrated in \autoref{fig:fan-cuts} and apply \eqref{eq:hyperplane-block-recursion} to obtain the recurrence
	\begin{equation*}
		\HeppFlat[4]{\Fan{n}}
		= n + 2\sum_{k=1}^{n-1} (n-k+1) \HeppFlat[4]{\Fan{k}}
		+ \sum_{1<j\leq k<n} (k-j+3) \HeppFlat[4]{\Fan{j-1}} \HeppFlat[4]{\Fan{n-k}}
	\end{equation*}
	where the first term stems from dissecting the rim $\Path{n}$ from the hub, the middle sum cuts off a smaller fan $\Fan{k}$ from a path $\Path{n-k}$ in the rim, and the double sum enumerates the cuts that carve out a path in the rim from spoke $j$ to spoke $k$, chopping off two smaller fans $\Fan{j-1}$ and $\Fan{n-k}$.
	For the generating function $\gf{F}(z)$, this recursion reads
	\begin{equation*}
		\gf{F}(z)
		=
		\frac{z}{(1-z)^2}
		+ \frac{2z(2-z)}{(1-z)^2}\gf{F}(z)
		+ \frac{z(3-2z)}{(1-z)^2} \gf{F}^2(z)
		.
	\end{equation*}
	Inserting the solution
	$
		\gf{F}(z)
		=
		\frac{1-6z+3z^2}{2z(3-2z)}-\frac{\sqrt{(1-9z)(1-z)^3}}{2z(3-2z)}
	$
	into \eqref{eq:wheel-hepp-gf-cuts} confirms \eqref{eq:hepp-wheels}, and we obtain \eqref{eq:Hepp-WS-explicit} using the binomial series.
	The asymptotics for large $n$ can be computed with standard methods, see \cite{FlajoletSedgewick:AnalyticCombinatorics}.
\end{proof}
\begin{remark}
	If all indices $\ind_e=1$ are equal, the partition of all permutations according to the first flat $\gamma=\closeM{\set{\sigma(1)}}$, as described in the proof of \autoref{prop:hepp-flat-flags}, amounts to a recursion for the Derksen invariant of the form
\begin{equation}
	\Derksen{M} =
	\sum_{\substack{\gamma\in\Flats{M} \\ \rank{\gamma}=1}}
	\abs{\gamma}! \cdot
	\buch{1}\left[ 
		\buch{0}^{\abs{\gamma}-1}
		\shuffle
		\Derksen{M/\gamma}
	\right]
	.
	\label{eq:Derksen-from-flats}%
\end{equation}
We conclude that $\Derksen{M}$ is completely determined by the lattice of flats of $M$, because the flats $\Flats{M/\gamma} \cong \{\delta \in \Flats{M}\colon \delta \supseteq \gamma\}$ of a quotient $M/\gamma$ are precisely those flats $\delta$ of $M$ that contain $\gamma$. This dual to \autoref{rem:Derksen-1pi-recursion} was observed in \cite[Section~3]{BoninKung:Ginvariant}.
\end{remark}
\begin{example}
	Starting from the $3$- and $4$-bonds $\Derksen{\Graph[0.3]{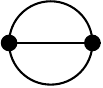}}=6 \buch{1,0,0}$ and $\Derksen{\Graph[0.3]{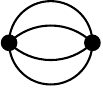}}=24\buch{1,0,0,0}$, and using $\buch{0}\shuffle\buch{1,0,0}=\buch{0,1,0,0}+3\buch{1,0,0,0}$, the recursion \eqref{eq:Derksen-from-flats} gives
	\begin{equation*}
		\Derksen{\Graph[0.45]{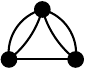}}
		=
		\buch{1}\left[
			\Derksen{\Graph[0.35]{bond4}}
			+
			2\cdot 2!\cdot \buch{0} \shuffle \Derksen{\Graph[0.35]{triple}}
		\right]
		= 96\buch{1,1,0,0,0} + 24\buch{1,0,1,0,0}.
	\end{equation*}
	Inserting this into
	$\Derksen{\gKfour} = 6 \cdot\buch{1}\cdot \Derksen{\Graph[0.35]{tri221}}
	$
	confirms the result from \autoref{ex:Derksen-K4}.
\end{example}

\subsection{Cyclic flats}
\label{sec:cyclic-flats}

In the preceding two sections, we partitioned the total orders of the edges according to the corresponding flags $\gamma_{\bullet}$ of bridgeless or flat matroids. The corresponding iterated quotients are cycles $\gamma_{k}/\gamma_{k-1} \cong \UM{n}{n-1}$ or bonds $\gamma_k/\gamma_{k-1} \cong \UM{n}{1}$ on $n=\abs{\gamma_k/\gamma_{k-1}}$ elements, respectively.

Combining both approaches, we can give a formula in terms of only those submatroids that are simultaneously cyclic (bridgeless) and flat. These form the lattice
\begin{equation}
	\CycFlats{M}
	\defas \Cycs{M} \cap \Flats{M}
	\label{eq:cyclic-flats}%
\end{equation}
of cyclic flats, and in fact this lattice, together with the rank function, determines the matroid \cite{Brylawski:AffineTransversal,BoninMier:CyclicFlats}.
Note that the closure $\closeM{\gamma} \in \CycFlats{M}$ of any bridgeless $\gamma\in \Cycs{M}$, as well as the interior $\intM{\gamma} \in \CycFlats{M}$ of any flat $\gamma\in\Flats{M}$, are cyclic flats.
\begin{proposition}\label{prop:Hepp-cyclic-flats}
	The Hepp bound of a connected matroid $M$ fulfils the recursion
	\begin{equation}
		\HeppFlat[\Dim]{M,\vec{\ind}}
		= \sum_{M\neq Z \in \CycFlats{M}} 
		\frac{\HeppFlat[\Dim]{Z,\vec{\ind}}}{\sdc{Z}}
		\CycFlatFactor{ M/Z ,\vec{\ind} }
		\label{eq:cycflat-recursion}%
	\end{equation}
	over cyclic flats $Z\subsetneq M$, in terms of a sum over independent hyperplanes of 
	$Q=M/Z$:
	\begin{equation}
		\CycFlatFactor{ Q ,\vec{\ind} }
		\defas
		\sum_{\substack{%
			\text{hyperplane}\ H\subset Q \\
			\text{with}\ \loops{H}=0
		}}
		\ind^{\dual}_{Q/H}
		\Bigg( \prod_{e \in H} \frac{1}{\ind_e} \Bigg)
		\Bigg( \prod_{e\in Q\setminus H} \frac{1}{\ind_e^{\dual}} \Bigg)
		.
		\label{eq:CycFlatFactor}%
	\end{equation}
\end{proposition}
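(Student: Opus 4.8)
The plan is to derive \eqref{eq:cycflat-recursion} directly from the flat hyperplane recursion \eqref{eq:hyperplane-recursion}, by reorganising its sum over hyperplanes $\gamma\subset M$ according to the cyclic flat concealed in each $\gamma$. The guiding observation is that every hyperplane is a direct sum
\begin{equation*}
	\gamma = \intM{\gamma} \oplus \Bridges{\gamma}
\end{equation*}
of its bridgeless interior $\intM{\gamma}$, which is a cyclic flat (as noted after \eqref{eq:cyclic-flats}), and its set $\Bridges{\gamma}$ of bridges, which is a forest of coloops by \eqref{eq:intM}. Using the multiplicativity \eqref{eq:hepp-flats-components} of $\HeppFlat[\Dim]{\,\cdot\,}/\sdc{\,\cdot\,}$ over direct sums together with the forest evaluation \eqref{eq:hepp-flats-forest}, each summand of \eqref{eq:hyperplane-recursion} then factorizes as
\begin{equation*}
	\frac{\HeppFlat[\Dim]{\gamma,\vec{\ind}}}{\sdc{\gamma}}
	= \frac{\HeppFlat[\Dim]{\intM{\gamma},\vec{\ind}}}{\sdc{\intM{\gamma}}}
	\prod_{e\in\Bridges{\gamma}} \frac{1}{\ind_e}.
\end{equation*}

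I would then gather the hyperplanes by their interior. Fixing a cyclic flat $Z\in\CycFlats{M}$ with $Z\neq M$, the core claim is that $\gamma\mapsto H\defas\gamma\setminus Z$ is a bijection from the hyperplanes $\gamma$ of $M$ with $\intM{\gamma}=Z$ onto the independent hyperplanes $H$ of $Q\defas M/Z$ (the hyperplanes with $\loops{H}=0$ appearing in \eqref{eq:CycFlatFactor}). Granting this, one has $\Bridges{\gamma}=H$ and $M\setminus\gamma=Q\setminus H$, so the remaining factor $\ind^{\dual}_{M/\gamma}/\prod_{e\notin\gamma}\ind_e^{\dual}$ of \eqref{eq:hyperplane-recursion} becomes $\ind^{\dual}_{Q/H}/\prod_{e\in Q\setminus H}\ind_e^{\dual}$, and the full summand is exactly $\HeppFlat[\Dim]{Z,\vec{\ind}}/\sdc{Z}$ times the summand of $\CycFlatFactor{M/Z}$ in \eqref{eq:CycFlatFactor}. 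Summing first over $H$ reproduces $\CycFlatFactor{M/Z}$, and then summing over $Z$ gives \eqref{eq:cycflat-recursion}; cyclic flats $Z$ whose quotient has no independent hyperplane contribute $\CycFlatFactor{M/Z}=0$ and may be included harmlessly.

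The hard part will be the bijection, which is exactly where `flat' and `bridgeless' interact. For the forward direction, if $\intM{\gamma}=Z$ then $\gamma=Z\oplus\Bridges{\gamma}$ shows $H=\Bridges{\gamma}$ is a forest with $\rk_{Q}(H)=\rank{\gamma}-\rank{Z}=\rank{Q}-1=\abs{H}$, hence an independent hyperplane of $Q$, and it is a flat of $Q$ because $\gamma$ is a flat of $M$ containing $Z$. Conversely, for an independent hyperplane $H$ of $Q$ I would take $\gamma\defas Z\cup H$; the contraction identity $\rank{Z\cup H}=\rank{Z}+\rk_{Q}(H)$ following from \eqref{eq:loops-quotient} gives $\rank{\gamma}=\rank{Z}+(\rank{Q}-1)=\rank{M}-1$, so $\gamma$ is a hyperplane, and the same identity yields $\rank{\gamma\setminus e}=\rank{Z}+\rk_{Q}(H\setminus e)=\rank{\gamma}-1$ for every $e\in H$, so each such $e$ is a coloop. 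Consequently $\Bridges{\gamma}\supseteq H$ and $\intM{\gamma}=\gamma\setminus\Bridges{\gamma}\subseteq Z$, while $Z=\intM{Z}\subseteq\intM{\gamma}$ is immediate; thus $\intM{\gamma}=Z$ as required. I expect this coloop criterion and the attendant rank computations to be the only genuinely delicate step, the remainder being bookkeeping with the multiplicativity and forest formulas already in hand.
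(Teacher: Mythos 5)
Your proof is correct and takes essentially the same route as the paper's: it starts from the hyperplane recursion \eqref{eq:hyperplane-recursion}, splits each hyperplane as $\gamma = \intM{\gamma}\oplus\Bridges{\gamma}$, factorizes the summand via \eqref{eq:hepp-flats-components} and \eqref{eq:hepp-flats-forest}, and regroups the hyperplanes by their interior $Z=\intM{\gamma}$. The only difference is that you work out in full the bijection (with its rank and coloop computations) between hyperplanes of $M$ with interior $Z$ and independent hyperplanes of $M/Z$, a step the paper asserts in a single sentence.
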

\begin{proof}
	Given a hyperplane $\gamma$ of $M$, let $Z\defas\intM{\gamma} \in \CycFlats{M}$ denote the cyclic flat that is the interior of $\gamma$. The bridges $B\defas \gamma\setminus Z$ of $\gamma$ form an independent set such that $\gamma = Z \oplus B$ is a direct sum. Using \eqref{eq:hepp-flats-forest} and \eqref{eq:hepp-flats-components}, we find
	\begin{equation*}
		\frac{\HeppFlat[\Dim]{\gamma,\vec{\ind}}}{\sdc{\gamma}}
		= \frac{\HeppFlat[\Dim]{Z,\vec{\ind}}}{\sdc{Z}}
		\prod_{b\in B} \frac{1}{\ind_b}.
	\end{equation*}
	Together with \eqref{eq:hyperplane-recursion} this proves \eqref{eq:cycflat-recursion}: Note that $H\defas \gamma/Z$ is an independent hyperplane of $Q=M/Z$, so $\CycFlatFactor{M/Z,\vec{\ind}}$ defined in \eqref{eq:CycFlatFactor} correctly collects the contributions from all hyperplanes $\gamma$ that give rise to the same interior $Z=\intM{\gamma}$.
\end{proof}
Note that \eqref{eq:cycflat-recursion} sums over all cyclic flats $Z$---in contrast to \eqref{eq:hyperplane-block-recursion} or \eqref{eq:hepp-flag-recursive}, where the rank or loop number are fixed. Correspondingly, iterating the recursion \eqref{eq:cycflat-recursion} expresses the Hepp bound in terms of all chains of cyclic flats, not just the maximal ones:
\begin{equation}
	\Hepp{M,\vec{\ind}}
	= \sum_{k \geq 1}\ 
	\sum_{\substack{
		\emptyset \neq Z_1\subset \cdots \subset Z_k=M \\
		Z_1,\ldots,Z_k \in \CycFlats{M}
	}}
	\frac{\CycFlatFactor{Z_1}\CycFlatFactor{Z_2/Z_1}\cdots\CycFlatFactor{M/Z_{k-1}}}{\sdc{Z_1}\sdc{Z_2}\cdots\sdc{Z_{k-1}}}.
	\label{eq:Hepp-cyclic-flat-chains}%
\end{equation}
\begin{remark}
	Since $M$ is connected and thus {\OnePI}, every quotient $Q=M/Z$ is also {\OnePI}. Therefore, if $Q \cong A \oplus B$ is disconnected, both $A$ and $B$ must have at least one loop: $\loops{A},\loops{B}\geq 1$. But a hyperplane $H$ of $A \oplus B$ must contain all of $A$ or all of $B$, and thus $H$ cannot be independent. This shows that $\CycFlatFactor{Q,\vec{\ind}}=0$ unless $Q$ is connected.
\end{remark}
\begin{example}
	The only cyclic flats of the complete graph $K_4=\gKfour$ are $\emptyset$, the four triangles, and $K_4$ itself. For a triangle $Z=\set{i,j,k}\cong \gTri$, we can write \eqref{eq:hepp-flats-cycle} as
	\begin{equation*}
		\frac{\HeppFlat[\Dim]{Z,\vec{\ind}}}{\sdc{Z}}
		= 
		\frac{1}{\ind_i \ind_j \ind_k} \left(
			\frac{\ind_i+\ind_j+\ind_k}{\sdc{Z}}
			+\frac{\ind_i}{\ind_i^{\dual}}
			+\frac{\ind_j}{\ind_j^{\dual}}
			+\frac{\ind_k}{\ind_k^{\dual}}
		\right)
		.
	\end{equation*}
	The corresponding quotient $Q=K_4/Z \cong \Graph[0.3]{triple}$ is a bond with the unique hyperplane $\emptyset$, and \eqref{eq:CycFlatFactor} gives simply $\CycFlatFactor{Q,\vec{\ind}} = \frac{\ind^{\dual}_u+\ind^{\dual}_v +\ind^{\dual}_w}{\ind^{\dual}_u \ind^{\dual}_v \ind^{\dual}_w}$ where $\set{u,v,w}=K_4\setminus Z$ are the complementary edges. Note that the numerator is $\ind_u^{\dual}+\ind_v^{\dual}+\ind_w^{\dual}=\frac{\Dim}{2}-\sdc{Q} = \frac{\Dim}{2}+\sdc{Z}=\ind_i+\ind_j+\ind_k$ due to $0=\sdc{K_4}=\sdc{Z}+\sdc{K_4/Z}$.
	The only other contribution to \eqref{eq:cycflat-recursion} is $\CycFlatFactor{K_4,\vec{\ind}}$ from $Z=\emptyset$, which amounts to summing over the three pairs $H=\set{i,j}\cong\Graph[0.3]{twosticks}$ of non-adjacent edges. In total, we get
	\begin{equation}
		\Hepp{K_4,\vec{\ind}}
		= 
		\sum_{\Graph[0.18]{tri}\cong Z \subset K_4}
		\frac{\ind_Z}{\prod\limits_{e\in Z}\ind_e \prod\limits_{e\notin Z} \ind_e^{\dual}} 
		\left(
			\frac{\ind_Z}{\sdc{Z}}
			+\sum_{e \in Z} \frac{\ind_e}{\ind_e^{\dual}}
		\right)
		+ \sum_{\Graph[0.2]{twosticks}\,\cong Z \subset K_4}
		\frac{\ind^{\dual}_{K_4/Z}}{\prod\limits_{e\in Z}\ind_e \prod\limits_{e\notin Z} \ind_e^{\dual}}
		.
		\label{eq:Hepp-K4-full}%
	\end{equation}
\end{example}
\begin{remark}
	In the case of unit indices $\ind_e=1=\ind_e^{\dual}$ in $\Dim=4$ dimensions, the factor \eqref{eq:CycFlatFactor} becomes just $\CycFlatFactor{Q}=(\loops{Q}+1)\cdot \abs{\set{\text{independent hyperplanes of $Q$}}}$. The count of independent hyperplanes can be retrieved from the lattice $\CycFlats{Q}$ of cyclic flats and their ranks, as worked out in detail in \cite[section~7]{BoninKung:Ginvariant}.
\end{remark}
We can repeat the above discussion starting with bridgeless flags instead of flags of flats: Given $\gamma_{\bullet} \in \Flags[\OnePI]{M}$, we can set $Z_i \defas \closeM{\gamma_i} \in \CycFlats{M}$ and remove duplicates to obtain a chain of cyclic flats. The final formula has the same form as \eqref{eq:Hepp-cyclic-flat-chains}, only the numerator is altered by replacing the factors $\CycFlatFactor{Q,\vec{\ind}}$ with the sum 
\begin{equation*}
	\CycFlatFactorDual{Q,\vec{\ind}}
	= \sum_{C\subset Q} \ind_C
	\Bigg( \prod_{e \in C} \frac{1}{\ind_e} \Bigg)
	\Bigg( \prod_{e\in Q\setminus C} \frac{1}{\ind_e^{\dual}} \Bigg)
\end{equation*}
over spanning circuits (indeed, note that $\gamma_1$ is a spanning circuit of $Z_1$). These are precisely the circuits of rank $\rank{C}=\rank{Q}$, equivalently the circuits with $\abs{C}=\rank{Q}+1$ elements, and they are also called \emph{Hamiltonian circuits}. For unit indices $\ind_e=1$ in $\Dim=4$ dimensions, we conclude $\CycFlatFactorDual{Q} = (\rank{Q}+1)\cdot \abs{\set{\text{Hamiltonian circuits of $Q$}}}$.

The two resulting formulas \eqref{eq:Hepp-cyclic-flat-chains} for $\Hepp{M,\vec{\ind}}$ in terms of chains of cyclic flats, one with $\rho$'s and the other with $\tilde{\rho}$'s in the numerator, can be obtained from each other by duality as in \autoref{sec:duality}.
Namely, the complement of a Hamiltonian circuit is an independent hyperplane of the dual \cite[Theorem~3]{Borowiecki:HamiltonianMatroids}.

\begin{example}
	The uniform matroid $M=\UM{n}{r}$ has only two cyclic flats $\CycFlats{M} = \set{\emptyset, M}$, such that $\Hepp{M,\vec{\ind}} = \CycFlatFactor{M,\vec{\ind}} = \CycFlatFactorDual{M,\vec{\ind}}$.
	As every subset $H \subset M $ of size $\abs{H}=r-1$ is an independent hyperplane and every $C\subset M$ with $\abs{C}=r+1$ is a Hamiltonian circuit,
	\begin{equation}
		\Hepp{\UM{n}{r},\vec{\ind}}
		= \sum_{\abs{C}=r+1} \frac{\ind_{C}}{\left( \prod_{e\in C} \ind_e \right) \left( \prod_{e\notin C} \ind^{\dual}_e \right)}
		= \sum_{\abs{H}=r-1} \frac{\ind^{\dual}_{H}}{\left( \prod_{e\in H} \ind_e \right) \left( \prod_{e\notin H} \ind^{\dual}_e \right)}
		.
		\label{eq:Hepp-Unr-cycflat}%
	\end{equation}
	Note that \autoref{lem:Hepp-Unr} gives yet another formula for this function.
\end{example}

\section{Symmetries}
\label{sec:symmetries}

Different graphs (or matroids) may integrate to the same period. There is no complete combinatorial description of all such pairs of graphs, but several families of identities are known. The simplest of these period relations are:
\begin{enumerate}
	\item \emph{duality:} A planar graph and its dual have the same period.
	\item \emph{product:} The period factorizes for graphs with a $2$-separation.
\end{enumerate}
These hold for arbitrary indices and extend to all matroids; proofs are straightforward in Schwinger parameters. In contrast, the following symmetries are only defined for graphs and furthermore subject to constraints on the indices:
\begin{enumerate}\setcounter{enumi}{2}
	\item \emph{Fourier split \cite{HuSchnetzShawYeats:Further}:}
		Duality may be applied to one side of a $3$-separation.
	\item \emph{completion \cite{Broadhurst:5loopsbeyond}:} If $G$ is regular, then $\Period{G{\setminus}v}$ is the same for all vertices $v$.
	\item \emph{twist \cite{Schnetz:Census}:} A double transposition along a $4$-separation keeps $\Period{G{\setminus}v}$ invariant.
\end{enumerate}
The Fourier split generalises the \emph{uniqueness relations} \cite{Kazakov:Uniqueness,Kazakov:MethodOfUniqueness,Kazakov:TwoLectures} and we include a proof in Schwinger parameters.
For completion and twist, all known proofs \cite{Schnetz:Census} exploit the representation \eqref{eq:period-position-space} of the period in \emph{position space}. Completion and twist are not restricted to regular graphs, but require indices such that each vertex is \emph{conformal} (\autoref{def:excess}).

In this section we demonstrate that the Hepp bound respects all of the above symmetries and fulfils exact analogues of the corresponding identities for periods.

The product, completion and twist identities are most succinctly stated for arbitrary indices if we use a slight variation of the Hepp bound.
\begin{definition}\label{def:Hepp-pos}
	The \emph{position space Hepp bound} $\HeppComp{M,\vec{\ind}}$ of a matroid $M$ on $N$ edges is
	\begin{equation}
		\HeppComp{M,\vec{\ind}}
		\defas 
		\frac{\ind_1^{\dual}\cdots \ind_N^{\dual}}{\Dim/2}
		\Hepp{M,\vec{\ind}}
		\quad\text{where}\quad
		\ind_e^{\dual} = \tfrac{\Dim}{2}-\ind_e.
		\label{eq:hepp-position-space}%
	\end{equation}
\end{definition}
In $\Dim=4$ dimensions with unit indices $\ind_e=1$ on all edges, $\Hepp{M}=2\HeppComp{M}$ differ only by a factor of $2$. Also recall $\Hepp{M,\vec{\ind}}|_{\Dim=0}=0$ from \autoref{cor:hepp(d=0)=0}, so the denominator in \eqref{eq:hepp-position-space} does not create a further pole. Instead, \autoref{lem:Hepp-Crapo} shows that
\begin{equation}
	\HeppComp{M,\vec{\ind}} \rightarrow (-1)^{\rank{M}+1} \Crapo{M}
	\quad\text{for}\quad
	\Dim\rightarrow 0.
	\label{eq:Hepp-Crapo-position}%
\end{equation}
\begin{example}
	For cycles and bonds (\autoref{fig:cycles-bonds}), the results \eqref{eq:hepp-cycle} and \eqref{eq:hepp-bond} translate to
	\begin{equation}
		\HeppComp{\Bond{N},\vec{\ind}} = 1
		\quad\text{and}\quad
		\HeppComp{\Cycle{N},\vec{\ind}} 
		= \frac{\ind_1^{\dual}\cdots \ind_N^{\dual}}{\ind_1 \cdots \ind_N}
		.
		\label{eq:heppcomp-cn-dn}%
	\end{equation}
\end{example}
The motivation for \autoref{def:Hepp-pos} is that the period \eqref{eq:period-mellin} is not exactly a Feynman integral. The Mellin integral lacks a prefactor $\prod_e 1/\Gamma(\ind_e)$ to become the actual Feynman integral in momentum space \cite{Nakanishi:GraphTheoryFeynmanIntegrals}.
A Fourier transform turns this into the integral
\begin{equation}
	\PeriodComp{G,\vec{\ind}}
	\defas \left(
	\prod_{v \neq 0,1} \int_{\R^{\Dim}} \frac{\td[\Dim] \vec{z}_v}{\pi^{\Dim/2}}
	\right)
	\left(
	\prod_{e=\set{v,w} \in \EG{G}} \frac{1}{\norm{\vec{z}_{v}-\vec{z}_{w}}^{2\ind_e^{\dual}}}
	\right)_{\vec{z}_0 = \vec{0},\ \vec{z}_1 = \uv{1}}
	\label{eq:period-position-space}%
\end{equation}
over position vectors $\vec{z}_v \in \R^{\Dim}$ associated with each vertex, similar to \cite[Definition~3.1]{Schnetz:GraphicalFunctions}. Two arbitrary vertices (labelled $0$ and $1$) get fixed positions: $\vec{0}$ and a unit vector $\uv{1}$. The Fourier transform introduces further $\Gamma$-functions \cite{Panzer:PhD}, and the precise relation is
\begin{equation}
	\PeriodComp{G,\vec{\ind}}
	= 
	\frac{\Gamma(\Dim/2)}{\Gamma(\ind_1^{\dual})\cdots\Gamma(\ind_N^{\dual})}
	\Period{G,\vec{\ind}}
	.
	\label{eq:period-to-position}%
\end{equation}
The replacement $\Gamma(s)\mapsto 1/s$ observed in \autoref{ex:hepp-bubble} suggests that \eqref{eq:hepp-position-space} is the correct tropical analogue of the Feynman integral in position space. Indeed, the tropical limit
\begin{equation}
	\HeppComp{G,\vec{\ind}} = \lim_{\varepsilon \rightarrow 0} \PeriodComp{G,\varepsilon \vec{\ind}}
	\label{eq:tropical-limit-positionspace}%
\end{equation}
follows from \eqref{eq:tropical-limit}. The symmetries of the Hepp bound are therefore implied by the symmetries of periods as functions of the indices $\vec{\ind}$. We give independent proofs that reduce completion and twist to a purely combinatorial classification of the poles of the rational functions $\HeppComp{G,\vec{\ind}}$, combined with the factorization \eqref{eq:hepp-residue} of their residues.

Note that, in contrast to the Hepp bound $\Hepp{G}=2\HeppComp{G}$, the period in $\Dim=4$ dimensions with unit indices $\ind_e=1$ is unchanged in position space: $\Period{G} = \PeriodComp{G}$.

\subsection{Duality}
\label{sec:duality}
\begin{figure}
	\centering%
	$ G=\Graph[0.6]{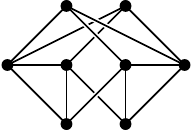} = \Graph[0.4]{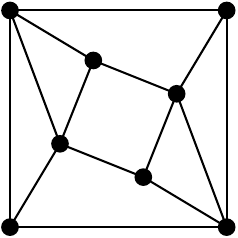} \quad\rightarrow\quad \Graph[0.4]{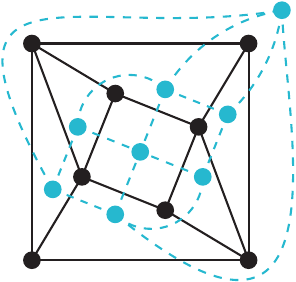} \quad\rightarrow\quad G^{\dual}=\Graph[0.35]{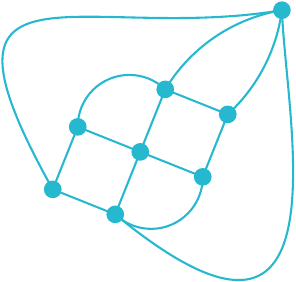}=\Graph[0.5]{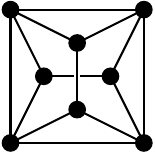}$ %
	\caption{A planar graph $G$, a planar embedding of the same graph, the dual indicated by dashed lines, and another (non-planar) drawing of this dual graph $G^{\dual}$.}%
	\label{fig:duality}%
\end{figure}

Every matroid $M$ (with corank function $\ell$) has a \emph{dual matroid} $\DMat{M}$ \cite{Oxley:MatroidTheory} on the same ground set $E=E_M=E_{\DMat{M}}$, but with the corank function $\ell^{\dual}\colon 2^E \longrightarrow \Z_{\geq 0}$ defined by
\begin{equation}
	\ell^{\dual}(\gamma)
	= \loops{E\setminus \gamma} + \abs{\gamma} -\loops{E}.
	\label{eq:dual-corank}%
\end{equation}
\begin{example}
	The dual of the uniform matroid $\UM{n}{r}$ is $\UM{n}{n-r}$.
\end{example}
\begin{example}
For planar graphs $G$, choose a planar drawing. Construct the planar dual $G^{\dual}$ by assigning a vertex to each face and connecting neighbouring faces with edges, as in \autoref{fig:duality}. Then the cycle matroid $\GMat{G^{\dual}}$ is the dual of $\GMat{G}$.
\end{example}
\begin{proposition}\label{lem:hepp-duality}
	Let $M$ denote any matroid and set $\ind^{\dual}_e \defas \frac{\Dim}{2}-\ind_e$ for each edge. Then
	\begin{equation}
		\Hepp{M^{\dual},\vec{\ind}}
		= \Hepp{M,\vec{\ind}^{\dual}}.
		\label{eq:hepp-duality}%
	\end{equation}
\end{proposition}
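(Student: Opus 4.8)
The plan is to read \eqref{eq:hepp-duality} off the permutation sum of \autoref{def:multi-hepp-from-sectors}, matching the two sides order-by-order under reversal of the edge ordering. The one genuine computation, which I carry out first, is a pointwise duality of superficial degrees of convergence. Writing $\ind_\gamma \defas \sum_{e\in\gamma}\ind_e$, $\ind^\dual_\gamma \defas \sum_{e\in\gamma}\ind^\dual_e$ and $\gamma^c \defas \EG{M}\setminus\gamma$, I substitute the dual corank \eqref{eq:dual-corank} together with $\ind^\dual_e=\frac{\Dim}{2}-\ind_e$ and compare the superficial degree of $\gamma$ in $M^\dual$ (at $\vec{\ind}$) with that of $\gamma^c$ in $M$ (at $\vec{\ind}^{\dual}$):
\begin{equation*}
	\Big( \ind_\gamma - \tfrac{\Dim}{2}\ell^\dual(\gamma) \Big)
	-\Big( \ind^\dual_{\gamma^c} - \tfrac{\Dim}{2}\loops{\gamma^c} \Big)
	= \sum_{e}\ind_e - \tfrac{\Dim}{2}\big( \abs{M}-\loops{M} \big)
	= \sum_{e}\ind_e - \tfrac{\Dim}{2}\rank{M},
\end{equation*}
where the first bracket uses $\ell^\dual(\gamma)=\loops{\gamma^c}+\abs{\gamma}-\loops{M}$ from \eqref{eq:dual-corank}.

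The right-hand side is exactly the superficial degree of convergence of the full ground set in $M^\dual$, since $\ell^\dual(\EG{M})=\rank{M}$, and the standing logarithmic-divergence condition forces it to vanish. Crucially, the hyperplane $\set{\sum_e\ind_e=\frac{\Dim}{2}\rank{M}}$ is at the same time the locus where the full superficial degree of $M$ at the dual indices $\vec{\ind}^{\dual}$ vanishes, so both Hepp bounds in \eqref{eq:hepp-duality} are functions on one and the same hyperplane. On it I obtain the pointwise identity
\begin{equation*}
	\ind_\gamma - \tfrac{\Dim}{2}\ell^\dual(\gamma)
	= \ind^\dual_{\gamma^c} - \tfrac{\Dim}{2}\loops{\gamma^c}
	\qquad\text{for every } \emptyset\neq\gamma\subsetneq \EG{M}.
\end{equation*}

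With this identity in hand, the second step is the reversal bijection on orders. For $\sigma\in\perms{M}$ let $\bar\sigma$ be the reversed order $\bar\sigma(k)=\sigma(N+1-k)$; then $M^{\bar\sigma}_{N-k}=\EG{M}\setminus M^\sigma_k$ for all $k$, so the flag produced by $\bar\sigma$ is the reversed chain of complements of the flag produced by $\sigma$. Applying the pointwise identity with $\gamma=M^\sigma_k$ turns each denominator factor $\ind_{M^\sigma_k}-\frac{\Dim}{2}\ell^\dual(M^\sigma_k)$ of the $\sigma$-summand of $\Hepp{M^\dual,\vec{\ind}}$ into the factor $\ind^\dual_{M^{\bar\sigma}_{N-k}}-\frac{\Dim}{2}\loops{M^{\bar\sigma}_{N-k}}$; as $k$ ranges over $1,\dots,N-1$ these are precisely the $N-1$ denominator factors of the $\bar\sigma$-summand of $\Hepp{M,\vec{\ind}^{\dual}}$. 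Hence the $\sigma$-summand on the left equals the $\bar\sigma$-summand on the right.

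Since $\sigma\mapsto\bar\sigma$ is an involution of $\perms{M}$, summing over all $N!$ orders yields \eqref{eq:hepp-duality}; the degenerate case $N=1$ is the empty product $1$ on both sides. I expect the only real obstacle to be the bookkeeping of the first step: the pointwise duality is not an identity of linear forms on all of $\R^N$, but holds only after restricting to the logarithmic-divergence hyperplane, so I must verify at the outset that the constraint defining $\Hepp{M^\dual,\vec{\ind}}$ coincides with the one defining $\Hepp{M,\vec{\ind}^{\dual}}$ — the displayed computation confirms this, as both reduce to $\sum_e\ind_e=\frac{\Dim}{2}\rank{M}$. Everything after that is a mechanical matching of factors.
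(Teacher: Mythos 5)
Your proposal is correct and follows essentially the same route as the paper: the same pointwise identity $\sdc[\vec{\ind}]{\DMat{\gamma}}=\sdc[\vec{\ind}^{\dual}]{M\setminus\gamma}$, valid on the common logarithmic-divergence hyperplane (whose coincidence you verify, just as the paper does via $\sdc[\vec{\ind}]{\DMat{M}}=-\sdc[\vec{\ind}^{\dual}]{M}$), followed by matching summands in \eqref{eq:multi-hepp-from-sectors}. Your reversal involution $\sigma\mapsto\bar\sigma$ is exactly the paper's pairing of each flag with its complementary flag, merely phrased at the level of permutations rather than chains.
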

\begin{proof}
	The superficial degree of convergence of a subset $\gamma \subseteq E$ in the dual matroid is
\begin{equation*}
	\sdc[\vec{\ind}]{\DMat{\gamma}}
	= \sum_{e\in \gamma} \ind_e - \tfrac{\Dim}{2} \Big( \abs{\gamma}-\loops{M} + \loops{M{\setminus}\gamma} \Big)
	= -\sum_{e\in \gamma} \ind_e^{\dual}+\tfrac{\Dim}{2} \loops{M} - \tfrac{\Dim}{2} \loops{M{\setminus}\gamma}
\end{equation*}
according to \eqref{eq:dual-corank}. The case $\gamma=E$ shows that $\sdc[\vec{\ind}]{\DMat{M}} = -\sdc[\vec{\ind}^{\dual}]{M}$ both vanish in the same dimension. Substitute $\frac{\Dim}{2} \loops{M}=\sum_{e\in M} \ind_e^{\dual}$ in the equation above to conclude that 
	$\sdc[\vec{\ind}]{\DMat{\gamma}}
	=\sdc[\vec{\ind}^{\dual}]{M{\setminus}\gamma}$.
	The contribution to $\Hepp{\DMat{M},\vec{\ind}}$ from any flag $\gamma_1\subsetneq \cdots \subsetneq \gamma_N$ in \eqref{eq:multi-hepp-from-sectors} therefore matches precisely the contribution of the complementary flag $E{\setminus}\gamma_{N-1} \subsetneq \ldots \subsetneq E{\setminus}\gamma_1 \subsetneq E$ to the Hepp bound $\Hepp{M,\vec{\ind}^{\dual}}$.
\end{proof}
\begin{example}
	The cycles $\Cycle{n}$ and the bonds $\Bond{n}$ in \autoref{fig:cycles-bonds} are duals of each other, and indeed their Hepp bounds \eqref{eq:hepp-cycle} and \eqref{eq:hepp-bond} are related by \eqref{eq:hepp-duality}.
\end{example}
\begin{remark}\label{rem:dual-1pi-flat}
	We can read \eqref{eq:dual-corank} as $\ell^{\dual}(\gamma) = \rank{M} - \rank{M{\setminus}\gamma}$. It follows that the bridgeless subsets of $\DMat{M}$ are precisely the complements of flats in $M$ and vice-versa. Hence under duality, the bridgeless flag formula \eqref{eq:hepp-1pi-flags} becomes the sum \eqref{eq:hepp-flat-flags} over flags of flats.
\end{remark}
The identity \eqref{eq:hepp-duality} is also very easy to prove with the Mellin integral: The bases of the dual are the complements 
$\Bases{\DMat{M}} = \setexp{E\setminus T}{T\in\Bases{M}}$
of the bases of $M$, such that
\begin{equation}
	\PsiTrop_{\DMat{M}}(\SP)
	= \max_{T \in \Bases{M}} \prod_{e \in T} \SP_e
	= \left( \prod_{e\in M} \SP_e \right) \max_{T \in \Bases{M}} \prod_{e \notin T} \frac{1}{\SP_e}
	= \left( \prod_{e\in M} \SP_e \right) \PsiTrop_M\left( (1/\SP_e)_{e\in M} \right).
	\label{eq:psitrop-dual}%
\end{equation}
Inversion of the Schwinger parameters $x_e \rightarrow 1/x_e$ thus transforms the integrands \eqref{eq:hepp-mellin} of $\Hepp{\DMat{M},\vec{\ind}}$ and $\Hepp{M,\vec{\ind}^{\dual}}$ into each other. This proves \eqref{eq:hepp-duality}, and upon replacing $\PsiTrop$ with $\PsiPol$, the same argument also shows that
\begin{equation}
	\Period{\DMat{M},\vec{\ind}}
	= \Period{M,\vec{\ind}^{\dual}}
	\label{eq:period-duality}%
\end{equation}
This well-known relation for periods is called \emph{Fourier identity} in \cite{Schnetz:Census}, because it corresponds to a Fourier transform in the position space integral \eqref{eq:period-position-space}.

\subsection{Products and 2-sums}
\begin{definition}\label{def:2sum}
	Suppose we are given two connected matroids $A$ and $B$, each with at least $3$ elements. Let further $e\in A$ and $f\in B$ denote a choice of edges.
	Then the \emph{2-sum} $M=A \TwoSum{e}{f} B$ is the matroid on the disjoint union $\EG{M} = \EG{A}{\setminus}e \sqcup \EG{B}{\setminus}f$ with bases
\begin{equation}
	\Bases{M}
	\defas
	\big\{S \sqcup T\colon\ 
		S \in \Bases{A{\setminus}e}\ \text{and}\ T\in \Bases{B/f}
		\quad \text{or}\quad
		S \in \Bases{A/e}\ \text{and}\ T\in \Bases{B{\setminus}f}
	\big\}
	.
	\label{eq:2sum-bases}%
\end{equation}
\end{definition}
\begin{figure}
	\centering
	$\GGMat{\Graph[0.5]{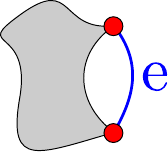}} \ \TwoSum{e}{f}\ \GGMat{\Graph[0.5]{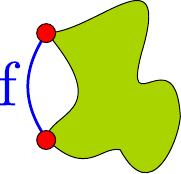} } \ =\ \GGMat{ \Graph[0.5]{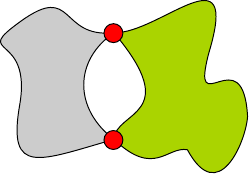} } = \GGMat{ \Graph[0.5]{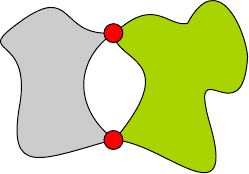} }$
	\caption{%
		The $2$-sum of two graphs.%
	}%
	\label{fig:product}%
\end{figure}
We illustrate the $2$-sum for graphical matroids in \autoref{fig:product}.
It amounts to taking the disjoint union of $A{\setminus}e$ and $B{\setminus}f$, followed by the identification $v\sim v'$ and $w\sim w'$ of the endpoints of $e=\set{v,w}$ and $f=\set{v',w'}$. As the figure shows, we can flip one side ($v'\leftrightarrow w'$) and thus obtain two different graphs with the same cycle matroid \cite{Whitney:2isomorphic,Truemper:OnWhitney}.
\begin{example}[\autoref{fig:series-parallel}]
	The $2$-sum $M \TwoSum{e}{f} \UM{k+1}{k}$ with a cycle $\UM{k+1}{k} \cong \GMat{\Cycle{k+1}}$ replaces $e$ in $M$ by a path with $k$ edges. This is called a \emph{series operation}. A \emph{parallel operation} is the $2$-sum $M \TwoSum{e}{f} \UM{k+1}{1}$ with a bond, which replaces $e$ by $k$ parallel edges.
\end{example}
From \eqref{eq:2sum-bases} we see that the rank of a $2$-sum is $\rank{M} = \rank{A}+\rank{B}-1$, and thus
\begin{equation}
	\loops{A \TwoSum{e}{f} B}
	= \loops{A} + \loops{B} -1.
	\label{eq:loops-2sum}%
\end{equation}
Suppose we have assigned indices $\vec{\ind}$ to $S \defas A{\setminus}e \subset M$ and indices $\vec{b}$ to $T \defas B{\setminus}f \subset M$.
They determine the dimension $\Dim$ where $\sdc{M}=0$.
If we set $\ind_e \defas \frac{\Dim}{2} - \sdc{S}$, then
\begin{equation*}
	\sdc{A} = \sdc{S} + \sdc{A/S} 
	= \sdc{S} + \ind_e - \tfrac{\Dim}{2} = 0
\end{equation*}
and similarly, we get $\sdc{B}=0$ for $b_f \defas \frac{\Dim}{2} - \sdc{T}$. Note that $b_f = \sdc{S}$ and $\ind_e = \sdc{T}$, because \eqref{eq:loops-2sum} shows that $\sdc{M} = \sdc{S}+\sdc{T} - \frac{\Dim}{2}$.
\begin{proposition}\label{thm:hepp-product}
	Given a 2-sum $M=A \TwoSum{e}{f} B$, let $\vec{a}$ and $\vec{b}$ denote variables indexed by the edges in $A{\setminus}{e}$ and $B{\setminus}{f}$, respectively. Set $\ind_e \defas \sdc{B{\setminus}f}$ and $b_f \defas \sdc{A{\setminus}e}$.
	Then
	\begin{equation}
		\HeppComp{}(M,\vec{a},\vec{b})
		= \HeppComp{A,\vec{a},a_e} \cdot \HeppComp{}(B,\vec{b},b_f)
		.
		\label{eq:Hepp-product}%
	\end{equation}
\end{proposition}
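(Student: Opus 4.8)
The plan is to unfold the definition \eqref{eq:hepp-position-space} and reduce the claim to a multiplicative identity for the ordinary Hepp bound, then to prove that identity in the Mellin-integral picture by ``gluing'' the integrals of $A$ and $B$ along the identified edge. Write $S=A{\setminus}e$ and $T=B{\setminus}f$ as submatroids of $M$. By hypothesis $a_e=\sdc{T}$ and $b_f=\sdc{S}$, so $a_e^{\dual}=\tfrac{\Dim}{2}-\sdc{T}=\sdc{S}$ and $b_f^{\dual}=\sdc{T}$, using $\sdc{S}+\sdc{T}=\tfrac{\Dim}{2}$ from $\sdc{M}=0$. Cancelling the common factors $\prod_{g\in S}a_g^{\dual}\prod_{g\in T}b_g^{\dual}$ and one power of $\Dim/2$ in \eqref{eq:hepp-position-space}, the assertion \eqref{eq:Hepp-product} becomes equivalent to
\[ \Hepp{M,\vec a,\vec b} = \frac{a_e^{\dual}b_f^{\dual}}{\Dim/2}\,\Hepp{A,\vec a,a_e}\,\Hepp{B,\vec b,b_f} = \frac{\sdc{S}\,\sdc{T}}{\Dim/2}\,\Hepp{A,\vec a,a_e}\,\Hepp{B,\vec b,b_f}. \]
Since both sides are rational functions of the indices, it suffices to verify this on the open region where $0<\sdc{S}<\Dim/2$ and all Mellin integrals converge.

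Next I record how the tropical polynomials interact. Splitting the complement of a basis of $M$ from \eqref{eq:2sum-bases} into its $S$- and $T$-parts yields the factorised maximum $\PsiTrop_M = \max\!\big(\PsiTrop_{A{\setminus}e}\,\PsiTrop_{B/f},\ \PsiTrop_{A/e}\,\PsiTrop_{B{\setminus}f}\big)$, while the usual deletion--contraction gives $\PsiTrop_A = \max(\PsiTrop_{A/e},\,x_e\PsiTrop_{A{\setminus}e})$ and likewise for $B$. Abbreviating $P=\PsiTrop_{A{\setminus}e}$, $Q=\PsiTrop_{A/e}$ (functions of $\vec x_S$) and $U=\PsiTrop_{B{\setminus}f}$, $R=\PsiTrop_{B/f}$ (functions of $\vec x_T$), I integrate out the single variable $x_e$ in the Mellin integral \eqref{eq:hepp-mellin} for $A$: the elementary computation $\int_0^{\infty}x_e^{a_e-1}\max(Q,x_e P)^{-\Dim/2}\,\td x_e = \tfrac{\Dim/2}{a_e a_e^{\dual}}\,Q^{-a_e^{\dual}}P^{-a_e}$ reduces $\Hepp{A,\vec a,a_e}$ to $\tfrac{\Dim/2}{a_e a_e^{\dual}}\,I_S$, where $I_S = \int_{\Projective^S}(\prod_{g\in S}x_g^{a_g-1})\,Q^{-\sdc{S}}P^{-\sdc{T}}$ is a projective integral over $\Projective^S$; symmetrically $\Hepp{B,\vec b,b_f} = \tfrac{\Dim/2}{b_f b_f^{\dual}}\,I_T$ with $I_T=\int_{\Projective^T}(\prod_{g\in T}x_g^{b_g-1})\,R^{-\sdc{T}}U^{-\sdc{S}}$. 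Projective consistency of $I_S$ and $I_T$ follows from $\sdc{A}=\sdc{B}=0$.

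Then I glue. The positive projective space $\Projective^{S\sqcup T}$ fibres over $\Projective^S\times\Projective^T$ with the relative scale $\tau$ (set $\vec x_S=\tau\vec u$, $\vec x_T=\vec w$) as fibre coordinate, and the Hepp measure factorises as the product of the measures on the two factors times $\tau^{a_S-1}\td\tau$, where $a_S=\sum_{g\in S}a_g$. In these coordinates the factorised maximum reads $\PsiTrop_M=\max(\tau^{\loops{A}-1}PR,\ \tau^{\loops{A}}QU)$, so the inner $\tau$-integral splits at the crossover $\tau^{\ast}=PR/(QU)$ into two power laws; carrying it out gives exactly $\tfrac{\Dim/2}{\sdc{S}\sdc{T}}\,(QU)^{-\sdc{S}}(PR)^{-\sdc{T}}$, using $a_S-(\loops{A}-1)\tfrac{\Dim}{2}=\sdc{S}$ and $a_S-\loops{A}\tfrac{\Dim}{2}=-\sdc{T}$. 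The remaining double integral over $\Projective^S\times\Projective^T$ is precisely $I_S\,I_T$, whence $\Hepp{M,\vec a,\vec b}=\tfrac{\Dim/2}{\sdc{S}\sdc{T}}\,I_S I_T$. Combining this with $\Hepp{A,\vec a,a_e}\,\Hepp{B,\vec b,b_f}=\tfrac{(\Dim/2)^2}{(\sdc{S}\sdc{T})^2}\,I_S I_T$ yields the displayed identity and hence \eqref{eq:Hepp-product}.

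I expect the only genuine obstacle to be the bookkeeping in the gluing step: justifying the factorisation of the projective Hepp measure along the scaling fibre with the correct weight $\tau^{a_S-1}\td\tau$, and checking that the two regions $\{PR\gtrless QU\}$ of $\Projective^{S\sqcup T}$ glue correctly onto the crossover of the $\tau$-integral. Once the measure is normalised, the scaling integral is the same elementary computation as the $x_e$-integration, and the identification of the leftover double integral with $I_S I_T$ is immediate. Convergence of all integrals holds throughout the region $0<\sdc{S}<\Dim/2$, and rationality of both sides extends the identity everywhere.
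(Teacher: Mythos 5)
Your proposal is correct and follows essentially the same route as the paper's proof: the same reduction to the convergence region, the same one-variable integral eliminating the glued edge (the paper's \eqref{eq:hepp-mellin-after-one}), and the same fibration of $\Projective^M$ over $\Projective^{A\setminus e}\times\Projective^{B\setminus f}$ with a scaling fibre. The only cosmetic difference is that you integrate the raw relative scale $\tau$ and split at the crossover $\tau^{\ast}=PR/(QU)$, whereas the paper reparametrizes the fibre by the dimensionless ratio $\lambda=\tau/\tau^{\ast}$ so that the fibre integral becomes the standard bubble integral $\int_0^\infty \lambda^{b_f-1}\max\set{1,\lambda}^{-\Dim/2}\td\lambda$.
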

\begin{proof}
	By our definition, $A$ and $B$ are connected, and it follows that $M$ is connected. As an identity of rational functions, it suffices to prove \eqref{eq:Hepp-product} locally, so we may assume that the indices lie in the convergence cone of the Mellin integral \eqref{eq:hepp-mellin}.

	Let $\psi_A \defas \PsiTrop_{A{\setminus}e}$ and $\phi_A \defas \PsiTrop_{A/e}$ such that $\PsiTrop_A = \max\set{\SP_e \psi_A,\phi_A}$. We can write
	\begin{equation}
		\Hepp{A,\vec{\ind},\ind_e}
		=
		\int_{\Projective^{A{\setminus}e}} \frac{\Omega(\vec{\ind})}{(\phi_A)^{\Dim/2}}
		\int_0^{\infty} \frac{
			\SP_e^{\ind_e-1} \td \SP_e
		}{
			(\max\set{\SP_e \psi_A/\phi_A,1})^{\Dim/2}
		}
		= \frac{\Dim/2}{\ind_e \ind_e^{\dual}}
		\int_{\Projective^{A{\setminus}e}} \frac{\Omega(\vec{\ind})}{\psi_A^{\ind_e} \phi_A^{\ind_e^{\dual}}}
		\label{eq:hepp-mellin-after-one}%
	\end{equation}
	where $\ind_e^{\dual} = \frac{\Dim}{2} - \ind_e$, by performing the integral over $\lambda \defas \SP_e \psi_A/\phi_A$ as in \autoref{ex:hepp-bubble}:
	\begin{equation*}
		\int_0^{\infty} \frac{
			\lambda^{\ind_e-1} \td \lambda
		}{
			(\max\set{\lambda,1})^{\Dim/2}
		}
		=
		\int_0^{1} \lambda^{\ind_e-1}\td \lambda
		+\int_{1}^{\infty} \lambda^{\ind_e-\Dim/2-1}\td \lambda
		= \frac{1}{\ind_e} + \frac{1}{\Dim/2-\ind_e}
		= \frac{\Dim/2}{\ind_e \ind_e^{\dual}}.
	\end{equation*}
	Set also $\psi_B \defas \PsiTrop_{B{\setminus}f}$ and $\phi_B \defas \PsiTrop_{B/f}$, then according to \eqref{eq:2sum-bases}, we can write the tropical matroid polynomial of the $2$-sum as
	$ \PsiTrop_M = \max\set{\psi_A \phi_B,\phi_A\psi_B}= \psi_A \phi_B \max\set{1,\lambda}$ in terms of the coordinate $\lambda \defas (\phi_A/\psi_A)(\psi_B/\phi_B)$ on $\Projective^M$.
	Combining $\lambda$ with the forgetful maps $x\mapsto y\defas [x_i]_{i\in A{\setminus}e}$ and $x\mapsto z\defas [x_i]_{i\in B{\setminus}f}$, we obtain a change of variables
	\begin{equation*}
		\Projective^M \longrightarrow \Projective^{A{\setminus}e} \times \Projective^{B{\setminus}f} \times \R_{>0},
		\quad x \mapsto (y,z,\lambda).
		\tag{$\ast$}%
		\label{eq:product-coordinates}%
	\end{equation*}
	Since $\phi_A$ and $\psi_A$ are homogeneous of degrees $\loops{A}$ and $\loops{A}-1$, respectively, $\lambda$ is homogeneous of degree one in the variables $y$. Consequently, \eqref{eq:product-coordinates} is invertible with
	\begin{equation*}
		x(y,z,\lambda)
		= \left[ 
			\frac{\lambda}{\lambda'(y,z)}y
			\colon z
		\right] \in \Projective^M
		\quad\text{for}\quad
		\lambda'(y,z) \defas \frac{\phi_A(y)\psi_B(z)}{\psi_A(y)\phi_B(z)}.
	\end{equation*}
	Under this rescaling, $\psi_A(x) = \psi_A(y) (\lambda/\lambda')^{\loops{A{\setminus}e}}$ and the volume form \eqref{eq:volume-form} is of degree $\sum_{i \in A{\setminus}e} \ind_e = \sdc{A{\setminus}e} + \frac{\Dim}{2} \loops{A{\setminus}e}$ in $y$. With $b_f = \sdc{A{\setminus}e}$, the integrand factorizes as
	\begin{equation*}
		\frac{\Omega(\vec{\ind},\vec{b})}{(\PsiTrop_M(x))^{\Dim/2}}
		= \frac{\Omega(\vec{\ind})}{\psi_A(y)^{\Dim/2}}
		\left( \frac{\psi_A(y)}{\phi_A(y)} \right)^{b_f}
		\wedge
		\frac{\Omega(\vec{b})}{\phi_B(z)^{\Dim/2}}
		\left( \frac{\phi_B(z)}{\psi_B(z)} \right)^{b_f}
		\wedge
		\frac{\lambda^{b_f-1}\td \lambda}{(\max\set{1,\lambda})^{\Dim/2}}.
	\end{equation*}
	The integrals over $z\in \Projective^{B{\setminus}f}$ and $\lambda$ produce precisely $\Hepp{}(B,\vec{b},b_f)$ by the equivalent of \eqref{eq:hepp-mellin-after-one} for $B$.
	Since $b_f=\frac{\Dim}{2}-\ind_e=\ind_e^{\dual}$, the integral of the first term over $y$ gives $\Hepp{A,\vec{\ind},\ind_e}$ times $\ind_e^{\dual} b_f^{\dual}/(\Dim/2)$, again using \eqref{eq:hepp-mellin-after-one}. We have thus shown that
	\begin{equation*}
		\Hepp{}(M,\vec{\ind},\vec{b})
		=
		\ind_e^{\dual} \Hepp{A,\vec{\ind},\ind_e}
		\cdot
		b_f^{\dual} \Hepp{}(B,\vec{b},b_f)
		\cdot 2/\Dim,
	\end{equation*}
	which becomes the claim \eqref{eq:Hepp-product} in position space \eqref{eq:hepp-position-space}.
\end{proof}
\begin{example}\label{ex:K4-product}
	Recall that $\Hepp{K_4}=84$ and $\HeppComp{K_4}=42$ for unit indices in $\Dim=4$ dimensions. All $2$-sums of $K_4$ with itself are isomorphic, and we find that
	\begin{equation*}
		\HeppComp{\Graph[0.3]{5Rw}}
		=
		\HeppComp{\Graph[0.25]{w3small} \TwoSum{e}{f} \Graph[0.25]{w3small}}
		= \HeppComp{\Graph[0.25]{w3small}}^2
		= 1764
		\quad\text{and}\quad
		\Hepp{\Graph[0.3]{5Rw}}
		= 3528.
	\end{equation*}
\end{example}
\begin{figure}
	\centering
	$ \Graph[0.7]{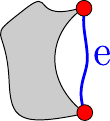} \ \TwoSum{e}{3} \ \Graph[0.7]{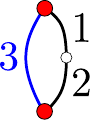} \ =\ \Graph[0.7]{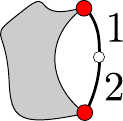}$
	\qquad\qquad
	$ \Graph[0.7]{2cutLsmall} \ \TwoSum{e}{3}\ \Graph[0.7]{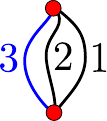} \ =\ \Graph[0.7]{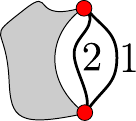} $
	\caption{Series and parallel operations on edge $e$, illustrated for graphical matroids.}%
	\label{fig:series-parallel}%
\end{figure}
\begin{corollary}[series--parallel reductions]
	If the matroid $S=M \TwoSum{e}{3} \UM{3}{2}$ is obtained from $M$ by replacing an edge $e$ with two edges $1$ and $2$ in series (\autoref{fig:series-parallel}), then
	\begin{equation}
		\Hepp{S,\ind_1,\ind_2,\vec{\ind}}
		= \frac{\ind_1+\ind_2}{\ind_1 \ind_2} \Hepp{M,\ind_1+\ind_2,\vec{\ind}}
		.
		\label{eq:series-hepp}%
	\end{equation}
	If $P=M \TwoSum{e}{3} \UM{3}{1}$ is obtained from $M$ by replacing $e$ with a pair of parallel edges, then
	\begin{equation}
		\Hepp{P,\ind_1,\ind_2,\vec{\ind}}
		= \frac{\ind_1^{\dual}+\ind_2^{\dual}}{\ind_1^{\dual}\ind_2^{\dual}} \Hepp{M,\ind_1+\ind_2-\tfrac{\Dim}{2},\vec{\ind}}
		.
		\label{eq:parallel-hepp}%
	\end{equation}
	In particular, the Hepp bound in position space is invariant under parallel operations:
	\begin{equation}
		\HeppComp{P,\ind_1,\ind_2,\vec{\ind}}
		=
		\HeppComp{M,\ind_1+\ind_2-\tfrac{\Dim}{2},\vec{\ind}}
		.
		\label{eq:parallel-heppcomp}%
	\end{equation}
\end{corollary}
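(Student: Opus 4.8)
The plan is to recognise both operations as special $2$-sums and to invoke the product formula \autoref{thm:hepp-product}. A series replacement is $S = M \TwoSum{e}{3} \UM{3}{2}$ with $\UM{3}{2} \cong \GMat{\Cycle{3}}$, and a parallel replacement is $P = M \TwoSum{e}{3} \UM{3}{1}$ with $\UM{3}{1} \cong \GMat{\Bond{3}}$; both $\UM{3}{2}$ and $\UM{3}{1}$ are connected with exactly three elements, so the hypotheses of \autoref{thm:hepp-product} hold with $A = M$ and glued edge $f = 3$. The only input distinguishing the two cases is the value $\ind_e = \sdc{B{\setminus}f}$ that \autoref{thm:hepp-product} assigns to the edge $e$: in the series case $B{\setminus}f = \set{1,2}$ is a forest, giving $\ind_e = \ind_1 + \ind_2$, whereas in the parallel case $B{\setminus}f = \set{1,2}$ carries one loop, giving $\ind_e = \ind_1 + \ind_2 - \tfrac{\Dim}{2}$. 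Throughout I use the relation $b_f = \ind_e^{\dual}$ recorded before \autoref{thm:hepp-product}.

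For the series formula, \autoref{thm:hepp-product} gives $\HeppComp{S,\vec{\ind},\ind_1,\ind_2} = \HeppComp{M,\vec{\ind},\ind_e} \cdot \HeppComp{\Cycle{3},\ind_1,\ind_2,b_f}$ with $\ind_e = \ind_1+\ind_2$, and the last factor equals $\ind_1^{\dual}\ind_2^{\dual}b_f^{\dual}/(\ind_1\ind_2 b_f)$ by \eqref{eq:heppcomp-cn-dn}. I then rewrite each $\HeppComp{}$ through \eqref{eq:hepp-position-space}: the edges of $S$ are $(M{\setminus}e) \sqcup \set{1,2}$ while those of $M$ are $(M{\setminus}e)\sqcup\set{e}$, so the common dual-index factor $\prod_{i\in M{\setminus}e}\ind_i^{\dual}$ together with $\Dim/2$ and $\ind_1^{\dual}\ind_2^{\dual}$ cancels between the two sides. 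Using $b_f = \ind_e^{\dual}$ and hence $b_f^{\dual} = \ind_e = \ind_1+\ind_2$, the surviving factors collapse to $(\ind_1+\ind_2)/(\ind_1\ind_2)$, which is exactly \eqref{eq:series-hepp}.

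For the parallel formula the computation is even shorter, because \eqref{eq:heppcomp-cn-dn} gives $\HeppComp{\Bond{3},\ind_1,\ind_2,b_f} = 1$. Hence \autoref{thm:hepp-product} yields $\HeppComp{P,\ind_1,\ind_2,\vec{\ind}} = \HeppComp{M,\ind_e,\vec{\ind}}$ with $\ind_e = \ind_1+\ind_2-\tfrac{\Dim}{2}$, which is precisely \eqref{eq:parallel-heppcomp}. Converting both sides back to $\Hepp{}$ via \eqref{eq:hepp-position-space} cancels $\prod_{i\in M{\setminus}e}\ind_i^{\dual}$ and $\Dim/2$ as before and leaves $\ind_1^{\dual}\ind_2^{\dual}\Hepp{P,\ind_1,\ind_2,\vec{\ind}} = \ind_e^{\dual}\Hepp{M,\ind_e,\vec{\ind}}$; since $\ind_e^{\dual} = \Dim - \ind_1 - \ind_2 = \ind_1^{\dual}+\ind_2^{\dual}$, this rearranges to \eqref{eq:parallel-hepp}.

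The calculations are elementary once \autoref{thm:hepp-product} is in hand, so the only genuine care lies in two bookkeeping points. First, one must read off the loop number of $B{\setminus}f$ correctly---a forest in the series case but a double edge (one loop) in the parallel case---since this single difference in $\ind_e$ is what produces the $-\tfrac{\Dim}{2}$ shift and the switch from ordinary to dual indices. Second, one must verify that the dimension $\Dim$ fixed by $\sdc{M}=0$ on the right agrees with the one fixed by $\sdc{S}=0$ (resp. $\sdc{P}=0$) on the left; this follows from \eqref{eq:loops-2sum}, which gives $\loops{S}=\loops{M}$ and $\loops{P}=\loops{M}+1$, together with the observation that the total index sums on the two sides coincide. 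I expect no obstacle beyond keeping these factors straight.
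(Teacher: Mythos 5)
Your proposal is correct and is exactly the paper's proof: the paper's own argument reads ``Apply \autoref{thm:hepp-product} to $B=\UM{3}{2}$ or $B=\UM{3}{1}$ and use \eqref{eq:heppcomp-cn-dn}'', and you carry out precisely this, with the bookkeeping of $\ind_e=\sdc{B{\setminus}f}$, $b_f=\ind_e^{\dual}$, and the conversion between $\Hepp{}$ and $\HeppComp{}$ done correctly.
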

\begin{proof}
	Apply \autoref{thm:hepp-product} to $B=\UM{3}{2}$ or $B=\UM{3}{1}$ and use \eqref{eq:heppcomp-cn-dn}.
\end{proof}
\begin{corollary}\label{cor:Crapo-2sum}
	Crapo's invariant of a $2$-sum is $\Crapo{A \TwoSum{e}{f} B} = \Crapo{A} \Crapo{B}$ \cite{Brylawski:CombinatorialSeriesParallel}, and
	series--parallel operations do not affect Crapo's invariant \cite[Propositions~4~and~5]{Crapo:HigherInvariant}.
\end{corollary}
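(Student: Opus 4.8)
The plan is to obtain both statements as the $\Dim\rightarrow 0$ limit of the multiplicativity of the position space Hepp bound. The bridge between the two worlds is \eqref{eq:Hepp-Crapo-position}, which recovers $\Crapo{M}$ as $\lim_{\Dim\rightarrow 0}(-1)^{\rank{M}+1}\HeppComp{M,\vec{\ind}}$, while the product formula \eqref{eq:Hepp-product} of \autoref{thm:hepp-product} already supplies the multiplicativity of $\HeppComp{}$ itself.

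First I would read \eqref{eq:Hepp-product} as an identity of rational functions in the independent indices $\vec{a}$ on $A\setminus e$ and $\vec{b}$ on $B\setminus f$, with the determined values $a_e=\sdc{B\setminus f}$ and $b_f=\sdc{A\setminus e}$, and then let $\Dim\rightarrow 0$. At $\Dim=0$ these auxiliary indices specialize to $a_e=\sum_{g\in B\setminus f}b_g$ and $b_f=\sum_{g\in A\setminus e}a_g$, so for generic values of the remaining indices every partial sum of indices over a nonempty proper submatroid stays nonzero, and the genericity hypothesis of \autoref{lem:Hepp-Crapo} holds simultaneously for $M$, $A$ and $B$. Applying \eqref{eq:Hepp-Crapo-position} to each of the three factors then turns \eqref{eq:Hepp-product} into the numerical identity
\[
  (-1)^{\rank{M}+1}\Crapo{M}
  = (-1)^{\rank{A}+1}\Crapo{A}\cdot(-1)^{\rank{B}+1}\Crapo{B}.
\]
The sign bookkeeping is immediate from the rank of a $2$-sum, $\rank{M}=\rank{A}+\rank{B}-1$ (used already for \eqref{eq:loops-2sum}): both sides carry the sign $(-1)^{\rank{A}+\rank{B}}$, so the signs cancel and $\Crapo{M}=\Crapo{A}\Crapo{B}$.

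For the series and parallel reductions I would observe that they are the special $2$-sums $M\TwoSum{e}{3}\UM{3}{2}$ and $M\TwoSum{e}{3}\UM{3}{1}$. Since $\UM{3}{2}\cong\GMat{\Cycle{3}}$ and $\UM{3}{1}\cong\GMat{\Bond{3}}$ both have Crapo invariant $1$ by \autoref{ex:Crapo-cycle-bond}, the multiplicativity just established yields $\Crapo{M\TwoSum{e}{3}\UM{3}{2}}=\Crapo{M}$ and $\Crapo{M\TwoSum{e}{3}\UM{3}{1}}=\Crapo{M}$. Parallel invariance can alternatively be read off directly by the same limiting argument from \eqref{eq:parallel-heppcomp}, using $\rank{P}=\rank{M}$.

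The only delicate point, and hence the main obstacle, is justifying that the limit $\Dim\rightarrow 0$ may be taken factor by factor, i.e.\ that the genericity conditions of \autoref{lem:Hepp-Crapo} can be met at once for all three matroids under the constrained index assignment. This is harmless because \eqref{eq:Hepp-product} is an identity of rational functions: it suffices to verify the limit on a Zariski-dense set of indices, and the explicit $\Dim=0$ values of $a_e$ and $b_f$ exhibit such a set as non-empty. Everything else reduces to the sign computation above.
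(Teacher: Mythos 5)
Your proposal is correct and follows essentially the same route as the paper: take the limit $\Dim\rightarrow 0$ of the product formula \eqref{eq:Hepp-product} using \eqref{eq:Hepp-Crapo-position}, cancel signs via $\rank{M}=\rank{A}+\rank{B}-1$, and obtain series--parallel invariance as the special case where one summand is a cycle or bond with Crapo invariant $1$ (\autoref{ex:Crapo-cycle-bond}). Your explicit verification that the genericity hypothesis of \autoref{lem:Hepp-Crapo} can be satisfied simultaneously for $M$, $A$ and $B$ is a detail the paper leaves implicit, and it is a worthwhile addition given that exactly this point fails for the completion symmetry (\autoref{rem:crapo-completion}).
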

\begin{proof}
	We use \eqref{eq:Hepp-Crapo-position} and apply the limit $\Dim\longrightarrow 0$ to \eqref{eq:Hepp-product}. Because of $\rank{M}-1 = \rank{A}-1 + \rank{B}-1$ from \eqref{eq:loops-2sum}, this proves the first claim. The series--parallel invariance is the special case $B=\GMat{\Cycle{n}}$ or $B=\GMat{\Bond{n}}$, with $\Crapo{B}=1$ due to \autoref{ex:Crapo-cycle-bond}.
\end{proof}
The product \eqref{eq:Hepp-product} is the exact analogue of the well-known relation \cite[Proposition~40]{Brown:PeriodsFeynmanIntegrals}
\begin{equation}
	\PeriodComp{}\big(A \TwoSum{e}{f} B,\vec{\ind},\vec{b}\,\big)
	= \PeriodComp{A,\vec{\ind},\ind_e}
	\cdot
	\PeriodComp{}(B,\vec{b},b_f)
	\label{eq:period-product}%
\end{equation}
for the period in position space. It can be proven in the same way as above; the only difference arises because $\PsiPol_M = \PsiPol_{A{\setminus}e} \PsiPol_{B/f} + \PsiPol_{A/e} \PsiPol_{B{\setminus}f}$ is a sum and not the maximum. Hence the $\lambda$-integrals become Euler beta functions, and the analogue of \eqref{eq:hepp-mellin-after-one} reads
\begin{equation*}
	\Period{A,\vec{\ind},\ind_e}
	= \frac{\Gamma(\ind_e) \Gamma(\ind_e^{\dual})}{\Gamma(\Dim/2)}
	\int_{\Projective^{A{\setminus}e}} \frac{\Omega(\vec{\ind})}{\PsiPol_{A{\setminus}e}^{\ind_e} \PsiPol_{A/e}^{\ind_e^{\dual}}}
	.
\end{equation*}

The product formulas suggest a unique factorization for matroids with respect to $2$-sums, and indeed this was achieved in \cite{CunninghamEdmonds:Decomposition}. We review this result and related terminology.
\begin{definition}
	A \emph{(Tutte) $k$-separation} of a matroid $M$ is a partition $\EG{M} = S \sqcup T$ of its edges such that $\abs{S},\abs{T} \geq k$ and
	$
		\rank{S} + \rank{T} \leq \rank{M} + k-1
	$. We say that $M$ is \emph{$n$-connected} if it has no $k$-separation where $1\leq k<n$.
\end{definition}
With this definition, every matroid is $1$-connected, and a \emph{connected} matroid according to \autoref{def:direct-sum} is called $2$-connected.
For a connected graph $G$, a $k$-separation of $\GMat{G}$ is an edge partition with $\abs{S},\abs{T} \geq k$ such that $S$ and $T$ meet in at most $k$ vertices \cite[Theorem~3]{Cunningham:OnMatroidCon}. So \autoref{def:separation} describes the special case of $1$-separations, and \autoref{fig:product} shows $2$-separations.

In \cite{CunninghamEdmonds:Decomposition}, $2$-separations are called \emph{splits}, and a matroid is called \emph{prime} if it is $3$-connected. Every $2$-sum decomposition $M\cong A \TwoSum{e}{f} B$ implies the existence of a split with $S=A{\setminus}e$ and $T=B{\setminus f}$. Conversely, every split arises in this way and implies a decomposition of $M$ into minors $A$ and $B$ of $M$. Therefore, a prime matroid admits no $2$-sum decomposition.

It follows that a connected matroid is decomposable by $2$-sums into prime matroids. Apart from the order of performing the splits, this decomposition is almost unique. The only ambiguity arises from cycles $\Cycle{n+m} \cong \Cycle{n+1} \TwoSum{e}{f} \Cycle{m+1}$, and similarly bonds, which allow several decompositions. The unique factorization result is \cite[Theorem~18]{CunninghamEdmonds:Decomposition}:
\begin{theorem}\label{thm:2sum-decomposition}%
	Every connected matroid has a unique minimal $2$-sum decomposition into bonds, cycles, and $3$-connected matroids.
\end{theorem}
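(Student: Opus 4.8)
The plan is to prove existence and uniqueness separately, by induction on the number of edges, using the correspondence between splits and $2$-sum decompositions recalled just before the statement. For existence, first I would note that a $3$-connected (prime) matroid is its own length-one decomposition, so there is nothing to do. If a connected matroid $M$ is not $3$-connected, then by definition it admits a $2$-separation $\EG{M}=S \sqcup T$, which arises from a decomposition $M \isomorph A \TwoSum{e}{f} B$ into minors $A$ on $S\cup\set{e}$ and $B$ on $T\cup\set{f}$; both are connected and have strictly fewer edges. By the inductive hypothesis each of $A$ and $B$ decomposes into bonds, cycles and prime matroids, and gluing these along $e$ and $f$ yields a decomposition of $M$. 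This step is routine.

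To pass to a \emph{minimal} decomposition, I would exploit associativity of the $2$-sum together with the merging identities
\[
	\Cycle{n+1} \TwoSum{e}{f} \Cycle{m+1} \isomorph \Cycle{n+m}
	\quad\text{and}\quad
	\Bond{n+1} \TwoSum{e}{f} \Bond{m+1} \isomorph \Bond{n+m}.
\]
Whenever the decomposition produced above glues two cycle constituents to each other, I would contract them into a single cycle, and dually for two adjacent bonds. This process terminates and leaves a decomposition in which no cycle is glued directly to a cycle and no bond is glued directly to a bond; by definition this is the minimal one. Existence of a minimal decomposition is thereby reduced to the existence statement already established.

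The substantial content is uniqueness. Here I would encode any $2$-sum decomposition as a tree $\mathcal{T}$ whose nodes carry the prime, cycle and bond constituents and whose edges record the gluing pairs, so that $\EG{M}$ is recovered as the set of non-gluing elements. The crucial structural input is a dichotomy for the splits of a connected matroid: any two splits are either \emph{nested} or \emph{crossing}, and a pair of \emph{crossing} splits can occur only inside a single cycle or bond constituent. Granting this, the family of splits that are \emph{not} internal to a cycle or bond is laminar, hence organizes canonically into a tree, which I would identify with the minimal decomposition tree. Because this tree is built purely from the separation structure of $M$, which is an isomorphism invariant, the minimal decomposition is unique up to isomorphism (and up to the placement of nodes in the tree). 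The merging identities above explain precisely why cycles and bonds are the only sources of non-uniqueness: inside them crossing splits are abundant, so their internal tree structure is not canonical and must be collapsed to a single node.

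The hard part will be the crossing/nesting dichotomy and the confinement of crossing splits to cycle and bond constituents. This is where the real combinatorics lives: it rests on the submodularity of the rank function, applied to the rank condition $\rank{S'}+\rank{T'}\leq \rank{M}$ defining a split, to show that two crossing low-rank separations force the surrounding minor to have the extremal rank profile of a cycle or a bond. Everything else — termination of the merging, recovery of $\EG{M}$ from $\mathcal{T}$, and invariance under isomorphism — is bookkeeping once this dichotomy is in hand.
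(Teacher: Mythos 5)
First, a point of orientation: the paper does not prove this theorem at all. It is quoted as a known result, \cite[Theorem~18]{CunninghamEdmonds:Decomposition}, after a short review of splits and prime matroids, so the benchmark for your proposal is the actual Cunningham--Edmonds split-decomposition argument, which your outline is implicitly reconstructing. Within that frame, your existence induction and the merging of adjacent constituents via $\Cycle{n+1} \TwoSum{e}{f} \Cycle{m+1} \cong \Cycle{n+m}$ (and dually for bonds) are correct and genuinely routine, although ``by definition this is the minimal one'' is a slip: a decomposition admitting no further merge is only \emph{locally} unmergeable, and identifying that with global minimality (fewest constituents) already presupposes the uniqueness you are about to prove.

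The genuine gap is the step you yourself flag, and it is larger than your phrasing suggests --- it is not a lemma your proof uses, it \emph{is} the theorem. As stated, your confinement claim (``crossing splits occur only inside a single cycle or bond constituent'') is circular: constituents exist only relative to a chosen decomposition, whereas your laminar-family argument needs an intrinsic statement about $M$ alone, e.g.\ that the \emph{strong} splits (those crossed by no other split) form a laminar family whose decomposition tree has only prime, cycle and bond nodes, or equivalently that two crossing $2$-separations force the four corner sets to carry the extremal rank profile of a cycle or a bond. Deriving this from submodularity is the real work: one uncrosses the connectivity function $\lambda(S)=\rank{S}+\rank{\EG{M}\setminus S}-\rank{M}$, using $\lambda\geq 1$ on proper nonempty subsets of a connected matroid to force equality at the corners, and must then check that the corner separations are exact $2$-separations with both sides of size at least two, treat the degenerate corners, and show the forced local structure is uniformly a cycle or uniformly a bond; ``rests on submodularity'' is a pointer to this argument, not the argument. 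Moreover, even granting the dichotomy, uniqueness still requires showing that the gluing edges of an \emph{arbitrary} decomposition with no cycle--cycle or bond--bond adjacency are precisely the strong splits --- i.e.\ that no tree edge of such a decomposition crosses any other split of $M$. Without this you have produced \emph{a} canonical decomposition, not shown that every minimal one coincides with it. These two verifications are where the cited proof spends its effort; your outline names the right objects but does not yet contain the proof.
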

The Hepp bound implements this decomposition as an actual factorization of rational functions. Cycles and bonds \eqref{eq:heppcomp-cn-dn} have only linear factors in the numerator and denominator, which may be partitioned in several ways. In contrast, the numerator of $\Hepp{M,\vec{\ind}}$ for a $3$-connected matroid does not seem to factorize.

\subsection{Completion}
\label{sec:completion}

The completion symmetry is the invariance of the integral \eqref{eq:period-position-space} in position space under a conformal transformation $\vec{z}\mapsto \vec{z}/\norm{\vec{z}}^2$. It is very useful in particle physics as it equates the periods of many non-isomorphic graphs \cite[section~5]{Broadhurst:5loopsbeyond}. Completion applies only to graphs and requires that, at each vertex, the dual indices sum up to the dimension.
\begin{definition}\label{def:excess}
	Let $\ind_e^{\dual} = \frac{\Dim}{2} - \ind_e$ denote the dual indices.
The \emph{excess} at a vertex $v$ is
\begin{equation}
	\exc{v} \defas \Dim -
	\sum_{e\colon v\in e} 
	\ind_e^{\dual}
	\label{eq:excess}%
\end{equation}
where the sum runs over all edges incident to $v$. We call $v$ \emph{conformal} if $\exc{v}=0$, and a \emph{conformal graph} is a graph $G$ together with indices $\vec{\ind}$ such that all vertices are conformal.
\end{definition}
\begin{example}\label{ex:Kn-conformal}
	Every $k$-regular graph with $k\geq 3$ and unit indices $\ind_e=1$ on all edges is conformal in $\Dim=2k/(k-2)$ dimensions. For $k=4$, this dimension is $\Dim=4$.
\end{example}
\begin{example}\label{ex:completion}
	For any graph $G$ with logarithmic indices, its \emph{completion} \cite{Schnetz:Census,Schnetz:GraphicalFunctions} is a conformal graph $H$ with one additional vertex `$\infty$' such that $G=H{\setminus}\infty$. To make $v\in\VG{G}$ with $\exc{v}\neq 0$ conformal, $H$ has an edge $e$ from $v$ to $\infty$ with $\ind_{e} \defas \Dim/2-\exc{v}$. The new vertex $\infty$ is conformal by virtue of $\exc{\infty}=-\sdc{G}=0$ from \eqref{eq:sdc-exc} below.
\end{example}
We can express the convergence degree of a subgraph in terms of excesses. Let $G[S]$ denote the subgraph of $G$ that is \emph{induced} by the vertex set $S\subseteq \VG{G}$. It contains precisely those edges of $G$ which have both endpoints in $S$.
\begin{lemma}\label{lem:sdc-exc}%
	Suppose that $\gamma=G[S]\subseteq G$ is a connected induced subgraph. Let $C\subset \EG{G}$ denote the edges with precisely one endpoint in $S$. Then
	\begin{equation}
		\sdc{\gamma}
		= \frac{1}{2} \left(
			\sum_{v\in S} \exc{v} +\sum_{e\in C} \ind_e^{\dual} - \Dim
		\right)
		.
		\label{eq:sdc-exc}%
	\end{equation}
\end{lemma}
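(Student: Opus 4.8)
The plan is to unfold both sides of \eqref{eq:sdc-exc} using only the definitions of $\sdc{\gamma}$, the dual index $\ind_e^{\dual}=\frac{\Dim}{2}-\ind_e$, the excess $\exc{v}=\Dim-\sum_{e\colon v\in e}\ind_e^{\dual}$, and the Euler formula \eqref{eq:euler}. Since $\gamma=G[S]$ is connected with vertex set $S$, the loop number is $\loops{\gamma}=\abs{\gamma}-\abs{S}+1$, and $\sdc{\gamma}=\sum_{e\in\gamma}\ind_e-\frac{\Dim}{2}\loops{\gamma}$; so the target reduces to a bookkeeping identity once the combinatorics of the incidences is sorted out.

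First I would sum the excess over all vertices of $S$,
\begin{equation*}
	\sum_{v\in S}\exc{v}
	= \abs{S}\,\Dim - \sum_{v\in S}\sum_{e\colon v\in e}\ind_e^{\dual}.
\end{equation*}
The decisive observation is that the inner double sum counts every edge with multiplicity equal to the number of its endpoints lying in $S$: an edge with both endpoints in $S$—that is, an edge of $\gamma$, including any self-loop at a vertex of $S$—is counted twice, an edge of $C$ is counted exactly once, and every other edge does not contribute at all. This yields
\begin{equation*}
	\sum_{v\in S}\sum_{e\colon v\in e}\ind_e^{\dual}
	= 2\sum_{e\in\gamma}\ind_e^{\dual} + \sum_{e\in C}\ind_e^{\dual}.
\end{equation*}
Next I would substitute $\ind_e^{\dual}=\frac{\Dim}{2}-\ind_e$ into $\sum_{e\in\gamma}\ind_e^{\dual}=\frac{\Dim}{2}\abs{\gamma}-\sum_{e\in\gamma}\ind_e$, insert everything into the right-hand side of \eqref{eq:sdc-exc}, and watch the two copies of $\sum_{e\in C}\ind_e^{\dual}$ cancel. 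After collecting terms the bracket becomes $\abs{S}\,\Dim-\Dim\abs{\gamma}+2\sum_{e\in\gamma}\ind_e-\Dim$, and halving gives $\sum_{e\in\gamma}\ind_e+\frac{\Dim}{2}(\abs{S}-\abs{\gamma}-1)$. Using $\abs{S}-\abs{\gamma}-1=-\loops{\gamma}$ from the Euler formula recovers exactly $\sdc{\gamma}$.

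The computation is otherwise routine, so the main obstacle is the incidence bookkeeping in the second display: one must be sure that edges of the cut $C$ are counted once and edges of $\gamma$ twice, which hinges on the convention that a self-loop contributes $2$ to the relevant incidence count (consistent with its contribution to the vertex degree). A secondary point worth stating explicitly is that connectedness of $\gamma$ enters only through $\nCG{\gamma}=1$ in the Euler formula; if $\gamma$ had several components the term $\abs{S}-\abs{\gamma}-1$ would instead read $\abs{S}-\abs{\gamma}-\nCG{\gamma}$, so the hypothesis that $\gamma$ is connected is precisely what makes the constant $-\Dim$ (rather than $-\nCG{\gamma}\,\Dim$) appear in \eqref{eq:sdc-exc}.
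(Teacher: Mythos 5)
Your proof is correct and follows essentially the same route as the paper's: both rest on the incidence-counting observation that $\sum_{v\in S}\sum_{e\colon v\in e}\ind_e^{\dual}$ counts edges of $\gamma$ twice and edges of $C$ once, followed by substituting $\ind_e^{\dual}=\frac{\Dim}{2}-\ind_e$ and invoking Euler's formula \eqref{eq:euler} with $\nCG{\gamma}=1$. The only difference is presentational (you simplify the right-hand side down to $\sdc{\gamma}$, the paper builds an equation chain up to $2\sdc{\gamma}$), and your explicit remarks on the self-loop convention and on where connectedness enters are accurate.
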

\begin{proof}
	The sum $\sum_{v\in S} \sum_{e: v\in e} \ind_e^{\dual}$ counts edges in $\gamma$ twice and edges in $C$ once. Hence,
	\begin{equation*}
		\sum_{v\in S} \exc{v} - \Dim \abs{S} + \sum_{e\in C} \ind_e^{\dual}
		= -2\sum_{e\in \gamma} \ind_e^{\dual}
		= -\Dim \abs{\gamma} + 2\sum_{e\in\gamma} \ind_e
		= 2\sdc{\gamma} + \Dim\left( \loops{\gamma}-\abs{\gamma} \right)
	\end{equation*}
	and we conclude using \eqref{eq:euler}.
\end{proof}
In particular, a conformal graph has $\sdc{G}=-\Dim/2$ and is not logarithmic unless $\Dim=0$.
However, the complement $G{\setminus}v$ of any vertex in a conformal graph is always logarithmic with $\sdc{G{\setminus}v}=-\exc{v}/2=0$. We can therefore consider the periods of such complements.
\begin{theorem}
	\label{thm:period-completion}%
	If $G$ is conformal, then $\PeriodComp{G{\setminus}v,\vec{\ind}}$ is the same for all vertices $v$.
\end{theorem}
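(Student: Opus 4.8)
The plan is to prove this (classical) statement by the conformal-invariance argument in position space, using the representation \eqref{eq:period-position-space}. The crucial input is that conformality, $\exc{v}=0$ i.e.\ $\sum_{e\colon v\in e}\ind_e^{\dual}=\Dim$ at every vertex, makes the unintegrated measure invariant under the inversion $\vec{z}\mapsto\vec{z}/\norm{\vec{z}}^2$. First I would record this: under inversion each propagator obeys $\norm{\vec{z}_a-\vec{z}_b}\mapsto\norm{\vec{z}_a-\vec{z}_b}/(\norm{\vec{z}_a}\norm{\vec{z}_b})$, so the product $\prod_e\norm{\vec{z}_a-\vec{z}_b}^{-2\ind_e^{\dual}}$ acquires the factor $\prod_{v}\norm{\vec{z}_v}^{2\sum_{e\colon v\in e}\ind_e^{\dual}}=\prod_v\norm{\vec{z}_v}^{2\Dim}$, which is cancelled exactly by the Jacobian $\prod_v\norm{\vec{z}_v}^{-2\Dim}$ of the inversion on $\prod_v\td[\Dim]\vec{z}_v$. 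As translations and rotations are manifest, the full measure $\left(\prod_e\norm{\vec{z}_a-\vec{z}_b}^{-2\ind_e^{\dual}}\right)\prod_{v\in\VG{G}}\td[\Dim]\vec{z}_v$ is conformally invariant; this is where the hypothesis on $G$ enters decisively.

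Next I would integrate out all but three vertices. Fixing three distinct vertices $v,w,r\in\VG{G}$ (so $G$ must have at least three vertices, which holds in all cases of interest) and setting
\[
	F(\vec{z}_v,\vec{z}_w,\vec{z}_r)\defas\left(\prod_{u\in\VG{G}\setminus\set{v,w,r}}\int_{\R^{\Dim}}\frac{\td[\Dim]\vec{z}_u}{\pi^{\Dim/2}}\right)\prod_{e=\set{a,b}\in\EG{G}}\frac{1}{\norm{\vec{z}_a-\vec{z}_b}^{2\ind_e^{\dual}}},
\]
the invariance from the first step descends to a covariance: the factors $\norm{\vec{z}_u}^{\pm2\Dim}$ of the integrated vertices cancel, leaving only the conformal weight $\Dim$ at each of the three surviving points. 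A translation- and rotation-invariant function carrying this weight at three points is fixed up to one constant,
\[
	F(\vec{z}_v,\vec{z}_w,\vec{z}_r)=\frac{C}{\left(\norm{\vec{z}_v-\vec{z}_w}\,\norm{\vec{z}_v-\vec{z}_r}\,\norm{\vec{z}_w-\vec{z}_r}\right)^{\Dim}},
\]
which I would verify either by quoting the standard form of a conformal three-point function or, elementarily, by reducing $F$ to a function of the three mutual distances and pinning the exponents with the scaling and inversion weights.

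Finally I would read off $C$ in two ways. Sending $\vec{z}_v\to\infty$ and using conformality at $v$, the incident propagators collapse, $\prod_{e\colon v\in e}\norm{\vec{z}_v-\vec{z}_u}^{-2\ind_e^{\dual}}\to\norm{\vec{z}_v}^{-2\Dim}$, so $\lim_{\vec{z}_v\to\infty}\norm{\vec{z}_v}^{2\Dim}F$ amputates exactly these edges and leaves the position-space integral of $G{\setminus}v$ with $w,r$ as its two fixed vertices; comparing with the closed form gives $C/\norm{\vec{z}_w-\vec{z}_r}^{\Dim}$, and at unit separation this is $\PeriodComp{G{\setminus}v,\vec{\ind}}$ (a rigid motion brings $\vec{z}_w,\vec{z}_r$ to $\vec{0},\uv{1}$). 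Repeating the limit with $v$ and $w$ interchanged in the \emph{same} function $F$ gives $\PeriodComp{G{\setminus}w,\vec{\ind}}=C$ as well, whence $\PeriodComp{G{\setminus}v,\vec{\ind}}=\PeriodComp{G{\setminus}w,\vec{\ind}}$. Each $G{\setminus}v$ is logarithmic, $\sdc{G{\setminus}v}=0$ by \autoref{lem:sdc-exc}, so all integrals converge absolutely and the manipulations are legitimate.

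The main obstacle is the analytic justification of the amputation limit: one must show that $\vec{z}_v\to\infty$ commutes with the remaining integrations and that the incident propagators factor out as $\norm{\vec{z}_v}^{-2\Dim}$ uniformly enough to pass to the limit, which rests on the power-counting bounds $\sdc{\gamma}>0$ from \autoref{cor:convergence-domain}. A secondary point is that the value $\PeriodComp{G{\setminus}v,\vec{\ind}}$ must be matched against the canonical normalization of \eqref{eq:period-position-space} independently of which two vertices are held fixed; this is the standard reference-independence of the logarithmic position-space period, and it is subsumed by the same conformal covariance used above.
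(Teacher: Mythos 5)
Your proposal is, in substance, the proof the paper itself gives: the paper establishes \autoref{thm:period-completion} by invoking the inversion argument of \cite{Schnetz:Census} in position space, which is exactly the conformal-invariance argument you spell out. Your three-point-function packaging is an equivalent organization of that same inversion proof, and it is sound: conformality $\exc{u}=0$ at every vertex makes the measure invariant under $\vec{z}\mapsto\vec{z}/\norm{\vec{z}}^2$ (and it also supplies the infrared decay $\norm{\vec{z}_u}^{-2\Dim}$ that makes the partially integrated function $F$ converge), conformal covariance with weight $\Dim$ at three points pins $F$ down to $C\,(\norm{\vec{z}_v-\vec{z}_w}\norm{\vec{z}_v-\vec{z}_r}\norm{\vec{z}_w-\vec{z}_r})^{-\Dim}$, and the two amputation limits identify the single constant $C$ with both $\PeriodComp{G{\setminus}v,\vec{\ind}}$ and $\PeriodComp{G{\setminus}w,\vec{\ind}}$.

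The one point you omit --- and it is the only substantive content the paper adds beyond citing \cite{Schnetz:Census} --- is the case of non-integer dimension. The theorem is an identity of meromorphic functions on the space of conformal indices, on which $\Dim=2(\ind_1+\cdots+\ind_N)/\loops{G}$ varies continuously and is generically not an integer. Your argument, as written, uses honest integrals over $\R^{\Dim}$ and therefore only makes literal sense on the codimension-one slices where $\Dim\in\Z_{>0}$; equality of meromorphic functions on such slices does not by itself imply equality on the whole conformal space (a function like $\sin(\pi\Dim)$ times something regular vanishes on all of them). The paper closes this by defining the position-space integrals \eqref{eq:period-position-space} for arbitrary $\Dim$ via dimensional regularization, citing \cite{Collins}, after which translations, rotations, scalings and the inversion act on the regularized integrals exactly as in the integer-dimensional case, so the argument goes through on an open set of conformal indices --- which does suffice for a meromorphic identity. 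With that one addition, your proof is complete and coincides with the paper's.
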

The proof for positive integer dimensions in \cite[Definition~and~Theorem~2.2]{Schnetz:Census} applies the inversion $\vec{z}\mapsto \vec{z}/\norm{z}^2$ to the vertex coordinates in position space \eqref{eq:period-position-space}. In non-integer dimensions, these integrals over $\vec{z}\in\R^{\Dim}$ can be defined using \emph{dimensional regularization} as explained for example in detail in \cite{Collins}. The inversion proof then still applies.

We will prove the same invariance for the Hepp bound. Recall that the Mellin integrals \eqref{eq:hepp-mellin} and \eqref{eq:period-mellin} have poles. The equalities $\PeriodComp{G{\setminus}v,\vec{\ind}} = \PeriodComp{G{\setminus}w,\vec{\ind}}$ of \autoref{thm:period-completion} are to be understood as identities of meromorphic functions on the vector space
\begin{equation*}
	\bigcap_{u\in \VG{G}} \setexp{\vec{\ind}}{\exc{u}=0}
	\subset \C^{\abs{G}}
\end{equation*}
of conformal indices on $G$. This space has dimension $\abs{G}-\abs{\VG{G}} + 1=\loops{G}$, because one of the constraints determines the dimension $\Dim$. Even though a complement $G{\setminus}v$ has fewer edges than $G$, we still write $\vec{\ind}$ to denote its indices and to suggest that the indices of $G{\setminus}w$ are determined (through conformality in $G$) by the indices of $G{\setminus}v$.
\begin{figure}
	\centering
	$K_4=\Graph[0.45]{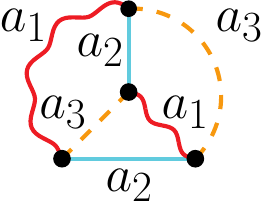}$
	\quad
	$K_4{\setminus}v = \Graph[0.45]{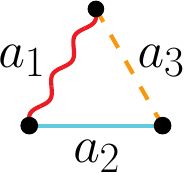}$
	\quad
	$\HeppComp{K_4{\setminus}v,\vec{a}} =\displaystyle \frac{(a_1+a_2)(a_1+a_3)(a_2+a_3)}{a_1 a_2 a_3}$%
	\caption{The complete graph $K_4$ with the most general conformal indices.}%
	\label{fig:K4-rainbow}%
\end{figure}%
\begin{theorem}\label{thm:hepp-completion}
	For any two vertices $v,w$ of a biconnected graph $G$, the Hepp bounds
	$\HeppComp{G{\setminus}v,\vec{\ind}}=\HeppComp{G{\setminus}w,\vec{\ind}}$ agree as rational functions on the space of conformal indices.
\end{theorem}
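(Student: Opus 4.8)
The plan is to show that $\HeppComp{G\setminus v}$ and $\HeppComp{G\setminus w}$, viewed as rational functions on the $\loops{G}$-dimensional space of conformal indices, have identical principal parts at every pole; their difference is then a pole-free rational function, and since \eqref{eq:hepp-position-space} shows that $\HeppComp{M,\vec{\ind}}$ is homogeneous of degree zero (the factor $\prod_e \ind_e^{\dual}$ has degree $\abs{M}$, the $\Dim/2$ has degree $1$, and $\Hepp{M,\vec{\ind}}$ has degree $1-\abs{M}$), this difference must be a constant, which I will finally force to vanish. The whole argument runs by induction on the number of edges of $G$; the base cases are those where $G\setminus v$ is a forest (or disconnected matroid), where $\HeppComp{G\setminus v}=0$ for every $v$ by \autoref{cor:hepp-zero-forest}.

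First I would pin down the poles. By \autoref{cor:hepp-poles} the singularities of $\HeppComp{G\setminus v}$ lie on the hyperplanes $\set{\sdc{\gamma}=0}$ for $\gamma\in\Sing{G\setminus v}$. The key observation is that any such $\gamma$ is vertex-induced: if $\gamma$ omitted an edge of $G$ between two of its vertices, that edge would become a self-loop in $(G\setminus v)/\gamma$, disconnecting the quotient matroid and contradicting $\gamma\in\Sing{G\setminus v}$. Hence $\gamma=G[S]$ for a vertex set $S\subseteq\VG{G}\setminus\set{v}$ inducing a connected subgraph, and the singularity condition further forces $G[\overline{S}]$ to be connected (otherwise the contraction vertex would be an articulation point of the quotient). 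Writing $C$ for the edges of $G$ crossing between $S$ and $\overline{S}=\VG{G}\setminus S$, \autoref{lem:sdc-exc} together with conformality ($\exc{u}=0$) gives $\sdc{G[S]}=\tfrac12\big(\sum_{e\in C}\ind_e^{\dual}-\Dim\big)$. Thus the pole hyperplane is $\set{\sum_{e\in C}\ind_e^{\dual}=\Dim}$, which depends only on the bond $C$ (equivalently, on the connected bipartition $S\sqcup\overline{S}$): it is symmetric under $S\leftrightarrow\overline{S}$ and, crucially, makes no reference to $v$.

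The heart of the proof is then to compute the residue there and see that it too is independent of $v$. Combining the factorization \eqref{eq:hepp-residue} with \eqref{eq:hepp-position-space} yields the clean formula $\Res_{\sdc{G[S]}=0}\HeppComp{G\setminus v}=\tfrac{\Dim}{2}\,\HeppComp{G[S]}\,\HeppComp{(G/G[S])\setminus v}$, where I use that on the pole the standalone dimension of $G[S]$ coincides with $\Dim$. Now $G/G[S]$ is again conformal on the pole: the vertex obtained by contracting $S$ has incident edges exactly $C$, so its excess is $\Dim-\sum_{e\in C}\ind_e^{\dual}=0$, while all other excesses are unchanged. Because both $G[S]$ and $G[\overline{S}]$ are connected, $G/G[S]$ is moreover biconnected and has strictly fewer edges than $G$, so the induction hypothesis applies to it: deleting $v$ gives the same position-space Hepp bound as deleting the contraction vertex, and deleting the contraction vertex recovers $G[\overline{S}]$. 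Hence $\HeppComp{(G/G[S])\setminus v}=\HeppComp{G[\overline{S}]}$ and the residue equals the manifestly symmetric quantity $\tfrac{\Dim}{2}\,\HeppComp{G[S]}\,\HeppComp{G[\overline{S}]}$. The identical computation for $G\setminus w$ — using $G[S]$ when $w\in\overline{S}$ and the complementary induced subgraph $G[\overline{S}]$ when $w\in S$ — gives the same value, so $\HeppComp{G\setminus v}$ and $\HeppComp{G\setminus w}$ share every residue.

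It remains to show the constant difference vanishes, and I expect this to be the main obstacle. I would pin it down by a limiting argument degenerating both complements to smaller conformal graphs covered by the induction hypothesis (for instance, tracking the leading behaviour as one free conformal parameter tends to infinity, which effectively deletes or contracts an edge and reduces $G$ to a smaller completion). The subtle point is that the tempting shortcut of reading the constant off the $\Dim\to0$ limit via \autoref{lem:Hepp-Crapo} is \emph{not} available: Crapo's invariant violates completion (see \eqref{eq:crapo-completion-counterexample}), and indeed the non-vanishing hypothesis of \autoref{lem:Hepp-Crapo} fails on the conformal locus, where the star of each vertex has identically vanishing index sum. Besides fixing the constant, the other place demanding care is the connectivity bookkeeping of the inductive step — verifying that the singularity condition on $\gamma=G[S]$ is exactly what guarantees that $G/G[S]$ is biconnected, hence a legitimate smaller instance of the theorem, so that the induction is well-founded.
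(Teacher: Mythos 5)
Your core machinery -- classifying the poles of $\HeppComp{G{\setminus}v,\vec{\ind}}$ by induced subgraphs $G[S]$, factorizing the residues via \eqref{eq:hepp-residue}, invoking the induction hypothesis on the conformal quotient $G/G[S]$ to bring each residue into the manifestly $v$-independent form $\tfrac{\Dim}{2}\HeppComp{G[S],\vec{\ind}}\HeppComp{G[\overline{S}],\vec{\ind}}$, and concluding that the difference is a pole-free, degree-zero, hence constant function -- is exactly the skeleton of the paper's proof, including the case split on how $S$ meets $\{v,w\}$. The genuine gap is precisely where you flag it: killing the constant. The paper does \emph{not} attack general biconnected $G$ directly; it first reduces to complete graphs, replacing parallel edges by a single edge via \eqref{eq:parallel-heppcomp} and filling in missing edges with index $\ind_e=\Dim/2$ (which leaves $\HeppComp{}$ unchanged by the finite limit coming from \eqref{eq:hepp-res-e}), so that the whole induction runs over $G=K_{n+1}$ with $G{\setminus}v=K_n$ carrying \emph{unconstrained} indices. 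This reduction is what makes the last step trivial: for a complete graph, unit indices are conformal and $K_{n+1}{\setminus}v \cong K_{n+1}{\setminus}w$ carry \emph{identical} indices, so the constant vanishes by evaluation at that symmetric point. For a general $G$ no such symmetric specialization exists ($G{\setminus}v\not\cong G{\setminus}w$), and your proposed substitute -- ``tracking the leading behaviour as one free conformal parameter tends to infinity, which effectively deletes or contracts an edge'' -- is not an argument: the paper contains no lemma describing the asymptotics of $\HeppComp{}$ at infinity on the conformal locus, the residue identities \eqref{eq:hepp-res-e} live at finite hyperplanes ($\ind_e=0$ and $\ind_e=\Dim/2$), not at infinity, and since the difference you are chasing is constant, any degeneration argument must show that \emph{both} functions converge to a \emph{common} value, which is exactly the statement you are trying to prove for a smaller graph plus an unproven interchange of limits.

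Two further points need repair. First, your base case is not sound as stated: for a biconnected $G$ it can happen that $G{\setminus}v$ is a forest while $G{\setminus}w$ is not (take $G=K_{2,3}$, where deleting a degree-$3$ vertex leaves a star but deleting a degree-$2$ vertex leaves $\Cycle{4}$), so ``$\HeppComp{G{\setminus}v,\vec{\ind}}=0$ for every $v$'' does not follow from \autoref{cor:hepp-zero-forest}; the vanishing of the \emph{biconnected} complements on the conformal locus is the content of \autoref{rem:HeppComp-2cut-zero}, which in the paper is derived \emph{from} the completion theorem -- using it as a base case is circular. These degenerate graphs (whose conformality constraints force $\Dim=0$ and push the conformal locus into the closure of pole loci) are exactly what the reduction to $K_n$ quarantines, since there the base case is just $K_3$ and the conformal locus is parametrized by free indices on $K_n$. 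Second, your pole classification has a small hole: a non-induced $\gamma\in\Sing{G{\setminus}v}$ does not always have a disconnected quotient, because the exceptional case $\gamma=(G{\setminus}v){\setminus}e$ has quotient a single self-loop, which \emph{is} a connected matroid; these poles at $\sdc{\gamma}=\ind_e^{\dual}=0$ are genuine poles of $\Hepp{}$ and are only removed in position space by the numerator $\prod_e\ind_e^{\dual}$ of \eqref{eq:hepp-position-space}, which is how \autoref{lem:completion-poles} argues it. That one is easily patched; the missing reduction to complete graphs is not.
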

\begin{corollary}
	If $G$ is regular and $G{\setminus}v$, $G{\setminus}w$ are {\PlogDiv}, then $\Hepp{G{\setminus}v}=\Hepp{G{\setminus}w}$.
\end{corollary}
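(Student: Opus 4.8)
The plan is to derive this as the specialization of \autoref{thm:hepp-completion} to unit indices, combined with the elementary conversion between the two normalizations of the Hepp bound in \autoref{def:Hepp-pos}. All the analytic content sits in the theorem; the work left for the corollary is to check that unit indices form an admissible conformal point and that the conversion prefactor is literally identical on the two sides.

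First I would record that unit indices are conformal. If $G$ is $k$-regular then, by \autoref{ex:Kn-conformal}, the assignment $\ind_e=1$ on all edges is conformal in dimension $\Dim=2k/(k-2)$; here $k\geq 3$ is forced, since $k\leq 2$ yields no positive $\Dim$. Hence the point $\vec{\ind}=(1,\ldots,1)$ lies in the space of conformal indices on which \autoref{thm:hepp-completion} asserts the identity $\HeppComp{G{\setminus}v,\vec{\ind}}=\HeppComp{G{\setminus}w,\vec{\ind}}$ of rational functions. Because $G{\setminus}v$ and $G{\setminus}w$ are {\PlogDiv}, all $\sdc{\gamma}>0$ on proper nonempty subgraphs at unit indices, so by \autoref{cor:hepp-poles} no pole of the rational functions \eqref{eq:multi-hepp-from-sectors} passes through this point. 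The two functions are therefore regular at unit indices, and the theorem specializes to a plain equality of numbers $\HeppComp{G{\setminus}v}=\HeppComp{G{\setminus}w}$. (The dimensions are consistent: \autoref{lem:sdc-exc} gives $\sdc{G{\setminus}v}=-\exc{v}/2=0$ in the conformal dimension, matching the {\PlogDiv} condition.)

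It remains to undo the conversion. At unit indices every dual index equals $\ind_e^{\dual}=\Dim/2-1$, so \autoref{def:Hepp-pos} reads $\HeppComp{G{\setminus}v}=(\Dim/2-1)^{m}\,\Hepp{G{\setminus}v}\,/(\Dim/2)$, where $m=\abs{G}-k$ is the number of edges of the complement. Regularity enters decisively here: since every vertex has the same degree $k$, both $G{\setminus}v$ and $G{\setminus}w$ have exactly $m$ edges, so the prefactor $(\Dim/2-1)^{m}/(\Dim/2)$ is the same on both sides. For $k\geq 3$ this factor is nonzero, hence it may be cancelled, giving $\Hepp{G{\setminus}v}=\Hepp{G{\setminus}w}$ as claimed.

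The one point requiring genuine care — the main obstacle beyond bookkeeping — is confirming that \autoref{thm:hepp-completion} actually applies, i.e.\ that $G$ is biconnected. This I would extract from the {\PlogDiv} hypothesis: if the cycle matroid of $G{\setminus}v$ were disconnected, \autoref{lem:cycle-matroid-connected} would give a decomposition $G{\setminus}v\cong A\oplus B$ with $\sdc{A}+\sdc{B}=\sdc{G{\setminus}v}=0$ by additivity of $\sdc{}$ over direct sums, forcing $\sdc{A}\leq 0$ on a proper nonempty submatroid and contradicting {\PlogDiv}. Thus $G{\setminus}v$ is nonseparable, and reattaching the degree-$k$ vertex (with $k\geq 3$, so that it meets at least two vertices of the biconnected block) leaves $G$ biconnected. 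I expect this connectivity verification, rather than any computation, to be the delicate step, since one must rule out the degenerate multigraph configurations in which the reattached vertex fails to increase connectivity.
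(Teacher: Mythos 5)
Your proposal is correct and is essentially the paper's (implicit) argument: the corollary is stated without separate proof precisely because it is the specialization of \autoref{thm:hepp-completion} to unit indices, using \autoref{ex:Kn-conformal} for conformality in $\Dim=2k/(k-2)$, the {\PlogDiv} hypothesis to avoid the pole hyperplanes of \autoref{cor:hepp-poles}, and \autoref{def:Hepp-pos} with a conversion prefactor $(\Dim/2-1)^{m}/(\Dim/2)$ that agrees on both sides since regularity gives $G{\setminus}v$ and $G{\setminus}w$ the same edge count $m=\abs{E_G}-k$. The degenerate configuration you flag in the biconnectivity step (all $k$ edges of $v$ landing on one vertex) does close as you expect: inside the biconnected block it is excluded by $k$-regularity, and the remaining possibility makes $G$ a disjoint union of a $k$-fold bond and a block, which is killed by the \emph{paired} hypotheses, since then $G{\setminus}w$ contains a $\Bond{k}$ component with $\sdc{\Bond{k}}=-k/(k-2)<0$ unless $w$ lies in the bond, in which case $G{\setminus}v\cong G{\setminus}w$ trivially.
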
%
\begin{example}\label{ex:completion-K4}
	The complete graph $K_4$ is conformal precisely when the indices $\ind_e=\ind_f$ of all non-adjacent edge pairs $\set{e,f}$ coincide. The conformal indices thus define a \emph{rainbow colouring}, where every vertex touches exactly one edge of each index (see \autoref{fig:K4-rainbow}). All complements give the same cycle $\Cycle{3}$, so \autoref{thm:hepp-completion} is trivial for $G=K_4$.
\end{example}
To prove the theorem, we first reduce it to the case of complete graphs $K_n$. If $G$ has $n$ vertices, define indices $\vec{c}$ on $K_n$ as follows: The edge between vertices $i$ and $j$ receives
\begin{equation}
	c_{ij}^{\dual} 
	= \tfrac{\Dim}{2}-c_{ij}
	\defas \ind_{e_1}^{\dual} + \cdots + \ind_{e_k}^{\dual},
	\label{eq:indices-to-Kn}%
\end{equation}
where $e_1,\ldots,e_k \in \EG{G}$ denote all edges in $G$ between $i$ and $j$ (there may be none, one, or several such edges). This assignment ensures that $\vec{c}$ are conformal indices for $K_n$. So if we know $\HeppComp{K_n{\setminus}v,\vec{c}}=\HeppComp{K_n{\setminus}w,\vec{c}}$, then the theorem for $G$ follows from the identity
\begin{equation}
	\HeppComp{G{\setminus}v,\vec{\ind}} = \HeppComp{K_n{\setminus}v,\vec{c}}
	\label{eq:hepp-from-Kn}%
\end{equation}
and its analogue for $w$. We already saw in \eqref{eq:parallel-heppcomp} that several parallel edges ($k\geq 2$) can be replaced by a single edge with the dual index \eqref{eq:indices-to-Kn}. Nothing changes for edges with $k=1$, so \eqref{eq:hepp-from-Kn} is clear once we recognize that adding an edge with weight $c_{ij} = \Dim/2$ in the case $k=0$ has no effect on the Hepp bound. Indeed, due to the factors $\ind_e^{\dual}$ in \eqref{eq:hepp-position-space}, the pole for deletion of an edge in \eqref{eq:hepp-res-e} amounts to the finite limit
\begin{equation*}
	\HeppComp{G\sqcup\set{e},\vec{\ind},\Dim/2}
	=
	\lim_{\ind_e \rightarrow \Dim/2} \HeppComp{G\sqcup\set{e},\vec{\ind},\ind_e}
	= \HeppComp{G,\vec{\ind}}.
\end{equation*}
\begin{lemma}\label{lem:completion-poles}
	For a biconnected graph $G$, the poles of the Hepp bound $\HeppComp{G,\vec{\ind}}$ in position space are in bijection with subsets $S\subsetneq \VG{G}$ of $\abs{S}\geq 2$ vertices with the property that the induced subgraph $\gamma=G[S]$ and its quotient $G/\gamma$ are biconnected.
\end{lemma}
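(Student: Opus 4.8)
The plan is to read off the poles of $\HeppComp{G,\vec{\ind}}$ from those of the ordinary Hepp bound and then pass from matroid language to the vertex-set language of the statement. Since $G$ is biconnected, its cycle matroid is connected (\autoref{lem:cycle-matroid-connected}), so \autoref{cor:hepp-poles} gives that $\Hepp{G,\vec{\ind}}$ has simple poles exactly on the hyperplanes $\sdc{\gamma}=0$ for $\gamma\in\Sing{G}$. By \eqref{eq:hepp-position-space} we have $\HeppComp{G,\vec{\ind}}=\frac{\ind_1^{\dual}\cdots\ind_N^{\dual}}{\Dim/2}\Hepp{G,\vec{\ind}}$. The scalar $1/(\Dim/2)$ introduces no new pole, because $\Hepp{G,\vec{\ind}}$ vanishes as $\Dim\to0$ (\autoref{cor:hepp(d=0)=0}); and the polynomial numerator $\prod_e\ind_e^{\dual}$ can only cancel existing poles. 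So the only work is to decide which poles are cancelled.

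First I would identify the cancelled poles. On the locus $\sdc{G}=0$ the factor $\ind_e^{\dual}$ equals $\sdc{\set{e}^c}$, so its vanishing locus is the hyperplane $\sdc{\set{e}^c}=0$. A pole $\sdc{\gamma}=0$ is therefore removed exactly when $\sdc{\gamma}$ is proportional to $\sdc{\set{e}^c}$; by the two-coefficient shape of these linear forms and the linear independence of $\sdc{\delta}$ and $\sdc{\delta^c}$ established in the proof of \autoref{lem:hepp-residue}, this forces $\gamma=\set{e}^c$. Consequently the poles of $\HeppComp{G,\vec{\ind}}$ are indexed precisely by the $\gamma\in\Sing{G}$ that are \emph{not} single-edge complements $\set{e}^c$.

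The heart of the argument is to show that these surviving $\gamma$ are exactly the induced biconnected subgraphs, and this is where I expect the real work. Suppose $\gamma\in\Sing{G}$ is not induced, so some edge $f$ has both endpoints in $V(\gamma)$ but $f\notin\gamma$. Because $\gamma$ is connected it spans $V(\gamma)$, and contracting it collapses $V(\gamma)$ to one vertex, turning $f$ into a self-loop of $G/\gamma$. A matroid carrying a loop alongside any other element is disconnected, so connectedness of $G/\gamma$ forces $G/\gamma$ to be the single loop $\set{f}$, i.e.\ $\gamma=\set{f}^c$. Thus every non-induced singular $\gamma$ (with connected quotient) is a single-edge complement, hence cancelled; conversely $\set{e}^c$ is never induced, since in a biconnected $G$ deleting one edge isolates no vertex and so $G[V(\set{e}^c)]=G\neq\set{e}^c$. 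The surviving poles therefore correspond bijectively to the induced $\gamma=G[S]$ lying in $\Sing{G}$, and isolating this self-loop mechanism is the main obstacle; everything else is bookkeeping about simple poles.

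It remains to rewrite the matroid conditions as the biconnectivity conditions of the statement and verify the parametrization by $S$. Since $\gamma=G[S]$ is induced in the loopless graph $G$, both $\gamma$ and $G/\gamma$ are loopless, so by \autoref{lem:cycle-matroid-connected} connectedness of $\gamma$ (resp.\ $G/\gamma$) as a matroid is the same as biconnectedness of $G[S]$ (resp.\ $G/G[S]$), using the convention that a single edge $K_2$ is biconnected. A biconnected induced subgraph has no isolated vertices, so its vertex set is recovered as $S=V(\gamma)$, giving injectivity of $S\mapsto G[S]$; inducedness together with $\gamma\subsetneq G$ forces $S\subsetneq\VG{G}$ (otherwise $G[S]=G$), while $\gamma\neq\emptyset$ gives $\abs{S}\geq2$. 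This yields the claimed bijection. As a consistency check, for cycles $\Cycle{N}$ formula \eqref{eq:heppcomp-cn-dn} shows poles only at $\ind_e=0$, matching the sets $S$ consisting of the two endpoints of $e$, and for $\Cycle{2}$ there is no admissible $S$, consistent with $\HeppComp{\Cycle{2}}=1$.
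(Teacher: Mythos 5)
Your proof is correct and follows essentially the same route as the paper: identify the poles of $\Hepp{G,\vec{\ind}}$ via \autoref{cor:hepp-poles} and \autoref{lem:cycle-matroid-connected}, show via the self-loop mechanism that a non-induced singular subgraph must be an edge complement $G{\setminus}e$, and observe that exactly those poles are cancelled by the numerator $\prod_e \ind_e^{\dual}$ in \eqref{eq:hepp-position-space}. You are somewhat more explicit than the paper in verifying that the cancellation affects \emph{only} the edge-complement hyperplanes (via the linear-independence argument from the proof of \autoref{lem:hepp-residue}) and in spelling out the bijection with vertex sets $S$, but these are refinements of the same argument rather than a different approach.
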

\begin{proof}
	According to \autoref{cor:hepp-poles} and \autoref{lem:cycle-matroid-connected}, the poles of $\Hepp{G,\vec{\ind}}$ are identified with subgraphs $\gamma$ such that $\gamma$ and $G/\gamma$ are nonseparable. If $\gamma$ is not induced, every edge $e\in G{\setminus}\gamma$ with both endpoints in $\gamma$ becomes a self-loop in the quotient $G/\gamma$. To avoid a separation, $G/\gamma\cong\Graph[0.3]{1rose}$ can only consist of one self-loop on its own, so $\gamma=G{\setminus}e$ is an edge complement. But the corresponding poles at $\sdc{G{\setminus}e} = \ind_e^{\dual}=0$ are cancelled in position space by the numerators in \eqref{eq:hepp-position-space}. So only induced $\gamma$ contribute poles to $\HeppComp{G,\vec{\ind}}$.
\end{proof}
In a complete graph $G=K_n$, all induced $\gamma=G[S]$ and their quotients are biconnected, so $\HeppComp{K_n,\vec{\ind}}$ has precisely $2^n-n-2$ poles, one for each subset $S\subsetneq \set{1,\ldots,n}$ with $\abs{S}\geq 2$.
To identify poles in other graphs, it may be useful that the quotient $G/G[S]$ is biconnected if and only if the complement $G{\setminus}S = G[\VG{G}{\setminus}S]$ is a connected graph.
\begin{figure}
	\centering
	$G=\Graph[0.7]{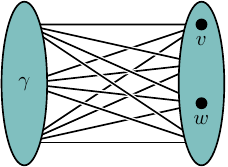}$
	\quad
	$H=\Graph[0.7]{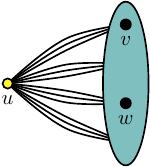}$
	\qquad
	$G=\raisebox{1.7mm}{$\Graph[0.7]{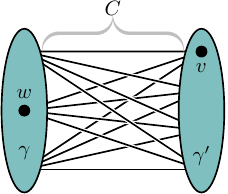}$}$
	\quad
	$H=\Graph[0.7]{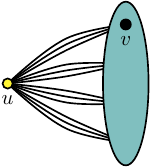}$
	\caption{An induced subgraph $\gamma=G[S]\subsetneq G{\setminus}v$ in the cases $w\notin S$ and $w\in S$.}%
	\label{fig:completion-proof}%
\end{figure}
\begin{proof}[Proof of \autoref{thm:hepp-completion}]
	We perform an induction over the number $n$ of vertices of $G$. The start at $n=3$ is simple: All vertex complements of $K_3=\gTri$ are single edges $K_3{\setminus}v \cong \gEdge$ with the same Hepp bound $\HeppComp{\Bond{1},\vec{\ind}}=1$ from \eqref{eq:heppcomp-cn-dn}.

	Now consider the complete graph $G=K_{n+1}$ and suppose that we already established the theorem for all graphs with $\leq n$ vertices. Call $v=n+1$ such that $G{\setminus}v=K_n$. Recall from \autoref{ex:completion} that the indices on $K_n$ are unconstrained and parametrize the entire space of conformal indices on $G$, so $\HeppComp{K_n,\vec{\ind}}$ has simple poles in bijection with subsets $S\subsetneq \set{1,\ldots,n}$ of size $\abs{S}\geq 2$. 
	In position space, the factorization \eqref{eq:hepp-residue} reads
	\begin{equation*}
		\Res_{\sdc{\gamma}=0} \HeppComp{K_n,\vec{\ind}}
		= \tfrac{\Dim}{2} \left[\,
			\HeppComp{\gamma,\vec{\ind}}
			\cdot 
			\HeppComp{K_n/\gamma,\vec{\ind}}
		\right]_{\sdc{\gamma}=0}
		\tag{$\sharp$}%
		\label{eq:completion-proof-residue}%
	\end{equation*}
	and we first consider the case $\gamma=G[S]$ with $w\notin S$ (see \autoref{fig:completion-proof}). Then $\gamma$ is also a subgraph of $G{\setminus}w$, and we set $H\defas G/\gamma$ such that $K_n/\gamma = H{\setminus}v$. The excess of the vertex $u$ representing $\gamma$ in $H$ is $-2\sdc{\gamma}$ due to \eqref{eq:sdc-exc}. So on the hyperplane $\sdc{\gamma}=0$, the graph $H$ is itself conformal. Since $H$ has at most $n$ vertices, we know by induction that $\HeppComp{H{\setminus}v,\vec{\ind}}=\HeppComp{H{\setminus}w,\vec{\ind}}$. But $H{\setminus}w=(G{\setminus}w)/\gamma$, so we learn
	\begin{equation*}
		\Res_{\sdc{\gamma}=0} \HeppComp{G{\setminus}v,\vec{\ind}}
		= \Res_{\sdc{\gamma}=0} \HeppComp{G{\setminus}w,\vec{\ind}}
	\end{equation*}
	by comparing \eqref{eq:completion-proof-residue} with its analogue for $G{\setminus}w$ instead of $G{\setminus}v=K_n$.
	Now suppose that $w\in S$, then $\gamma=G[S]$ is not anymore a subgraph of $G{\setminus}w$. To see the pole, set $\gamma'\defas G[T]$ for the complement $T\defas\set{1,\ldots,n+1}\setminus S$ and note that $\sdc{\gamma} = \sdc{\gamma'} = (\sum_{e\in C} \ind_e^{\dual} -\Dim)/2$ by \eqref{eq:sdc-exc}, where $C$ are the edges between $S$ and $T$. The residue formula gives
	\begin{equation*}
		\Res_{\sdc{\gamma}=0} \HeppComp{G{\setminus}w,\vec{\ind}}
		= \Res_{\sdc{\gamma'}=0} \HeppComp{G{\setminus}w,\vec{\ind}}
		= \tfrac{\Dim}{2} \left[\,
			\HeppComp{\gamma',\vec{\ind}}\cdot
			\HeppComp{G{\setminus}w/\gamma',\vec{\ind}}
		\right]_{\sdc{\gamma'}=0}
		.
		\tag{$\flat$}%
		\label{eq:completion-proof-residue2}%
	\end{equation*}
	This time, we delete from $H=G/\gamma$ the vertex $u$ that represents $\gamma$, to conclude by induction and $H{\setminus}u=G{\setminus}S=G[T]=\gamma'$ that $\HeppComp{G{\setminus}v/\gamma,\vec{\ind}}=\HeppComp{H{\setminus}v,\vec{\ind}} = \HeppComp{\gamma',\vec{\ind}}$.
	Similarly, set $H' \defas G/\gamma'$ such that $H'{\setminus}u'=G{\setminus}T = G[S]=\gamma$ for the vertex $u'$ that is $\gamma'$ in $H'$.
	By induction, we find that $\HeppComp{G{\setminus}w/\gamma',\vec{\ind}} = \HeppComp{H'{\setminus}w,\vec{\ind}}=\HeppComp{\gamma,\vec{\ind}}$ and conclude that \eqref{eq:completion-proof-residue} and \eqref{eq:completion-proof-residue2} coincide.

	In summary, we have shown that both Hepp bounds $\HeppComp{G{\setminus}v,\vec{\ind}}$ and $\HeppComp{G{\setminus}w,\vec{\ind}}$ have the exact same residues on all poles. Therefore, their difference
	\begin{equation*}
		\Delta \defas
		\HeppComp{G{\setminus}v,\vec{\ind}}
		-
		\HeppComp{G{\setminus}w,\vec{\ind}}
	\end{equation*}
	is a rational function without poles, thus a polynomial. We also know that $\Delta$ is homogeneous of degree zero, so it must in fact be a constant rational number $\Delta \in \Q$. To show that $\Delta=0$, we specialize to unit indices $\ind_e=1$ on all edges as in \autoref{ex:Kn-conformal}, then we get trivially that $\HeppComp{G{\setminus}v}=\HeppComp{K_n}=\HeppComp{G{\setminus}w}$.
\end{proof}
The case with $4$ vertices is almost trivial as discussed in \autoref{ex:completion-K4}, but already for $n=5$ the completion relation gives an involved identity of rational functions.
\begin{figure}
	\centering
	$K_4=K_5{\setminus}5 = \Graph[0.45]{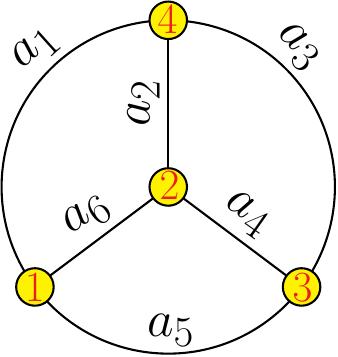}$
	\quad
	$K_5{\setminus}2 = \Graph[0.45]{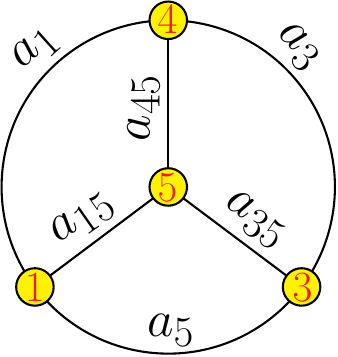}$
	\quad
	$\begin{aligned}
	\ind_{45} &= \tfrac{2(\ind_4+\ind_5+\ind_6)-\ind_1-\ind_2-\ind_3}{3}
	\\
	\ind_{35} &= \tfrac{2(\ind_1+\ind_2+\ind_6)-\ind_3-\ind_4-\ind_5}{3}
	\\
	\ind_{15} &= \tfrac{2(\ind_2+\ind_3+\ind_4)-\ind_1-\ind_5-\ind_6}{3}
	\end{aligned}$
	\caption{Two uncompletions of $K_5$ and the relationship of their indices.}%
	\label{fig:completion-K5}%
\end{figure}
\begin{example}\label{ex:completion-K5}
	The graph $K_4=K_5{\setminus}5$ with arbitrary indices $\ind_1,\ldots,\ind_6$ is logarithmic in dimension $\Dim=2(\ind_1+\cdots+\ind_6)/3$. The completion in \autoref{ex:completion} determines the indices of the edges connected to vertex $5=\infty$ such that $K_5$ is conformal. With the labels as in \autoref{fig:completion-K5}, the excess at vertex $4$ is $\exc{4} = \ind_1+\ind_2+\ind_3-\Dim/2$ and therefore
	\begin{equation*}
		\ind_{45} = \tfrac{\Dim}{2} - \exc{v_4}
		=\tfrac{2\ind_4+2\ind_5+2\ind_6-\ind_1-\ind_2-\ind_3}{3}
		= \sdc{K_4{\setminus}4}.
	\end{equation*}
	Similarly we find $\ind_{35}$ and $\ind_{15}$ as given in \autoref{fig:completion-K5}.
	The identity $\HeppComp{K_5{\setminus}5,\vec{\ind}}=\HeppComp{K_5{\setminus}2,\vec{\ind}}$ is an explicit functional equation for the Hepp bound \eqref{eq:Hepp-K4-full} of $K_4$,
	\begin{equation*}
		\HeppComp{K_4,\ind_1,\ind_2,\ind_3,\ind_4,\ind_5,\ind_6}
		= \HeppComp{K_4,\ind_1,\ind_{45},\ind_3,\ind_{35},\ind_5,\ind_{15}}.
	\end{equation*}
\end{example}
\begin{remark}\label{rem:HeppComp-2cut-zero}
	We only get non-zero Hepp bounds from conformal graphs that are $3$-vertex-connected.
	For suppose that $G$ can be disconnected by deleting two vertices $v$ and $w$ (this implies a \emph{vertical} $2$-separation \cite{Cunningham:OnMatroidCon}). Then $G{\setminus}v$ has an articulation point $w$ and is therefore not biconnected, so $\HeppComp{G{\setminus}v,\vec{\ind}}=0$. If $G$ is conformal, this implies that the Hepp bounds of \emph{all} vertex complements $G{\setminus}u$ vanish, even those that \emph{are} biconnected. In these cases, the corresponding Hepp bound function $\HeppComp{G{\setminus}u,\vec{\ind}}$ vanishes on the solution space to the conformality constraints. For example,
	\begin{equation*}
		\HeppComp{\Graph[0.4]{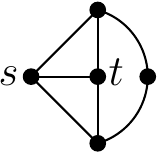},\vec{\ind}}_{\exc{s}=\exc{t}=0}
		= \HeppComp{\Graph[0.4]{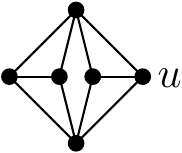}{\setminus}u,\vec{\ind}}
		= \HeppComp{\Graph[0.4]{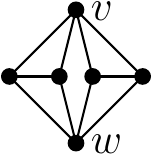}{\setminus}v,\vec{\ind}}
		= \HeppComp{\Graph[0.4]{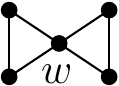},\vec{\ind}}
		= 0.
	\end{equation*}
\end{remark}
\begin{remark}\label{rem:crapo-completion}
	The completion symmetry $\HeppComp{G{\setminus}v,\vec{\ind}}=\HeppComp{G{\setminus}w,\vec{\ind}}$ does \emph{not} descend to Crapo's invariant, see the counterexample in \eqref{eq:crapo-completion-counterexample}.
	We cannot apply formula \eqref{eq:Hepp-Crapo-position} for the limit $\Dim\rightarrow 0$, because the restriction to conformal indices forces some subgraphs to diverge at $\Dim=0$. It follows from \eqref{eq:sdc-exc} that the complement $G{\setminus}I$ of any set $I\subset \VG{G}$ of independent (non-adjacent) vertices is divergent, violating the premise of \autoref{lem:Hepp-Crapo}.
\end{remark}

\subsection{Twist}
\label{sec:twist}

Whenever a conformal graph can be disconnected by deleting $2$ or $3$ vertices, the Hepp bounds $\HeppComp{G{\setminus}v,\vec{\ind}}$ are forced to vanish (\autoref{rem:HeppComp-2cut-zero}) or they factorize by \autoref{thm:hepp-product}. If $G$ is $4$-vertex-connected, no such simplification applies, but the \emph{twist} from \cite[Section~2.6]{Schnetz:Census} provides identities between different graphs with a $4$-separation.
\begin{figure}
	\centering
	$\Graph[0.5]{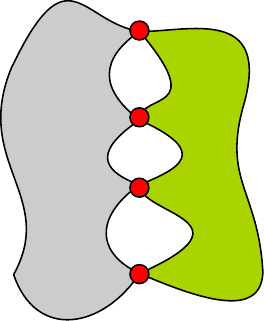}\quad \longrightarrow\quad
	\Graph[0.5]{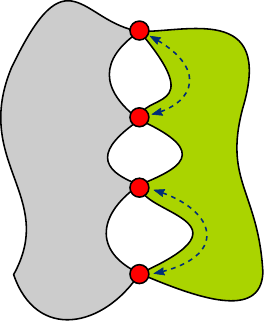}\quad \longrightarrow\quad 
	\Graph[0.5]{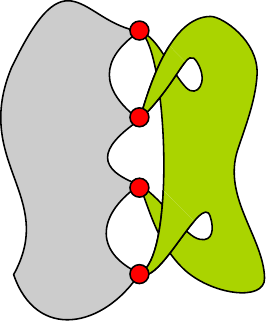} \cong
	\Graph[0.5]{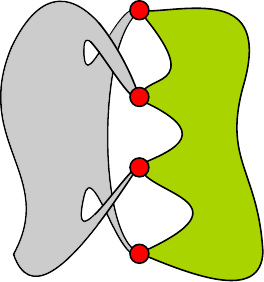} $
	\caption{A $4$-separation with a choice of a double transposition and the resulting twist.}%
	\label{fig:twist}%
\end{figure}
\begin{definition}\label{def:twist}%
	Suppose a graph $G$ has an edge bipartition $\EG{G}=S \sqcup T$ into two subgraphs with precisely four vertices $\set{p,q,r,s}$ in common (see \autoref{fig:twist}). The graph obtained by a double transposition $p\leftrightarrow q$ and $r \leftrightarrow s$ on $T$ (or $S$) is called a \emph{twist} of $G$.
\end{definition}
A single graph can have a lot of twists. Even for a fixed intersection set $\set{p,q,r,s}$, we can consider three different double transpositions, and whenever there are edges with both ends in $\set{p,q,r,s}$, we may distribute those edges arbitrarily among $S$ and $T$. Note that the construction of a twist $G'$ from $G$ gives a bijection of the edges $\EG{G'} \cong \EG{G}$.
\begin{theorem}\label{thm:twist-period}
	Consider a graph $G$ and a twist $G'$ with indices $\vec{\ind}$ that are conformal for $G$ and $G'$. Then the periods $\PeriodComp{G{\setminus}v,\vec{\ind}} = \PeriodComp{G'{\setminus}w,\vec{\ind}}$ coincide for all vertices $v,w$.
\end{theorem}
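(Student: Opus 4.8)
The plan is to follow the position-space strategy that the paper already used for completion, exploiting that a double transposition preserves all conformal cross-ratios. First I would invoke completion (Theorem~\ref{thm:period-completion}) twice: since both $G$ and its twist $G'$ are conformal, $\PeriodComp{G{\setminus}v,\vec{\ind}}$ is independent of $v$ and $\PeriodComp{G'{\setminus}w,\vec{\ind}}$ is independent of $w$. It therefore suffices to establish one equality $\PeriodComp{G{\setminus}v_0,\vec{\ind}}=\PeriodComp{G'{\setminus}w_0,\vec{\ind}}$ for a single convenient pair of vertices, after which the statement for all $v,w$ follows. As always the identity is to be read as one of meromorphic functions on the space of conformal indices, proven first inside the convergence cone and then continued by dimensional regularization, exactly as for Theorem~\ref{thm:period-completion}.

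Next I would split the position-space integral \eqref{eq:period-position-space} along the $4$-separation $\EG{G}=S\sqcup T$. Holding the four shared vertices $\set{p,q,r,s}$ fixed and integrating out the vertices internal to $T$ yields a $T$-block $I_T$, a function of the four boundary positions $\vec z_p,\vec z_q,\vec z_r,\vec z_s$, and likewise an $S$-block $I_S$; the full period is the integral of $I_S\cdot I_T$ over the boundary positions. The key observation is that $I_T$ is a \emph{conformal four-point function}: under a conformal map $\phi$ of $\R^{\Dim}\cup\set{\infty}$ it transforms covariantly, picking up weight factors only at $p,q,r,s$. This is precisely where conformality enters. For every vertex $u$ internal to $T$ one has $\sum_{e\ni u}\ind_e^{\dual}=\Dim$ by $\exc{u}=0$, so the conformal factor $\Omega(\vec z_u)^{\Dim}$ from the measure $\td[\Dim]\vec z_u$ is cancelled exactly by the factors $\Omega(\vec z_u)^{-\ind_e^{\dual}}$ coming from its incident propagators, leaving $I_T$ covariant with weight $\Delta_x=\sum_{e\in T,\,e\ni x}\ind_e^{\dual}$ at each boundary vertex $x$.

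The geometric heart of the argument is that a double transposition lies in the kernel of the action of $\perms{4}$ on cross-ratios: swapping $p\leftrightarrow q$ and $r\leftrightarrow s$ leaves every conformal cross-ratio of the four points invariant. Since the Möbius group of $\R^{\Dim}\cup\set{\infty}$ acts on ordered $4$-tuples of distinct points with orbits classified exactly by their cross-ratios, there is a conformal transformation $\phi$ realizing $\phi(\vec z_p)=\vec z_q$, $\phi(\vec z_q)=\vec z_p$, $\phi(\vec z_r)=\vec z_s$, $\phi(\vec z_s)=\vec z_r$. By the construction of the twist (Definition~\ref{def:twist}), relabelling the boundary arguments of $I_T$ by this double transposition is nothing but the $T$-block $I_{T'}$ of the twisted graph; hence $I_{T'}(\vec z_\bullet)=I_T(\phi(\vec z_\bullet))=\prod_{x}\Omega(\vec z_x)^{-\Delta_x}\,I_T(\vec z_\bullet)$ by the covariance established above.

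It remains to absorb these boundary weight factors, and this is the step I expect to be the main obstacle. At each shared vertex $x$, conformality of $G$ gives $\sum_{e\ni x}\ind_e^{\dual}=\Dim$, so the weight $\Delta_x$ carried by the $T$-edges is complementary to the weight $\Dim-\Delta_x$ carried by the $S$-edges. I would then perform the matching change of variables $\vec z_x\mapsto\phi(\vec z_x)$ on the boundary integrations and check, using \eqref{eq:sdc-exc}, that the Jacobians together with the $\Omega$-factors from $I_S$, from $I_{T'}$, and from the measure conspire to cancel at each of the four points, so that the integral for $G'{\setminus}w_0$ reduces to that for $G{\setminus}v_0$. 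The delicate parts are the correct bookkeeping of these conformal weights at the four boundary vertices (including any edges with both endpoints in $\set{p,q,r,s}$, which may be freely assigned to $S$ or $T$), arranging the choice of $v_0,w_0$ so that the two fixed reference vertices $\vec z_0=\vec 0$ and $\vec z_1=\uv{1}$ are compatible with $\phi$, and justifying convergence so that the formal conformal manipulations are legitimate prior to analytic continuation.
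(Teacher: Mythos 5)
Your route is the same one the paper itself relies on: the paper does not prove \autoref{thm:twist-period} from scratch but cites Schnetz's position-space argument, and that argument is exactly what you reconstruct --- split \eqref{eq:period-position-space} along the $4$-separation, observe that the block $I_T$ obtained by integrating out the vertices internal to $T$ is conformally covariant with weight $\Delta^T_x=\sum_{e\in T,\,e\ni x}\ind_e^{\dual}$ at each boundary vertex (using $\exc{u}=0$ at internal vertices), and use that a double transposition preserves both conformal cross-ratios of four points, hence is realized by a M\"{o}bius map $\phi$. Those ingredients are all correct.

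The gap sits precisely at the step you flag as the ``main obstacle'', and it is exactly the point where the hypothesis that the indices are conformal for \emph{both} $G$ and $G'$ must be used --- your proposal never uses it. Writing $\Omega_\phi$ for the local scale factor of $\phi$, so that $|\phi(\vec{z})-\phi(\vec{w})|^2=\Omega_\phi(\vec{z})\Omega_\phi(\vec{w})\,|\vec{z}-\vec{w}|^2$, covariance gives
\begin{equation*}
	I_{T'}(\vec{z}_p,\vec{z}_q,\vec{z}_r,\vec{z}_s)
	= I_T(\vec{z}_q,\vec{z}_p,\vec{z}_s,\vec{z}_r)
	= \Omega_\phi(\vec{z}_p)^{-\Delta^T_p}\Omega_\phi(\vec{z}_q)^{-\Delta^T_q}\Omega_\phi(\vec{z}_r)^{-\Delta^T_r}\Omega_\phi(\vec{z}_s)^{-\Delta^T_s}\, I_T(\vec{z}_p,\vec{z}_q,\vec{z}_r,\vec{z}_s).
\end{equation*}
Because $\phi$ swaps the four points pairwise, $|\vec{z}_p-\vec{z}_q|=|\phi(\vec{z}_p)-\phi(\vec{z}_q)|$ forces $\Omega_\phi(\vec{z}_p)\Omega_\phi(\vec{z}_q)=1$, and likewise $\Omega_\phi(\vec{z}_r)\Omega_\phi(\vec{z}_s)=1$, so the prefactor equals $\Omega_\phi(\vec{z}_p)^{\Delta^T_q-\Delta^T_p}\Omega_\phi(\vec{z}_r)^{\Delta^T_s-\Delta^T_r}$. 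This is \emph{not} $1$ for generic weights, and no manipulation of the boundary integration will make it disappear. What saves the proof is weight matching: conformality of $G$ at $p$ gives $\Delta^S_p+\Delta^T_p=\Dim$, while conformality of the twist $G'$ at $p$ gives $\Delta^S_p+\Delta^T_q=\Dim$, since the twist reattaches to $p$ exactly those $T$-edges formerly at $q$ (this is \autoref{def:excess} applied vertex by vertex to both graphs; \eqref{eq:sdc-exc} is not the relevant identity). Hence $\Delta^T_p=\Delta^T_q$ and $\Delta^T_r=\Delta^T_s$, the prefactor is identically $1$, and $I_{T'}=I_T$ pointwise, so the two period integrands literally coincide --- no change of variables at the boundary is needed at all. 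This is the same mechanism that makes the paper's \autoref{ex:G44twist} work, where joint conformality forces $\bar{a}=\bar{e}$ and $\bar{b}=\bar{f}$. Your proposed substitution $\vec{z}_x\mapsto\phi(\vec{z}_x)$ cannot supply this: since $\phi$ is determined by the four points themselves, that map is just the relabelling $(\vec{z}_p,\vec{z}_q,\vec{z}_r,\vec{z}_s)\mapsto(\vec{z}_q,\vec{z}_p,\vec{z}_s,\vec{z}_r)$ with unit Jacobian and no conformal factors, and it merely shifts the burden to proving that $I_S$ is invariant under the double transposition --- which needs the same missing observation. Once weight matching is in place, the remaining bookkeeping (edges inside $\set{p,q,r,s}$, the gauge-fixed vertices via \autoref{thm:period-completion}, dimensional regularization) is routine, as you indicate.
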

This is an identity of meromorphic functions on the vector space of indices $\vec{\ind}$ that make both graphs $G$ and $G'$ conformal at the same time. We can specialize the indices to particular values, provided those stay away from singularities.
\begin{corollary}
	Suppose that $G_1$ and $G_2$ are {\PlogDiv} with unit indices. If their completions are twists of each other, then $\Period{G_1}=\Period{G_2}$.
\end{corollary}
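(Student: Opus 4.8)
The plan is to reduce the corollary to the functional twist identity \autoref{thm:twist-period} by completing both graphs, and then to translate the resulting equality of position-space periods back into an equality of the momentum-space periods via \eqref{eq:period-to-position}.

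First I would complete. Since $G_1$ and $G_2$ are {\PlogDiv} with unit indices they are logarithmic, so by \autoref{ex:completion} each has a conformal completion $H_i$ with a distinguished vertex $\infty_i$ such that $G_i = H_i \setminus \infty_i$. The conformal indices of $H_i$ restrict to the unit indices on the edges of $G_i$, and only the edges incident to $\infty_i$ carry non-unit indices. The hypothesis that the completions are twists of one another means that $H_2$ is obtained from $H_1$ by a double transposition along a $4$-separation, and that the common index vector $\vec{\ind}$ (transported through the twist bijection of edges) is conformal for both $H_1$ and $H_2$; this is exactly the input required by \autoref{thm:twist-period}.

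Next I would apply the twist theorem with the deletion vertices chosen to be the completion vertices, $v = \infty_1$ and $w = \infty_2$. Since \autoref{thm:twist-period} gives the equality for \emph{all} choices of $v$ and $w$, this yields
\begin{equation*}
	\PeriodComp{G_1,\vec{\ind}}
	= \PeriodComp{H_1 \setminus \infty_1,\vec{\ind}}
	= \PeriodComp{H_2 \setminus \infty_2,\vec{\ind}}
	= \PeriodComp{G_2,\vec{\ind}},
\end{equation*}
where $\vec{\ind}$ restricts to unit indices on both $G_1$ and $G_2$. This already settles the twist symmetry in position space, and what remains is the conversion to $\Period{}$.

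Finally I would invoke \eqref{eq:period-to-position}. With unit indices on $N_i \defas \abs{G_i}$ edges one has $\ind_e^{\dual} = \Dim/2 - 1$, so the relation reads $\PeriodComp{G_i,\vec{\ind}} = \Gamma(\Dim/2)\,\Gamma(\Dim/2-1)^{-N_i}\,\Period{G_i,\vec{\ind}}$. The two prefactors coincide once $N_1 = N_2$ and both graphs live in the same dimension $\Dim$, and then $\Period{G_1} = \Period{G_2}$ follows. The common dimension is forced because one conformal index vector fixes a single $\Dim$ for both completions; together with $\Dim = 2N_i/\loops{G_i}$ and the invariance of the edge, vertex and loop counts of $H_i$ under the twist, a short count — using $N_i = \abs{H_i} - \deg(\infty_i)$ and $\loops{G_i} = \loops{H_i} + 1 - \deg(\infty_i)$ — shows that $\deg(\infty_1) = \deg(\infty_2)$, whence $N_1 = N_2$ and $\loops{G_1} = \loops{G_2}$. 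I expect this last bookkeeping to be the only genuine obstacle: the symmetry itself is immediate from \autoref{thm:twist-period}, but one must check carefully that the $\Gamma$-prefactors in \eqref{eq:period-to-position} really cancel, i.e.\ that the completion vertices have equal valence so that the two uncompleted graphs have the same number of edges in a common dimension.
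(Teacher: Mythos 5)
Your proposal is correct and follows essentially the same route as the paper: complete both graphs, apply \autoref{thm:twist-period} with the deletion vertices chosen to be the completion vertices $\infty_1,\infty_2$, and specialize the resulting identity of meromorphic functions at the conformal indices induced by unit indices (legitimate since {\PlogDiv} keeps these values away from the singularities). The only difference is the final conversion step, where the paper uses the shortcut that in $\Dim=4$ with unit indices all prefactors in \eqref{eq:period-to-position} equal $\Gamma(1)=\Gamma(2)=1$, so $\PeriodComp{G_i}=\Period{G_i}$ outright; your $\Gamma$-prefactor bookkeeping showing $\deg(\infty_1)=\deg(\infty_2)$ and hence $N_1=N_2$ is valid (since $\Dim>2$ for {\PlogDiv} graphs) and has the small added benefit of covering dimensions other than $4$.
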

A proof in $\Dim=4$ dimensions is given in \cite[Theorem~2.2]{Schnetz:Census}. With \eqref{eq:period-position-space} in dimensional regularization \cite{Collins}, the invariance \cite[Equation~(1.18)]{Schnetz:GraphicalFunctions} of graphical functions under double transposition shows that \autoref{thm:twist-period} holds for arbitrary dimensions, see the paragraph after \cite[Theorem~3.20]{Schnetz:GraphicalFunctions}. We demonstrate the analogue for the Hepp bound.
\begin{theorem}\label{thm:twist-hepp}%
	Suppose that $G'$ is a twist of $G$. Then the rational functions $\HeppComp{G{\setminus}v,\vec{\ind}}$ and $\HeppComp{G'{\setminus}w,\vec{\ind}}$ coincide on the space of indices $\vec{\ind}$ that are conformal for both $G$ and $G'$.
\end{theorem}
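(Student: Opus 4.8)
The plan is to follow the proof of \autoref{thm:hepp-completion} closely, arguing by induction on the number of edges $N=\abs{G}$ and comparing the two rational functions through their poles. Since the twist provides a bijection $\EG{G'}\cong\EG{G}$, both $\HeppComp{G{\setminus}v,\vec{\ind}}$ and $\HeppComp{G'{\setminus}w,\vec{\ind}}$ are functions on the common space of indices that are conformal for $G$ and $G'$, and by \autoref{thm:hepp-completion} (applied separately to $G$ and to $G'$) they do not depend on the choice of $v$ and $w$; it therefore suffices to prove a single identity of rational functions. If $G$ fails to be $3$-vertex-connected, i.e.\ it can be disconnected by removing two vertices, then by \autoref{rem:HeppComp-2cut-zero} every vertex complement has vanishing Hepp bound and the identity is trivial (and likewise for $G'$), so I may assume that $G{\setminus}v$ and $G'{\setminus}w$ are biconnected. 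Then \autoref{lem:completion-poles} locates all poles, which are simple, on the hyperplanes $\sdc{\gamma}=0$ indexed by induced subgraphs $\gamma$ with $\gamma$ and the quotient biconnected, and the goal becomes to match the residues on both sides: once they agree, the difference $\Delta\defas\HeppComp{G{\setminus}v,\vec{\ind}}-\HeppComp{G'{\setminus}w,\vec{\ind}}$ is a pole-free, homogeneous degree-zero rational function, hence a constant.

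First I would compute each residue by the factorization \eqref{eq:completion-proof-residue}, $\Res_{\sdc{\gamma}=0}\HeppComp{G{\setminus}v,\vec{\ind}}=\tfrac{\Dim}{2}\,[\HeppComp{\gamma,\vec{\ind}}\cdot\HeppComp{(G{\setminus}v)/\gamma,\vec{\ind}}]_{\sdc{\gamma}=0}$, and then show that the double transposition $p\leftrightarrow q$, $r\leftrightarrow s$ on the side $T$ of the $4$-separation $\EG{G}=S\sqcup T$ descends to the two factors. The engine is \eqref{eq:sdc-exc}: on $\sdc{\gamma}=0$ the vertex representing $\gamma$ (or its complement) in the contracted graph is conformal, so the relevant minors stay conformal and the inductive hypothesis becomes available. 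Running a case analysis on how the vertex set $U$ of $\gamma=G[U]$ meets $S$, $T$ and the four shared vertices $\set{p,q,r,s}$, I expect two typical outcomes: when $\gamma$ lies on one side of the cut the twist fixes $\gamma$ and restricts to a twist (on a minor with fewer edges) of the quotient $(G{\setminus}v)/\gamma$, so induction identifies the quotient factors; symmetrically, when $\gamma$ already contains the gluing, the twist fixes the quotient and restricts to a twist of $\gamma$, and induction applies to the smaller graph $\gamma$. In either case the two factors of \eqref{eq:completion-proof-residue} match those of the analogous expansion for $G'{\setminus}w$, which proves equality of residues.

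The main obstacle is the combinatorial control of the double transposition under restriction and contraction in the \emph{straddling} case, where $U$ contains some but not all of $\set{p,q,r,s}$ and carries edges on both sides of the cut. Here I must verify that the move induced on $\gamma$, respectively on $(G{\setminus}v)/\gamma$, is again a genuine twist, a plain graph isomorphism, or a degeneration to a $2$- or $3$-separation that is handled by \autoref{thm:hepp-product} or by the vanishing in \autoref{rem:HeppComp-2cut-zero}; only then does the inductive hypothesis apply to both factors. Once all residues are matched and $\Delta$ is known to be a constant, I would pin it to zero by a degeneration away from the cut: for an edge $e$ disjoint from $\set{p,q,r,s}$, the limits $\ind_e\to\Dim/2$ and $\ind_e\to 0$ are governed by \eqref{eq:hepp-res-e} and commute with the twist, reducing $\Delta$ to the corresponding constant for a twist on a graph with one fewer edge, which vanishes by the inductive hypothesis; the base cases are the small graphs on which every admissible twist is a graph automorphism, where the identity is immediate.
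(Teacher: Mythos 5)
Your skeleton (induction, residue comparison via the factorization \eqref{eq:completion-proof-residue}, then pinning a pole-free degree-zero constant) is indeed the paper's strategy, but the case you yourself flag as ``the main obstacle'' is precisely where your proposal fails, and none of the three resolutions on your menu (induced twist, graph isomorphism, degeneration handled by \autoref{thm:hepp-product} or \autoref{rem:HeppComp-2cut-zero}) is the actual mechanism. Consider a vertex bipartition $X\sqcup Y$ with $X\cap\set{p,q,r,s}=\set{p,r}$ and $Y\cap\set{p,q,r,s}=\set{q,s}$, i.e.\ each transposed pair is separated by the cut. Here no move on $G[X]$ alone turns it into $G'[X]$, and the residues cannot be matched ``at the same vertex subset'' at all: the hyperplane $\sdc{G[X]}=0$ is a pole of $F_{G'}$ labelled by a \emph{different} subset $X'=(X\cap \VG{S})\cup(Y\cap \VG{T}{\setminus}Q)$, and the equality of residues rests on (i) decomposing both $G[X]=S_X^e\TwoSum{e}{f}T_X^f$ and $G[Y]=S_Y^e\TwoSum{e}{f}T_Y^f$ by \autoref{thm:hepp-product}, so that the residue becomes a product of \emph{four} Hepp bounds, (ii) observing that $G'[X']$ and $G'[Y']$ are the recombinations $S_X^e\TwoSum{e}{f}T_Y^f$ and $S_Y^e\TwoSum{e}{f}T_X^f$ yielding the same four-fold product, and (iii) checking $\sdc{G[X]}=\sdc{G'[X']}$ via the cut-edge formula \eqref{eq:sdc-exc}. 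This re-pairing of poles across recombined vertex subsets is the crux of the theorem (it is where the twist actually ``acts''), and your pole-by-pole matching scheme, which compares residues of $F_G$ and $F_{G'}$ at subgraphs induced by the same vertex set, cannot produce it. Relatedly, in the case $X\cap Q=\set{p,q}$ the correct statement is that $G[X]$ and $G'[X]$ are Whitney $2$-isomorphic (same cycle matroid, generally not isomorphic graphs), which is fine for the Hepp bound but is not a ``plain graph isomorphism.''

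Two further points in your constant-pinning step need repair. First, the base case is not ``immediate'': for $G_{4,4}$ the twisted graph is abstractly isomorphic to $G$ but the identity \eqref{eq:G44twist} permutes the indices nontrivially and is \emph{false} for generic indices; one must use the conformality constraints to force $\bar{a}=\bar{e}$, $\bar{b}=\bar{f}$ as in \autoref{ex:G44twist}. Second, your degeneration requires an edge disjoint from $\set{p,q,r,s}$, which need not exist (e.g.\ in $G_{5,4}$ or $G_{5,5}$ every edge has an endpoint in $Q$), so the induction can get stuck well above the base case; the paper instead pins the constant by evaluating at explicit uniform conformal indices on which the twist acts trivially, which works for all $n,m$. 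You also omit the preliminary reduction (fill in missing edges with index $\Dim/2$, merge parallel edges) to complete sides $S\cong K_n$, $T\cong K_m$; without it, \autoref{lem:completion-poles} does not give you the clean bijection between poles and vertex bipartitions that your residue bookkeeping relies on.
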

For the proof, we follow the strategy used above to establish completion invariance. We fill in missing edges with both ends in $S$ or $T$ by giving them the index $\Dim/2$, and we replace parallel edges within $S$ or $T$ by a single edge. It thus suffices to consider the case where $S\cong K_n$ and $T\cong K_m$ are complete, simple graphs on $n,m\geq 4$ vertices.

The corresponding graphs $G\cong G'$ are isomorphic to the quotient $G_{n,m}$ of the disjoint union $K_n\sqcup K_m$ by the identification of four vertex pairs. Any two of these shared vertices $Q=\set{p,q,r,s}$ are connected by precisely two edges in $G_{n,m}$, one coming from $S$ and the other coming from $T$.
\begin{figure}
	\centering%
	\renewcommand{\arraystretch}{1.5}%
	\begin{tabular}{c@{\qquad}c@{\qquad}c@{\qquad}c}
	$\Graph[0.4]{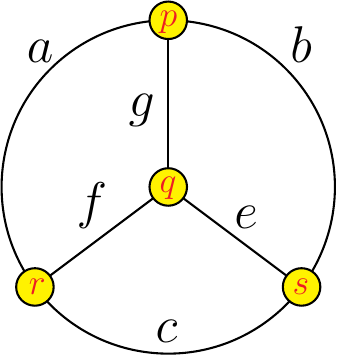}$ &
	$\Graph[0.4]{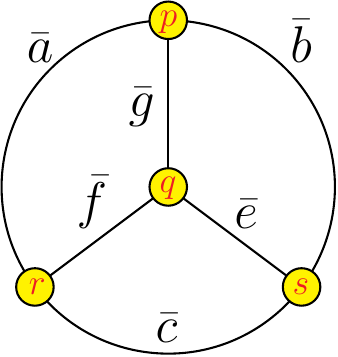}$ &
	$\Graph[0.4]{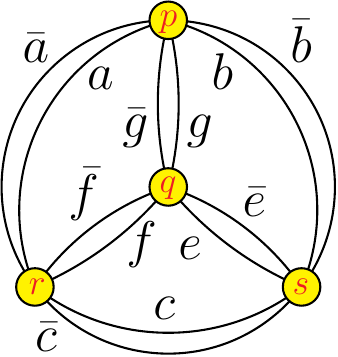}$ &
	$\Graph[0.4]{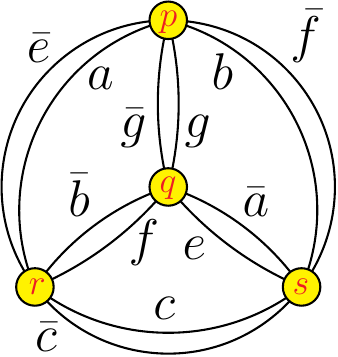}$ \\
	$S$ & $T$ & $G_{4,4}$ & $G_{4,4}'$ 
	\end{tabular}%
	\caption{The most general twist $p\leftrightarrow q$, $r\leftrightarrow s$ of a graph with four vertices.}%
	\label{fig:twist-44}%
\end{figure}
\begin{example}\label{ex:G44twist}
	The graph $G_{4,4}$ has only the four separating vertices and twelve edges. With the labels in \autoref{fig:twist-44}, the twist identity between $G_{4,4}{\setminus}q$ and $G'_{4,4}{\setminus}q$ claims that
	\begin{equation*}
		\HeppComp{\gTri,
			a+\bar{a}-\tfrac{\Dim}{2},
			b+\bar{b}-\tfrac{\Dim}{2},
			c+\bar{c}-\tfrac{\Dim}{2}
		}
		=\HeppComp{\gTri,
			a+\bar{e}-\tfrac{\Dim}{2},
			b+\bar{f}-\tfrac{\Dim}{2},
			c+\bar{c}-\tfrac{\Dim}{2}
		}.
		\tag{$\ast$}%
		\label{eq:G44twist}%
	\end{equation*}
	This is wrong for generic indices, but the conformality $\exc{r}=0$ in $G_{4,4}$ in $G_{4,4}'$ enforces
	\begin{equation*}
		a+c+f+\bar{a}+\bar{c}+\bar{f} = 2\Dim = a+c+f+\bar{b}+\bar{c}+\bar{e}
	\end{equation*}
	and therefore $\bar{a}+\bar{f}=\bar{b}+\bar{e}$. Similarly, we find $\bar{a}+\bar{b}=\bar{e}+\bar{f}$ from the constraints $\exc{p} = 0$. We conclude that $\bar{a}=\bar{e}$ and $\bar{b}=\bar{f}$, hence \eqref{eq:G44twist} is clearly true.
\end{example}
\begin{figure}
	\centering
	$\Graph[1.35]{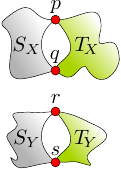}$ {\Large $\mapsto$}
	$\Graph[1.35]{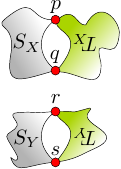}$
	\qquad\quad
	$\Graph[1.35]{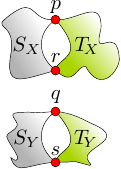}$ {\Large $\mapsto$}
	$\Graph[1.35]{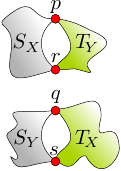}$
	\caption{The left shows the subgraphs $G[X]$ and $G[Y]$ in $G$ and in the twist $G'$, when $\set{p,q}\subseteq X$ and $\set{r,s}\subseteq Y$. In the case $\set{p,r}\subseteq X$ and $\set{q,s}\subseteq Y$ on the right, the twist $p\leftrightarrow q$, $r\leftrightarrow s$ swaps the parts $T_X$ and $T_Y$.}%
	\label{fig:twistproof}%
\end{figure}
\begin{proof}[Proof of \autoref{thm:twist-hepp}]
	As for completion, we perform an induction over the number of vertices of $G$. The minimum of $4$ vertices is \autoref{ex:G44twist}. In the induction step, it suffices to consider $G=G_{n,m}$ with $n+m-4\geq 5$ vertices, and we may assume that the twist identity is already proven for all graphs with fewer than $n+m-4$ vertices. 

	The function $F_G(\vec{\ind}) \defas \HeppComp{G{\setminus}v,\vec{\ind}}$ is independent of $v$ by \autoref{thm:period-completion}. According to \autoref{lem:completion-poles}, the poles of $F_G$ correspond to vertex bipartitions $\VG{G} = X\sqcup Y$ where $\abs{X},\abs{Y} \geq 2$. Let $\gamma \defas G[X]$ and recall from \eqref{eq:hepp-residue} that the corresponding residue is
	\begin{equation*}
		\Res_{\sdc{\gamma}=0} F_G(\vec{\ind})
		= \tfrac{\Dim}{2} 
		\HeppComp{\gamma,\vec{\ind}}
		F_{G/\gamma}(\vec{\ind})
		.
		\tag{$\dagger$}%
		\label{eq:res-twist-proof}%
	\end{equation*}
	Let $X$ denote the part of the partition that contains $p$. If $X$ intersects $Q\defas\set{p,q,r,s}$ only in $p$, then removing $p$ disconnects $\gamma$ if $X{\setminus}Q$ contains vertices on both sides $S$ and $T$ of the $4$-separation. For a non-zero residue we must have $\gamma\subseteq S$ or $\gamma \subseteq T$. In this situation, $\gamma\cong \gamma'\defas G'[X]$ is a subgraph of both $G$ and $G'$. Furthermore, the quotient $G'/\gamma'$ is a twist of $G/\gamma$, so we know that $F_{G/\gamma} = F_{G'/\gamma'}$ by induction. It then follows from \eqref{eq:res-twist-proof} that $F_G$ and $F_{G'}$ have the same residue at $\sdc{\gamma}=0$.

	If $\abs{X\cap Q}=3$, then we can apply the symmetric argument to $Y$ to show that the residues of $F_G$ and $F_{G'}$ at $\sdc{\gamma}=0$ coincide; recall that $\sdc{G[X]} = \sdc{G[Y]}$ by \eqref{eq:sdc-exc}.

	The case $X\supseteq Q$ means $Y\cap Q=\emptyset$, and we proceed in the same way. Connectedness of $Y$ implies that $Y\subseteq S$ or $Y\subseteq T$, and $G[Y]=G'[Y]$ are literally the same graphs.

	It remains to compare the residues when $\abs{X\cap Q} = \abs{Y\cap Q}=2$. Compute $F_{G/\gamma}$ by deleting the special vertex corresponding to $\gamma$, so we can rewrite \eqref{eq:res-twist-proof} symmetrically as
	\begin{equation*}
		\Res_{\sdc{G[X]}=0} F_G(\vec{\ind})
		= \tfrac{\Dim}{2}
		\HeppComp{G[X],\vec{\ind}}
		\HeppComp{G[Y],\vec{\ind}}
		.
		\tag{$\ddagger$}%
		\label{eq:res-twist-proof-XY}%
	\end{equation*}
	If $X\cap Q = \set{p,q}$, then $G[X]$ and $G'[X]$ differ only by turning around one of the two sides $S_X\defas \gamma\cap S$ and $T_X\defas\gamma \cap T$ (\autoref{fig:twistproof}). This operation shown on the right in \autoref{fig:product} does not change the cycle matroid \cite{Whitney:2isomorphic}. Hence the Hepp bounds of $G[X]$ and $G'[X]$ agree, similarly for $Y$. So again, $F_G$ and $F_{G'}$ have the same residue.

	Finally consider the case where $X\cap Q = \set{p,r}$ and $Y\cap Q=\set{q,s}$, illustrated on the right in \autoref{fig:twistproof}. Set $S_Y\defas S\cap G[Y]$ and $T_Y\defas T\cap G[Y]$ and write $S_Y^e$ for the graph $S_Y$ with one extra edge $e$ between $q$ and $s$; similarly $S_X^e$ is $S_X$ with an extra edge $e$ connecting $p$ and $r$. Define $T_X^f$ and $T_Y^f$ analogously. Then \eqref{eq:res-twist-proof-XY} becomes the product
	\begin{equation*}
		\Res_{\sdc{G[X]}=0} F_G(\vec{\ind})
		= \tfrac{\Dim}{2}
		\HeppComp{S_X^e,\vec{\ind}}
		\HeppComp{T_X^f,\vec{\ind}}
		\HeppComp{S_Y^e,\vec{\ind}}
		\HeppComp{T_Y^f,\vec{\ind}}
	\end{equation*}
	by applying \eqref{eq:Hepp-product} to $G[X] = S_X^e \TwoSum{e}{f} T_X^f$ and $G[Y] = S_Y^e \TwoSum{e}{f} T_Y^f$. We get the same product as the residue of the twist $F_{G'}$ at $\sdc{G'[X']}=0$ if we set
	\begin{equation*}
		X' \defas (X\cap \VG{S}) \cup (Y\cap \VG{T}{\setminus}Q)
		\quad\text{and}\quad
		Y' \defas (Y\cap \VG{S}) \cup (X\cap \VG{T}{\setminus}Q),
	\end{equation*}
	because then $G'[X'] = S_X^e \TwoSum{e}{f} T_Y^f$ and $G'[Y'] = S_Y^e \TwoSum{e}{f} T_X^f$ (see \autoref{fig:twistproof}). This shows that $F_G$ and $F_{G'}$ have the same residue at $\sdc{G[X]}=0$, because $\sdc{G[X]}=\sdc{G'[X']}$. Note that the cut edges $C=C_S\sqcup C_T$ connecting $G[X]$ with $G[Y]$ decompose into the cut $C_S=C\cap S$ joining $S_X$ with $S_Y$ and the cut $C_T=C\cap T$ between $T_X$ and $T_Y$. The same cut edges separate $G'[X']$ from $G'[Y']$, so by \eqref{eq:sdc-exc} we get
	\begin{equation*}
		\sdc{G'[X']}= \frac{1}{2}\left( \sum_{e\in C} \ind_e^{\dual} - \Dim \right)
		=\sdc{G[X]}.
	\end{equation*}

	In conclusion, we have shown that $F_G-F_{G'}$ has no poles, so it must be a constant rational number just as in the proof of \autoref{thm:hepp-completion}. Now pick the particular values
	\begin{equation*}
		\ind_e^{\dual}=\tfrac{\Dim}{n-1}
		,\qquad
		\ind_f^{\dual}=\tfrac{\Dim}{m-1}
		\quad\text{and}\quad
		\ind_h^{\dual} = \tfrac{\Dim}{6}\big( 1 - \tfrac{n-4}{n-1} - \tfrac{m-4}{m-1}\big)
	\end{equation*}
	uniformly for all edges $e\in S=K_n$ and $f\in T=K_m$ with at most one end in $Q$, and the 12 edges $h$ with both ends in $Q=\set{p,q,r,s}$. These indices are conformal and convergent: If $\abs{X\cap Q}=2$ and $\abs{X\cap \VG{S}}=2+k$ and $\abs{X\cap \VG{T}}=2+l$, we find that
	\begin{equation*}
		\sdc{G[X]} = \tfrac{\Dim}{2}\Big( 
			\tfrac{k(n-4-k)}{n-1}
			+\tfrac{l(m-4-l)}{m-1}
			+\tfrac{1}{3}
			+\tfrac{2}{3}\tfrac{n-4}{n-1}
			+\tfrac{2}{3}\tfrac{m-4}{m-1}
		\Big)
		> 0
	\end{equation*}
	because $k\leq \abs{\VG{S}{\setminus}Q}=n-4$ and $l\leq m-4$. In the case with $\abs{X\cap Q} \leq 1$ and say $X\subseteq \VG{S}$, we get $\sdc{G[X]}=\frac{\Dim}{2} \frac{k-1}{n-1} (n-1-k)>0$ where $k\defas \abs{X}$ and thus $2\leq k \leq n-3$.

	We may therefore evaluate $F_G$ and $F_{G'}$ at the indices $\vec{\ind}$ defined above. By construction, the indices on $G'\cong G=G_{n,m}$ are the same and therefore $F_G(\vec{\ind})=F_{G'}(\vec{\ind})$ is trivial.
\end{proof}
\begin{figure}\centering
	$\Graph[0.6]{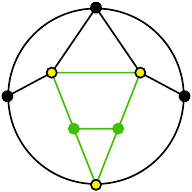}
	\quad\xleftarrow{\text{delete $p$}}\quad
	\Graph[0.6]{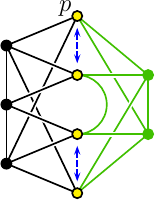}
	\quad\xleftrightarrow{\text{twist}\ }\quad
	\Graph[0.6]{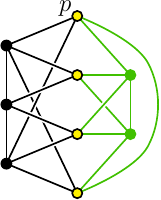}
	\quad\xrightarrow{\text{delete $p$}}\quad
	\Graph[0.6]{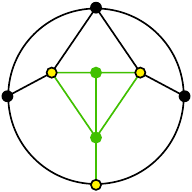}$
	\caption{The twist of $P_{7,4}$ (left) is $P_{7,7}$ (right) for the graphs listed in \cite{Schnetz:Census}.}%
	\label{fig:twist-P74-P77}%
\end{figure}
\begin{example}
	The vertex complements $G{\setminus}p$ and $G'{\setminus}p$ of a twist pair differ by replacing one side of a $3$-separation along $\set{q,r,s}$, see \autoref{fig:twist-P74-P77} for an explicit example. The twists where $T$ has this particular shape are studied as \emph{magic identities} in \cite{DrummondHennSmirnovSokatchev:MagicConformal4}.
\end{example}

\subsection{Fourier split and uniqueness}
\label{sec:Fourier-split}

The recently described Fourier split \cite{HuSchnetzShawYeats:Further} is a vast generalization of the old `uniqueness' relations \cite{Kazakov:Uniqueness,Kazakov:MethodOfUniqueness,Kazakov:TwoLectures}. It takes the planar dual on one side of a $3$-separation.
\begin{figure}
	\centering
	$G^{\TriG} = \Graph{3exttrilabels}$
	\quad
	$G^{\StarG} = \Graph{3extstarlabels}$
	$(G^{\StarG})^{\dual} = \Graph{3extduallabels} = H^{\TriG}$
	\caption{The triangle $G^{\TriG}$, star $G^{\StarG}$ and dual $H$ of an externally planar graph $G$.}%
	\label{fig:star-triangle-completion}%
\end{figure}
\begin{definition}
	Given a graph $G$ with three marked vertices $\set{u,v,w}$, the \emph{star over $G$} is the graph $G^{\StarG}$ obtained by adding a vertex `$\infty$' with edges $\alpha=u\infty$, $\beta=v\infty$ and $\gamma=w\infty$ as in \autoref{fig:star-triangle-completion}.
	The \emph{triangle over $G$} is the graph $G^{\TriG}$ created by adding three edges $\alpha=vw$, $\beta=wu$ and $\gamma=uv$ to $G$.
	We call $G$ \emph{externally planar} if $G^{\StarG}$ is planar, and then a \emph{dual} of $G$ is an externally planar graph $H$ such that $H^{\TriG} = (G^{\StarG})^{\dual}$.
\end{definition}
\begin{example}
	The star $G=\Graph[0.3]{starup}$ with terminals $\set{u,v,w}$ is externally planar with $G^{\StarG}=\Graph[0.23]{d222}$ and dual $(G^{\StarG})^{\dual} = \Graph[0.3]{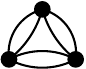}=H^{\TriG}$ for $H=\Graph[0.25]{tri}$. So, as externally planar graphs, the star $\Graph[0.3]{starup}$ and the triangle $\Graph[0.25]{tri}$ are dual to each other.
\end{example}
A marked graph $G$ is externally planar if and only if it admits a planar embedding such that $u$, $v$ and $w$ lie on the exterior face. The construction of a dual $H$ is illustrated in \autoref{fig:star-triangle-completion}. Note that the marked vertices of $H$ are in a well-defined sense `opposite' to those of $G$. For example, the vertex $u$ in $H$ corresponds to the face $vw\infty$ in $G^{\StarG}$.
\begin{definition}
	Suppose that the graph $G$ has an edge bipartition $\EG{G}=S\sqcup T$ that meets in precisely three vertices $\set{u,v,w}$, and that $S$ is externally planar with dual $S'$. The corresponding \emph{Fourier split} is the graph obtained by gluing $S'$ and $T$ along $\set{u,v,w}$.
\end{definition}
The Fourier split is shown schematically in \autoref{fig:fourier-split}; see \cite[Figure~3]{HuSchnetzShawYeats:Further} for an explicit example. We assume that $S$ and $T$ are connected, such that $\loops{G} = \loops{S}+\loops{T}+2$ and
\begin{equation*}
	0=\sdc{G} = \sdc{S} + \sdc{T} - \Dim.
\end{equation*}
When both $S$ and $T$ have the same degree of convergence $\sdc{S}=\sdc{T}=\Dim/2$, the Fourier split identity \cite[Theorem~2.8]{HuSchnetzShawYeats:Further} relates the periods of $G$ and its Fourier split.
\begin{figure}
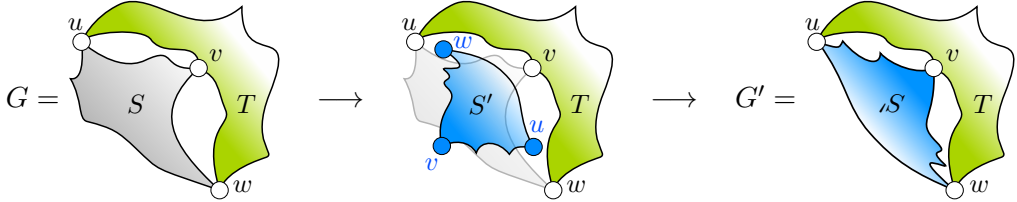

	\centering
	$G=\Graph{3cutlabels}
	\quad\longrightarrow\quad
	\Graph{3cutduallabels}
	\quad\longrightarrow\quad
	G'=\Graph{3cutdualgluelabels}$
	\caption{Fourier split $G'$ of $G$ along a $3$-separation $S\sqcup T$.}%
	\label{fig:fourier-split}%
\end{figure}
\begin{theorem}
	\label{thm:fourier-split-period}%
	Let $G'$ denote a Fourier split of $G$ for a bipartition $\EG{G}=S\sqcup T$. Write $\vec{\ind}$ and $\vec{b}$ for the indices on $T$ and $S$, respectively. Then $\tPeriod{G',\vec{\ind},\vec{b}} = \tPeriod{G,\vec{\ind}^{\dual},\vec{b}}$ holds on the space of indices such that $\sdc{S}=\sdc{T}$.
\end{theorem}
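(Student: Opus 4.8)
The plan is to adapt the inversion argument that proves the duality relation \eqref{eq:period-duality} so that it acts only on the subgraph $S$, using the hypothesis $\sdc[\vec{b}]{S}=\sdc[\vec{\ind}]{T}=\Dim/2$ (which follows from $\sdc{G}=0$ together with $\loops{G}=\loops{S}+\loops{T}+2$) to control the exponents. First I would reduce to the essential case: assume $S$ and $T$ connected, and use the period version of a parallel reduction (a special case of the $2$-sum product \eqref{eq:period-product}, taking the second factor to be a bond) to merge multiple edges joining two of the shared vertices $\set{u,v,w}$, so that $S$ may be treated as a genuine three-terminal piece. The base case, in which $S$ is a single star $\StarG$ on $\set{u,v,w}$, is exactly the classical uniqueness relation; it is itself an instance of \eqref{eq:period-duality} applied to $(S^{\StarG})^{\dual}=(S')^{\TriG}$, and I would record it first to fix conventions and the resulting index shift.

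Next I would write out the parametric integral \eqref{eq:period-mellin} and decompose the Kirchhoff polynomial across the $3$-separation. Introducing the three-terminal spanning-forest polynomials $\PsiPol_S^{P}\defas\sum_{F}\prod_{e\notin F}\SP_e$, summed over spanning forests $F$ of $S$ that induce the partition $P$ on $\set{u,v,w}$, a spanning tree of $G$ (the gluing of $S$ and $T$ along $\set{u,v,w}$) is a pair $(F_S,F_T)$ whose boundary partitions are complementary in the partition lattice of a three-element set. Enumerating these complementary pairs expresses $\PsiPol_G$ as a fixed bilinear form in the five polynomials $\PsiPol_S^{uvw},\PsiPol_S^{uv|w},\PsiPol_S^{uw|v},\PsiPol_S^{vw|u},\PsiPol_S^{u|v|w}$ and the analogous $T$-polynomials, and likewise $\PsiPol_{G'}$ in terms of the forest polynomials of $S'$.

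The core step is to invert the Schwinger parameters $\SP_e\mapsto 1/\SP_e$ on the edges of $S$ only. Just as the inversion underlying \eqref{eq:psitrop-dual} turns $\PsiPol_M(1/\SP)$ into $(\prod_e\SP_e)^{-1}\PsiPol_{\DMat{M}}(\SP)$, planar duality of the externally planar piece — encoded in $(S^{\StarG})^{\dual}=(S')^{\TriG}$ — makes this inversion send each three-terminal forest polynomial $\PsiPol_S^{P}$ to the corresponding forest polynomial of $S'$ times a monomial in $\prod_{e\in S}\SP_e$ whose exponent depends on the number of blocks of $P$. Substituting into the decomposed integrand of $\Period{G,\vec{\ind}^{\dual},\vec{b}}$ and carrying out the change of variables, the $\vec{b}$-weighted $S$-form reproduces the $S'$-integrand of $G'$, while the block-dependent monomials and the Jacobian combine with the projective normalization to shift the exponent seen by the $T$-edges. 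Here the constraint $\sdc{S}=\sdc{T}=\Dim/2$ is exactly what makes all block-dependent powers consistent and converts $\ind_e^{\dual}$ into $\ind_e$ on $T$, producing $\Period{G',\vec{\ind},\vec{b}}$.

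The main obstacle is the polynomial identity in the third step: verifying that planar duality of the externally planar $S$ permutes the \emph{entire} family of three-terminal forest polynomials by the correct ``dual partition'' rule — exchanging the all-connected and all-separate boundary partitions and pairing the two-block partitions as the star--triangle duality prescribes — and that the single degree condition $\sdc{S}=\sdc{T}=\Dim/2$ is precisely strong enough to reconcile the differing homogeneity degrees so that the $T$-exponents collapse to $\vec{\ind}$. I expect this bookkeeping, rather than the change of variables or the combinatorial decomposition, to carry the real content; once it is in place, homogeneity and the already-established base case close the argument as an identity of meromorphic functions on the locus $\sdc{S}=\sdc{T}$.
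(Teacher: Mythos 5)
Your proposal is essentially the paper's own proof: decompose $\PsiPol_G$ across the $3$-separation into the five three-terminal spanning-forest polynomials, invert the Schwinger parameters on $S$ only so that external planar duality---encoded, exactly as you say, in $(S^{\StarG})^{\dual}=(S')^{\TriG}$---permutes the forest polynomials of $S$ into those of $S'$ with block-dependent monomial factors, and then change variables in the Mellin integral, where $\sdc{S}=\sdc{T}=\Dim/2$ reconciles the homogeneity degrees and turns $\vec{\ind}^{\dual}$ into $\vec{\ind}$ on $T$. The ``main obstacle'' you flag is discharged in the paper precisely by the mechanism you name: applying the global inversion formula to $S^{\StarG}$ and its dual and comparing coefficients of the auxiliary edge variables $\alpha,\beta,\gamma$ in the expansion \eqref{eq:psipol-star} yields the dual-partition permutation (equation \eqref{eq:Hstar-Gstar}), while the bilinear decomposition of $\PsiPol_G$ and the identity $\alpha\beta+\alpha\gamma+\beta\gamma=\Phi_T/\PsiPol_T$ are imported from Brown--Yeats rather than re-derived by enumeration, and your preliminary parallel-edge reduction is unnecessary.
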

\begin{example}
	A $3$-valent vertex $p$ is called \emph{unique} \cite{Kazakov:Uniqueness} if the dual indices at $p$ sum to $\Dim$ (we call this \emph{conformal}, $\exc{p}=0$, in \autoref{def:excess}). The three edges $S$ between $p$ and its neighbours $\set{u,v,w}$ form one side of a $3$-separation with $\sdc{S}=\Dim/2$, so the period is unchanged if we replace the star $S$ by a triangle with the dual indices.
	This special case of a Fourier split amounts to a star-triangle ($\Delta-Y$) transformation $G\rightarrow G'$ and the corresponding identities are known as \emph{uniqueness relations} \cite{Kazakov:Uniqueness,Kazakov:MethodOfUniqueness,Kazakov:TwoLectures}.
\end{example}
The proof in \cite[Theorem~2.8]{HuSchnetzShawYeats:Further} uses duality for graphical functions \cite{GolzPanzerSchnetz:GfParam}. Our derivation below stays entirely in Schwinger parameters, and it adapts easily to the Hepp bound.
\begin{proof}[Proof of \autoref{thm:fourier-split-period}]
	If we expand the graph polynomial of $S^{\StarG}$ as a function of the parameters of the additional three edges $\set{\alpha,\beta,\gamma}$, we get \cite[Example~32]{Brown:PeriodsFeynmanIntegrals}
\begin{equation}
	\PsiPol_{S^{\StarG}}
	=(\alpha\beta+\alpha\gamma+\beta\gamma)\PsiPol_S
	+(\beta+\gamma)\Phi_S^u 
	+(\alpha+\gamma)\Phi_S^v 
	+(\alpha +\beta)\Phi_S^w 
	+\Phi_S
	\label{eq:psipol-star}%
\end{equation}
in terms of \emph{spanning forest polynomials} \cite{BrownYeats:SpanningForestPolynomials}. For example, $\Phi_S = \sum_F \prod_{e\in S{\setminus}F} \SP_e$ is a sum over all spanning forests $F$ of $S$ with precisely three connected components, each containing one of the marked vertices. We only use that $\Phi_S$ is homogeneous of degree $\sdc{S}+2$ and that $\Phi^u_S,\Phi^v_S,\Phi^w_S$ are homogeneous of degree $\sdc{S}+1$. Similarly, we have
\begin{equation*}
	\PsiPol_{S^{\TriG}}
	=\alpha\beta\gamma\PsiPol_S
	+\alpha(\beta+\gamma)\Phi_S^u 
	+\beta (\alpha+\gamma)\Phi_S^v 
	+\gamma(\alpha +\beta)\Phi_S^w 
	+(\alpha+\beta+\gamma)\Phi_S 
\end{equation*}
with the same spanning forest polynomials \cite[Example~33]{Brown:PeriodsFeynmanIntegrals}. Let $\lambda \defas \alpha\beta+\alpha\gamma+\beta\gamma$ and set $z_e \defas \lambda/\SP_e$ for $e\in S$. Consider the dual $H$ of $S$ such that $(H^{\StarG})^{\dual} = S^{\TriG}$, then
\begin{equation}
	\PsiPol_{H^{\StarG}}(\vec{z},\alpha,\beta,\gamma)
	= 
	\lambda^{-\loops{S}-1}
	\Big(\prod_{e\in S} z_e\Big)
	\PsiPol_{S^{\StarG}}(\vec{\SP},\alpha,\beta,\gamma)
	\tag{$\ast$}%
	\label{eq:Hstar-Gstar}%
\end{equation}
follows from duality \eqref{eq:psitrop-dual} and comparison with \eqref{eq:psipol-star}. Now specialize to $\alpha \defas \Phi^u_T/\PsiPol_T$, $\beta \defas \Phi^v_T/\PsiPol_T$ and $\gamma \defas \Phi^w_T/\PsiPol_T$. It is shown in \cite[Proposition~22]{BrownYeats:SpanningForestPolynomials} that $\alpha\beta+\alpha\gamma+\beta\gamma=\lambda$ is then equal to $\lambda=\Phi_T/\PsiPol_T$. Therefore, \eqref{eq:psipol-star} specializes to
\begin{equation*}
	\PsiPol_T \PsiPol_{S^{\StarG}}
	= \PsiPol_S \Phi_T
	+\Phi_S^u \Phi_T^v
	+\Phi_S^u \Phi_T^w
	+\Phi_S^v \Phi_T^u
	+\Phi_S^v \Phi_T^w
	+\Phi_S^w \Phi_T^u
	+\Phi_S^w \Phi_T^v
	+\Phi_S\PsiPol_T.
\end{equation*}
This expression is equal to $\PsiPol_G$, see \cite[Theorem~23]{BrownYeats:SpanningForestPolynomials}. We can thus write \eqref{eq:Hstar-Gstar} as
\begin{equation}
	\Phi_T^{\loops{S}+1} \PsiPol_{G'}(\vec{z},\vec{y})
	= \PsiPol_T^{\loops{S}+1} \Big( \prod_{e\in S} z_e \Big) \PsiPol_G(\vec{\SP},\vec{y})
	\label{eq:psipol-fourier-twist}%
\end{equation}
where $\vec{y}=(y_e)_{e\in T}$ denotes the Schwinger parameters of the edges in $T$, $\vec{\SP}=(\SP_e)_{e\in S}$ are the parameters of $S$ and $z_e=\lambda/\SP_e$. The Fourier split identity follows directly from \eqref{eq:psipol-fourier-twist} by the change of variables $\vec{\SP} \rightarrow \vec{z}$ in the Mellin integral \eqref{eq:period-mellin}, because
\begin{equation*}
	\frac{1}{[\PsiPol_G(\vec{\SP},\vec{y})]^{\Dim/2}}
	\prod_{e\in S} \SP_e^{\ind_e}
	= \frac{\lambda^{\sdc{S}-\Dim/2}}{[\PsiPol_{G'}(\vec{z},\vec{y})]^{\Dim/2}}
	\prod_{e\in S} z_e^{\Dim/2-\ind_e}.
	\qedhere
\end{equation*}
\end{proof}
\begin{theorem}
	\label{thm:fourier-split-hepp}%
	Let $G'$ denote a Fourier split of $G$ for a bipartition $\EG{G}=S\sqcup T$. Write $\vec{\ind}$ and $\vec{b}$ for the indices on $T$ and $S$, respectively. Then $\Hepp{}(G',\vec{\ind},\vec{b}) = \Hepp{}(G,\vec{\ind}^{\dual},\vec{b})$ holds on the space of indices such that $\sdc{S}=\sdc{T}$.
\end{theorem}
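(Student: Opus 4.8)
The plan is to transcribe the Schwinger-parameter proof of \autoref{thm:fourier-split-period} almost verbatim, replacing the graph polynomials and spanning forest polynomials by their tropicalizations. The one structural input that makes this work is that \emph{tropicalization commutes with the operations used there}: on the positive orthant the map sending a polynomial $P$ to its dominant monomial, i.e.\ $\tfrac{1}{t}\log P(\mathrm{e}^{t\,\cdot})$ as $t\to\infty$, sends products to products and sums to maxima, and—since every graph polynomial and every spanning forest polynomial $\Phi_S,\Phi_S^u,\dots$ has nonnegative coefficients, so that no cancellation ever occurs—it also commutes with substitution of positive rational expressions. Consequently the tropicalization of a product of such polynomials is the product of their tropicalizations, and every identity of positive rational functions built from them tropicalizes termwise.

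First I would tropicalize the single key identity \eqref{eq:psipol-fourier-twist}. As both sides are equal positive rational functions of $(\vec{\SP},\vec{y})$, their tropical limits agree, and by the commutation principle each limit is the termwise tropical expression. Writing $z_e^{\trop}=\lambda^{\trop}/\SP_e$ with $\lambda^{\trop}=\Phi_T^{\trop}/\PsiTrop_T$, this gives
\[
(\Phi_T^{\trop})^{\loops{S}+1}\,\PsiTrop_{G'}(\vec{z}^{\trop},\vec{y}) = (\PsiTrop_T)^{\loops{S}+1}\Big( {\textstyle\prod_{e\in S} z_e^{\trop}} \Big)\PsiTrop_G(\vec{\SP},\vec{y}).
\]
As a consistency check, the two ingredients feeding \eqref{eq:psipol-fourier-twist} tropicalize separately: the duality relation \eqref{eq:Hstar-Gstar} becomes exactly \eqref{eq:psitrop-dual}, while the Brown--Yeats identities \cite[Proposition~22 and Theorem~23]{BrownYeats:SpanningForestPolynomials}—for instance $\max\{\Phi_T^{u,\trop}\Phi_T^{v,\trop},\,\Phi_T^{u,\trop}\Phi_T^{w,\trop},\,\Phi_T^{v,\trop}\Phi_T^{w,\trop}\}=\Phi_T^{\trop}\,\PsiTrop_T$—descend from their polynomial counterparts, so the tropical specialization $\alpha=\Phi_T^{u,\trop}/\PsiTrop_T$, and similarly for $\beta,\gamma$, is consistent with $\max\{\alpha\beta,\alpha\gamma,\beta\gamma\}=\lambda^{\trop}$.

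With the tropical analogue of \eqref{eq:psipol-fourier-twist} at hand, I would run the same change of variables $\vec{\SP}\mapsto\vec{z}$ (with $\vec{y}$ held fixed) in the Hepp Mellin integral \eqref{eq:hepp-mellin}, working for indices in the convergence cone with $\sdc{S}=\sdc{T}=\Dim/2$; the resulting equality of rational functions then holds everywhere by analytic continuation. The integrand identity is the exact tropical copy of the last display in the proof of \autoref{thm:fourier-split-period}: the spurious factor is a power $(\lambda^{\trop})^{\sdc{S}-\Dim/2}$ of the $T$-variables, which disappears precisely under $\sdc{S}=\sdc{T}=\Dim/2$, and the substitution sends $\SP_e^{\ind_e}$ to $z_e^{\Dim/2-\ind_e}=z_e^{\ind_e^{\dual}}$ for $e\in S$, replacing the indices on the externally planar side by their duals. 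This is exactly the claimed identity.

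The main obstacle is the commutation principle itself: justifying that the dominant-monomial map commutes with the substitution $z_e=\lambda/\SP_e$—in which a denominator $\PsiPol_T$ and the shared variables $\vec{y}$ appear—and with the ratios of spanning forest polynomials entering the specialization. The content is the absence of cancellation, guaranteed by nonnegativity of all coefficients; making it airtight amounts to verifying that, in the $\tfrac{1}{t}\log$ limit, the cross terms produced by composition are always dominated by pure terms already present in the maximum. Once this is granted, each step of the period argument transfers mechanically, and the only genuinely new tropical input, the duality \eqref{eq:psitrop-dual}, is already available from \autoref{sec:duality}.
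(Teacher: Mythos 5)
Your proposal is correct and follows essentially the same route as the paper: tropicalize the key identity \eqref{eq:psipol-fourier-twist} using positivity (no cancellation) of the graph and spanning forest polynomials, then perform the same change of variables $\vec{\SP}\to\vec{z}$ in the Hepp Mellin integral \eqref{eq:hepp-mellin} under the condition $\sdc{S}=\sdc{T}=\Dim/2$. The ``main obstacle'' you flag is resolved in the paper by first multiplying \eqref{eq:psipol-fourier-twist} by $\PsiPol_T^{\loops{T}+1}$ to clear all denominators, so that both sides become honest polynomials in $(\vec{y},\vec{z})$ with positive coefficients, whence equality of coefficients makes the termwise tropicalization (your commutation principle) immediate.
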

\begin{proof}
	Multiply \eqref{eq:psipol-fourier-twist} by $\PsiPol_T^{\loops{T}+1}$ to clear denominators, then we obtain the identity
	\begin{equation*}
		\PsiPol_T(\vec{y})^{\loops{T}+1}
		\Phi_T(\vec{y})^{\loops{S}+1}
		\PsiPol_{G'}(\vec{z},\vec{y})
		= \Big( \prod_{e \in S} z_e \Big)
		\PsiPol_G\left( 
			\tfrac{\Phi_T(\vec{y})}{z_1},
			\ldots,
			\tfrac{\Phi_T(\vec{y})}{z_{\abs{S}}}
			,
			\PsiPol_T(\vec{y}) \cdot \vec{y}
		\right)
	\end{equation*}
	of polynomials in the variables $\vec{y}$ and $\vec{z}$. Each monomial in $\vec{y}$ and $\vec{z}$ appears with the same coefficient on the left and on the right, and spanning forest polynomials have only positive coefficients. So the maximum monomial that appears on either side is equal to
	\begin{equation*}
		\PsiTrop_T(\vec{y})^{\loops{T}+1}
		\PhiTrop_T(\vec{y})^{\loops{S}+1}
		\PsiTrop_{G'}(\vec{z},\vec{y})
		= \Big( \prod_{e \in S} z_e \Big)
		\PsiTrop_G\left( 
			\tfrac{\PhiTrop_T(\vec{y})}{z_1},
			\ldots,
			\tfrac{\PhiTrop_T(\vec{y})}{z_{\abs{S}}},
			\PsiTrop_T(\vec{y}) \cdot \vec{y}
		\right),
	\end{equation*}
	where $\PhiTrop_T$ 
	denotes the maximum monomial of $\Phi_T$. Set $\lambda^{\trop} \defas \PhiTrop_T/\PsiTrop_T$ and change the definition of $\vec{z}$ to $z_e \defas \lambda^{\trop}/\SP_e$, then we can write this tropical version of \eqref{eq:psipol-fourier-twist} as
	\begin{equation*}
		(\lambda^{\trop})^{\loops{S}+1} \PsiTrop_{G'}(\vec{z},\vec{y})
		= \Big(\prod_{e\in S} z_e \Big) \PsiTrop_G(\vec{\SP},\vec{y}).
	\end{equation*}
	Changing variables $\vec{\SP}\rightarrow \vec{z}$ in the Mellin integral \eqref{eq:hepp-mellin} as before proves the theorem.
\end{proof}
\begin{example}
	Consider the complete graph $K_4$ with labels as in \autoref{fig:completion-K5}. Its Hepp bound \eqref{eq:Hepp-K4-full} simplifies on the hyperplanes where a vertex or a triangle becomes unique. For example, on the subspace $H=\setexp{\vec{\ind}}{\ind_2+\ind_4+\ind_6=\Dim/2}$ we can replace the triangle $\set{1,3,5}$ by a unique star. The series \eqref{eq:series-hepp} reduces $G'=\Graph[0.2]{d222}$ to the dipole $\Bond{2}$ from \eqref{eq:hepp-bond}:
	\begin{align*}
		\Hepp{K_4,\vec{\ind}}|_{H}
		&= \Hepp{\Graph[0.2]{d222},\ind_1^{\dual},\ind_2,\ind_3^{\dual},\ind_4,\ind_5^{\dual},\ind_6}|_H
		\\
		&=
		\frac{
			\Dim/2
			(\ind_1^{\dual}+\ind_4)(\ind_3^{\dual}+\ind_6)(\ind_5^{\dual}+\ind_2)
		}{
			\ind_1^{\dual} \ind_2 \ind_3^{\dual} \ind_4 \ind_5^{\dual} \ind_6
			(\ind_1-\ind_4)(\ind_3-\ind_6)(\ind_5-\ind_2)
		}.
	\end{align*}
\end{example}

\section{\texorpdfstring{$\field^4$}{phi4} theory}
\label{sec:phi4}

The inequality $\Period{G} \leq \Hepp{G}$ from \eqref{eq:hepp-period-bound} was the initial motivation to study the Hepp bound. In this section we investigate this relation and discuss improved bounds.

For simplicity we will only consider {\PlogDiv} graphs in $\Dim=4$ dimensions with unit indices $\ind_e=1$ on each edge, that is, graphs $G$ such that
\begin{itemize}
	\item $\abs{E_G} = 2\loops{G}$ and
	\item $\abs{\gamma}>2\loops{\gamma}$ for every subgraph $\emptyset \neq \gamma \subsetneq G$.
\end{itemize}
These graphs are particularly interesting, since their periods contribute to the beta function of scalar field theories in $4$ dimensions of space-time. The best understood case is scalar $\field^4$ theory \cite{KleinertSchulteFrohlinde:CriticalPhi4,KompanietsPanzer:phi4eps6,Schnetz:NumbersAndFunctions}, which consists of graphs with degree at most four at each vertex. More than a thousand periods of these \emph{$\field^4$-graphs} are known \cite{PanzerSchnetz:Phi4Coaction,Schnetz:NumbersAndFunctions}.

\subsection{Period correlation}
Every {\PlogDiv} $\field^4$-graph $G$ is a vertex complement $G=H{\setminus}v$ in a $4$-regular graph $H=\widehat{G}$, called the \emph{completion} of $G$ (\autoref{ex:completion}). Graphs with the same completion have the same period and Hepp bound, see \autoref{sec:completion}. We call a $4$-regular graph $H$ \emph{primitive} if its vertex complements are {\PlogDiv}, so that $\Period{H{\setminus}v}$ and $\Hepp{H{\setminus}v}$ are finite (and independent of $v$). This condition is equivalent to $H$ being cyclically 6-edge-connected, that is, the only $4$-edge cuts of $H$ are those separating off a single vertex \cite[Proposition~2.1]{Schnetz:Census}.

The primitive $4$-regular graphs $H=P_{\ell,k}$ are enumerated in \cite{Schnetz:Census}, where $\ell=\loops{G}=\loops{H{\setminus}v}=\loops{H}-3$ refers to the loop number of the uncompleted graphs. For example, $P_{3,1}=K_5$ represents the graph $K_4=\Graph[0.15]{w3small}$ with $3$ loops from \eqref{eq:ws3-period} and \eqref{eq:ws3-hepp}.
With the recursion \eqref{eq:hepp-flag-recursive} we computed the Hepp bounds of all primitives with $\ell \leq 11$ loops.\footnote{These results are provided in the file \texttt{HeppBoundsPhi4.txt} with DOI \href{https://doi.org/10.5287/bodleian:ZVDqZnz78}{10.5287/bodleian:ZVDqZnz78}. It can be obtained from \url{https://doi.org/10.5287/bodleian:ZVDqZnz78} or from the \href{https://arxiv.org/abs/1908.09820}{arXiv}.} Periods are only known completely through $7$ loops \cite{PanzerSchnetz:Phi4Coaction}. \autoref{tab:hepp-7loop} summarizes the comparison.
\begin{table}
	\centering
	\begin{tabular}[t]{clrr}\toprule
		$\loops{G}$ & $\widehat{G}$ & $\Period{G}$ & $\Hepp{G}$ \\
		\midrule
		$1$ & $P_{1,1}$ & $1\phantom{.0}$ & $2$ \\
		$3$ & $P_{3,1}$ & $7.2$ & $84$ \\
		$4$ & $P_{4,1}$ & $20.7$ & $572$ \\
		\midrule
		$5$ & $P_{5,1}$ & $55.6$ & $3702$ \\
		    & $P_{3,1} \cdot P_{3,1}$ & $52.0$ & $3528$ \\
		\midrule
		$6$ & $P_{6,1}$ & $168.3$ & $26220$ \\
		    & $P_{3,1}\cdot P_{4,1}$ & $149.6$ & $24024$ \\
		    & $P_{6,2}$ & $132.2$ & $21912$ \\
		    & $P_{6,3}$ & $107.7$ & $18828$ \\
		    & $P_{6,4}$ & $71.5$ & $13968$ \\
		\bottomrule
	\end{tabular}\qquad
	\begin{tabular}[t]{clrr}\toprule
		$\loops{G}$ & $\widehat{G}$ & $\Period{G}$ & $\Hepp{G}$ \\
		\midrule
		$7$ & $P_{7,1}$ & $527.7$ & $190952$ \\
		    & $P_{4,1}\cdot P_{4,1}$ & $430.1$ & $163592$ \\
		    & $P_{3,1}\cdot P_{5,1}$ & $400.9$ & $155484$ \\
		    & $P_{7,2}$ & $380.9$ & $149426$ \\
		    & $P_{3,1}\cdot P_{3,1} \cdot P_{3,1}$ & $375.2$ & $148176$ \\
		    & $P_{7,3}$ & $336.1$ & $136114$ \\
		    & $\set{P_{7,4},P_{7,7}}$ & $294.0$ & $123260$ \\
		    & $P_{7,6}$ & $273.5$ & $116860$ \\
		    & $\set{P_{7,5},P_{7,10}}$ & $254.8$ & $110864$ \\
		    & $P_{7,9}$ & $216.9$ & $98568$ \\
		    & $P_{7,11}$ & $200.4$ & $92984$ \\
		    & $P_{7,8}$ & $183.0$ & $87088$ \\
		\bottomrule
	\end{tabular}%
	\caption{The periods and Hepp bounds for all primitive completed $\field^4$-graphs with up to $7$ loops, as illustrated in \autoref{fig:hepp-period}. The graphs $P_{7,4}$ and $P_{7,7}$ are Fourier dual and thus share the same period and Hepp bound; similarly for $P_{7,5}$ and $P_{7,10}$.}%
	\label{tab:hepp-7loop}%
\end{table}

For completed graphs $H$, the product \eqref{eq:Hepp-product} from a $2$-separation of $H{\setminus}v$ corresponds to a $3$-separation of $H$ \cite[Section~2.5]{Schnetz:Census}. So the entry $P_{3,1}\cdot P_{3,1}$ in \autoref{tab:hepp-7loop} corresponds to \autoref{ex:K4-product}. The only identities between different completed graphs up to $7$ loops are two dualities, which connect the planar uncompletions of $P_{7,4}$ and $P_{7,7}$, as well as the planar uncompletions of $P_{7,5}$ and $P_{7,10}$. The latter duality is shown in \autoref{fig:duality}, where $G=P_{7,5}{\setminus}9$ and $G^{\dual}=P_{7,10}{\setminus}v$ for any $v$ ($P_{7,10}$ is vertex transitive).

\autoref{tab:hepp-7loop} shows that the Hepp bound is much larger than the period, in fact by several orders of magnitude at higher loop orders (see \autoref{sec:improved-bounds} for improvements). Apart from an overall scale, however, the \emph{relative} variations of the Hepp bound follow the period surprisingly closely. This is illustrated at $7$ loops in \autoref{fig:hepp-period}.
To compare different loop orders, we use logarithmic coordinates. Recall that $\HeppComp{G}=\Hepp{G}/2$, so we set
\begin{equation*}
	\xi(G) \defas \left(
		\frac{\ln (\Hepp{G}/2)}{\loops{G}-1},
		\frac{\ln \Period{G}}{\loops{G}-1}
	\right)
\end{equation*}
because $\loops{G}-1$ is additive under $2$-sums \eqref{eq:loops-2sum}. This choice of variables linearizes the product \eqref{eq:Hepp-product}: If $G=G_1 \TwoSum{e}{f} G_2$ is a $2$-sum, then the point
$\xi(G)
	= \lambda \xi(G_1)
	+ (1-\lambda)\xi(G_2)
$
lies on the straight line between $\xi(G_1)$ and $\xi(G_2)$ at $\lambda=(\loops{G_1}-1)/(\loops{G}-1)$. 

\autoref{fig:correlation} of all known periods up to $11$ loops contains more than a thousand points and demonstrates the persistence of the correlation and its uniformity across loop orders.
\begin{figure}
	\centering%
	\includegraphics[width=0.87\textwidth]{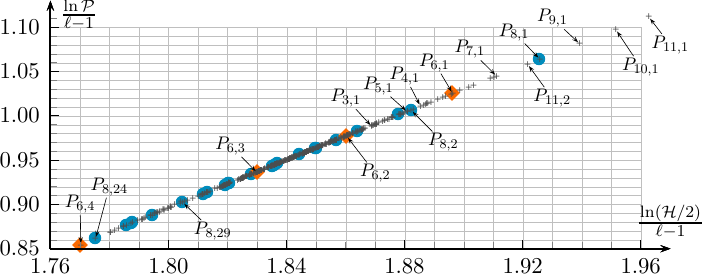}%
	\caption{$\field^4$ periods from \cite{PanzerSchnetz:Phi4Coaction} and their Hepp bounds (products not included). Graphs with $6$ and $8$ loops are highlighted with orange diamonds and blue circles.
}%
	\label{fig:correlation}%
\end{figure}
\begin{remark}
	Despite the strong correlation evident in \autoref{fig:hepp-period} and \autoref{fig:correlation}, we cannot strengthen \autoref{con:faithful} to the equivalence $\Period{G_1}\leq \Period{G_2} \Leftrightarrow \Hepp{G_1} \leq \Hepp{G_2}$ of inequalities. In other words, the plots are not monotone---though one needs to zoom in closely to notice this. For example, in \cite{PanzerSchnetz:Phi4Coaction} we find that
	\begin{align*}
		5548.00\approx\Period{P_{10,48}{\setminus}v}
		&<
		\Period{P_{10,255}{\setminus}v} \approx 5549.93,
		\quad\text{but}\\
		32743060 = \Hepp{P_{10,48}{\setminus}v} 
		&> 
		\Hepp{P_{10,255}{\setminus}v} = 32740360
		.
	\end{align*}
\end{remark}

\subsection{Unexplained identities}

\begin{figure}
	\centering\setlength{\tabcolsep}{13pt}
	\begin{tabular}{cccc}
		$\Graph[0.42]{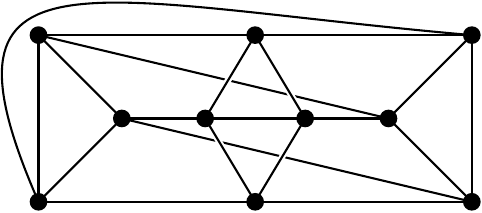}$ &
		$\Graph[0.45]{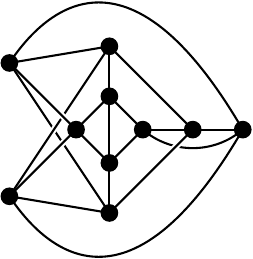}$ &
		$\Graph[0.5]{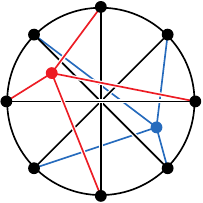} = \Graph[0.32]{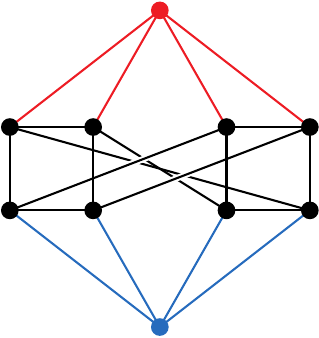}$ &
		$\Graph[0.5]{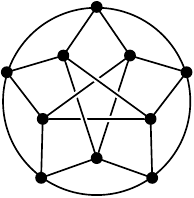}$ \\
		$P_{8,30}$ & $P_{8,31}$ & $P_{8,35}$ & $P_{8,36}$
	\end{tabular}
	\caption{The smallest $\field^4$ primitives with unexplained Hepp bound equalities.}%
	\label{fig:8-loop-hepp-coincidences}%
\end{figure}
The first examples of equal Hepp bounds in $\field^4$ theory that do not follow from any known symmetry are $\Hepp{P_{8,30}{\setminus}v} = \Hepp{P_{8,36}{\setminus}v}$ and $\Hepp{P_{8,31}{\setminus}v} = \Hepp{P_{8,35}{\setminus}v}$ as stated in \eqref{eq:hepp-8loop-pairs}. These four graphs are depicted in \autoref{fig:8-loop-hepp-coincidences} and structurally rather different:
\begin{center}
\begin{tabular}{rccccc}\toprule
	graph $\widehat{G}$ & $\Aut(\widehat{G})$ & triangles & ancestor & uncompletions & $c_2(G)$ \\
	\midrule
	$P_{8,30}$ & $\Z/2\Z$ & 4 & $P_{7,11}$ & 5 & $-z_3$ \\
	$P_{8,31}$ & $\Z/2\Z \times \Z/2\Z$ & 3 & $P_{7,8}$  & 7 & $-z_2$ \\
	$P_{8,35}$ & $D_8$    & 4 & $P_{8,35}$ & 2 & $-z_2$ \\
	$P_{8,36}$ & $D_5$    & 5 & $P_{8,36}$ & 2 & $-z_3$ \\
	\bottomrule
\end{tabular}
\end{center}
Both $P_{8,30}$ and $P_{8,31}$ have a double triangle and are thus \emph{descendants} of $P_{7,11}$ and $P_{7,8}$, respectively, in the terminology of \cite{Schnetz:Census}. Their Hepp bound partners $P_{8,35}$ and $P_{8,36}$ have no double triangles and larger, dihedral symmetry groups: $P_{8,35}$ arises by attaching two vertices to the circulant $C^8_{1,4}$ and $P_{8,36}$ features a cyclic arrangement of five triangles.

Apart from the equality of Hepp bounds, it is notable that also the $c_2$-invariants and the permanents coincide within both pairs of graphs. In particular the latter indicates a strong relationship of these pairs, because the permanent is a very rich invariant---except for relations following from the four symmetries of \cite{Schnetz:Census}, the permanents of two different graphs coincide only in very few cases \cite[Appendix~A]{Crump:ExtendedPermanent}.

We are thus led to \autoref{con:faithful}, motivated by the expectation that all these identities are not accidents but rather a consequence of some underlying combinatorial relationship of the graphs, like the symmetries in \autoref{sec:symmetries}. This structure remains to be identified, but we hope that a mechanism explaining equality of Hepp bounds, permanents and $c_2$ invariants might also force the periods to be equal.

All data from \cite{PanzerSchnetz:Phi4Coaction} is compatible with \autoref{con:faithful} and further corroborated by \cite{HuSchnetzShawYeats:Further}. We computed the Hepp bounds for all primitive $\field^4$ graphs with $\ell \leq 11$ loops, and whenever two graphs share the same Hepp bound and their periods are known, then the periods indeed coincide. The same holds for the known $c_2$ invariants \cite{PanzerSchnetz:Phi4Coaction,HuSchnetzShawYeats:Further} and the permanents computed in \cite{Crump:ExtendedPermanent}. Note that this is necessary for \autoref{con:faithful} to be viable, because it is conjectured that graphs with equal period have the same $c_2$ invariant \cite[Conjecture~5]{BrownSchnetz:K3phi4} and the same permanent \cite[Conjecture~1]{CrumpDeVosYeats:Permanent}.

\begin{table}
	\centering
	\begin{tabular}{rc@{$\;=\;$}c@{$\;+\;$}c@{$\;+\;$}c@{$\;+\;$}c}
		\toprule
		$\ell$ & primitives & \parbox[m]{14mm}{\centering\small Hepp bounds} & \parbox[m]{13mm}{\centering\small Fourier \& twist} & \parbox[m]{12mm}{\centering\small Fourier split} & \parbox[m]{20mm}{\centering\small unexplained identities} \\
		\midrule
			 7 &   11 &    9 &    2 &  0 &   0 \\
			 8 &   41 &   29 &   10 &  0 &   2 \\
			 9 &  190 &  129 &   55 &  1 &   5 \\ 
			10 & 1182 &  776 &  346 & 13 &  47 \\ 
			11 & 8687 & 6030 & 2411 & 55 & 191 \\ 
		\bottomrule
	\end{tabular}%
	\caption{Tally of known and unexplained identities of Hepp bounds of $\field^4$ graphs.}%
	\label{tab:phi4-identities}%
\end{table}
Out of the four graphs in \autoref{fig:8-loop-hepp-coincidences}, only $P_{8,31}$ could be computed in \cite{PanzerSchnetz:Phi4Coaction}.
The number of unknown $8$-loop $\field^4$ periods would drop from 8 in \cite{PanzerSchnetz:Phi4Coaction} to 6 if the conjectures
\begin{equation*}
	\Period{P_{8,30}{\setminus}v} \stackrel{?}{=} \Period{P_{8,36}{\setminus}v}
	\quad\text{and}\quad
	\Period{P_{8,31}{\setminus}v} \stackrel{?}{=} \Period{P_{8,35}{\setminus}v}
\end{equation*}
were to be verified.
At higher loop orders, there are many more period relations that would follow through \autoref{con:faithful} from unexplained identities of Hepp bounds. The summary in \autoref{tab:phi4-identities} starts out with the number of irreducible primitives, i.e.\ completed primitive $\field^4$ graphs that do not have a three-vertex cut. 
In this setup of \cite{Schnetz:Census}, product relations are absent and the completion symmetry is automatically taken care of.

The number of different Hepp bounds (third column) is smaller, because different completion classes can evaluate to the same Hepp bound. The bulk of these identities is explained by the twist and Fourier relations from \cite{Schnetz:Census}, and a few more identities follow from the Fourier split discussed in \cite{HuSchnetzShawYeats:Further}. From 8 loops onwards, there are leftover identities of Hepp bounds that are not explained by any of these symmetries (last column), the first examples of which are \eqref{eq:hepp-8loop-pairs}.

\subsection{Improved bounds}
\label{sec:improved-bounds}
The wheel graphs $\WS{\ell}$ illustrate the huge gap between the Hepp bound and the period: According to \autoref{prop:hepp-wheels}, their Hepp bounds grow by a factor of $9$ per loop, whereas the periods only gain a factor of $4$ per loop \cite{Broadhurst:Wheels}:
\begin{equation*}
	\Period{\WS{\ell}} = \binom{2\ell-2}{\ell-1} \mzv{2\ell-3}
	\sim \frac{4^{\ell-1}}{\sqrt{\pi \ell}}.
\end{equation*}
To improve the bound, we can drop the numerator $\abs{\gamma_1}\cdots \abs{G/\gamma_{\ell-1}}$ in the formula \eqref{eq:hepp-1pi-flags} that sums over flags $\gamma_1 \subsetneq \cdots\subsetneq \gamma_{\ell}=G$ of bridgeless subgraphs (see \autoref{thm:HeppOne}). This follows from an approximation of the graph polynomial $\PsiPol_G$ by products of one-loop graphs, within the sector associated to the flag,
\begin{equation}
	\HeppSec[\OnePI]{\gamma_{\bullet}}
	\defas \bigg\{
		0
		<
		\max_{e\in \gamma_1} \SP_e
		<
		\max_{e\in\gamma_2/\gamma_1} \SP_e
		<
		\cdots
		<
		\max_{e \in G/\gamma_{\ell-1}} \SP_e
	\bigg\}
	\subset \R_+^{N}.
	\label{eq:1pi-sector}%
\end{equation}
\begin{theorem}
	The period $\Period{G} \leq \HeppOneLoop{G}$ is bounded by 
	\begin{equation}
		\HeppOneLoop{G}
		\defas \sum_{\gamma_{\bullet} \in \Flags[\OnePI]{G}}
		\frac{
			\cOne{\abs{\gamma_1}} \cOne{\abs{\gamma_2\setminus \gamma_1}} \cdots \cOne{\abs{G\setminus \gamma_{\ell-1}}}
		}{
			\sdc{\gamma_1} \cdots \sdc{\gamma_{\ell-1}}
		},
		\label{eq:HeppOneLoop}%
	\end{equation}
	where the positive reals $\cOne{n} \in \R$ are defined for all integers $n\geq 1$ by
	\begin{equation}
		\cOne{n} \defas
		\int_{[0,1]^{n}}
		\frac{ \td \SP_1 \cdots \td \SP_{n}}{(\SP_1+\cdots+\SP_{n})^2} \delta\left( 1-\max_{1\leq i \leq n} \SP_i \right)
		.
		\label{eq:cOne}%
	\end{equation}
\end{theorem}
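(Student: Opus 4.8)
The plan is to prove the bound by the same sector decomposition that underlies \autoref{prop:hepp-1pi-flags}, but retaining the full one-loop polynomial of each ear instead of collapsing it to a single monomial. Throughout I work in $\Dim=4$ with $\ind_e=1$, so $\Period{G}=\int_{\Projective^G}\Omega/\PsiPol_G^{2}$ and $N=2\ell$. First I would cover $\Projective^G$ by the bridgeless sectors $\HeppSec[\OnePI]{\gamma_{\bullet}}$ of \eqref{eq:1pi-sector}. Recall the map $F_G\colon\perms{G}\to\Flags[\OnePI]{G}$ from the proof of \autoref{prop:hepp-1pi-flags}: if $\sigma$ increases the loop number at positions $i_1<\cdots<i_\ell$, then $\gamma_k^{\sigma}=\intM{G^{\sigma}_{i_k}}$. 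The edge $\sigma(i_k)$ completing the $k$-th loop lies in the ear $E_k\defas\gamma_k^{\sigma}\setminus\gamma_{k-1}^{\sigma}$, and since every edge of $E_k$ has index $\le i_k$ in $\sigma$, it carries the largest Schwinger parameter of that ear; hence $\max_{e\in E_k}\SP_e=\SP_{\sigma(i_k)}$, and these maxima increase with $k$. Thus $\HeppSec{\sigma}\subseteq\HeppSec[\OnePI]{F_G(\sigma)}$, and because the Hepp sectors cover $\Projective^G$ the bridgeless sectors do as well. As the integrand is positive this already gives, without needing the sectors to be disjoint,
\begin{equation*}
	\Period{G}
	\le
	\sum_{\gamma_{\bullet}\in\Flags[\OnePI]{G}}
	\int_{\HeppSec[\OnePI]{\gamma_{\bullet}}}\frac{\Omega}{\PsiPol_G^{2}}.
\end{equation*}

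The key lemma, which I expect to be the crux, is a factorizing lower bound on the graph polynomial attached to a bridgeless flag: writing $n_k\defas\abs{E_k}$, one has $\PsiPol_G\ge\prod_{k=1}^{\ell}\bigl(\sum_{e\in E_k}\SP_e\bigr)$. To prove it I would show that \emph{every} transversal $(e_1,\dots,e_\ell)$ with $e_k\in E_k$ satisfies $G\setminus\set{e_1,\dots,e_\ell}\in\STrees{G}$, so that each monomial $\prod_k\SP_{e_k}$ occurs with coefficient one in $\PsiPol_G=\sum_T\prod_{e\notin T}\SP_e$; distinctness of these non-tree sets and nonnegativity of the remaining terms then yield the inequality. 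The transversal claim follows by induction along the flag from the standard fact that a basis of a restriction composes with a basis of the contraction into a basis of the whole. The essential point is that each contraction $\gamma_k/\gamma_{k-1}$ is a circuit $\UM{n_k}{n_k-1}$: it has corank one by $\loops{\gamma_k}=1+\loops{\gamma_{k-1}}$; a coloop of it would be a bridge of $\gamma_k$, excluded since $\gamma_k$ is bridgeless; and for $n_k\ge2$ a loop would absorb the entire corank and force the remaining elements to be coloops, again impossible. Hence for $n_k\ge2$ the contraction is uniform, so deleting any single edge of $E_k$ leaves a basis, and for $n_k=1$ the lone edge is the forced non-tree edge. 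This is exactly the sense in which each ear behaves as a single one-loop (cyclic) subquotient.

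With the lemma in hand I would, inside a fixed sector, pass to ear scales and shapes: for $e\in E_k$ set $\SP_e=t_k u_e$ with $u_e\in[0,1]$ and $\max_{e\in E_k}u_e=1$, so that $t_k=\max_{E_k}\SP_e$ and the sector condition is $0<t_1<\cdots<t_\ell$. The measure factorizes as $\prod_e\td\SP_e=\prod_k t_k^{\,n_k-1}\,\td t_k\,\delta\!\left(1-\max_{E_k}u\right)\prod_{e\in E_k}\td u_e$, while the lower bound gives the denominator $\prod_k t_k^{2}\bigl(\sum_{E_k}u_e\bigr)^{2}$. Normalising the projective scale by $t_\ell=1$, the shape integral over each ear is $\int_{[0,1]^{n_k}}\delta(1-\max u)\,(\sum u)^{-2}\prod\td u_e=\cOne{n_k}$ by the very definition \eqref{eq:cOne}, and the remaining nested scale integral is
\begin{equation*}
	\int_{0<t_1<\cdots<t_{\ell-1}<1}\ \prod_{k=1}^{\ell-1}t_k^{\,n_k-3}\,\td t_k
	=\prod_{k=1}^{\ell-1}\frac{1}{(n_1-2)+\cdots+(n_k-2)}
	=\prod_{k=1}^{\ell-1}\frac{1}{\sdc{\gamma_k}},
\end{equation*}
where I use $\sum_{j\le k}(n_j-2)=\abs{\gamma_k}-2k=\sdc{\gamma_k}$ since $\Dim/2=2$ and $\loops{\gamma_k}=k$. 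Thus the sector integral equals $\cOne{n_1}\cdots\cOne{n_\ell}/(\sdc{\gamma_1}\cdots\sdc{\gamma_{\ell-1}})$, exactly the summand of \eqref{eq:HeppOneLoop}; summing over $\Flags[\OnePI]{G}$ gives $\Period{G}\le\HeppOneLoop{G}$.

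It remains only to note finiteness, which I would dispatch at the end: each $\cOne{n}$ is finite because $\max=1$ forces $\sum u_e\ge1$, and each $\sdc{\gamma_k}>0$ for the proper subgraphs $\gamma_k$ ($k<\ell$) precisely because $G$ is {\PlogDiv}, so every scale integral converges. The only delicate step is the matroid lemma of the second paragraph—identifying the ear contractions as uniform circuits and deducing that arbitrary transversals are cobases; the sector combinatorics and the change of variables are then routine.
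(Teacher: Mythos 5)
Your proof is correct and takes essentially the same route as the paper's: cover the integration domain by the bridgeless-flag sectors \eqref{eq:1pi-sector}, bound $\PsiPol_G$ from below by the product of the ear polynomials $\sum_{e\in\gamma_k\setminus\gamma_{k-1}}\SP_e$ (the paper's inequality \eqref{eq:Psi-flag-ineq}, which you reprove from scratch via your transversal/circuit lemma rather than citing it), and integrate ear by ear in scale/shape variables to produce the factors $\cOne{n_k}$ and $1/\sdc{\gamma_k}$. Your explicit check that $\HeppSec{\sigma}\subseteq\HeppSec[\OnePI]{F_G(\sigma)}$, so that the flag sectors cover and positivity makes any overlap harmless, is a detail the paper leaves implicit; the only slip is that ``the sector integral equals'' the stated product should read ``is at most'', since $\PsiPol_G$ was replaced by its lower bound.
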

\begin{example}
	With \autoref{tab:cOne}, the flags of $K_4$ from \autoref{ex:K4-1pi-flags} give the bound
	\begin{equation*}
		\HeppOneLoop{K_4} 
		= 12 \frac{\cOne{3} \cOne{2} \cOne{1}}{1\cdot 1} + 6 \frac{\cOne{4} \cOne{1} \cOne{1}}{2\cdot 1}
		= 72\ln 3-96 \ln 2
		\approx 12.56.
	\end{equation*}
\end{example}
\begin{proof}
	For any subgraph $\gamma \subseteq G$, the spanning trees $T\in\STrees{G}$ with the property that $F=\gamma \cap T$ is a spanning forest of $\gamma$ are easily seen to stand in bijection with pairs $(F,T')$ of spanning forests $F$ of $\gamma$ and spanning trees $T'$ of the quotient $G/\gamma$ \cite[Proposition~2.2]{Brown:FeynmanAmplitudesGalois}. Applied to a bridgeless flag $\gamma_{\bullet} \in \Flags[\OnePI]{G}$ with $\ell=\loops{G}$ elements, this gives an injection
	\begin{equation*}
		\STrees{\gamma_1} \times \STrees{\gamma_2/\gamma_1} \times \cdots \times \STrees{G/\gamma_{\ell-1}}
		\injects
		\STrees{G},\qquad
		(T_1,\ldots,T_{\ell}) \mapsto T_1 \cup \cdots \cup T_{\ell}
	\end{equation*}
	covering precisely those spanning trees $T\in\STrees{G}$ such that $T\cap \gamma_k$ is a spanning forest of $\gamma_k$, for each $1\leq k \leq \ell$.
	In terms of the graph polynomials, this gives the inequality
	\begin{equation}
		\PsiPol_G \geq \PsiPol_{\gamma_1} \PsiPol_{\gamma_2/\gamma_1} \cdots \PsiPol_{G/\gamma_{\ell-1}},
		\label{eq:Psi-flag-ineq}%
	\end{equation}
	which goes back to \cite[Proposition~3.5]{BlochEsnaultKreimer:MotivesGraphPolynomials} and is crucial for the desingularization of Feynman integrals \cite{Brown:FeynmanAmplitudesGalois,Schultka:ToricFeynman}.
	The quotients $\gamma_k/\gamma_{k-1}$ are cycles, such that $\PsiPol_{\gamma_k/\gamma_{k-1}} = \sum_{e\in \gamma_k\setminus\gamma_{k-1}} \SP_e$, and we can therefore estimate the period integral over the flag sector \eqref{eq:1pi-sector} by
	\begin{align*}
		\int_{\HeppSec[\OnePI]{\gamma_{\bullet}}} \frac{\Omega}{\PsiPol_G^2}
		&= \int_{\HeppSec[\OnePI]{\gamma_{\bullet}}} \frac{\Omega}{\PsiPol_G^2} \prod_{k=1}^{\ell} \int_0^{\infty} \td \lambda_k \ \delta\left( \lambda_k-\max_{e\in\gamma_k/\gamma_{k-1}} \SP_e \right)
		\\
		&\leq
		\int_{0<\lambda_1<\cdots<\lambda_{\ell}=1} \prod_{k=1}^{\ell-1} \lambda_k^{\abs{\gamma_k/\gamma_{k-1}}} \frac{\td \lambda_k}{\lambda_k^3}
		\prod_{k=1}^{\ell} \cOne{\abs{\gamma_k/\gamma_{k-1}}}
	\end{align*}
	by changing variables to $\SP_e = \lambda_k y_e$ for $e \in \gamma_k$. The iterated integral over the $\lambda_k$'s generates the product $\prod_{k=1}^{\ell-1}(\abs{\gamma_k}-2k) = \prod_{k=1}^{\ell-1} \sdc{\gamma_k}$ in the denominator of \eqref{eq:HeppOneLoop}.
\end{proof}
\begin{table}
	\centering
	\begin{tabular}{rccccc}\toprule
		$n$ & 1 & 2 & 3 & 4 & 5 \\\midrule
		$\cOne{n}$ & 1 & 1 & $6 \ln 2-3\ln 3$ & $36 \ln 3 - 56 \ln 2$ & $360 \ln 2 - 135\ln 3 - \frac{125}{2} \ln 5$ \\
		 &  &  & $\approx 0.863$ & $\approx 0.734$ & $\approx 0.630$ \\ \bottomrule
	\end{tabular}%
	\caption{The coefficients \eqref{eq:cOne} that appear in the numerator of the improved bound \eqref{eq:HeppOneLoop}.}%
	\label{tab:cOne}%
\end{table}
We can relabel the maximal Schwinger parameter in \eqref{eq:cOne} to be $\SP_1$, so that
	\begin{equation}
		\cOne{n}
		=
		n \int_{[0,1]^{n-1}}
		\frac{ \td \SP_2 \cdots \td \SP_{n}}{(1+\SP_2+\cdots+\SP_{n})^2}
		.
		\label{eq:cOne-1}%
	\end{equation}
	For small $n$, these integrals are listed in \autoref{tab:cOne}. For large $n$, they grow asymptotically like $\cOne{n} \sim 4/n+\asyO{n^{-2}}$, but we will not use this. From $n\geq 3$, they involve logarithms,
	\begin{equation}
		\cOne{n} = \frac{1}{(n-3)!} \sum_{k=2}^n \binom{n}{k} (-1)^{k+n+1} k^{n-2} \ln k,
		\label{eq:cOne-as-logs}%
	\end{equation}
	which follows in the limit $\rho\rightarrow -3$ from the elementary integral (for non-integral $\rho$)
	\begin{equation*}
		\prod_{e=2}^n \int_0^{1} \td \SP_e
		\left(1+\sum_{i=2}^n\SP_i\right)^{\rho+1}
		= \frac{1}{(\rho+2)\cdots(\rho+n)} \sum_{I \subseteq \set{2,\ldots,n}} (-1)^{n-\abs{I}} \left( 1+\abs{I} \right)^{\rho+n}
		.
	\end{equation*}
\begin{remark}
	The improved bound $\HeppOneLoop{G}$ does not respect any of the period symmetries exactly. For example, the two uncompletions in \autoref{fig:completion} differ slightly:
	\begin{equation*}
		\HeppOneLoop{\Graph[0.25]{5Rv}}
		= 6(\cOne{6}{+}3\cOne{5}{+}9\cOne{4}{+}6\cOne{3}{+}10\cOne{3}^2{+}7\cOne{3}\cOne{4})
		\approx 156.63
		\neq
		\HeppOneLoop{\Graph[0.25]{5Rw}}
		\approx 156.54.
	\end{equation*}
\end{remark}
\begin{lemma}
	The sequence $c_n$ defined in \eqref{eq:cOne} starts with $c_1=c_2=1$ and is strictly decreasing thereafter. In particular, $c_n \leq 1$ for all $n$.
\end{lemma}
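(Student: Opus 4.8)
The values $c_1=c_2=1$ are immediate from \eqref{eq:cOne-1}. Writing $I_k(a)\defas\int_{[0,1]^k}(a+t_1+\cdots+t_k)^{-2}\,\td t_1\cdots\td t_k$, that formula reads $c_n=n\,I_{n-1}(1)$, and $I_0(1)=1$, $I_1(1)=\int_0^1(1+t)^{-2}\,\td t=\tfrac12$ give $c_1=1$ and $c_2=2\cdot\tfrac12=1$. For the strict decrease I would prove the single inequality $(n+1)\,I_n(1)<n\,I_{n-1}(1)$ for all $n\ge2$, which is exactly $c_{n+1}<c_n$ and in particular forces $c_n\le1$. Two elementary facts about $a\mapsto I_k(a)$ set the stage. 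Peeling off one integration variable gives the recursion $I_k(a)=\int_a^{a+1}I_{k-1}(s)\,\td s$; hence $I_n(1)=\int_1^2 I_{n-1}(s)\,\td s$ and $I_k'(a)=I_{k-1}(a+1)-I_{k-1}(a)$. Moreover $I_k$ is strictly convex, because $I_k''(a)=6\int_{[0,1]^k}(a+T)^{-4}>0$ with $T\defas t_1+\cdots+t_k$, so the chord bound $\int_1^2 I_{n-1}(s)\,\td s<\tfrac12\big(I_{n-1}(1)+I_{n-1}(2)\big)$ holds strictly.

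The decisive extra relation comes from applying the divergence theorem on $[0,1]^k$ to the field $\mathbf F(t)\defas t\,(a+T)^{-2}$. The faces $t_i=0$ contribute nothing, each face $t_i=1$ contributes $I_{k-1}(a+1)$ by symmetry, and $\nabla\cdot\mathbf F=k(a+T)^{-2}-2T(a+T)^{-3}$ integrates, after writing $T=(a+T)-a$ and using $\int_{[0,1]^k}(a+T)^{-3}=-\tfrac12 I_k'(a)$, to $(k-2)I_k(a)-a\,I_k'(a)$. Substituting $I_k'(a)=I_{k-1}(a+1)-I_{k-1}(a)$ yields the identity
\begin{equation*}
	(k-2)\,I_k(a)+a\,I_{k-1}(a)=(k+a)\,I_{k-1}(a+1).
\end{equation*}
Specializing to $k=n$, $a=1$ expresses the off-diagonal value through the diagonal ones: $(n-2)\,I_n(1)+I_{n-1}(1)=(n+1)\,I_{n-1}(2)$.

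To finish I would eliminate $I_{n-1}(2)$. Setting $X\defas I_n(1)$ and $P\defas I_{n-1}(1)$, the identity gives $I_{n-1}(2)=\big((n-2)X+P\big)/(n+1)$, and feeding this into the strict chord bound $2X<P+I_{n-1}(2)$ produces $(n+4)X<(n+2)P$, i.e.\ $X<\tfrac{n+2}{n+4}P$. Since $(n+1)(n+2)\le n(n+4)$ for $n\ge2$ (equivalently $2\le n$), this gives $(n+1)X<n\,P$, that is $c_{n+1}<c_n$, the inequality being strict because the chord bound is. Combined with $c_1=c_2=1$ this proves the lemma. I expect the main obstacle to be precisely the sharpness: because $c_n\sim 4/n$ the sequence decreases only slowly, so the crude estimates $I_n(1)<I_{n-1}(1)$ or the chord bound used on their own lose the constant and at best yield $c_{n+1}<\tfrac{n}{n-2}\,c_n$-type bounds. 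The real content is recognizing that the exact boundary identity supplies the missing sharpness, and that one must cancel the shift-$2$ value $I_{n-1}(2)$ between that identity and the chord bound to recover exactly the factor $\tfrac{n}{n+1}$; the borderline case $n=2$, where $\tfrac{n+2}{n+4}=\tfrac{n}{n+1}$, is saved only by the strictness of convexity.
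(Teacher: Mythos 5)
Your proof is correct, and it takes a genuinely different route from the paper. The paper's proof applies convexity of $t\mapsto t^{-2}$ \emph{pointwise} to the integrand: writing $y_i$ for the sum of all variables except $\SP_i$, it bounds $(\SP_1+\cdots+\SP_n)^{-2}$ by $\tfrac{(n-1)^2}{n^3}\sum_i y_i^{-2}$, integrates, and obtains the recursive estimate $\cOne{n}/\cOne{n-1}\leq \tfrac{(n-1)^2(n-2)}{n^2(n-3)}$. That estimate is strictly below $1$ only for $n\geq 5$, so the paper completes the argument by reading off $\cOne{4}<\cOne{3}<\cOne{2}$ from the explicit values in \autoref{tab:cOne}. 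You instead move the convexity into the shift parameter: with $I_k(a)$ the shifted integral, you combine the strict trapezoid bound $I_n(1)=\int_1^2 I_{n-1}(s)\,\td s<\tfrac12\big(I_{n-1}(1)+I_{n-1}(2)\big)$ with the exact divergence-theorem identity $(k-2)I_k(a)+aI_{k-1}(a)=(k+a)I_{k-1}(a+1)$ (which I checked, e.g.\ at $k=1,2$, $a=1$), eliminate $I_{n-1}(2)$, and land exactly on $(n+1)I_n(1)<nI_{n-1}(1)$, i.e.\ $\cOne{n+1}<\cOne{n}$, uniformly for all $n\geq 2$. What each approach buys: the paper's is shorter per step---one Jensen-type inequality and an integration---but it loses sharpness for small $n$ and must fall back on numerical evaluations; yours needs the extra boundary identity, but in exchange it is uniform, requires no table, and handles the borderline case $n=2$ (where your bound $\tfrac{n+2}{n+4}$ degenerates to $\tfrac{n}{n+1}$) purely through the strictness of convexity. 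Your closing diagnosis is also accurate: since $\cOne{n}\sim 4/n$, any argument that discards the exact constant---including the paper's own bound, which is why it fails for $n=3,4$---cannot close the gap, and the content of the lemma is precisely recovering the factor $\tfrac{n}{n+1}$.
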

\begin{proof}
	Let $y_i \defas \SP_1+\cdots+\SP_{i-1}+\SP_{i+1}+\cdots+\SP_n$, then the convexity of $\SP\mapsto \SP^{-2}$ implies
	\begin{equation*}
		\left( \sum_{i=1}^n \SP_i \right)^{-2}
		= \left( \frac{1}{n-1} \sum_{i=1}^n y_i \right)^{-2}
		= \left( \frac{n-1}{n} \right)^2  \left( \frac{1}{n} \sum_{i=1}^n y_i \right)^{-2}
		\leq \frac{(n-1)^2}{n^3} \sum_{i=1}^n \frac{1}{y_i^2}.
	\end{equation*}
	Now set $\SP_1 \defas 1$ and integrate over $\SP_2,\ldots,\SP_{n}$, then using \eqref{eq:cOne-1} we find that
	\begin{equation*}
		\cOne{n}
		\leq
		\frac{(n-1)^2}{n^2}
		\sum_{i=1}^{n} \int_{[0,1]^{n-1}} \frac{\td\SP_2\cdots\td\SP_n}{y_i^2}
		= 
		\frac{(n-1)^2}{n^2} \left( \cOne{n-1} + \frac{\cOne{n-1}}{n-3} \right).
		\tag{$\ast$} \label{eq:cOne-ineq}%
	\end{equation*}
	Here we exploited that each of the $n-1$ summands with $i>1$ just integrates to $\frac{\cOne{n-1}}{n-1}$, whereas the first summand with $i=1$ contributes the term $\frac{\cOne{n-1}}{n-3}$ according to \eqref{eq:cOne} and
	\begin{equation*}
		\int_{[0,1]^{n-1}} \frac{\td \SP_2\cdots \td\SP_n}{(\SP_2+\cdots+\SP_{n})^2}
		= \int_0^1 \frac{t^{n-1}}{t^2}\frac{\td t}{t} \int_{[0,1]^{n-1}} \frac{\td y_2\cdots\td y_n}{(y_2+\cdots+y_{n})^2} \delta\left( 1-\max_{2\leq i \leq n} y_i \right),
	\end{equation*}
	upon changing variables to $t\defas \max_{2\leq i \leq n} \SP_i$ and $\SP_i=t y_i$. From \eqref{eq:cOne-ineq} then we see that
	\begin{equation*}
		\frac{\cOne{n}}{\cOne{n-1}} \leq 1- \frac{n^2-5n+2}{n^2(n-3)}<1
	\end{equation*}
	for all $n\geq 5$, and the remaining relations $\cOne{4}< \cOne{3} < \cOne{2}$ are demonstrated in \autoref{tab:cOne}.
\end{proof}
\begin{corollary}\label{thm:HeppOne}
	The period $\Period{G} \leq \HeppOne{G}$ is bounded by the rational number
	\begin{equation}
		\HeppOne{G}
		\defas \sum_{\gamma_{\bullet} \in \Flags[\OnePI]{G}}
		\frac{
			1
		}{
			\sdc{\gamma_1} \cdots \sdc{\gamma_{\loops{G}-1}}
		}
		\in \Q.
		\label{eq:HeppOne}%
	\end{equation}
\end{corollary}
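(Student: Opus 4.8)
The plan is simply to chain the two preceding results, since $\HeppOne{G}$ is obtained from $\HeppOneLoop{G}$ in \eqref{eq:HeppOneLoop} by replacing every numerator factor $\cOne{\abs{\gamma_k\setminus\gamma_{k-1}}}$ with $1$. First I would invoke the theorem proved above, which establishes $\Period{G}\leq \HeppOneLoop{G}$. It therefore suffices to prove the inequality $\HeppOneLoop{G}\leq \HeppOne{G}$ between the two sums over bridgeless flags $\gamma_{\bullet}\in\Flags[\OnePI]{G}$, and I would do this termwise.

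For the comparison I would use two facts, writing $\ell=\loops{G}$. First, the preceding lemma gives $\cOne{n}\leq 1$ for every $n\geq 1$, so each product of numerator factors satisfies $\cOne{\abs{\gamma_1}}\cdots\cOne{\abs{G\setminus\gamma_{\ell-1}}}\leq 1$. Second, because $G$ is {\PlogDiv} and each intermediate $\gamma_k$ (for $1\leq k\leq \ell-1$) is a non-empty proper bridgeless subgraph, every denominator factor $\sdc{\gamma_k}=\abs{\gamma_k}-2\loops{\gamma_k}$ is a strictly positive integer. Hence each summand of both flag sums is a positive rational number, and replacing its numerator by $1$ can only increase it. Summing over all $\gamma_{\bullet}\in\Flags[\OnePI]{G}$ yields $\HeppOneLoop{G}\leq \HeppOne{G}$, and combining with $\Period{G}\leq \HeppOneLoop{G}$ proves the bound $\Period{G}\leq \HeppOne{G}$.

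It remains to record that $\HeppOne{G}\in\Q$: since all the $\sdc{\gamma_k}$ are integers, each term $1/(\sdc{\gamma_1}\cdots\sdc{\gamma_{\ell-1}})$ is rational, and the flag sum is finite. I expect no genuine obstacle in this corollary—the entire analytic content (the graph-polynomial factorization inequality \eqref{eq:Psi-flag-ineq} and the sector estimate) has already been absorbed into the preceding theorem, and the monotonicity $\cOne{n}\leq 1$ into the preceding lemma. The corollary is just their combination, the only extra ingredient being positivity of the denominators $\sdc{\gamma_k}$, which is exactly the defining property of a {\PlogDiv} graph.
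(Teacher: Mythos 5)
Your proposal is correct and follows exactly the paper's own argument: the paper derives the corollary in one line from the preceding theorem $\Period{G}\leq\HeppOneLoop{G}$ together with the lemma $\cOne{n}\leq 1$, which is precisely your termwise comparison. Your additional remark that the denominators $\sdc{\gamma_k}$ are strictly positive (by the {\PlogDiv} hypothesis) is a point the paper leaves implicit, but it is needed for the termwise inequality and you handle it correctly.
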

\begin{table}
	\centering
	\begin{tabular}[t]{lrrr}\toprule
		$\widehat{G}$ & $\Period{G}$ & $\Hepp{G}$ & $\HeppOne{G}$ \\
		\midrule
		$P_{1,1}$ & $1\phantom{.0}$ & $2$ & $1$ \\
		$P_{3,1}$ & $7.2$ & $84$ & $15$\\
		$P_{4,1}$ & $20.7$ & $572$ & $59$\\
		\midrule
		$P_{5,1}$ & $55.6$ & $3702$ & $224$\\
		$P_{3,1} \cdot P_{3,1}$ & $52.0$ & $3528$ & $216$ \\
		\bottomrule
	\end{tabular}\qquad
	\begin{tabular}[t]{lrrr}\toprule
		$\widehat{G}$ & $\Period{G}$ & $\Hepp{G}$ & $\HeppOne{G}$ \\
		\midrule
		$P_{6,1}$ & $168.3$ & $26220$ & $909.5$\\
		$P_{3,1}\cdot P_{4,1}$ & $149.6$ & $24024$ & $852\phantom{.0}$ \\
		$P_{6,2}$ & $132.2$ & $21912$ & $795.5$\\
		$P_{6,3}$ & $107.7$ & $18828$ & $709.5$\\
		$P_{6,4}$ & $71.5$ & $13968$ & $567\phantom{.0}$\\
		\bottomrule
	\end{tabular}%
	\caption{Comparison of the Hepp bound and its improvement \eqref{eq:HeppOne} for $\ell \leq 6$ loops.}%
	\label{tab:hepp-improved}%
\end{table}
This slightly bigger, but rational bound follows from \eqref{eq:HeppOneLoop} because the products of $\cOne{k}$ in the numerators are $\leq 1$.
For \autoref{ex:K4-1pi-flags}, $\HeppOne{K_4} = 12\cdot \frac{1}{1\cdot 1} + 6\cdot \frac{1}{2\cdot 1} = 15$ is much closer to $\Period{K_4} \cong 7.2$ than $\Hepp{K_4}=84$. More comparisons are given in \autoref{tab:hepp-improved}.

In contrast to $\HeppOneLoop{G}$, the rational bound $\HeppOne{G}$ \emph{does} respect completion symmetry $\HeppOne{G{\setminus}v}=\HeppOne{G{\setminus}w}$, and the same proof as for \autoref{thm:hepp-completion} applies. But the other symmetries fail: Duality is violated by the pair $(P_{7,5}{\setminus}9)^{\dual} \cong P_{7,10}{\setminus}v$ from \autoref{fig:duality}:
\begin{equation*}
	\HeppOne{P_{7,5}{\setminus}9}
	= \tfrac{5015}{2}
	\neq
	2517
	=\HeppOne{P_{7,10}{\setminus}v}.
\end{equation*}
For the twist from \autoref{fig:twist-P74-P77}, we find the values
\begin{equation*}
	\HeppOne{P_{7,4}{\setminus}p} = \tfrac{10855}{4}
	\neq \tfrac{5443}{2} = \HeppOne{P_{7,7}{\setminus}p},
\end{equation*}
and there is no product relation between $225=\HeppOne{K_4}^2$ and the $2$-sum
\begin{equation*}
	\HeppOne{\Graph[0.25]{5Rw}}
	=
	\HeppOne{\Graph[0.25]{w3small} \TwoSum{e}{f} \Graph[0.25]{w3small}}
	= 216
	.
\end{equation*}
\begin{remark}
	An analogous improvement applies to the formula \eqref{eq:hepp-flat-flags} expressing the Hepp bound as a sum over flags of flats (induced subgraphs). The integration sector is then
	\begin{equation*}
		\HeppSec[\Flat]{\gamma_{\bullet}}
		\defas \bigg\{
			0
			<
			\min_{e\in \gamma_1} \SP_e
			<
			\min_{e\in\gamma_2/\gamma_1} \SP_e
			<
			\cdots
			<
			\min_{e \in M/\gamma_{r-1}} \SP_e
		\bigg\}
		\subset \R_+^{N},
	\end{equation*}
	and we arrive at the improved rational bound $\HeppOneFlat{G} \geq \Period{G}$ given by
	\begin{equation*}
		\HeppOneFlat{G}
		\defas \sum_{\gamma_{\bullet} \in \Flags[\Flat]{G}}
		\frac{
			1
		}{
			\sdc{\gamma_1} \cdots \sdc{\gamma_{\rank{G}-1}}
		}
		\in \Q.
	\end{equation*}
	For planar graphs, \autoref{rem:dual-1pi-flat} shows $\HeppOneFlat{G} = \HeppOne{G^{\dual}}$, but otherwise the improved bounds are unrelated. The bound $\HeppOneFlat{G}$ violates all symmetries, even completion:
	\begin{equation*}
		\HeppOneFlat{\Graph[0.25]{5Rv}}
		= 218
		\neq
		216
		= \HeppOneFlat{\Graph[0.25]{5Rw}}
		= \HeppOne{\Graph[0.25]{5Rw}}
		= \HeppOne{\Graph[0.25]{5Rv}}.
	\end{equation*}
\end{remark}
\begin{proposition}
	The improved Hepp bounds of the wheel graphs with $n$ loops are
	\begin{equation*}
		\HeppOne{\WS{n}} = \HeppOneFlat{\WS{n}}
		= 
		-\frac{n(n-3)}{2(n-2)}
		+\frac{2}{4^n} \sum_{k=1}^n \binom{2n-2k}{n-k} \binom{2k}{k} k \cdot 5^{n-k}
		\sim \frac{5^{n+1/2}}{8\sqrt{\pi n}}.
	\end{equation*}
\end{proposition}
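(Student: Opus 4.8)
The plan is to imitate the computation in the proof of \autoref{prop:hepp-wheels}, tracking where the suppressed numerators of the improved bound change the generating functions. First I would dispose of the equality $\HeppOne{\WS{n}} = \HeppOneFlat{\WS{n}}$ by duality: the wheel is planar and self-dual, $\WS{n}^{\dual}\cong\WS{n}$, and the remark preceding the statement records $\HeppOneFlat{G}=\HeppOne{G^{\dual}}$ for planar $G$. Hence the two improved bounds agree, and it suffices to evaluate the flat version $\HeppOneFlat{\WS{n}}$, for which the cut decomposition of \autoref{prop:hepp-wheels} is available.

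Next I would set up the improved cut recursion. Passing from $\Hepp{\WS{n}}$ to $\HeppOneFlat{\WS{n}}$ means deleting the numerator $\abs{\gamma_1}\cdots\abs{M\setminus\gamma_{r-1}}$ from \eqref{eq:hepp-flat-flags}; in the recursion \eqref{eq:hyperplane-block-recursion} this amounts to replacing the top cut factor $\ind^{\dual}_{M/\gamma}/\prod_{e\notin\gamma}\ind^{\dual}_e$ (the size of the cut, at unit indices in $\Dim=4$) by $1$, while keeping every combinatorial multiplicity and computing the blocks with the improved bound. Forest blocks (the rim paths and the hub) still contribute $1$, since for a forest at unit indices all numerator factors are trivial and $\HeppOneFlat$ coincides with $\HeppFlat$. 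The cycle block changes: deleting two edges of $\Cycle{n}$ leaves a forest, so each of the $\binom{n}{2}$ hyperplanes contributes $1$ and $\HeppOneFlat{\Cycle{n}}=\binom{n}{2}$, exactly half the value feeding \autoref{prop:hepp-wheels}. Thus the wheel recurrence \eqref{eq:wheel-hepp-cuts} becomes
\[
	\HeppOneFlat{\WS{n}}
	= \frac{\HeppOneFlat{\Cycle{n}}}{\sdc{\Cycle{n}}}
	+ n\sum_{k=1}^{n-1}\HeppOneFlat{\Fan{n-k}},
\]
with the same fan recurrence but every cut-size coefficient ($n$, $n-k+1$, $k-j+3$) set to $1$.

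The core of the argument is then the improved fan series. Writing $\widetilde{\gf{F}}(z)\defas\sum_{n\geq1}\HeppOneFlat{\Fan{n}}\,z^n$ and translating the modified fan recurrence into generating functions exactly as before, I expect the quadratic
\[
	\widetilde{\gf{F}}
	= \frac{z}{1-z}
	+ \frac{2z}{1-z}\,\widetilde{\gf{F}}
	+ \frac{z}{1-z}\,\widetilde{\gf{F}}^{\,2},
\]
equivalently $(1-z)\widetilde{\gf{F}} = z\,(1+\widetilde{\gf{F}})^2$, whose regular solution is $\widetilde{\gf{F}} = \bigl((1-3z)-\sqrt{(1-z)(1-5z)}\bigr)/(2z)$. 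The discriminant $(1-z)(1-5z)=1-6z+5z^2$ replaces the factor $(1-9z)$ of the exact bound; this single change is the source of the $5^{n-k}$ in the closed form and of the dominant singularity at $z=\tfrac15$ governing the growth $\sim 5^{n+1/2}/(8\sqrt{\pi n})$. Assembling the wheel series, the one genuinely new feature is that the cycle term is no longer differentiated: since its cut-size factor $n$ has been dropped, I get
\[
	\widetilde{\gf{W}}(z)
	= \tfrac12\,\gf{C}(z)
	+ z\frac{\td}{\td z}\!\left[\frac{z}{1-z}\,\widetilde{\gf{F}}(z)\right],
\]
in contrast to \eqref{eq:wheel-hepp-gf-cuts} where $\gf{C}$ also sits under $z\,\td/\td z$. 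Extracting coefficients by the binomial series (the $\log(1-z)$ in $\tfrac12\gf{C}$ producing the rational prefactor $-\tfrac{n(n-3)}{2(n-2)}$, and the $(1-5z)^{-1/2}(1-z)^{-1/2}$-part of the second summand producing $\tfrac{2}{4^n}\sum_k\binom{2n-2k}{n-k}\binom{2k}{k}k\,5^{n-k}$) yields the claimed formula, and singularity analysis at $z=\tfrac15$ gives the asymptotics.

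I expect the main obstacle to be the bookkeeping in the second step: cleanly separating, for every type of wheel and fan cut, the cut-size weight that must be discarded from the multiplicity and block structure that must be retained, and correctly handling the degenerate and low-order cases (the small-$n$ corrections that produced the $-3z^2$ term in the original derivation), so that the split of $z\,\td/\td z$ between the two summands is justified. Everything downstream is a routine repetition of the generating-function manipulation already carried out for \autoref{prop:hepp-wheels}.
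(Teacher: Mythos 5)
Your proposal is correct and is essentially the paper's own proof: self-duality $\WS{n}^{\dual}=\WS{n}$ identifies the two improved bounds, and the computation is the cut recursion \eqref{eq:hyperplane-block-recursion} with the cut-size numerator $\ind^{\dual}_{M/\gamma}$ dropped, pushed through the same generating-function machinery as \autoref{prop:hepp-wheels}; your modified fan quadratic with discriminant $(1-z)(1-5z)$, the halved cycle term $\tfrac12\gf{C}(z)=\frac{z^3(4-3z)}{2(1-z)^2}-z^2\log(1-z)$, the observation that the cycle term is no longer differentiated, and the harmlessness of the low-order correction (it only shifts the coefficient of $z^2$, irrelevant for $n\geq 3$) are all exactly right. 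As a bonus, carrying your split to completion gives $z\frac{\td}{\td z}\big[\tfrac{z}{1-z}\widetilde{\gf{F}}(z)\big]=-\tfrac{z}{(1-z)^2}+\tfrac{z}{\sqrt{(1-5z)(1-z)^3}}$ and hence the series $\frac{z^3(z-2)}{2(1-z)^2}-z^2\log(1-z)+\frac{z}{\sqrt{(1-5z)(1-z)^3}}$, whose coefficients reproduce the stated closed form ($15$ at $n=3$, $59$ at $n=4$, matching \autoref{tab:hepp-improved}), whereas the generating function printed in the paper's proof carries $-4z^2\log(1-z)$ — a typo, since it would give $18$ at $n=3$ — so your bookkeeping is the consistent one.
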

\begin{proof}
	Note that $\WS{n}^{\dual}=\WS{n}$ is self-dual. For the improved bound $\HeppOneFlat{\WS{n}}$, the numerator $\ind_{M/\gamma}^{\dual}$ disappears from the recursion \eqref{eq:hyperplane-block-recursion}. With this modification, the calculation as in the proof of \autoref{prop:hepp-wheels} produces the generating function
	\begin{equation*}
		\sum_{n=3}^{\infty} \HeppOne{\WS{n}} z^n
		= \frac{z^3(z-2)}{2(1-z)^2} - 4z^2\log(1-z) + \frac{z}{\sqrt{(1-5z)(1-z)^3}}. \qedhere
	\end{equation*}
\end{proof}
The numerator $\abs{\gamma_1}\cdots\abs{G/\gamma_{\ell-1}}$ in the flag formula \eqref{eq:hepp-1pi-flags} is at most $2^{\ell}$, because the factors add to $\abs{\EG{G}}=2\ell$. The improved bound is therefore at least $\HeppOne{G} \geq \Hepp{G}/2^\ell$, so the gain of $5/9$ per loop in the ratio $\HeppOne{\WS{n}}/\Hepp{\WS{n}}$ is almost as good as possible.

Up to the overall scale, the improved bounds show a very similar correlation as in \autoref{fig:correlation} (see \autoref{tab:hepp-improved}). The most notable difference is that a single period branches into several data points with slightly different abscissa, due to the violation of symmetries.

\section{Polyhedral geometry}
\label{sec:geometry}

In the spirit of tropical geometry, let us change variables from the Schwinger parameters $\SP_i=e^{-\SPlog_i}$ to their logarithms $\SPlog_i=-\log\SP_i$. Then $\prod_{i \notin T} \SP_i = \exp\left( -\vec{\SPlog}\cdot \uv{T^c} \right)$ in terms of the characteristic vector 
$\uv{T^c} = \sum_{i \notin T} \uv{i} \in \R^{N}$
of the complement $T^c$ of any spanning tree $T$.
In the affine chart $\SP_1=1$, the Hepp bound integrand from \eqref{eq:hepp-mellin} then becomes
\begin{equation*}
	\left.\frac{\Omega(\vec{\ind})}{(\PsiTrop_G)^{\Dim/2}}\right|_{\SP_1=1}
	= \exp\left[ 
		-\vec{\SPlog}\cdot\vec{\ind} 
		- \frac{\Dim}{2} \max_{T\in\STrees{G}} \left( -\vec{\SPlog} \cdot \uv{T^c} \right)
	\right]_{\SPlog_1=0}
	\td \SPlog_2 \cdots \td \SPlog_N.
\end{equation*}
\begin{definition}\label{def:sdc-vec}
	The exponent defines a continuous, piecewise linear function
	\begin{equation}
		\R^N \ni \vec{\SPlog} \mapsto 
		\sdc[\vec{\ind}]{\vec{\SPlog}}
		\defas
		\vec{\SPlog}\cdot\vec{\ind} 
		+\frac{\Dim}{2} \max_{T\in\STrees{G}} \left( -\vec{\SPlog} \cdot \uv{T^c} \right)
		\label{eq:sdc-vec}%
	\end{equation}
	where, as always, $\frac{\Dim}{2}=\frac{\ind_1+\cdots+\ind_N}{\loops{G}}$ is fixed by the constraint $\sdc[\vec{\ind}]{G} = 0$.\footnote{%
		In the (uninteresting) case $\loops{G}=\abs{T^c}=0$, we get $\uv{T^c}=0$ and thus $\sdc[\vec{\ind}]{\vec{\SPlog}}=\vec{\SPlog}\cdot \vec{\ind}$.
	}
	In particular, locally this function is a homogeneous bilinear form in $\vec{\SPlog}$ and $\vec{\ind}$.
\end{definition}
\begin{corollary}\label{cor:hepp-exp}
	The Hepp bound integral \eqref{eq:hepp-mellin} can be written as\footnote{%
	Instead of $\set{\SPlog_i=0}$, we can restrict to an arbitrary hyperplane $\set{\vec{\SPlog} \cdot \vec{\nu}=0}$ as long as $\nu_1+\ldots+\nu_N = 1$.
}
	\begin{equation}
		\Hepp{G,\vec{\ind}}
		= \int_{\R^{N}} e^{ -\sdc[\vec{\ind}]{\vec{\SPlog}}} \delta(\SPlog_i) \td[N] \vec{\SPlog}
		.
		\label{eq:hepp-exp}%
	\end{equation}
\end{corollary}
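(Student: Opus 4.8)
The plan is to read off \eqref{eq:hepp-exp} from the logarithmic substitution $\SP_e = e^{-\SPlog_e}$, most of which has already been carried out in the display immediately preceding the statement. There the integrand of the projective Mellin integral \eqref{eq:hepp-mellin}, written in the chart $\SP_1=1$, was turned into $\exp[-\vec{\SPlog}\cdot\vec{\ind}-\tfrac{\Dim}{2}\max_{T}(-\vec{\SPlog}\cdot\uv{T^c})]$ restricted to $\set{\SPlog_1=0}$. So the first step is merely to recognise that bracketed exponent as $-\sdc[\vec{\ind}]{\vec{\SPlog}}$ in the sense of \autoref{def:sdc-vec}, which records $\Hepp{G,\vec{\ind}} = \int_{\R^{N-1}} e^{-\sdc[\vec{\ind}]{\vec{\SPlog}}}\big|_{\SPlog_1=0}\,\td\SPlog_2\cdots\td\SPlog_N$.

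Second, I would check the measure bookkeeping behind that transformation, as a sanity check of the preceding display. Per edge, $\SP_e=e^{-\SPlog_e}$ gives $\td\SP_e = -\SP_e\,\td\SPlog_e$, so $\SP_e^{\ind_e-1}\td\SP_e = -e^{-\ind_e\SPlog_e}\td\SPlog_e$: the factor $1/\SP_e$ cancels the Jacobian and leaves the translation-invariant logarithmic measure, while the sign is absorbed by the orientation reversal as $\SP_e\in(0,\infty)$ maps onto $\SPlog_e\in(-\infty,\infty)$. Combined with $\PsiTrop_G = \exp(\max_{T}(-\vec{\SPlog}\cdot\uv{T^c}))$ this reproduces the exponent, and the positive orthant $\Projective^G$ maps diffeomorphically onto $\set{\SPlog_1=0}\cong\R^{N-1}$. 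Rewriting the resulting $(N{-}1)$-fold integral over this hyperplane as an $N$-fold integral over $\R^N$ against $\delta(\SPlog_1)$ then yields \eqref{eq:hepp-exp} in the case $i=1$.

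The only genuinely substantive point, and the step I expect to require the most care, is the footnote's claim that the constraint hyperplane may be chosen freely. This reduces to the shift-invariance $\sdc[\vec{\ind}]{\vec{\SPlog}+\lambda\vec{1}} = \sdc[\vec{\ind}]{\vec{\SPlog}}$ along the diagonal $\vec{1}=(1,\ldots,1)$, which I would verify from the two identities $\sum_{e}\ind_e = \tfrac{\Dim}{2}\loops{G}$ (the logarithmic condition $\sdc[\vec{\ind}]{G}=0$) and $\vec{1}\cdot\uv{T^c} = \abs{T^c} = \loops{G}$: the shift changes $\vec{\SPlog}\cdot\vec{\ind}$ by $\lambda\tfrac{\Dim}{2}\loops{G}$ and every $-\vec{\SPlog}\cdot\uv{T^c}$ by $-\lambda\loops{G}$, so after multiplying the maximum by $\tfrac{\Dim}{2}$ the two contributions cancel. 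Given this invariance, the integrand $e^{-\sdc[\vec{\ind}]{\vec{\SPlog}}}$ is constant along every line $\vec{\SPlog}+\R\vec{1}$, and a factor $\delta(\vec{\SPlog}\cdot\vec{\nu})$ with $\nu_1+\cdots+\nu_N=1$ selects exactly one representative per line with unit Jacobian, since $\vec{\nu}\cdot\vec{1}=1$. Hence \eqref{eq:hepp-exp} is independent of the section and in particular equals the case $\vec{\nu}=\uv{1}$ computed above. No deeper difficulty arises: the whole argument is a change of variables combined with this homogeneity.
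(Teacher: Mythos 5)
Your proposal is correct and follows the paper's own route: the corollary is exactly the preceding change-of-variables display $\SP_e=e^{-\SPlog_e}$ recognised through \autoref{def:sdc-vec}, with the restriction to $\set{\SPlog_1=0}$ rewritten as a $\delta$-function integral, and the footnote handled by the translation invariance $\sdc[\vec{\ind}]{\vec{\SPlog}+\lambda\vec{1}}=\sdc[\vec{\ind}]{\vec{\SPlog}}$ that the paper records in \eqref{eq:Polar-lineality}. Your per-edge measure check and the observation that $\vec{\nu}\cdot\vec{1}=1$ makes the $\delta$-factor integrate to one along each line $\vec{\SPlog}+\R\vec{1}$ fill in precisely the details the paper leaves implicit.
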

As the notation suggests, \autoref{def:sdc-vec} generalizes the superficial degree of convergence \eqref{eq:sdc}:
For the characteristic vector $\uv{\gamma} = \sum_{i\in\gamma} \uv{i}$ of a subgraph $\gamma \subseteq G$, we observe
\begin{equation*}
	-\uv{\gamma} \cdot \uv{T^c}
	= \abs{\gamma \cap T} - \abs{\gamma}
	= \nCG{\gamma}-\nCG{\gamma \cap T} - \loops{\gamma}
\end{equation*}
by \eqref{eq:euler} and $\loops{\gamma \cap T}=0$.
Here we consider both $\gamma$ and $\gamma\cap T$ as graphs on the same set of vertices, so the number of connected components $\nCG{\gamma \cap T}$ is at least $\nCG{\gamma}$.
Consequently,
\begin{equation}
	\sdc[\vec{\ind}]{\uv{\gamma}}
	= \uv{\gamma} \cdot \vec{\ind} - \frac{\Dim}{2} \loops{\gamma}
	+ \frac{\Dim}{2} \max_{T\in\STrees{G}} \Big( \nCG{\gamma}-\nCG{\gamma \cap T} \Big)
	= \sdc[\vec{\ind}]{\gamma}
	\label{eq:sdc-uv-gamma}%
\end{equation}
indeed coincides with \eqref{eq:sdc} and the maximum is attained precisely on all those trees $T$ for which $T\cap \gamma$ is a spanning forest of $\gamma$. Assuming $\Dim>0$, \eqref{eq:sdc-vec} is a maximum of linear forms, and hence $\sdc[\vec{\ind}]{\vec{\SPlog}}$ is a convex function of $\vec{\SPlog}$. It describes two convex polyhedra \cite{Handbook:BasicConvex}:
\begin{definition}\label{def:polytopes}
	We define the \emph{Newton polytope} $\Newton{G}(\vec{\ind})$ and its \emph{polar} $\Polar{G}(\vec{\ind})$ as
	\begin{align}
		\Newton{G}(\vec{\ind})
		&\defas \vec{\ind} -\tfrac{\Dim}{2} \conv \setexp{\uv{T^c}}{T\in \STrees{G}}
		\subset \R^N
		\quad\text{and}
	\label{eq:newton-conv}%
	\\
		\Polar{G}(\vec{\ind})
		&\defas \bigcap_{T\in\STrees{G}}
		\big\{\vec{\SPlog}\colon 
		\vec{\SPlog} \cdot \big(\vec{\ind}-\tfrac{\Dim}{2}\uv{T^c} \big) 
		\leq 1 
		\big\}
		\subset \R^N,
	\label{eq:polar-facets}%
	\end{align}
	where $\conv \set{\vec{v}_1,\ldots,\vec{v}_n} = \setexp{\sum_{i=1}^n \lambda_i \vec{v}_i}{\sum_{i=1}^n\lambda_i= 1\ \text{and all}\ \lambda_i \geq 0}$ is the convex hull. 
	We abbreviate the case of unit indices as $\Newton{G} \defas \Newton{G}(1,\ldots,1)$ and $\Polar{G} \defas \Polar{G}(1,\ldots,1)$.
\end{definition}
\begin{figure}
	\centering%
	\includegraphics[width=0.4\textwidth]{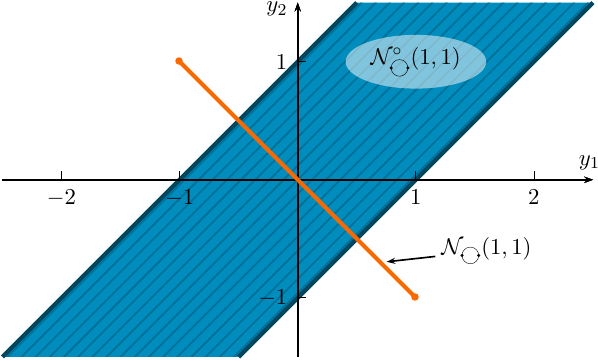}%
	\qquad%
	\includegraphics[width=0.4\textwidth]{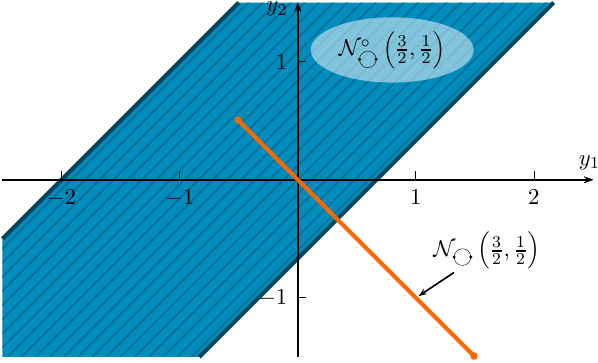}%
	\caption{The Newton polytope (line segment) and its polar (shaded unbounded region) for the bubble graph $\Cycle{2} = \Graph[0.3]{bubble}$. The figure to the left shows unit indices $\ind_1=\ind_2=1$, the other depicts $\vec{\ind}=(1.5,0.5)$.}%
	\label{fig:polar-bubble}%
\end{figure}
\begin{remark}
	Every vector $\vec{\ind} - \frac{\Dim}{2} \uv{T^c}$ is a vertex of $\Newton{G}(\vec{\ind})$ and lies in the hyperplane orthogonal to the diagonal vector $\vec{1} \defas (1,\ldots,1) = \uv{G}$, because $\vec{1}\cdot\uv{T^c} = \loops{G}$ as discussed above and $\vec{1}\cdot\vec{\ind} - \frac{\Dim}{2}\loops{G} = \sdc{G}=0$. Hence the dimension of $\Newton{G}(\vec{\ind})$ is at most $N-1$. It also implies that the function \eqref{eq:sdc-vec} is invariant under translations by $\vec{1}$, that is $\sdc[\vec{\ind}]{\vec{\SPlog}+\lambda \vec{1}} = \sdc[\vec{\ind}]{\vec{\SPlog}}$ for all real $\lambda$. Hence $\Polar{G}(\vec{\ind})$ contains the line $\R\cdot \vec{1}$. In summary,
\begin{align}
	\Newton{G}(\vec{\ind}) 
	&\subset \setexp{\vec{\SPlog}}{\vec{\SPlog} \cdot \vec{1} = 0}
	\subset \R^N
	\quad\text{and}
	\label{eq:Newton-hyperplane}%
	\\
	\Polar{G}(\vec{\ind})
	&= \Polar{G}(\vec{\ind}) + \R \cdot \vec{1}.
	\label{eq:Polar-lineality}%
\end{align}
\end{remark}
\begin{example}
	The cycle $\Cycle{N}$ with $N$ edges $\set{1,\ldots,N}$ has one spanning tree $T=\Cycle{N}\setminus i$ for each edge $i$, which contributes the monomial $\SP_i$ to $\PsiPol_{\Cycle{N}} = \sum_{i=1}^N \SP_i$. Hence
	\begin{equation*}
		\Newton{\Cycle{N}}(\vec{\ind})
		= \vec{\ind} - (\ind_1+\ldots+\ind_N) \conv \setexp{\uv{i}}{1 \leq i \leq N}
	\end{equation*}
	is an affine image of the standard simplex. For the bubble $\Cycle{2}=\Graph[0.3]{bubble}$ it is the line segment
	\begin{equation*}
		\Newton{\Graph[0.2]{bubble}}(\ind_1,\ind_2) = \conv \set{
			\begin{psmallmatrix} \ind_1 \\ -\ind_1 \end{psmallmatrix},
			\begin{psmallmatrix} -\ind_2 \\ \ind_2 \end{psmallmatrix}
		}
		=\setexp{\begin{psmallmatrix} \lambda \\ -\lambda \end{psmallmatrix}}{-\ind_2 \leq \lambda \leq \ind_1}
		\subset \R^2
	\end{equation*}
	illustrated in \autoref{fig:polar-bubble}. Its polar is the unbounded region
	\begin{equation*}
		\Polar{\Graph[0.2]{bubble}}(\ind_1,\ind_2)
		=   
		\setexp{\vec{\SPlog}}{-\tfrac{1}{\ind_2} \leq \SPlog_1 - \SPlog_2 \leq \tfrac{1}{\ind_1}}
		\subset \R^2.
	\end{equation*}
\end{example}
The terminology in \autoref{def:polytopes} reflects that the convex hull $\conv \setexp{\uv{T^c}}{T\in \STrees{G}}$ of the exponents of the monomials in $\PsiPol_G$ is called the \emph{Newton polytope of $\PsiPol_G$}. Note that $\Newton{G}(\vec{\ind})$ is just the translate by $\vec{\ind}$ of this polytope, after scaling it by $-\frac{\Dim}{2}$. Similarly we can think of an affine transformation of the polytope $\conv\setexp{\uv{T}}{T\in \STrees{G}}$, which is called the \emph{spanning tree polytope} \cite{Chopra:STpolyhedron} of $G$ or more generally the \emph{matroid polytope} \cite{FeichtnerSturmfels:MatroidPolytopes} of $\GMat{G}$.
Indeed, the entire discussion in this section extends to arbitrary matroids by replacing the spanning trees $\STrees{G}=\Bases{\GMat{G}}$ with the bases $\Bases{M}$ throughout.
\begin{figure}
	$\prO{\SPlog_4=0} \Newton{\UM{4}{2}} = \Graph[1.0]{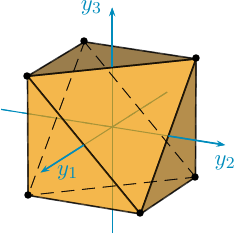}$
	\hfill
	$\Polar{\UM{4}{2}} \cap \set{\SPlog_4=0} = \Graph[1.2]{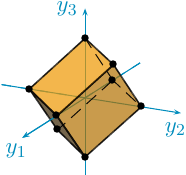}$
	\caption{The orthogonal projection of the Newton polytope of the matroid $\UM{4}{2}$ to the plane $\set{\SPlog_4=0}$, see \autoref{ex:Newton-U42}. Its polar is shown on the right.}%
	\label{fig:U42}%
\end{figure}
\begin{example}\label{ex:Newton-U42}
	The uniform matroid $\UM{4}{2}$ has $\binom{4}{2}=6$ bases of the form $\set{i,j}$ with $1\leq i<j\leq 4$, which form the vertices of an octahedron, see \autoref{fig:U42}. For unit indices $\ind_1=\cdots=\ind_4=1$, we find $\Dim=4$ and the associated polytope is explicitly
	\begin{equation*}
		\Newton{\UM{4}{2}} = \conv \set{
			\begin{psmallmatrix} 1 \\ 1 \\ -1 \\ -1 \end{psmallmatrix},
			\begin{psmallmatrix} 1 \\ -1 \\ 1 \\ -1 \end{psmallmatrix},
			\begin{psmallmatrix} 1 \\ -1 \\ -1 \\ 1 \end{psmallmatrix},
			\begin{psmallmatrix} -1 \\ 1 \\ 1 \\ -1 \end{psmallmatrix},
			\begin{psmallmatrix} -1 \\ 1 \\ -1 \\ 1 \end{psmallmatrix},
			\begin{psmallmatrix} -1 \\ -1 \\ 1 \\ 1 \end{psmallmatrix}
		}.
	\end{equation*}
\end{example}
\begin{remark}
	Matroid polytopes belong to the class of \emph{0-1-polytopes} \cite{Ziegler:Lectures01}, and $\Newton{M}$ is an affine image of such.
	In particular, when all indices $\ind_e=1$ are unity and the dimension equals $4$, like in \autoref{thm:Hepp=Vol}, then the vertices of $\Newton{M}$ are a subset of the cube $\set{-1,1}^N$.
\end{remark}
The piecewise linear function \eqref{eq:sdc-vec} characterizes the polytopes $\Newton{M}(\vec{\ind})$ and $\Polar{M}(\vec{\ind})$ as
\begin{equation}
	\sdc[\vec{\ind}]{\vec{\SPlog}}
	= \max_{\vec{z} \in \Newton{M}(\vec{\ind})} \left( \vec{\SPlog} \cdot \vec{z} \right)
	\quad\text{and}\quad
	\Polar{M}(\vec{\ind})
	= \big\{\vec{\SPlog}\colon
		\sdc[\vec{\ind}]{\vec{\SPlog}}
		\leq 1
	\big\}.
	\label{eq:newton-support}%
\end{equation}
In particular, $\sdc[\vec{\ind}]{\vec{\SPlog}}$ is also known as the \emph{support function} of $\Newton{M}(\vec{\ind})$ \cite{Handbook:BasicConvex}.
It is well known that the exponential integral \eqref{eq:hepp-exp} of the support function is the volume of the polar polytope, see for example \cite{Lasserre:LevelSetsPHF} and the references therein:
\begin{corollary}\label{cor:hepp-volume}
	Let $M$ denote a connected matroid on $\EG{M}=\set{1,\ldots,N}$ and pick any element $i\in \EG{M}$. For all indices $\vec{\ind}$ such that $\Polar{M}(\vec{\ind}) \cap \set{\SPlog_i=0}$ is bounded, we have
	\begin{equation}
		\Hepp{M,\vec{\ind}} 
		= (N-1)!\,\cdot
		\Vol \Big( 
			\,\Polar{M}(\vec{\ind}) \cap \set{\SPlog_i=0}
		\Big).
		\label{eq:hepp-volume}%
	\end{equation}
\end{corollary}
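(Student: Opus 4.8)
The plan is to start from the exponential representation of the Hepp bound in \autoref{cor:hepp-exp} and to recognise the resulting integral as the Laplace transform of a support function, for which the polar-volume identity is classical (see \cite{Lasserre:LevelSetsPHF}). First I would carry out the $\delta(\SPlog_i)$ integration in \eqref{eq:hepp-exp}, which restricts the domain to the coordinate hyperplane $H\defas\set{\SPlog_i=0}\cong\R^{N-1}$ and leaves
\begin{equation*}
	\Hepp{M,\vec{\ind}}=\int_{H} e^{-h(\vec{\SPlog})}\,\td[N-1]\vec{\SPlog},
	\qquad
	h\defas \restrict{\sdc[\vec{\ind}]{\vec{\SPlog}}}{H}.
\end{equation*}

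The decisive structural input is \eqref{eq:newton-support}: the exponent $\sdc[\vec{\ind}]{\vec{\SPlog}}$ is the support function of $\Newton{M}(\vec{\ind})$, hence a convex function that is positively homogeneous of degree one in $\vec{\SPlog}$ (for $\lambda\geq 0$ the maximum over a fixed polytope scales linearly). Both properties descend to the restriction $h$ on $H$. I would then invoke the layer-cake (Tonelli) identity
\begin{equation*}
	\int_{H} e^{-h}\,\td[N-1]\vec{\SPlog}
	=\int_{0}^{\infty} e^{-t}\,\Vol\big(\set{h\leq t}\cap H\big)\,\td t,
\end{equation*}
and use homogeneity to write $\set{h\leq t}=t\cdot\set{h\leq 1}$, so that $\Vol(\set{h\leq t}\cap H)=t^{N-1}\Vol(K)$ with $K\defas\set{h\leq 1}\cap H$. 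Pulling the constant out and evaluating $\int_0^\infty e^{-t}t^{N-1}\,\td t=\Gamma(N)=(N-1)!$ yields the claimed prefactor. Finally, \eqref{eq:newton-support} identifies the body as $K=\Polar{M}(\vec{\ind})\cap\set{\SPlog_i=0}$, which is exactly \eqref{eq:hepp-volume}.

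The role of the boundedness hypothesis is to make everything well defined. By \eqref{eq:Polar-lineality} the polar always contains the lineality line $\R\cdot\vec{1}$; since $\vec{1}\cdot\uv{i}=1\neq 0$, the slice $H$ is transverse to this line and removes it, so $K$ is a genuine $(N-1)$-dimensional convex set. Assuming $K$ is bounded is equivalent to $h>0$ on $H\setminus\set{0}$ (again by homogeneity), which is precisely the condition for the exponential integral to converge and for the measures $\Vol(\set{h\leq t}\cap H)$ to be finite. Incidentally this also forces $\Dim>0$, under which $\sdc[\vec{\ind}]{\vec{\SPlog}}$ is a genuine maximum of linear forms and hence convex; the degenerate case $\Dim\leq 0$ produces an unbounded slice and is excluded by hypothesis.

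I do not expect a serious obstacle here. The only points requiring care are the justification of the layer-cake interchange together with the scaling of the sublevel sets, and the transversality bookkeeping that guarantees the slice $K$ is compact exactly when the hypothesis holds. The interchange is immediate since the integrand is nonnegative and measurable, so Tonelli applies, and the positive homogeneity of the piecewise-linear support function is read off directly from \autoref{def:sdc-vec}.
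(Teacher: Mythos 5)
Your proof is correct and follows the same overall route as the paper: both reduce \eqref{eq:hepp-volume} to the identity expressing the integral of $e^{-h}$, for $h$ a support function, as $(N-1)!$ times the volume of the polar body. The difference is that the paper's proof is a two-line citation --- it notes that the restriction of $\sdc[\vec{\ind}]{\vec{\SPlog}}$ to $\set{\SPlog_i=0}$ is the support function of the orthogonal projection of $\Newton{M}(\vec{\ind})$ onto that hyperplane, that the polar (within the hyperplane) of this projection is $\Polar{M}(\vec{\ind})\cap\set{\SPlog_i=0}$, and then invokes the formula below equation~(2) of \cite{Lasserre:LevelSetsPHF} --- whereas you reprove that formula from scratch via the layer-cake decomposition and the degree-one positive homogeneity of the support function, producing the prefactor explicitly as $\int_0^\infty e^{-t}t^{N-1}\,\td t=(N-1)!$. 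You also identify the relevant body more directly: instead of the projection/section duality used implicitly by the paper, you read $\Polar{M}(\vec{\ind})\cap\set{\SPlog_i=0}=\set{h\leq 1}\cap\set{\SPlog_i=0}$ straight off \eqref{eq:newton-support}. Your version buys self-containedness and an explicit convergence analysis (boundedness of the slice is equivalent to strict positivity of $h$ away from the origin, which is exactly what makes the Tonelli interchange yield a finite answer); the paper buys brevity. One small remark: convexity of $\sdc[\vec{\ind}]{\cdot}$ is never actually used in your argument --- nonnegativity, measurability and homogeneity suffice for the layer-cake step --- so the closing digression about $\Dim>0$ is harmless but dispensable.
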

\begin{proof}
	The restriction $\sdc[\vec{\ind}]{\vec{\SPlog}}|_{\SPlog_i=0}$ is the support function of the projection of $\Newton{M}(\vec{\ind})$ onto the subspace $\set{\SPlog_i=0} \cong \R^{N-1}$ orthogonal to $\uv{i}$.
	The polar of this projection is $\Polar{M}(\vec{\ind}) \cap \set{\SPlog_i=0}$, and we can apply the formula discussed below equation~(2) in \cite{Lasserre:LevelSetsPHF}.
\end{proof}
\begin{example}
	For the bubble graph in \autoref{fig:polar-bubble}, the projection of $\Newton{\Graph[0.2]{bubble}}(\ind_1,\ind_2)$ onto $\SPlog_2=0$ is the line segment $[-\ind_2,\ind_1]$. Its polar is $\Polar{\Graph[0.2]{bubble}}(\ind_1,\ind_2) \cap \set{\SPlog_2=0} = [-\frac{1}{\ind_2},\frac{1}{\ind_1}]$ and has volume $\frac{1}{\ind_1}+\frac{1}{\ind_2} = \Hepp{\Graph[0.2]{bubble},\vec{\ind}}$, in agreement with \autoref{ex:hepp-bubble}.
\end{example}
We learn that the Hepp bound integrals \eqref{eq:hepp-mellin} and \eqref{eq:hepp-exp} converge precisely when the polyhedron $\Polar{M}(\vec{\ind}) \cap \set{\SPlog_i=0}$ is bounded. Equivalently, the orthogonal projection of $\Newton{M}(\vec{\ind})$ onto $\set{\SPlog_i=0}\cong \R^{N-1}$ must contain the origin in its interior. This requires that the dimension of $\Newton{M}\subset\{\vec{\SPlog}\cdot\vec{1}=0\}\subset \R^N$ is $N-1$ (as big as possible).
\begin{lemma}[{\cite[Corollary~3.10]{Fujishige:FacesBasePoly} and \cite[Proposition~2.4]{FeichtnerSturmfels:MatroidPolytopes}}]
	For a matroid $M$ on $\EG{M}=\set{1,\ldots,N}$, the dimension of the polytope $\Newton{M}(\vec{\ind}) \subset \R^N$ is equal to $N-\nCG{M}$.
\end{lemma}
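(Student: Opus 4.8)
The plan is to reduce the assertion to the standard dimension formula for the base polytope of $M$ and to prove that formula by a basis-exchange argument. First I would strip away the affine normalisation: the map $\vec{z}\mapsto\vec{\ind}-\frac{\Dim}{2}\vec{z}$ is an affine bijection of $\R^N$ whenever $\frac{\Dim}{2}\neq 0$, and the complementation $\vec{z}\mapsto\vec{1}-\vec{z}$ (with $\vec{1}=\sum_{e}\uv{e}$) sends $\uv{T^c}$ to $\uv{T}$; since affine bijections preserve dimension, $\dim\Newton{M}(\vec{\ind})=\dim\conv\setexp{\uv{T}}{T\in\Bases{M}}$. (In the degenerate forest case $\loops{M}=0$ every cobasis is empty, so $\Newton{M}(\vec{\ind})$ is the single point $\vec{\ind}$ of dimension $0=N-\nCG{M}$, matching the claim.) It therefore suffices to show that the base polytope of $M$ has dimension $N-\nCG{M}$.

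For the upper bound I would decompose $M=M_1\oplus\cdots\oplus M_c$ into its $c=\nCG{M}$ connected components and set $W\defas\setexp{\vec{v}\in\R^N}{\sum_{e\in M_i}v_e=0\ \text{for each}\ i}$. The $c$ defining linear forms have pairwise disjoint supports, hence are independent, so $\dim W=N-c$. Every difference $\uv{T}-\uv{T'}$ of basis vectors lies in $W$, because $\abs{T\cap M_i}=\rank{M_i}=\abs{T'\cap M_i}$ for each $i$. Consequently the direction space $V\defas\lin\setexp{\uv{T}-\uv{T'}}{T,T'\in\Bases{M}}$ satisfies $V\subseteq W$, and since $\dim\conv\setexp{\uv{T}}{T}=\dim V$ this gives the bound $\dim\conv\setexp{\uv{T}}{T}\le N-c$.

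For the matching lower bound I would prove $V\supseteq W$, and the engine is a single exchange. If $e$ and $f$ are distinct elements lying on a common circuit $C$, then $C\setminus f$ is independent; extending it to a basis $T$ forces $f\notin T$ (otherwise $T\supseteq C$ would be dependent), so $C$ is the fundamental circuit of $f$ with respect to $T$, and since $e\in C$ we get $T-e+f\in\Bases{M}$ and thus $\uv{f}-\uv{e}=\uv{T-e+f}-\uv{T}\in V$. By the circuit characterisation of matroid connectivity, any two elements of a connected matroid lie on a common circuit \cite{Oxley:MatroidTheory}; applying this inside each component $M_i$ shows $\uv{f}-\uv{e}\in V$ for all $e,f$ in the same component. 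These differences span exactly the coordinate subspace $\setexp{\vec{v}\in\R^{\EG{M_i}}}{\sum v_e=0}$ of each block, whose direct sum over $i$ is $W$; hence $V\supseteq W$, and combining the two inclusions yields $\dim\conv\setexp{\uv{T}}{T}=\dim V=N-c$.

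The main obstacle is this last inclusion, and specifically its two matroid-theoretic ingredients: that in a connected matroid any two elements share a circuit, and that such a circuit produces a genuine single basis exchange. This is precisely the point where matroid connectivity (rather than mere graph connectivity, which differs as noted after \autoref{lem:convdom-nonempty}) enters and forces the component count $\nCG{M}$ into the dimension. Everything else---the affine reductions and the disjoint-support independence of the block equations---is routine.
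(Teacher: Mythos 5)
Your proof is correct. One thing to note at the outset: the paper does not prove this lemma at all --- it is quoted from the literature (Fujishige, and Feichtner--Sturmfels), so there is no internal proof to compare against. What you have written is essentially the standard argument behind the cited result, specialised to the paper's affinely transformed polytope: reduce $\Newton{M}(\vec{\ind})$ to the base polytope $\conv\setexp{\uv{T}}{T\in\Bases{M}}$ by a nondegenerate affine map, get the upper bound $N-\nCG{M}$ from the block equations $\abs{T\cap M_i}=\rank{M_i}$, and get the lower bound from single basis exchanges along fundamental circuits, using the fact that two elements of a connected matroid lie on a common circuit. All three steps are sound: $C\setminus f$ independent, its extension $T$ to a basis cannot contain $f$, so $C$ is the fundamental circuit of $f$ with respect to $T$ and $T-e+f\in\Bases{M}$ for every $e\in C\setminus f$; and the differences $\uv{f}-\uv{e}$ within a block do span the sum-zero subspace of that block, so the two inclusions $V\subseteq W$ and $V\supseteq W$ close the argument. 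Your handling of the degenerate cases is also appropriate: the forest case ($\loops{M}=0$, where each edge is its own component, so $N-\nCG{M}=0$) and the caveat $\Dim\neq 0$ for the affine reduction --- the latter is genuinely needed, since for indices with $\ind_1+\cdots+\ind_N=0$ the polytope collapses to the point $\vec{\ind}$; in the paper's context this is harmless because the lemma is applied for indices in (or near) the convergence cone, where $\Dim>0$.
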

For a disconnected matroid, the polar is thus never bounded. But if $M$ is connected, then the projection of $\Newton{M}$ onto $\set{\SPlog_i=0}$ has full dimension $N-1$. In this case, there exist translations $\vec{\ind}$ which put the origin inside, and then the polar is bounded. Explicitly, we see that $\vec{0} \in \Newton{M}(\vec{\ind})$ precisely when $\vec{\ind} \in \frac{\Dim}{2} \conv \setexp{\uv{T^c}}{T \in \Bases{M}}$, according to \eqref{eq:newton-conv}.
\begin{corollary}\label{cor:conv-conv}
	For a connected matroid $M$, the domain $\ConvDom \subset \R^N$ of absolute convergence of the integrals \eqref{eq:hepp-mellin} and \eqref{eq:hepp-exp} is not empty. It is the interior of the cone
	\begin{equation*}
		\overline{\ConvDom} = \R_{\geq 0} \cdot \conv \setexp{\uv{T^c}}{T \in \Bases{M}}.
	\end{equation*}
\end{corollary}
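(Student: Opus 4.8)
The plan is to read the cone directly off the geometric reformulation established just above, by sharpening the containment $\vec0\in\Newton{M}(\vec{\ind})\Leftrightarrow\vec{\ind}\in\frac\Dim2\conv\setexp{\uv{T^c}}{T\in\Bases M}$ of \eqref{eq:newton-conv} from the polytope to its relative interior, and then passing from that base polytope to the cone it spans. First I would recall that, by the discussion following \autoref{cor:hepp-volume}, a real $\vec{\ind}$ lies in $\ConvDom$ (the convergence domain, by \autoref{cor:convergence-domain}) exactly when the origin lies in the interior of the orthogonal projection of $\Newton{M}(\vec{\ind})$ onto $\{\SPlog_i=0\}$. Since $M$ is connected, the dimension formula $\dim\Newton{M}(\vec{\ind})=N-\nCG{M}=N-1$ shows $\Newton{M}(\vec{\ind})$ is full-dimensional inside its affine hull $H\defas\{\vec{\SPlog}\cdot\vec1=0\}$ from \eqref{eq:Newton-hyperplane}. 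The projection onto $\{\SPlog_i=0\}$ has kernel $\R\uv i$, which meets $H$ only at $\vec0$ because $\uv i\cdot\vec1=1\neq0$; as $H$ and $\{\SPlog_i=0\}$ both have dimension $N-1$, the projection restricts to a linear isomorphism $H\xrightarrow{\sim}\{\SPlog_i=0\}$ fixing the origin. Hence the convergence condition is equivalent to $\vec0\in\operatorname{relint}\Newton{M}(\vec{\ind})$, independently of the chosen $i$.

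Next I would translate this into a condition on $\vec{\ind}$. Writing $P\defas\conv\setexp{\uv{T^c}}{T\in\Bases M}$ and using $\Newton{M}(\vec{\ind})=\vec{\ind}-\frac\Dim2 P$, the sharpened form of \eqref{eq:newton-conv} reads $\vec0\in\operatorname{relint}\Newton{M}(\vec{\ind})\Leftrightarrow\vec{\ind}\in\frac\Dim2\operatorname{relint}P$. Now $P$ lies in the affine hyperplane $\{\vec{\SPlog}\cdot\vec1=\loops M\}$, since $\uv{T^c}\cdot\vec1=\abs{T^c}=\loops M\geq1$ (the degenerate single-edge case $\loops M=0$ being trivial), and this hyperplane misses the origin. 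Because $\Dim$ is not free but fixed by $\frac\Dim2=\frac{\vec{\ind}\cdot\vec1}{\loops M}$, the point $p\defas\vec{\ind}/(\Dim/2)$ is precisely the intersection of the ray $\R_{\geq0}\vec{\ind}$ with that hyperplane, so $\vec{\ind}\in\frac\Dim2 P\Leftrightarrow p\in P$. As $C\defas\R_{\geq0}P$ is a full-dimensional cone over the $(N-1)$-dimensional base $P$ sitting off the apex, a nonzero $\vec{\ind}$ lies in $\interior C$ iff its base point $p$ lies in $\operatorname{relint}P$; every nonzero point of $C$ has $\vec{\ind}\cdot\vec1>0$, so no case distinction is needed. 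Combining the equivalences gives $\ConvDom=\interior C$, and taking closures (the closure of the interior of a full-dimensional closed convex cone is the cone itself) yields $\overline{\ConvDom}=C=\R_{\geq0}\conv\setexp{\uv{T^c}}{T\in\Bases M}$. Non-emptiness is then immediate from full-dimensionality of $C$, or directly from \autoref{lem:convdom-nonempty}.

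The step I expect to require the most care is the cone-over-base correspondence of the second paragraph: one must match the $\Dim$-dependent dilation $\frac\Dim2\operatorname{relint}P$ to the genuine topological interior $\interior C$, keeping track that $\Dim$ is determined by $\vec{\ind}$ through $\sdc{M}=0$ rather than being an independent parameter, and that relative interior in the base corresponds to full-dimensional interior in the cone precisely because $P$ is $(N-1)$-dimensional and lies in a hyperplane avoiding the origin. Everything else is the bookkeeping of passing from a closed containment to its interior and back, which is routine once these dimension counts are in place.
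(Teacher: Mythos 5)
You take the same route as the paper: there, the corollary is drawn directly from the discussion preceding it (convergence of \eqref{eq:hepp-exp} $\Leftrightarrow$ the sliced polar is bounded $\Leftrightarrow$ the origin is interior to the projected Newton polytope, plus the dimension lemma and the observation that $\vec{0}\in\Newton{M}(\vec{\ind})$ precisely when $\vec{\ind}\in\frac{\Dim}{2}\conv\setexp{\uv{T^c}}{T\in\Bases{M}}$), and your proposal fleshes out exactly these steps. Your projection-isomorphism argument and the sharpening from containment to relative interior are correct, and they are genuinely needed to pass from the paper's closed statement to the open cone.

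There is, however, a gap, sitting exactly where you assert that ``no case distinction is needed''. Every equivalence in your chain is valid only on the half-space $\{\Dim>0\}$. The quoted criterion ``$\vec{\ind}\in\ConvDom$ iff the origin is interior to the projection of $\Newton{M}(\vec{\ind})$'' holds in the paper under the standing assumption $\Dim>0$ made just before \autoref{def:polytopes}: only then is $\sdc[\vec{\ind}]{\vec{\SPlog}}$ the support function \eqref{eq:newton-support} of $\Newton{M}(\vec{\ind})$; for $\Dim<0$ it is the \emph{minimum} over the vertices, a concave function, and then \eqref{eq:hepp-exp} diverges no matter where the polytope sits. Likewise your final step, from the base point $p=\vec{\ind}/(\Dim/2)\in\operatorname{relint}P$ to $\vec{\ind}\in\interior C$, needs $\Dim/2>0$, since otherwise $p$ lies on the ray $\R_{\leq 0}\vec{\ind}$. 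Your observation that nonzero points of $C$ have $\vec{\ind}\cdot\vec{1}>0$ secures only the inclusion $\interior C\subseteq\ConvDom$; for the converse inclusion $\ConvDom\subseteq\interior C$ you never verify that points of $\ConvDom$ have $\Dim>0$. This is not pedantry: for $p_0\in\operatorname{relint}P$, the point $\vec{\ind}=-p_0$ has $\frac{\Dim}{2}=-1$, so $\Newton{M}(-p_0)=-p_0+P$ contains $\vec{0}$ in its relative interior and its base point is $p_0\in\operatorname{relint}P$, yet $-p_0$ lies neither in $\ConvDom$ (some $\ind_e=\sdc[\vec{\ind}]{\set{e}}$ is negative) nor in $\interior C$. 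Thus two of the equivalences you chain together are genuinely false for $\Dim<0$ and cannot be used unconditionally. The repair is one line: by \autoref{cor:convergence-domain} applied to singletons $\gamma=\set{e}$ (which are never self-loops in a connected matroid with $N\geq 2$), every $\vec{\ind}\in\ConvDom$ satisfies $\ind_e>0$ for all $e$, hence $\vec{\ind}\cdot\vec{1}>0$ and $\Dim>0$; once both $\ConvDom$ and $\interior C$ are confined to $\{\Dim>0\}$, every step of your argument is valid and the proof closes.
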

It follows that the definitions of the Hepp bound by integrals \eqref{eq:hepp-mellin} and \eqref{eq:hepp-exp}, flags \eqref{eq:multi-hepp-from-sectors} and volumes \eqref{eq:hepp-volume} are completely equivalent:
\begin{enumerate}
	\item If $M$ is connected, then inside the region $\ConvDom$, the integral is finite and equal to the polar volume and the Hepp bound. This local information fixes the Hepp bound uniquely as a rational function by analytic continuation.
	\item If $M$ is not connected, the polar volume and the integral do not converge for any indices $\vec{\ind}$, and the combinatorial formula gives zero.
\end{enumerate}

\subsection{Singularities, facets and vertices}

The geometry explains the origin of the singularities of the Hepp bound. First, recall that every vertex $\vec{v}=\vec{\ind} - \frac{\Dim}{2} \uv{T^c}$ of the Newton polytope $\Newton{M}(\vec{\ind})$ lies in the half-spaces
\begin{equation*}
	h_{\gamma} \defas
	\setexp{\vec{\SPlog}}{\uv{\gamma} \cdot \vec{\SPlog} \leq \sdc[\vec{\ind}]{\gamma}}
	\subset \R^N
\end{equation*}
associated to every non-empty subset $\gamma \subsetneq M$, due to 
$\uv{\gamma} \cdot \vec{v} = \sdc[\vec{\ind}]{\gamma} - (\nCG{\gamma \cap T}-\nCG{\gamma}) \leq \sdc[\vec{\ind}]{\gamma}$
from \eqref{eq:sdc-uv-gamma}. Consequently, the Newton polytope is contained in the intersection of all these half-spaces. The boundary hyperplanes $\partial h_{\gamma}$ slice off the faces
\begin{equation*}
	F_{\gamma}
	\defas \Newton{M}(\vec{\ind}) \cap \setexp{\vec{\SPlog}}{\uv{\gamma} \cdot \vec{\SPlog} = \sdc[\vec{\ind}]{\gamma}}
\end{equation*}
from $\Newton{M}(\vec{\ind})$.
When $\vec{\ind}$ approaches the boundary $\partial \ConvDom \subset \bigcup_{\gamma} \setexp{\vec{\ind}}{\sdc[\vec{\ind}]{\gamma} = 0}$ of the convergence cone, then the origin lands on some such face $F_{\gamma}$ so that the projected polar becomes unbounded.

The poles of the Hepp bound are therefore in bijection with the facets (faces of codimension one) of $\overline{\ConvDom} = \bigcap_{\gamma} \set{\sdc{\gamma} \geq 0}$ and $\Newton{M}(\vec{\ind})$. Hence \autoref{cor:hepp-poles} describes precisely those submatroids $\gamma \subset M$ for which the hyperplanes $\setexp{\vec{\SPlog}}{\sdc{\gamma}=0}$ and $\partial h_{\gamma}$ support facets of $\overline{\ConvDom}$ and $\Newton{M}(\vec{\ind})$, respectively.
This gives an alternative derivation of the well-known facet description of matroid polytopes:
\begin{lemma}[{\cite[Proposition~2.6]{FeichtnerSturmfels:MatroidPolytopes} and \cite[Corollary~3.14]{Fujishige:FacesBasePoly}}]
	Given a connected matroid $M$, the facets of the polytope $\Newton{M}$ are in bijection with the submatroids $\emptyset \neq \gamma\subsetneq M$ such that $\gamma$ and the quotient $M/\gamma$ are both connected.
\end{lemma}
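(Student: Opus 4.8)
The plan is to deduce the facet description directly from the already-established pole structure of the Hepp bound, exploiting that the Newton polytope and the convergence cone are polar-dual objects built from the same data. First I would note that the combinatorial type of $\Newton{M}(\vec{\ind})$ is independent of the indices as long as $\vec{\ind}\in\ConvDom$: by \eqref{eq:newton-conv} it is the translate and positive rescaling $\vec{\ind}-\frac{\Dim}{2}P$ of the fixed polytope $P\defas\conv\setexp{\uv{T^c}}{T\in\Bases{M}}$, where $\Dim>0$ because every edge satisfies $\ind_e=\sdc{\set{e}}>0$ on $\ConvDom$. Hence $\Newton{M}$ (unit indices) and $\Newton{M}(\vec{\ind})$ share the same faces, and since $M$ is connected $\ConvDom\neq\emptyset$ by \autoref{lem:convdom-nonempty}, so such indices exist. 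It therefore suffices to identify the facets of $P$.

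Next I would turn facets into poles via the volume formula. By \autoref{cor:hepp-volume} one has $\Hepp{M,\vec{\ind}}=(N-1)!\,\Vol(\Polar{M}(\vec{\ind})\cap\set{\SPlog_i=0})$ precisely on the region where this section is bounded, which \autoref{cor:conv-conv} identifies as $\ConvDom$ with closure $\overline{\ConvDom}=\R_{\geq 0}\cdot P$. As $\vec{\ind}$ approaches a boundary point in the relative interior of a facet of $\overline{\ConvDom}$, the origin reaches the relative interior of the matching facet $F_{\gamma}$ of $\Newton{M}(\vec{\ind})$; the projected polar becomes unbounded and its volume diverges, producing a pole of $\Hepp{M,\vec{\ind}}$. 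Conversely every pole sits on a hyperplane $\set{\sdc{\gamma}=0}$ by \autoref{lem:simple-poles}, which must support a facet of $\overline{\ConvDom}$, for otherwise the projected polar stays bounded near $\vec{\ind}$ and the volume remains finite. This yields a bijection between the facets of $\overline{\ConvDom}$---equivalently, by passing to the hyperplane section $P$, the facets of $\Newton{M}$---and the poles of $\Hepp{M,\vec{\ind}}$.

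To conclude I would invoke the classification of these poles. \autoref{cor:hepp-poles} states that $\Hepp{M,\vec{\ind}}$ is a rational function whose poles are \emph{precisely} the hypersurfaces $\set{\sdc{\gamma}=0}$ with $\gamma\in\Sing{M}$, and by \autoref{def:singularities} these $\gamma$ are exactly the non-empty proper submatroids for which both $\gamma$ and $M/\gamma$ are connected. Composing with the facet--pole bijection gives the stated correspondence between facets of $\Newton{M}$ and such $\gamma$.

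The hard part is the bookkeeping in the middle paragraph, namely upgrading the geometric correspondence from a mere containment to an exact bijection. Two points need care: a face $F_{\gamma}$ of codimension $\geq 2$ through the origin leaves the projected polar bounded, so it produces no pole and no spurious facet, and---more delicately---one must exclude cancellations that could kill the pole attached to a genuine facet. The latter is precisely what the word ``precisely'' in \autoref{cor:hepp-poles} supplies, since the residue factorization of \autoref{lem:hepp-residue} exhibits the residue on a true facet as a nonzero product of Hepp bounds of connected minors. The remaining identification $\overline{\ConvDom}=\R_{\geq 0}\cdot P$ versus $\Newton{M}\sim P$ is routine polytope geometry: the facets of the cone over the cross-section $P$ are the cones over the facets of $P$.
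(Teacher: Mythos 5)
Your proposal is correct and is essentially the paper's own derivation: the paper obtains this facet description in exactly the same way, by identifying the facets of $\Newton{M}(\vec{\ind})$ (equivalently of the cone $\overline{\ConvDom}$) with the poles of $\Hepp{M,\vec{\ind}}$ via the volume interpretation, and then invoking \autoref{cor:hepp-poles} (backed by the non-vanishing residue factorization of \autoref{lem:hepp-residue}) to classify those poles as precisely $\Sing{M}$. The only cosmetic difference is that you spell out the reduction to unit indices and the cone-versus-cross-section bookkeeping, which the paper leaves implicit.
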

We denote these submatroids in \eqref{eq:singularities} as $\Sing{M}$. The Newton polytope is therefore
\begin{equation}
	\Newton{M}(\vec{\ind})
	= 
	\bigcap_{\gamma \in \Sing{M}} \setexp{\vec{\SPlog}}{\uv{\gamma} \cdot \vec{\SPlog} \leq \sdc[\vec{\ind}]{\gamma}}
	\ \cap\ 
	\setexp{\vec{\SPlog}}{\vec{1}\cdot\vec{\SPlog}=0}
	\label{eq:Newton-H}%
\end{equation}
and none of these constraints is redundant. From these facets we read off the vertices of the polar, making the divergences of the Hepp bound on $\sdc[\vec{\ind}]{\gamma}\rightarrow 0$ for $\gamma \in \Sing{M}$ apparent:
\begin{corollary}\label{cor:polar-V}
	The polar of the Newton polytope of a connected matroid $M$ is
	\begin{equation}
		\Polar{M}(\vec{\ind})
		= \R \cdot \vec{1}
		+ \conv 
		\setexp{\frac{\uv{\gamma}}{\sdc[\vec{\ind}]{\gamma}}}{\gamma \in \Sing{M}},
		\label{eq:polar-V}%
	\end{equation}
	and no $\gamma\in\Sing{M}$ is redundant. Each such $\gamma$ labels a vertex of the intersection
	\begin{equation}
		\Polar{M}(\vec{\ind}) \cap \set{\SPlog_i=0}
		= \conv \left( 
		\setexp{\frac{\uv{\gamma}}{\sdc[\vec{\ind}]{\gamma}}}{i\notin\gamma \in \Sing{M}}
			\cup
			\setexp{\frac{-\uv{\gamma^c}}{\sdc[\vec{\ind}]{\gamma}}}{i\in\gamma \in \Sing{M}}
		\right).
		\label{eq:polar-V-yi=0}%
	\end{equation}
\end{corollary}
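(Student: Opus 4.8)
The plan is to derive both displayed formulas from the irredundant facet description \eqref{eq:Newton-H} of $\Newton{M}(\vec{\ind})$ by the classical exchange of vertices and facets under polarity, the only subtlety being the bookkeeping around the lineality direction $\vec{1}$. Throughout I work with $\vec{\ind}\in\ConvDom$, so that $\sdc[\vec{\ind}]{\gamma}>0$ for every $\gamma\in\Sing{M}$ and the origin lies in the relative interior of $\Newton{M}(\vec{\ind})$ inside the hyperplane $H_0\defas\set{\vec{\SPlog}\cdot\vec{1}=0}$ by \autoref{cor:conv-conv}. I would first dispatch the easy inclusion ``$\supseteq$'' of \eqref{eq:polar-V} directly: writing $\vec{\SPlog}=\lambda\vec{1}+\sum_{\gamma}t_{\gamma}\,\uv{\gamma}/\sdc[\vec{\ind}]{\gamma}$ with $t_{\gamma}\geq0$ and $\sum_{\gamma}t_{\gamma}=1$, any $\vec{z}\in\Newton{M}(\vec{\ind})$ satisfies $\uv{\gamma}\cdot\vec{z}\leq\sdc[\vec{\ind}]{\gamma}$ and $\vec{1}\cdot\vec{z}=0$ by \eqref{eq:Newton-H}, whence $\vec{\SPlog}\cdot\vec{z}\leq\sum_{\gamma}t_{\gamma}=1$; so $\sdc[\vec{\ind}]{\vec{\SPlog}}\leq1$ and $\vec{\SPlog}\in\Polar{M}(\vec{\ind})$ by \eqref{eq:newton-support}.

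For the reverse inclusion I would pass to $H_0$. Since $\Newton{M}(\vec{\ind})\subset H_0$ and $\vec{1}\perp H_0$, one has $\vec{z}\cdot\vec{\SPlog}=\vec{z}\cdot\pr(\vec{\SPlog})$ for all $\vec{z}\in H_0$, where $\pr$ denotes orthogonal projection onto $H_0$; therefore $\Polar{M}(\vec{\ind})=\pr^{-1}(Q)=\R\cdot\vec{1}+Q$, recovering \eqref{eq:Polar-lineality}, where $Q$ is the polar of $\Newton{M}(\vec{\ind})$ taken inside $H_0$. As $\Newton{M}(\vec{\ind})$ is full-dimensional in $H_0$ with the origin in its relative interior, the standard polar duality converts the irredundant facet description \eqref{eq:Newton-H} into the irredundant vertex description $Q=\conv\setexp{\pr(\uv{\gamma})/\sdc[\vec{\ind}]{\gamma}}{\gamma\in\Sing{M}}$, with no generator redundant. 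Finally $\pr(\uv{\gamma})$ differs from $\uv{\gamma}$ only by a multiple of $\vec{1}$, and because these multiples are absorbed into the summand $\R\cdot\vec{1}$ one may replace $\pr(\uv{\gamma})$ by $\uv{\gamma}$, giving \eqref{eq:polar-V} together with the irredundancy claim.

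For the section \eqref{eq:polar-V-yi=0}, the hyperplane $\set{\SPlog_i=0}$ is transversal to the lineality line $\R\cdot\vec{1}$, since the $i$-th coordinate of $\vec{1}$ equals $1$. A point $\mu\vec{1}+\sum_{\gamma}t_{\gamma}\,\uv{\gamma}/\sdc[\vec{\ind}]{\gamma}$ of $\Polar{M}(\vec{\ind})$ lies in $\set{\SPlog_i=0}$ exactly when $\mu=-\sum_{\gamma}t_{\gamma}(\uv{\gamma})_i/\sdc[\vec{\ind}]{\gamma}$, and substituting this back and using $\sum_{\gamma}t_{\gamma}=1$ rewrites the point as $\sum_{\gamma}t_{\gamma}g_{\gamma}$ with $g_{\gamma}\defas\uv{\gamma}/\sdc[\vec{\ind}]{\gamma}-(\uv{\gamma})_i\,\vec{1}/\sdc[\vec{\ind}]{\gamma}$, the unique intersection of $\uv{\gamma}/\sdc[\vec{\ind}]{\gamma}+\R\cdot\vec{1}$ with $\set{\SPlog_i=0}$. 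Evaluating $(\uv{\gamma})_i$ and using $\vec{1}=\uv{\gamma}+\uv{\gamma^c}$ gives $g_{\gamma}=\uv{\gamma}/\sdc[\vec{\ind}]{\gamma}$ for $i\notin\gamma$ and $g_{\gamma}=-\uv{\gamma^c}/\sdc[\vec{\ind}]{\gamma}$ for $i\in\gamma$, which is precisely \eqref{eq:polar-V-yi=0}. Since the transversal section is an affine chart of the quotient $\R^N/(\R\cdot\vec{1})\cong H_0$, it carries the vertices of $Q$ bijectively to the $g_{\gamma}$, so each $\gamma\in\Sing{M}$ labels a genuine, non-redundant vertex.

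The main obstacle I anticipate is purely the lineality bookkeeping rather than anything deep: one must justify that the non-full-dimensional polar admits a description through the \emph{unprojected} normals $\uv{\gamma}$, that irredundancy of the facets of $\Newton{M}(\vec{\ind})$ transfers to irredundancy of the vertices of the polar, and that the transversal section preserves the vertex structure. The underlying convex-geometric fact, namely the facet–vertex exchange for a polytope with the origin in its relative interior, is classical and can simply be cited.
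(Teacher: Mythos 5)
Your proof is correct and takes essentially the same approach as the paper: there the corollary is presented as an immediate consequence of the irredundant facet description \eqref{eq:Newton-H} via the classical facet--vertex exchange under polarity ("from these facets we read off the vertices of the polar"), and your proposal just supplies the lineality and section bookkeeping that the paper leaves implicit. Your standing assumption $\vec{\ind}\in\ConvDom$ is also the right reading of the statement, since the vertex formula requires the origin to lie in the relative interior of $\Newton{M}(\vec{\ind})$.
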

\begin{example}
	The uniform matroid $\UM{4}{2}$ has precisely $8$ singular submatroids $\gamma\in\Sing{\UM{4}{2}}$: the singletons $\set{e}$ and their complements (\autoref{ex:uniform-poles}).
	For unit indices, they all have $\sdc{\gamma}=1$ and we can read off the vertices of the polar in \autoref{fig:U42} directly from \eqref{eq:polar-V-yi=0}:
	\begin{equation*}
		\Polar{\UM{4}{2}} \cap \set{\SPlog_4=0}
		= \conv
		\Big\{
			\underset{\color{gray}\set{1}}{\Big(\begin{smallmatrix} 1 \\ 0 \\ 0 \end{smallmatrix}\Big)},
			\underset{\color{gray}\set{2}}{\Big(\begin{smallmatrix} 0 \\ 1 \\ 0 \end{smallmatrix}\Big)},
			\underset{\color{gray}\set{3}}{\Big(\begin{smallmatrix} 0 \\ 0 \\ 1 \end{smallmatrix}\Big)},
			\underset{\color{gray}\set{4}}{\Big(\begin{smallmatrix}-1 \\-1 \\-1 \end{smallmatrix}\Big)},
			\underset{\color{gray}\set{2,3,4}}{\Big(\begin{smallmatrix}-1 \\ 0 \\ 0 \end{smallmatrix}\Big)},
			\underset{\color{gray}\set{1,3,4}}{\Big(\begin{smallmatrix} 0 \\-1 \\ 0 \end{smallmatrix}\Big)},
			\underset{\color{gray}\set{1,2,4}}{\Big(\begin{smallmatrix} 0 \\ 0 \\-1 \end{smallmatrix}\Big)},
			\underset{\color{gray}\set{1,2,3}}{\Big(\begin{smallmatrix} 1 \\ 1 \\ 1 \end{smallmatrix}\Big)}
		\Big\}.
		\underset{\color{gray}(\leftarrow \gamma )}{\phantom{\Big(\begin{smallmatrix} 1 \\ 0 \\ 0 \end{smallmatrix}\Big)}}
	\end{equation*}
	This rhombohedron extends the cross-polytope $\conv\set{\pm \uv{1},\pm \uv{2},\pm \uv{3}}$ with volume $4/3$ by two tetrahedra $\pm \conv\set{\uv{1},\uv{2},\uv{3},\uv{1}+\uv{2}+\uv{3}}$ which each have volume $1/3$. So the total volume is $2$ and we find $\Hepp{\UM{4}{2}}= 3!\cdot 2 = 12$ from \eqref{eq:hepp-volume} in agreement with \eqref{eq:hepp-uniform}.
\end{example}
\begin{remark}
	The Newton polytope differs from the \emph{Feynman polytope} defined in \cite{Brown:FeynmanAmplitudesGalois}, which has many more facets. They are labelled by \emph{all} `motic' (bridgeless) subgraphs of the given graph $G$, whereas only very few of those belong to $\Sing{G}$.
\end{remark}

\subsection{Factorization}
The facet $F_{\gamma}$ defined above for a submatroid $\gamma \in \Sing{M}$ contains precisely those vertices $\vec{v} = \vec{\ind}-\frac{\Dim}{2} \uv{T^c}$ of $\Newton{M}(\vec{\ind})$ such that $\gamma \cap T$ is a basis of $\gamma$, see \eqref{eq:sdc-uv-gamma}. In this case we see that $T\setminus \gamma$ is a basis of $M/\gamma$, and hence the vertices of $F_{\gamma}$ are in bijection with the Cartesian product $\Bases{\gamma}\times\Bases{M/\gamma}$.
The corresponding factorization
\begin{equation}
	\partial \Newton{M}(\vec{\ind})
	\supset
	F_{\gamma}
	\cong \Newton{\gamma} \times \Newton{M/\gamma}
	\label{eq:Newton-facet-product}%
\end{equation}
is the geometric incarnation of the algebraic relation $\PsiPol_{M} \equiv \PsiPol_{\gamma} \PsiPol_{G/\gamma} \mod I$, where $I$ is the ideal generated by monomials of degree $\loops{\gamma}+1$ in the variables $\setexp{\ind_e}{e\in\gamma}$. We exploited this relation in \eqref{eq:Psi-flag-ineq}, and it is absolutely fundamental in the study of Feynman amplitudes and graph polynomials \cite{Schultka:ToricFeynman,Brown:FeynmanAmplitudesGalois,BlochEsnaultKreimer:MotivesGraphPolynomials,BrownKreimer:AnglesScales}.

Regarding the Hepp bound, we see from \eqref{eq:polar-V} that the divergence at $\sdc[\vec{\ind}]{\gamma} \rightarrow 0$ comes from a pyramid with the runaway apex $\uv{\gamma}/\sdc[\vec{\ind}]{\gamma}$. Its volume comes from the piece of the integral \eqref{eq:hepp-exp} where $\vec{\SPlog}$ ranges over the cone $\R_{>0} F_{\gamma}$. The factorization \eqref{eq:Newton-facet-product} thus leads to a product formula for the residue, namely \eqref{eq:hepp-residue}.

\subsection{Computations}
The calculation of the volume of a convex polytope is a difficult, but very well studied problem. In small dimensions, we can resort to exact algorithms like {\lrs} \cite{Avis:RevisedLRS}. We tested this program on all $\field^4$ periods with at most $7$ loops. Taking the vertex description \eqref{eq:polar-V-yi=0} as input, {\lrs} determines the facets and computes the volume. In all cases, these facets match precisely the spanning trees according to \eqref{eq:polar-facets}, and the volume reproduces the Hepp bound computed by \eqref{eq:hepp-1pi-flags} in line with \eqref{eq:hepp-volume}.
We found {\cddrp} \cite{Fukuda:cddr+} to be more efficient for the transformation between facets and vertices, but it does not provide a volume computation.

Such exact volume determinations are very time consuming and in our tests only practical up to around 15 dimensions (graphs with $16$ edges). Not surprisingly, a direct combinatorial recursion like \eqref{eq:hepp-flag-recursive} is much more efficient. Our implementation of this method succeeded for graphs with $30$ edges. The Hepp bound thus provides a dataset of polytopes in a large number of dimensions with exactly known volumes, which might be useful as a benchmark for general volume computation algorithms.

We are not aware of previous work on the polar volume of matroids, but the volume of the matroid polytope itself has been studied in the literature. For example, a combinatorial formula was given in \cite{ArdilaBenedettiDoker:MatroidPolytopes} using the theory of generalized permutohedra \cite{Postnikov:GenPermuto}. For fixed rank or corank, the volume of a matroid polytope can be computed in polynomial time \cite{DeLoeraHawsKoeppe:Ehrhart}. Our \autoref{cor:Hepp-polynomial-time} is the same statement about the volume of the polar.

Recently it has become feasible to \emph{approximate} volumes of polytopes in very high dimension, using rapidly mixing random walks. At least three implementations of such methods are readily available \cite{GeMa:FastPractical,CousinsVempala:PracticalVolume,EmirisFisikopoulos:PPVA}. These techniques open up the fascinating possibility to study properties of very large graphs and matroids, which might give insights into the asymptotic behaviour of the perturbation series of quantum field theory.

\begin{table}
	\centering
\begin{tabular}{rccccc}\toprule
	$G$ & $P_{7,4}{\setminus} r$ & $P_{7,4}{\setminus} q$ & $P_{7,4}{\setminus} t$ & $P_{7,4}{\setminus} p$ & $P_{7,4}{\setminus} s$ \\
	\midrule
	$\abs{\STrees{G}}$ & 1137 & 1166 & 1168 & 1197 & 1248 \\
	$\abs{\Sing{G}}$ & 66 & 64 & 69 & 69 & 81 \\
	\bottomrule
\end{tabular}
\qquad
$ P_{7,4} = \Graph[0.75]{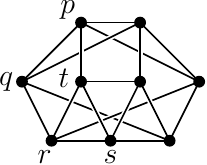}$ %
\caption{Facet and vertex counts of $\Polar{G}$ for different uncompletions of the graph $P_{7,4}$.}%
	\label{tab:polytope-uncompletions}%
\end{table}
\begin{remark}
	Completion invariance identifies the volumes of combinatorially distinct polytopes. The graph from \cite{Schnetz:Census} in \autoref{tab:polytope-uncompletions} has five non-isomorphic uncompletions $G=P_{7,4}{\setminus}v$ and each of them has a different number of spanning trees (facets of $\Polar{G}$).
	\autoref{tab:polytope-uncompletions} also shows the number of divergent subgraphs (vertices of $\Polar{G}$). By \autoref{thm:hepp-completion}, the volumes $13!\cdot \Vol (\Polar{G} \cap \set{\SPlog_i=0}) = \Hepp{G} = 123260$ of these five different polar polytopes are all the same.
	The volumes of the Newton polytopes $\Newton{G}$, however, are all distinct.

	In contrast, the duality \eqref{eq:hepp-duality} amounts to a simple reflection: $\Polar{G^{\dual}} = -\Polar{G}$.
\end{remark}

\subsection{Shape}
\label{sec:shape}

We saw in \autoref{ex:uniform-poles} that uniform matroids $\UM{n}{r}$ have only $2n$ vertices and huge number of bases, namely $\binom{n}{r}$. Their polars $\Polar{\UM{n}{r}}$ are similar to the cross-polytopes $\conv\set{\pm \uv{1},\ldots,\pm \uv{n}}$, which also have $2n$ vertices, and the maximal number $2^n$ of facets.

For {\PlogDiv} graphs $G$ in $4$ dimensions ($n=\abs{G}=2\ell$), like the $\field^4$ graphs considered above, we find experimentally a roughly similar behaviour. The number of spanning trees grows exponentially with $n$; this follows from \cite{McKay:SpanningRegular}. Therefore, $\Polar{G}$ has an exponentially large number of facets.
The number $\abs{\Sing{G}}$ of vertices of $\Polar{G}$, in contrast, appears to grow very slowly in comparison. For the zigzag graphs from \cite{BroadhurstKreimer:KnotsNumbers}, for example, one can show that $\abs{\Sing{G}}$ is quadratic in $n$.

The similarity between $\Polar{G}$ and the cross-polytope is particularly pertinent in regards of the volumes. We see from \eqref{eq:polar-facets} that
$	\pm \uv{k} \in \Polar{G} $
for all $1\leq k \leq n$. These correspond to single edge subgraphs and their complements. It follows that $\Polar{G} \cap \set{\SPlog_n=0}$ contains the cross-polytope of dimension $n-1=2\ell-1$, which implies the lower bound
\begin{equation}
	\Hepp{G} \geq
	(n-1)! \cdot \Vol \left( \conv \set{\pm \uv{1},\ldots,\pm \uv{n-1}} \right)
	=2^{n-1}.
	\label{eq:hepp-lower}%
\end{equation}
In fact, we get a slightly better lower bound $\Hepp{}(\UM{2\ell}{\ell})$ according to \autoref{rem:um-minimal}. However, the important point to note is that $\Hepp{G}$ grows at least exponentially with $n$.

Note that $\Polar{G}$ has an insphere of radius $1/\sqrt{n}$: The vectors $\vt{T} \defas \uv{T} - \uv{T^c} \in \set{-1,1}^n$ solve $\sdc{\vt{T}/n} = 1$ by \eqref{eq:polar-facets} and they have norm $\norm{\vt{T}}=\sqrt{n}$. The volume of this sphere grows like $(1/\sqrt{n})^n/\Gamma(1+n/2) \sim 1/n!$ for large $n$, up to exponential factors. With the prefactor $(n-1)!$ in \eqref{eq:hepp-volume}, we see that the volume of the cross-polytope and the insphere are of comparable magnitude.

The Hepp bound could however be much larger than \eqref{eq:hepp-lower}. Since the vertices $\pm\uv{k}$ have distance $1$ from the origin, the circumsphere of $\Polar{G}$ has radius $1$ and thus its volume exceeds that of the insphere by a factor of $(\sqrt{n})^n \sim \sqrt{n!}$.

It was proven in \cite{CalanRivasseau:Phi44} that the periods of $\field^4$ graphs grow only exponentially, and by \eqref{eq:hepp-period-bound} this implies an exponential upper bound on $\Hepp{G}$ as well. We conclude that the volume of $\Polar{G}$ is concentrated near the insphere.

\begin{remark}
	The polytopes $\Newton{G}$ are typically not simple and far from simplicial. A facet $F_{\gamma}=\Newton{G} \cap \setexp{\vec{\SPlog}}{\uv{\gamma}\cdot \vec{\SPlog} = \sdc{\gamma}}$ corresponding to a singular subgraph $\gamma\in\Sing{G}$ is bounded by $\abs{\STrees{\gamma}}\cdot|\STrees{G/\gamma}|$ vertices, and a vertex $\vt{T}$ belongs to all facets $F_{\gamma}$ such that $\gamma\cap T$ spans $\gamma$.
\end{remark}
\begin{example}
	The complete graph $K_4$ has 16 singular subgraphs: $6$ singletons $\set{e}$, their complements $K_4{\setminus}e\cong \gKite$, and four triangles $\cong\gTri$. The vertex $\vt{T}$ of $\Newton{K_4}$ corresponding to a star $T\cong\Graph[0.15]{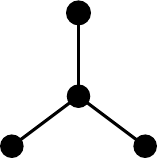}$ lies on $9$ facets: the three edges $e\in T$, the complements $K_4{\setminus}f$ of the three edges $f\notin T$, and $3$ of the triangles. In the case of a path $T\cong\Graph[0.15]{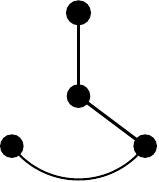}$, the vertex $\vt{T}$ lies on $8$ facets, because it only supports $2$ triangles.
	Conversely, the $4$-dimensional facets $F_{\gamma}$ have $9$ vertices if $\gamma\cong\gTri$; in this case, $F_{\gamma} \cong \Newton{\Graph[0.2]{tri}} \times \Newton{\Graph[0.2]{triple}}$ is a product of simplices.
	The other facets $F_{e} \cong \Newton{\Graph[0.2]{tri221}}$ and $F_{K_4{\setminus}e}\cong\Newton{\Graph[0.2]{hkite}}$ have $8$ vertices each.
\end{example}

\subsection{Tree decomposition}
The facets of the polar $\Polar{G}$ are in bijection with the spanning trees $T$ of the graph $G$, according to \eqref{eq:polar-facets}. For unit indices in $4$ dimensions, these facets are concretely
\begin{equation*}
	F_T \defas \Polar{G} \cap \setexp{\vec{\SPlog}}{\vec{\SPlog} \cdot (\uv{T}-\uv{T^c}) = 1}.
\end{equation*}
They induce a decomposition $\Polar{G} = \bigcup_{T\in\STrees{G}} \conv(\{\vec{0}\}\cup F_T)$ into pyramids with disjoint interiors. The volume of such a pyramid corresponds in \eqref{eq:hepp-exp} to the integral over the region where $\sdc{\vec{\SPlog}} = \vec{\SPlog} \cdot (\uv{T}-\uv{T^c})$. In Schwinger parameters \eqref{eq:hepp-mellin}, this is the sector
\begin{equation}
	\HeppSec{T}
	\defas \Big\{
		\vec{\SP}
	\colon
		\prod_{e\notin T} \SP_e > \prod_{e\notin T'} \SP_e 
		\quad\text{for all}\quad T' \in \STrees{G}
	\Big\}
	= \bigcup_{\sigma\colon T_{\sigma}=T} \HeppSec{\sigma}
	\label{eq:tree-sector}%
\end{equation}
where the maximum monomial $\PsiTrop_G(\vec{\SP}) = \prod_{e\notin T} \SP_e$ is given by $T$. Each $\HeppSec{T}$ subsumes many Hepp sectors $\HeppSec{\sigma}$ as determined by \autoref{lem:Kruskal}. The pyramid decomposition corresponds thus to the partition $\bigcup_{T\in\STrees{G}} \HeppSec{T}$ of the integration domain in \eqref{eq:hepp-mellin}, such that
\begin{equation}
	\Hepp{G}
	= \sum_{T\in\STrees{G}} \Hepp{T\subset G}
	\quad\text{where}\quad
	\Hepp{T\subset G}
	\defas 
	\int_{\HeppSec{T}} \frac{\Omega}{\prod_{e\notin T} \SP_e^2}.
	\label{eq:hepp-from-trees}%
\end{equation}
\begin{lemma}\label{lem:hepp-tree-max}
	Given a spanning tree $T\in\STrees{G}$ and an edge $e\notin T$, let $\TreePath{T}{e} \subseteq T$ denote the unique path in $T$ that connects the endpoints of $e$. Then $T$'s contribution to $\Hepp{G}$ is
	\begin{equation}
		\Hepp{T\subset G}
		=
		\int_{\Projective^T}
		\frac{\Omega}{\prod_{e\notin T}\max\setexp{\SP_k}{k \in \TreePath{T}{e}}}.
		\label{eq:hepp-tree-max}%
	\end{equation}
\end{lemma}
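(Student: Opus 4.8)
The plan is to compute the sector integral $\Hepp{T\subset G}$ directly, by integrating out the loop edges $e\notin T$ one at a time and recognising what survives as an integral over the tree edges. The whole argument sits in the logarithmically divergent setting of $\Dim=4$ with unit indices, where $N=2\loops{G}$, so that $T$ and its complement $T^c$ each carry exactly $\loops{G}=N/2$ edges; I would record the resulting equality $\abs{T}=\abs{T^c}$ at the very start, since it is what makes the right-hand integrand homogeneous of degree zero and hence a well-defined projective form on $\Projective^T$ (the form $\Omega$ is homogeneous of degree $\abs{T}$, while $\prod_{e\notin T}m_e$ has degree $\abs{T^c}$).

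First I would make the sector $\HeppSec{T}$ completely explicit. Starting from the monomial description \eqref{eq:tree-sector}, I would compare $T$ only with its single-swap neighbours $T'=T\setminus\set{k}\cup\set{e}$, where $e\notin T$ and $k$ ranges over the fundamental path $\TreePath{T}{e}\subseteq T$. Each such $T'$ is again a spanning tree, its complement is $(T')^c=T^c\setminus\set{e}\cup\set{k}$, and so $\prod_{f\notin T}\SP_f>\prod_{f\notin T'}\SP_f$ reduces to the single inequality $\SP_e>\SP_k$. Ranging over $k\in\TreePath{T}{e}$ produces the lower bound $\SP_e>m_e\defas\max\setexp{\SP_k}{k\in\TreePath{T}{e}}$, and \autoref{lem:Kruskal} guarantees that these single-swap comparisons already carve out the entire sector, i.e.
\[
	\HeppSec{T}=\setexp{\vec{\SP}}{\SP_e>m_e\ \text{for every}\ e\notin T}.
\]
The decisive structural observation, which I would stress, is that $m_e$ depends only on the tree variables $\setexp{\SP_k}{k\in T}$ because $\TreePath{T}{e}\subseteq T$; the loop variables therefore satisfy independent, non-interacting constraints.

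Next I would fix an affine chart by setting $\SP_{t_0}=1$ for a chosen tree edge $t_0\in T$, so that the loop parameters stay free to run to infinity. In this chart $\Omega$ equals $\pm\bigwedge_{e\neq t_0}\td\SP_e$ (only the term indexed by $t_0$ survives the restriction), and $\Hepp{T\subset G}$ becomes an iterated integral in which the tree variables $\SP_k$ with $k\in T\setminus\set{t_0}$ range over $(0,\infty)$ while each $\SP_e$ runs over $(m_e,\infty)$ with weight $\SP_e^{-2}$ supplied by $1/\PsiTrop_G^{2}=\prod_{e\notin T}\SP_e^{-2}$. Since no $m_e$ involves a loop variable, the loop integrals factor and each evaluates to the elementary
\[
	\int_{m_e}^{\infty}\SP_e^{-2}\,\td\SP_e=\frac{1}{m_e}.
\]
What remains is the integral of $\prod_{e\notin T}1/m_e$ against $\bigwedge_{k\in T\setminus\set{t_0}}\td\SP_k$ over the positive orthant, which is exactly the chart $\SP_{t_0}=1$ of $\int_{\Projective^T}\Omega/\prod_{e\notin T}\max\setexp{\SP_k}{k\in\TreePath{T}{e}}$; restoring homogeneity (legitimate by $\abs{T}=\abs{T^c}$) gives the claimed projective formula.

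I expect the only genuine obstacle to be the first step: establishing that the sector is cut out \emph{exactly} by the path inequalities $\SP_e>m_e$, equivalently that comparison against single-swap trees suffices. This is the matroid/greedy content, and I would lean on \autoref{lem:Kruskal} together with the fact that $\TreePath{T}{e}\cup\set{e}$ is the unique circuit created by adding $e$ to $T$. Everything afterwards is bookkeeping; the only points needing a little care are the sign and the homogeneity of $\Omega$ in the chosen chart, and the verification that the residual tree integrand is precisely the projective volume form on $\Projective^T$ divided by the product of the path maxima.
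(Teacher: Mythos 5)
Your proof is correct and follows essentially the same route as the paper: identify $\HeppSec{T}$ via Kruskal's greedy algorithm as the region cut out by the path inequalities $\SP_e>\max\setexp{\SP_k}{k\in\TreePath{T}{e}}$ for $e\notin T$, then integrate out each non-tree edge using $\int_{m_e}^{\infty}\SP_e^{-2}\,\td\SP_e=1/m_e$ to land on the projective integral over the tree variables. The extra details you supply (the single-swap reduction, the chart $\SP_{t_0}=1$, and the homogeneity check $\abs{T}=\abs{T^c}$ in the log-divergent setting) are all sound elaborations of what the paper's two-line proof leaves implicit.
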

\begin{proof}
	From Kruskal's greedy algorithm (\autoref{lem:Kruskal}), we see that
	\begin{equation*}
		\HeppSec{T} = \Big\{
			\vec{\SP}
		\colon
			\SP_e > \max\setexp{\SP_k}{k\in \TreePath{T}{e}}
			\quad\text{for all}\quad
			e\notin T
		\Big\}.
	\end{equation*}
	The integrals of $\td \SP_e/\SP_e^2$ over all $e\notin T$ therefore yield the claim.
\end{proof}
\begin{example}\label{ex:K4-tree-hepp}
	Consider the star $T=\set{2,4,6}= \Graph[0.15]{w3star} \subset \Graph[0.15]{w3small}$ in $G=K_4$ with the labels of \autoref{fig:completion-K5}. Each edge $e\notin T$ is connected in $T$ by two edges: $\TreePath{T}{1} = \set{2,6}$, $\TreePath{T}{3}=\set{2,4}$ and $\TreePath{T}{5}=\set{4,6}$. The corresponding projective integral
	\begin{equation*}
		\Hepp{ \Graph[0.2]{w3star}\subset \Graph[0.2]{w3small} }
		= \int_{\Projective^T} \frac{\Omega}{
			\max\set{\SP_2,\SP_4}
			\max\set{\SP_2,\SP_6}
			\max\set{\SP_4,\SP_6}
		}
		= 6
	\end{equation*}
	is easily evaluated by setting $1=\max\set{\SP_2,\SP_4,\SP_6}$ to obtain $6$ times the affine integral $\int_{0<\SP_2<\SP_4<\SP_6=1} 1/\SP_4 = 1$.
	For the path $T= \set{2,4,5}=\Graph[0.15]{w3path}$ we find $\TreePath{T}{6} = \set{4,5}$ and note that $\TreePath{T}{1}=T$ has three edges. Therefore, the projective integral is smaller and we find
	\begin{equation*}
		\Hepp{ \Graph[0.2]{w3path}\subset \Graph[0.2]{w3small} }
		= \int_{\Projective^T} \frac{\Omega}{
			\max\set{\SP_2,\SP_4}
			\max\set{\SP_2,\SP_4,\SP_5}
			\max\set{\SP_4,\SP_5}
		}
		= 5.
	\end{equation*}
	As $K_4$ has $4$ stars and $12$ paths as spanning trees, its Hepp bound is $4\cdot 6 + 12 \cdot 5 = 84$.
\end{example}
\begin{remark}\label{rem:hepp-tree-upper-bound}
	The contribution $\Hepp{T\subset G}$ depends not only on $T$, but also on $G$ and on the embedding of $T$ into $G$. It is possible, however, to give an intrinsic upper bound on $\Hepp{T\subset G}$ that only depends on $T$. Such a bound is constructed in \cite[Appendix~B]{CalanRivasseau:Phi44} and shown to grow only exponentially with the size of the tree.
\end{remark}
\begin{remark}
	Tree contributions \eqref{eq:hepp-tree-max} to the Hepp bound are not to be confused with \emph{constructive tree weights} $w(G,T)$ considered in \cite{RivasseauWang:HowToResum}. The latter sum to $1$ and count the fraction of Hepp sectors $\HeppSec{\sigma}$ where $T=T_{\sigma}$ gives the maximal monomial $\prod_{e\notin T} \SP_e$ of $\PsiPol_G$.
	These weights also have an integral representation \cite[Theorem~2.1]{RivasseauWang:HowToResum}
	\begin{equation*}
		w(G,T) = \left( \prod_{i\in T} \int_0^1 \td \SP_i \right)
		\prod_{e\notin T} \min \setexp{\SP_k}{k\in \TreePath{T}{e}},
	\end{equation*}
	but in \autoref{ex:K4-tree-hepp} this results in $w(K_4,\Graph[0.15]{w3star}) = 1/15$ and $w(K_4,\Graph[0.15]{w3path}) = 11/180$.
\end{remark}

\subsection{Period correlation}
\label{sec:period-correlation}

For {\PlogDiv} graphs with unit indices in $\Dim=4$ dimensions, the support function \eqref{eq:sdc-vec} is
\begin{equation}
	\sdc{\vec{\SPlog}} = \max_{T\in\STrees{G}} \vec{\SPlog} \cdot \vt{T}
	\qquad\text{where}\qquad
	\vt{T} \defas \uv{T}-\uv{T^c}
	=\sum_{k\in T} \uv{k} - \sum_{k\notin T} \uv{k}.
	\label{eq:vt}%
\end{equation}
Integrating over the norm $\lambda \defas \norm{\vec{\SPlog}}$ in \eqref{eq:hepp-exp} gives $\int_0^{\infty} \lambda^{N-2} e^{-\lambda \omega} \td \lambda = (N-2)!/\omega^{N-1}$, so
\begin{equation}
	\Hepp{G} = (N-2)! \int_{S^{N-1} \cap \set{\SPlog_i=0}} \frac{\Omega}{\sdc{\vec{\SPlog}}^{N-1}}
	\label{eq:hepp-projective}%
\end{equation}
is an integral over the $N-2$ dimensional sphere $S^{N-1} \cap \set{\SPlog_i=0}\cong S^{N-2}$. The period has a very similar representation that explains the correlation.
\begin{lemma}\label{lem:Period-SPlog}
	The period of a {\PlogDiv} graph $G$ with $N$ edges in $4$ dimensions is equal to
	\begin{equation}
		\Period{G}
		= (N-2)! \int_{S^{N-1} \cap \set{\SPlog_i=0}}
		\frac{\Omega}{\sdc{\vec{\SPlog}}^{N-1}} \RadInt{\vec{\SPlog}},
		\label{eq:period-projective}
	\end{equation}
	where $
		\RadInt{\vec{\SPlog}}=\RadInt{\lambda \vec{\SPlog}} 
	$ 
	for all $\lambda>0$ takes values between $1/\abs{\STrees{G}}^2$ and $1$, and is defined by
	\begin{equation}
		\RadInt{\vec{\SPlog}}
		\defas 
		\int_0^{\infty}
		\frac{\lambda^{N-2}}{(N-2)!}
		\bigg\{
			\sum_{T \in \STrees{G}}
			\exp\left( \frac{\lambda}{2} \frac{\vec{\SPlog} \cdot \vt{T}}{\sdc{\vec{\SPlog}}} \right)
		\bigg\}^{-2}
		\td \lambda
		.
		\label{eq:radint}%
	\end{equation}
\end{lemma}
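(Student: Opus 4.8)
The plan is to reuse, almost verbatim, the support-function machinery developed for the Hepp bound in \autoref{cor:hepp-exp} and \eqref{eq:hepp-projective}, the single new ingredient being the correction factor $\PsiTrop_G/\PsiPol_G$. Passing to logarithmic coordinates $\SP_i=e^{-\SPlog_i}$ exactly as in the derivation of \autoref{cor:hepp-exp}, and using the factorization $\PsiPol_G^{-\Dim/2}=(\PsiTrop_G)^{-\Dim/2}\cdot(\PsiTrop_G/\PsiPol_G)^{\Dim/2}$, the period integrand differs from the Hepp integrand only by the factor $(\PsiTrop_G/\PsiPol_G)^{\Dim/2}$. Hence for $\Dim=4$ and unit indices, restricting to the transversal hyperplane $\set{\SPlog_i=0}$, I would first establish
\[
	\Period{G}
	= \int_{\set{\SPlog_i=0}} e^{-\sdc{\vec{\SPlog}}}
	\left( \frac{\PsiTrop_G}{\PsiPol_G} \right)^{2} \td^{N-1}\vec{\SPlog}.
\]

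Next I would rewrite the correction factor in the variables $\vec{\SPlog}$. From $\PsiPol_G(\vec{\SP})=\sum_{T}\exp(-\vec{\SPlog}\cdot\uv{T^c})$, the relation $\uv{T^c}=(\vec{1}-\vt{T})/2$, and the identity $\sdc{\vec{\SPlog}}=\max_{T}\vec{\SPlog}\cdot\vt{T}$ from \eqref{eq:vt} (which holds on all of $\R^N$ since the $\vec{\SPlog}\cdot\vec{1}$ contributions cancel), one obtains
\[
	\frac{\PsiTrop_G}{\PsiPol_G}(\vec{\SPlog})
	= \frac{\exp\!\big(\tfrac12\sdc{\vec{\SPlog}}\big)}{\sum_{T}\exp\!\big(\tfrac12\vec{\SPlog}\cdot\vt{T}\big)}.
\]
Introducing polar coordinates $\vec{\SPlog}=\lambda\vec{u}$ with $\vec{u}\in S^{N-1}\cap\set{\SPlog_i=0}$ and $\td^{N-1}\vec{\SPlog}=\lambda^{N-2}\td\lambda\,\Omega$, and using the homogeneity $\sdc{\lambda\vec{u}}=\lambda\sdc{\vec{u}}$, the decisive point is that $(\PsiTrop_G/\PsiPol_G)^2(\lambda\vec{u})=e^{\lambda\sdc{\vec{u}}}\big(\sum_T\exp(\tfrac\lambda2\vec{u}\cdot\vt{T})\big)^{-2}$, so the growing factor $e^{\lambda\sdc{\vec{u}}}$ exactly cancels the weight $e^{-\lambda\sdc{\vec{u}}}$. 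Rescaling $\lambda\mapsto\lambda/\sdc{\vec{u}}$ then pulls out the prefactor $\sdc{\vec{u}}^{-(N-1)}$ and normalizes the exponent to $\vec{u}\cdot\vt{T}/\sdc{\vec{u}}$, which reproduces precisely \eqref{eq:radint} and hence the claimed formula \eqref{eq:period-projective}. The bounds on $\RadInt{\vec{\SPlog}}$ follow immediately from \eqref{eq:radint}: since $\vec{\SPlog}\cdot\vt{T}/\sdc{\vec{\SPlog}}\le 1$ with equality for the dominating tree, the brace lies between $e^{\lambda/2}$ and $\abs{\STrees{G}}\,e^{\lambda/2}$, and because $\tfrac{1}{(N-2)!}\int_0^\infty\lambda^{N-2}e^{-\lambda}\td\lambda=1$ we conclude $\abs{\STrees{G}}^{-2}\le\RadInt{\vec{\SPlog}}\le 1$; the scale invariance $\RadInt{\vec{\SPlog}}=\RadInt{\lambda\vec{\SPlog}}$ is visible from the degree-zero homogeneity of $\vec{\SPlog}\cdot\vt{T}/\sdc{\vec{\SPlog}}$.

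I expect the genuine work to be purely bookkeeping rather than conceptual: keeping the exponents consistent so that the $e^{\pm\lambda\sdc{\vec{u}}}$ cancellation is exact, and justifying convergence of every integral. The latter is where I would be most careful—one must note that $G$ being {\PlogDiv} places the unit indices in the open cone $\ConvDom$, so that $\sdc{\vec{u}}>0$ for every $\vec{u}$ transversal to $\vec{1}$ (in particular on $S^{N-1}\cap\set{\SPlog_i=0}$, since $\vec{1}\notin\set{\SPlog_i=0}$), guaranteeing that the radial integral and the spherical integral both converge. Everything else is a direct specialization of the polar-volume computation already carried out for $\Hepp{G}$ in \autoref{cor:hepp-volume} and \eqref{eq:hepp-projective}.
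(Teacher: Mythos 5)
Correct, and essentially identical to the paper's own proof: the paper's one-line substitution $\SP_k = \exp(-\lambda \SPlog_k/\sdc{\vec{\SPlog}})$ into \eqref{eq:period-mellin} is exactly the composition of your three changes of variables (logarithmic coordinates, polar decomposition, rescaling by $\sdc{\vec{u}}$), and your derivation of the bounds on $\RadInt{\vec{\SPlog}}$ from $\vec{\SPlog}\cdot\vt{T}\le\sdc{\vec{\SPlog}}$, with equality for the dominating tree, matches the paper's verbatim. Your explicit verification that $\sdc{\vec{u}}>0$ on $S^{N-1}\cap\set{\SPlog_i=0}$ (so that the radial and spherical integrals converge) is left implicit in the paper but is the right justification.
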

\begin{proof}
	Substitute $\SP_k= \exp(-\lambda\SPlog_k/\sdc{\vec{\SPlog}})$ into \eqref{eq:period-mellin}, this gives \eqref{eq:period-projective}. For the lower bound on $\RadInt{\vec{\SPlog}}$, note that the exponents are at most $\lambda/2$, because $\vec{\SPlog} \cdot \vt{T} \leq \sdc{\vec{\SPlog}}$ by \eqref{eq:vt}, so
	\begin{equation*}
		\RadInt{\vec{\SPlog}} \geq
		\int_0^{\infty}
		\frac{\lambda^{N-2}}{(N-2)!}
		\left\{
			\abs{\STrees{G}}
			\exp\left( \lambda/2\right)
		\right\}^{-2}
		=
		\abs{\STrees{G}}^{-2}
		.
	\end{equation*}
	Furthermore, the spanning tree $T$ that dominates for a given value of $\vec{\SPlog}$ contributes a summand with $\vec{\SPlog} \cdot \vt{T} = \sdc{\vec{\SPlog}}$, which gives the upper bound $\RadInt{\vec{\SPlog}} \leq 1$.
\end{proof}
By comparison with \eqref{eq:hepp-projective}, the bounds $1/\abs{\STrees{G}}^2 \leq \RadInt{\vec{\SPlog}} \leq 1$ imply the relations \eqref{eq:hepp-period-bound}. The corresponding lower bound
$		\Period{G} \geq 2^{N-1}/\abs{\STrees{G}}^2 $
from \eqref{eq:hepp-lower} can be improved easily:%
\begin{lemma}
	The period of a {\PlogDiv} graph in $\Dim=4$ is at least
$		\Period{G} \geq 4^{N-1}/\abs{\STrees{G}}^2$.
	\label{lemma:ST-lower-bound}%
\end{lemma}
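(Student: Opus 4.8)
The plan is to work directly with the Schwinger-parametric period integral \eqref{eq:period-mellin} rather than with the spherical representation of \autoref{lem:Period-SPlog}, and to exploit that the graph polynomial $\PsiPol_G$ is \emph{multi-affine}: since every spanning tree is a set of edges, each variable $\SP_e$ occurs to degree at most one in $\PsiPol_G=\sum_{T}\prod_{e\notin T}\SP_e$. In $\Dim=4$ with unit indices and in the chart $\SP_N=1$, the period reads $\Period{G}=\int_{(0,\infty)^{N-1}}\PsiPol_G(\SP_1,\dots,\SP_{N-1},1)^{-2}\,\td\SP_1\cdots\td\SP_{N-1}$. The idea is to integrate the variables out one at a time, gaining a factor of $4$ at each step and thereby replacing the geometric constant $2^{N-1}$ behind the old bound $\Period{G}\ge 2^{N-1}/\abs{\STrees{G}}^2$ from \eqref{eq:hepp-lower} by $4^{N-1}$.

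First I would single out one edge $e$ and write $\PsiPol_G=A+\SP_e B$, where $A=\PsiPol_G|_{\SP_e=0}=\PsiPol_{G/e}$ collects the trees through $e$ (contraction) and $B=\partial_{\SP_e}\PsiPol_G=\PsiPol_{G\setminus e}$ collects the trees avoiding $e$ (deletion). The elementary identity $\int_0^\infty(A+\SP_e B)^{-2}\,\td\SP_e=1/(AB)$ removes $\SP_e$ exactly, and this is the key computation that replaces the crude estimate $\PsiPol_G\le\abs{\STrees{G}}\cdot\max_T\prod_{e\notin T}\SP_e$ (which only reproduces the $2^{N-1}$ bound). Applying the arithmetic–geometric mean inequality $AB\le\tfrac14(A+B)^2$ and noting $A+B=\PsiPol_G|_{\SP_e=1}$, I get $1/(AB)\ge 4\,\PsiPol_G|_{\SP_e=1}^{-2}$, which is once more an integrand of the same shape but with $\SP_e$ frozen to $1$ and one fewer integration variable. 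Iterating over all $N-1$ variables produces the factor $4^{N-1}$ and successively sets every $\SP_e$ to $1$; since $\PsiPol_G(1,\dots,1)=\sum_{T\in\STrees{G}}1=\abs{\STrees{G}}$, this yields exactly $\Period{G}\ge 4^{N-1}/\abs{\STrees{G}}^2$.

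The step requiring care—and essentially the only place where the hypotheses enter—is the positivity of $A$ and $B$ at every stage, which is what makes both the exact integral and the AM–GM bound legitimate (and guarantees convergence). Here $A=\PsiPol_{G/e}\neq0$ needs a spanning tree containing $e$, so $e$ must not be a self-loop, while $B=\PsiPol_{G\setminus e}\neq0$ needs a spanning tree avoiding $e$, so $e$ must not be a bridge; both hold because a {\PlogDiv} graph is biconnected with no self-loops and {\OnePI} (bridgeless). Freezing the earlier variables to $1$ only replaces some monomials by others with positive coefficients, so multi-affineness and positivity persist through the recursion, and no further estimate is needed. I therefore expect the entire argument to be short and elementary, the only genuine obstacle being to state these connectivity conditions cleanly so that the one-variable integral and its AM–GM bound are valid at each of the $N-1$ reduction steps.
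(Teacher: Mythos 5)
Your proposal is correct and is essentially identical to the paper's own proof: the paper also writes $\PsiPol_G = \SP_1 \PsiPol_{G\setminus 1} + \PsiPol_{G/1}$, evaluates $\int_0^{\infty} \PsiPol_G^{-2}\,\td\SP_1 = 1/(\PsiPol_{G\setminus 1}\PsiPol_{G/1})$ exactly, applies the arithmetic--geometric mean inequality to get $4/\PsiPol_G^2|_{\SP_1=1}$, and iterates over the remaining edges before setting $\SP_N=1$ and using $\PsiPol_G(1,\ldots,1)=\abs{\STrees{G}}$. Your additional remarks on positivity of the contraction and deletion polynomials (no bridges, no self-loops) make explicit a point the paper leaves implicit, but they do not change the argument.
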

\begin{proof}
	Contraction and deletion of an edge show that $\PsiPol_G = \SP_1 \PsiPol_{G\setminus 1 } + \PsiPol_{G/1}$. Therefore,
	\begin{equation*}
		\int_0^{\infty} \frac{\td \SP_1}{\PsiPol_G^2}
		= 
		\frac{1}{\PsiPol_{G\setminus 1} \PsiPol_{G/1}}
		\geq \frac{4}{(\PsiPol_{G\setminus 1}+\PsiPol_{G/1})^2}
		= \frac{4}{\PsiPol_G^2|_{\SP_1=1}}
	\end{equation*}
	by the arithmetic-geometric mean inequality. Repeating this estimate for the remaining edges $2,\ldots,N-1$ and setting $\SP_N=1$ proves the claim, since
	$\PsiPol_G|_{\SP_1=\cdots=\SP_N=1} = \abs{\STrees{G}}$.
\end{proof}
As discussed in \autoref{sec:shape}, the exponential upper bound on periods (and hence on the Hepp bounds) from \cite{CalanRivasseau:Phi44} implies that the integral \eqref{eq:hepp-projective} is dominated by the contributions from regions where $\sdc{\vec{\SPlog}}$ is large, that is, where $\vec{\SPlog}$ is close to one of the spanning tree directions $\vt{T}$.

Within a tree sector $\HeppSec{T}$, the function $\sdc{\vec{\SPlog}}=\vec{\SPlog}\cdot \vt{T}$ is linear. Together with the well-known log-concavity of matroid polynomials \cite{AnariGharanVinzant:LogConcaveBases,NagaokaYazawa:StrictLog}, it can be shown that the function $\RadInt{\vec{\SPlog}}$ is also log-concave within each tree sector. Consequently, it is minimized at the vertices \eqref{eq:polar-V} of the polar $\Polar{G}$:
\begin{equation*}
	\frac{\Period{G}}{\Hepp{G}}
	\geq
	\min_{\gamma \in \Sing{G}} \RadInt{ \uv{\gamma}/\sdc{\gamma} }. 
\end{equation*}

More importantly, the log-concavity implies that $\RadInt{\vec{\SPlog}}$ has a unique maximum in each tree sector; far away from the vertices and thus closer to the centre $\vt{T}$ of the corresponding facet of $\Polar{G}$. As explained above, this region dominates the integral \eqref{eq:hepp-projective}.

The arguments above give an intuition and qualitative explanation for the correlation between $\Hepp{G}$ and $\Period{G}$. Developing these ideas further, it should be possible to give a rigorous quantitative formulation and proof of the correlations observed in \autoref{fig:correlation}.

\section{Outlook}
\label{sec:outlook}

We explored the first properties of the Hepp bound and illustrated its rich structure, connecting algebraic, combinatorial and geometric aspects. Many interesting questions remain, including:
\begin{enumerate}
	\item What is the reason behind the unexplained relations like \eqref{eq:hepp-8loop-pairs}? Are the corresponding periods identical as predicted by \autoref{con:faithful}?
\item Which ${\PlogDiv}$ matroids maximize the Hepp bound? In line with \cite[Conjecture~1]{BrownSchnetz:ZigZag}, we conjecture that the zig-zag graphs give the largest Hepp bounds among primitive $\field^4$ graphs. What are the maximizers in other classes of graphs and matroids?
\item What is the computational complexity of the Hepp bound? Many techniques have been proposed to compute matroid polytope volumes, see for example \cite{UmerAshraf:AnotherVolumeMatroid}; can these be applied to compute also the polar volumes more efficiently?
\item How can one prove the strong correlation between the period and the Hepp bound? Can it be improved to provide even better approximations for periods?
\item
How can one extract other matroid invariants from $\Hepp{M,\vec{\ind}}$ as in \autoref{sec:other-invariants}?
\end{enumerate}
The Hepp bound $\Hepp{M}$ with unit indices is not defined when $M$ has a divergent submatroid $\gamma\subsetneq M$ such that $\abs{\gamma}/\loops{\gamma} = \abs{M}/\loops{M}$. This can be remedied as follows:
Lift the restriction to $\sdc{M}=0$, and consider instead the dimension $\Dim$ as a free parameter. Then extend the summands in \eqref{eq:multi-hepp-from-sectors} by the missing $N$th denominator, and
\begin{equation*}
	f_d(M)
	=
	\sum_{\sigma \in \perms{M}} \frac{1}{\sdc{M^{\sigma}_1}\cdots \sdc{M^{\sigma}_{N}}}
	\in \Q(\Dim)
\end{equation*}
is a well-defined rational function of $\Dim$, with poles of the form $\Dim=2\abs{\gamma}/\loops{\gamma}$. The residue at $\sdc{M}=0$ would be the Hepp bound, were it not for the higher order of that pole due to the presence of a divergence. Note that $f_d(M) = \Char(\Phi(M))$ in terms of the map $\Phi(M)$ defined in \autoref{sec:other-invariants}. In fact, $\Phi(M)$ is a multiplicative map of Hopf algebras, and with \eqref{eq:Char-shuffle}, also $f_d(M)$ is multiplicative.
See \cite{Schmitt:Incidence,CrapoSchmitt:FreeSubMatroids} for the Hopf algebra of matroids.

The standard Hopf-algebraic renormalization techniques \cite{CK:RH1,BrownKreimer:AnglesScales} can therefore be applied, and one obtains a renormalized character $f^+_d(M)$ without a pole in the dimension of choice.
One also obtains a counterterm $f^-_d(M)$, which is subject to renormalization group exponentiation \cite{CK:RH2}. The Hepp bound may then be defined as the corresponding contribution to the beta function. These analogies lead to a variant of perturbative quantum field theory where all Feynman integrals can be computed in rational terms. This theory will be explored in detail elsewhere.

The Hepp--period correlation may provide a new route to numerical approximations of more general Feynman integrals with kinematic dependence.

\bibliography{qft}

\providecommand{\href}[2]{#2}\providecommand{\eprintlink}[2]{\href{#1}{#2}}\begingroup\begin{thebibliography}{100}

\bibitem{AnariGharanVinzant:LogConcaveBases}
N.~Anari, S.~{Oveis Gharan} and C.~Vinzant,
  \href{\detokenize{http://dx.doi.org/10.1109/FOCS.2018.00013}}{\textit{Log-concave
  polynomials, entropy, and a deterministic approximation algorithm for
  counting bases of matroids}},  in \emph{2018 IEEE 59th Annual Symposium on
  Foundations of Computer Science (FOCS)}, pp.~35--46, IEEE, Oct., 2018.
\newblock \eprintlink{http://arxiv.org/abs/1807.00929}{arXiv:1807.00929
  [cs.DS]}.

\bibitem{ArdilaBenedettiDoker:MatroidPolytopes}
F.~{Ardila}, C.~{Benedetti} and J.~{Doker},
  \href{\detokenize{http://dx.doi.org/10.1007/s00454-009-9232-9}}{\textit{Matroid
  polytopes and their volumes}},
  \href{\detokenize{http://dx.doi.org/10.1007/s00454-009-9232-9}}{\emph{Discrete
  {\&} Computational Geometry} \textbf{43} (2010), no.~4 }pp.~841--854,
  \eprintlink{http://arxiv.org/abs/0810.3947}{arXiv:0810.3947 [math.CO]}.

\bibitem{ArkaniBaiLam:PositiveCanonical}
N.~{Arkani-Hamed}, Y.~{Bai} and T.~{Lam},
  \href{\detokenize{http://dx.doi.org/10.1007/JHEP11(2017)039}}{\textit{Positive
  geometries and canonical forms}},
  \href{\detokenize{http://dx.doi.org/10.1007/JHEP11(2017)039}}{\emph{JHEP}
  \textbf{2017} (Nov., 2017) }p.~39,
  \eprintlink{http://arxiv.org/abs/1703.04541}{arXiv:1703.04541 [hep-th]}.

\bibitem{AHBCPT:onMHVbeyond}
N.~{Arkani-Hamed}, J.~L. {Bourjaily}, F.~{Cachazo}, A.~{Postnikov} and
  J.~{Trnka},
  \href{\detokenize{http://dx.doi.org/10.1007/JHEP06(2015)179}}{\textit{On-shell
  structures of {MHV} amplitudes beyond the planar limit}},
  \href{\detokenize{http://dx.doi.org/10.1007/JHEP06(2015)179}}{\emph{JHEP}
  \textbf{2015} (June, 2015) }p.~179,
  \eprintlink{http://arxiv.org/abs/1412.8475}{arXiv:1412.8475 [hep-th]}.

\bibitem{ArkaniHamedHeLam:Stringy}
N.~{Arkani-Hamed}, S.~He and T.~Lam,
  \href{\detokenize{http://dx.doi.org/10.1007/JHEP02(2021)069}}{\textit{Stringy
  canonical forms}},
  \href{\detokenize{http://dx.doi.org/10.1007/JHEP02(2021)069}}{\emph{JHEP}
  \textbf{2021} (Feb., 2021) }p.~69,
  \eprintlink{http://arxiv.org/abs/1912.08707}{arXiv:1912.08707 [hep-th]}.

\bibitem{Avis:RevisedLRS}
D.~Avis,
  \href{\detokenize{http://dx.doi.org/10.1007/978-3-0348-8438-9_9}}{\textit{A
  revised implementation of the reverse search vertex enumeration algorithm}},
  in \emph{Polytopes---Combinatorics and Computation} (G.~Kalai and G.~M.
  Ziegler, eds.), pp.~177--198.
\newblock Birkh{\"a}user, Basel, 2000.
\newblock program website: \url{http://cgm.cs.mcgill.ca/~avis/C/lrs.html}.

\bibitem{BergbauerBrunettiKreimer:RenResSing}
C.~Bergbauer, R.~Brunetti and D.~Kreimer,
  \href{\detokenize{http://arxiv.org/abs/0908.0633}}{``Renormalization and
  resolution of singularities.''} preprint, 2009,
  \eprintlink{http://arxiv.org/abs/0908.0633}{arXiv:0908.0633 [hep-th]}.

\bibitem{BergereCalanMalbouisson:Asymptotic}
M.~C. {Berg{\`e}re}, C.~{de Calan} and A.~P.~C. {Malbouisson},
  \href{\detokenize{http://dx.doi.org/10.1007/BF01248668}}{\textit{A theorem on
  asymptotic expansion of {F}eynman amplitudes}},
  \href{\detokenize{http://dx.doi.org/10.1007/BF01248668}}{\emph{Commun. Math.
  Phys.} \textbf{62} (1978) }pp.~137--158.

\bibitem{Berghoff:WonderfulQFT}
M.~{Berghoff},
  \href{\detokenize{http://dx.doi.org/10.4310/CNTP.2015.v9.n3.a2}}{\textit{Wonderful
  compactifications in quantum field theory}},
  \href{\detokenize{http://dx.doi.org/10.4310/CNTP.2015.v9.n3.a2}}{\emph{Commun.
  Num. Theor. Phys.} \textbf{9} (2015), no.~3 }pp.~477--547,
  \eprintlink{http://arxiv.org/abs/1411.5583}{arXiv:1411.5583 [math-ph]}.

\bibitem{BlochEsnaultKreimer:MotivesGraphPolynomials}
S.~Bloch, H.~Esnault and D.~Kreimer,
  \href{\detokenize{http://dx.doi.org/10.1007/s00220-006-0040-2}}{\textit{On
  motives associated to graph polynomials}},
  \href{\detokenize{http://dx.doi.org/10.1007/s00220-006-0040-2}}{\emph{Commun.
  Math. Phys.} \textbf{267} (2006), no.~1 }pp.~181--225,
  \eprintlink{http://arxiv.org/abs/math/0510011}{arXiv:math/0510011}.

\bibitem{BlochKreimer:MixedHodge}
S.~Bloch and D.~Kreimer,
  \href{\detokenize{http://dx.doi.org/10.4310/CNTP.2008.v2.n4.a1}}{\textit{Mixed
  {Hodge} structures and renormalization in physics}},
  \href{\detokenize{http://dx.doi.org/10.4310/CNTP.2008.v2.n4.a1}}{\emph{Commun.
  Number Theory Phys.} \textbf{2} (2008), no.~4 }pp.~637--718,
  \eprintlink{http://arxiv.org/abs/0804.4399}{arXiv:0804.4399 [hep-th]}.

\bibitem{BognerWeinzierl:ResolutionOfSingularities}
C.~Bogner and S.~Weinzierl,
  \href{\detokenize{http://dx.doi.org/10.1016/j.cpc.2007.11.012}}{\textit{Resolution
  of singularities for multi-loop integrals}},
  \href{\detokenize{http://dx.doi.org/10.1016/j.cpc.2007.11.012}}{\emph{Comput.
  Phys. Commun.} \textbf{178} (2008) }pp.~596--610,
  \eprintlink{http://arxiv.org/abs/0709.4092}{arXiv:0709.4092 [hep-ph]}.

\bibitem{BognerWeinzierl:GraphPolynomials}
C.~Bogner and S.~Weinzierl,
  \href{\detokenize{http://dx.doi.org/10.1142/S0217751X10049438}}{\textit{Feynman
  graph polynomials}},
  \href{\detokenize{http://dx.doi.org/10.1142/S0217751X10049438}}{\emph{Int. J.
  Mod. Phys. A} \textbf{25} (2010) }pp.~2585--2618,
  \eprintlink{http://arxiv.org/abs/1002.3458}{arXiv:1002.3458 [hep-ph]}.

\bibitem{BoninMier:CyclicFlats}
J.~E. Bonin and A.~{de Mier},
  \href{\detokenize{http://dx.doi.org/10.1007/s00026-008-0344-3}}{\textit{The
  lattice of cyclic flats of a matroid}},
  \href{\detokenize{http://dx.doi.org/10.1007/s00026-008-0344-3}}{\emph{Annals
  of Combinatorics} \textbf{12} (July, 2008) }pp.~155--170,
  \eprintlink{http://arxiv.org/abs/math/0505689}{arXiv:math/0505689 [math.CO]}.

\bibitem{BoninKung:Ginvariant}
J.~E. Bonin and J.~P.~S. Kung,
  \href{\detokenize{http://dx.doi.org/10.1016/j.aam.2017.03.001}}{\textit{The
  {$\mathcal{G}$}-invariant and catenary data of a matroid}},
  \href{\detokenize{http://dx.doi.org/10.1016/j.aam.2017.03.001}}{\emph{Adv.
  Appl. Math.} \textbf{94} (2018) }pp.~39--70,
  \eprintlink{http://arxiv.org/abs/1510.00682}{arXiv:1510.00682 [math.CO]}.
  Special issue on the Tutte polynomial.

\bibitem{Borinsky:PhdBook}
M.~Borinsky,
  \href{\detokenize{http://dx.doi.org/10.1007/978-3-030-03541-9_5}}{\emph{Graphs
  in Perturbation Theory: Algebraic Structure and Asymptotics}}.
\newblock Springer Theses. Springer International Publishing, 2018.

\bibitem{BorovikGelfandVinceWhite:FlatsFlagPolytope}
A.~V. Borovik, I.~M. Gelfand, A.~Vince and N.~White,
  \href{\detokenize{http://dx.doi.org/10.1007/BF02558461}}{\textit{The lattice
  of flats and its underlying flag matroid polytope}},
  \href{\detokenize{http://dx.doi.org/10.1007/BF02558461}}{\emph{Annals of
  Combinatorics} \textbf{1} (Dec., 1997) }pp.~17--26.

\bibitem{Borowiecki:HamiltonianMatroids}
M.~Borowiecki,
  \href{\detokenize{http://dx.doi.org/10.1007/BFb0071607}}{\textit{On
  hamiltonian matroids}},  in \emph{Graph Theory} (M.~Borowiecki, J.~W. Kennedy
  and M.~M. Sys{\l}o, eds.), vol.~1018 of \emph{Lecture Notes in Mathematics},
  (Berlin, Heidelberg), pp.~23--27, Springer, 1983.
\newblock Proceedings of a Conference held in {\L}ag\'{o}w, Poland, February
  10--13, 1981.

\bibitem{BHJKSZ:SecDec3.0}
S.~{Borowka}, G.~{Heinrich}, S.~P. {Jones}, M.~{Kerner}, J.~{Schlenk} and
  T.~{Zirke},
  \href{\detokenize{http://dx.doi.org/10.1016/j.cpc.2015.05.022}}{\textit{{SecDec}-3.0:
  numerical evaluation of multi-scale integrals beyond one loop}},
  \href{\detokenize{http://dx.doi.org/10.1016/j.cpc.2015.05.022}}{\emph{Comput.
  Phys. Commun.} \textbf{196} (2015) }pp.~470--491,
  \eprintlink{http://arxiv.org/abs/1502.06595}{arXiv:1502.06595 [hep-ph]}.

\bibitem{BrittoCachazoFengWitten:DirectProof}
R.~Britto, F.~Cachazo, B.~Feng and E.~Witten,
  \href{\detokenize{http://dx.doi.org/10.1103/PhysRevLett.94.181602}}{\textit{Direct
  proof of the tree-level scattering amplitude recursion relation in
  {Y}ang-{M}ills theory}},
  \href{\detokenize{http://dx.doi.org/10.1103/PhysRevLett.94.181602}}{\emph{Phys.
  Rev. Lett.} \textbf{94} (May, 2005) }p.~181602,
  \eprintlink{http://arxiv.org/abs/hep-th/0501052}{arXiv:hep-th/0501052}.

\bibitem{Broadhurst:Wheels}
D.~J. Broadhurst,
  \href{\detokenize{http://dx.doi.org/10.1016/0370-2693(85)90340-5}}{\textit{Evaluation
  of a class of {Feynman} diagrams for all numbers of loops and dimensions}},
  \href{\detokenize{http://dx.doi.org/10.1016/0370-2693(85)90340-5}}{\emph{Physics
  Letters B} \textbf{164} (1985), no.~4--6 }pp.~356--360.

\bibitem{Broadhurst:5loopsbeyond}
D.~J. Broadhurst,
  \href{\detokenize{http://cds.cern.ch/record/164890}}{\textit{Massless scalar
  {Feynman} diagrams: five loops and beyond}},  Tech. Rep. OUT-4102-18, Open
  University, Milton Keynes, Dec., 1985,
  \eprintlink{http://arxiv.org/abs/1604.08027}{arXiv:1604.08027 [hep-th]}.

\bibitem{BroadhurstKreimer:KnotsNumbers}
D.~J. Broadhurst and D.~Kreimer,
  \href{\detokenize{http://dx.doi.org/10.1142/S012918319500037X}}{\textit{Knots
  and numbers in {$\phi^4$} theory to 7 loops and beyond}},
  \href{\detokenize{http://dx.doi.org/10.1142/S012918319500037X}}{\emph{Int. J.
  Mod. Phys. C} \textbf{6} (Aug., 1995) }pp.~519--524,
  \eprintlink{http://arxiv.org/abs/hep-ph/9504352}{arXiv:hep-ph/9504352}.

\bibitem{Brown:PeriodsFeynmanIntegrals}
F.~C.~S. Brown, \href{\detokenize{http://arxiv.org/abs/0910.0114}}{``On the
  periods of some {Feynman} integrals.''} preprint, Oct., 2009,
  \eprintlink{http://arxiv.org/abs/0910.0114}{arXiv:0910.0114 [math.AG]}.

\bibitem{Brown:FeynmanAmplitudesGalois}
F.~C.~S. {Brown},
  \href{\detokenize{http://dx.doi.org/10.4310/CNTP.2017.v11.n3.a1}}{\textit{Feynman
  amplitudes, coaction principle, and cosmic {G}alois group}},
  \href{\detokenize{http://dx.doi.org/10.4310/CNTP.2017.v11.n3.a1}}{\emph{Commun.
  Number Theory Phys.} \textbf{11} (2017), no.~3 }pp.~453--556,
  \eprintlink{http://arxiv.org/abs/1512.06409}{arXiv:1512.06409 [math-ph]}.
  based on lectures (links:
  \href{https://www.youtube.com/watch?v=PlJIECqRZRA}{1},
  \href{https://www.youtube.com/watch?v=BLloW3zzTrc}{2},
  \href{https://www.youtube.com/watch?v=UyjOrKwk3GY}{3} and
  \href{https://www.youtube.com/watch?v=Tco0PP4wvxg}{4}), given at the
  \href{http://www.ihes.fr/}{IH\'{E}S} in May 2015.

\bibitem{BrownKreimer:AnglesScales}
F.~C.~S. Brown and D.~Kreimer,
  \href{\detokenize{http://dx.doi.org/10.1007/s11005-013-0625-6}}{\textit{Angles,
  scales and parametric renormalization}},
  \href{\detokenize{http://dx.doi.org/10.1007/s11005-013-0625-6}}{\emph{Lett.
  Math. Phys.} \textbf{103} (2013), no.~9 }pp.~933--1007,
  \eprintlink{http://arxiv.org/abs/1112.1180}{arXiv:1112.1180 [hep-th]}.

\bibitem{BrownSchnetz:K3phi4}
F.~C.~S. Brown and O.~Schnetz,
  \href{\detokenize{http://dx.doi.org/10.1215/00127094-1644201}}{\textit{A {K3}
  in {$\phi^{4}$}}},
  \href{\detokenize{http://dx.doi.org/10.1215/00127094-1644201}}{\emph{Duke
  Math. J.} \textbf{161} (July, 2012) }pp.~1817--1862,
  \eprintlink{http://arxiv.org/abs/1006.4064}{arXiv:1006.4064 [math.AG]}.

\bibitem{BrownSchnetz:ModularForms}
F.~C.~S. Brown and O.~Schnetz,
  \href{\detokenize{http://dx.doi.org/10.4310/CNTP.2013.v7.n2.a3}}{\textit{Modular
  forms in quantum field theory}},
  \href{\detokenize{http://dx.doi.org/10.4310/CNTP.2013.v7.n2.a3}}{\emph{Commun.
  Num. Theor. Phys.} \textbf{7} (2013), no.~2 }pp.~293--325,
  \eprintlink{http://arxiv.org/abs/1304.5342}{arXiv:1304.5342 [math.AG]}.

\bibitem{BrownSchnetz:ZigZag}
F.~C.~S. Brown and O.~Schnetz,
  \href{\detokenize{http://dx.doi.org/10.1016/j.jnt.2014.09.007}}{\textit{Single-valued
  multiple polylogarithms and a proof of the zig-zag conjecture}},
  \href{\detokenize{http://dx.doi.org/10.1016/j.jnt.2014.09.007}}{\emph{Journal
  of Number Theory} \textbf{148} (Mar., 2015) }pp.~478--506,
  \eprintlink{http://arxiv.org/abs/1208.1890}{arXiv:1208.1890 [math.NT]}.

\bibitem{BrownYeats:SpanningForestPolynomials}
F.~C.~S. Brown and K.~A. Yeats,
  \href{\detokenize{http://dx.doi.org/10.1007/s00220-010-1145-1}}{\textit{Spanning
  forest polynomials and the transcendental weight of {Feynman} graphs}},
  \href{\detokenize{http://dx.doi.org/10.1007/s00220-010-1145-1}}{\emph{Commun.
  Math. Phys.} \textbf{301} (Jan., 2011) }pp.~357--382,
  \eprintlink{http://arxiv.org/abs/0910.5429}{arXiv:0910.5429 [math-ph]}.

\bibitem{Brylawski:CombinatorialSeriesParallel}
T.~H. Brylawski,
  \href{\detokenize{http://dx.doi.org/10.2307/1995423}}{\textit{A combinatorial
  model for series-parallel networks}},
  \href{\detokenize{http://dx.doi.org/10.2307/1995423}}{\emph{Trans. Amer.
  Math. Soc.} \textbf{154} (1971) }pp.~1--22.

\bibitem{Brylawski:AffineTransversal}
T.~H. Brylawski,
  \href{\detokenize{http://dx.doi.org/10.1002/sapm1975542143}}{\textit{An
  affine representation for transversal geometries}},
  \href{\detokenize{http://dx.doi.org/10.1002/sapm1975542143}}{\emph{Studies in
  Applied Mathematics} \textbf{54} (June, 1975) }pp.~143--160.

\bibitem{ChapotonHivertNovelliThibon:OpMould}
F.~{Chapoton}, F.~{Hivert}, J.-C. {Novelli} and J.-Y. {Thibon},
  \href{\detokenize{http://dx.doi.org/10.1093/imrn/rnn018}}{\textit{An
  operational calculus for the mould operad}},
  \href{\detokenize{http://dx.doi.org/10.1093/imrn/rnn018}}{\emph{Int. Math.
  Res. Not.} \textbf{2008} }(2008)
  \eprintlink{http://arxiv.org/abs/0710.0349}{arXiv:0710.0349 [math.QA]}.

\bibitem{ChetyrkinTkachov:IBP}
K.~G. Chetyrkin and F.~V. Tkachov,
  \href{\detokenize{http://dx.doi.org/10.1016/0550-3213(81)90199-1}}{\textit{Integration
  by parts: {The} algorithm to calculate {$\beta$}-functions in 4 loops}},
  \href{\detokenize{http://dx.doi.org/10.1016/0550-3213(81)90199-1}}{\emph{Nucl.
  Phys. B} \textbf{192} (Nov., 1981) }pp.~159--204.

\bibitem{Chopra:STpolyhedron}
S.~Chopra,
  \href{\detokenize{http://dx.doi.org/10.1016/0167-6377(89)90029-1}}{\textit{On
  the spanning tree polyhedron}},
  \href{\detokenize{http://dx.doi.org/10.1016/0167-6377(89)90029-1}}{\emph{Operations
  Research Letters} \textbf{8} (1989), no.~1 }pp.~25--29.

\bibitem{Collins}
J.~C. Collins, {\emph{Renormalization}}.
\newblock Cambridge Monographs on Mathematical Physics. Cambridge University
  Press, 1984.

\bibitem{CK:RH1}
A.~Connes and D.~Kreimer,
  \href{\detokenize{http://dx.doi.org/10.1007/s002200050779}}{\textit{Renormalization
  in quantum field theory and the {R}iemann-{H}ilbert problem {I}: The {H}opf
  algebra structure of graphs and the main theorem}},
  \href{\detokenize{http://dx.doi.org/10.1007/s002200050779}}{\emph{Commun.
  Math. Phys.} \textbf{210} (2000), no.~1 }pp.~249--273,
  \eprintlink{http://arxiv.org/abs/hep-th/9912092}{arXiv:hep-th/9912092}.

\bibitem{CK:RH2}
A.~Connes and D.~Kreimer,
  \href{\detokenize{http://dx.doi.org/10.1007/PL00005547}}{\textit{Renormalization
  in quantum field theory and the {R}iemann-{H}ilbert problem {II}: The
  {$\beta$}-function, diffeomorphisms and the renormalization group}},
  \href{\detokenize{http://dx.doi.org/10.1007/PL00005547}}{\emph{Commun. Math.
  Phys.} \textbf{216} (2001), no.~1 }pp.~215--241,
  \eprintlink{http://arxiv.org/abs/hep-th/0003188}{arXiv:hep-th/0003188}.

\bibitem{CousinsVempala:PracticalVolume}
B.~Cousins and S.~Vempala,
  \href{\detokenize{http://dx.doi.org/10.1007/s12532-015-0097-z}}{\textit{A
  practical volume algorithm}},
  \href{\detokenize{http://dx.doi.org/10.1007/s12532-015-0097-z}}{\emph{Mathematical
  Programming Computation} \textbf{8} (June, 2016) }pp.~133--160. program
  available at
  \url{http://www.mathworks.com/matlabcentral/fileexchange/43596-volume-computation-of-convex-bodies}.

\bibitem{CrapoSchmitt:FreeSubMatroids}
H.~{Crapo} and W.~{Schmitt},
  \href{\detokenize{http://dx.doi.org/10.1016/j.ejc.2004.05.006}}{\textit{A
  free subalgebra of the algebra of matroids}},
  \href{\detokenize{http://dx.doi.org/10.1016/j.ejc.2004.05.006}}{\emph{Eur. J.
  Comb.} \textbf{26} (2005), no.~7 }pp.~1066--1085,
  \eprintlink{http://arxiv.org/abs/math/0409028}{arXiv:math/0409028 [math.CO]}.

\bibitem{Crapo:HigherInvariant}
H.~H. Crapo,
  \href{\detokenize{http://dx.doi.org/10.1016/S0021-9800(67)80051-6}}{\textit{A
  higher invariant for matroids}},
  \href{\detokenize{http://dx.doi.org/10.1016/S0021-9800(67)80051-6}}{\emph{Journal
  of Combinatorial Theory} \textbf{2} (1967), no.~4 }pp.~406--417.

\bibitem{Crapo:Tutte}
H.~H. Crapo,
  \href{\detokenize{http://dx.doi.org/10.1007/BF01817442}}{\textit{The {T}utte
  polynomial}},
  \href{\detokenize{http://dx.doi.org/10.1007/BF01817442}}{\emph{aequationes
  mathematicae} \textbf{3} (Oct., 1969) }pp.~211--229.

\bibitem{Cresson:CalculMoulien}
J.~Cresson,
  \href{\detokenize{http://dx.doi.org/10.5802/afst.1208}}{\textit{Calcul
  moulien}},
  \href{\detokenize{http://dx.doi.org/10.5802/afst.1208}}{\emph{Annales de la
  Facult\'e des sciences de Toulouse Math\'ematiques} \textbf{18} (2009), no.~2
  }pp.~307--395,
  \eprintlink{http://arxiv.org/abs/math/0509548}{arXiv:math/0509548 [math.DS]}.

\bibitem{Crump:PhD}
I.~Crump, \href{\detokenize{http://arxiv.org/abs/1704.06350}}{\emph{Graph
  Invariants with Connections to the Feynman Period in {$\phi^4$} Theory}}.
\newblock Phd thesis, Simon Fraser University, 2017.
\newblock \eprintlink{http://arxiv.org/abs/1704.06350}{arXiv:1704.06350
  [math.CO]}.

\bibitem{Crump:ExtendedPermanent}
I.~Crump,
  \href{\detokenize{http://dx.doi.org/10.4310/CNTP.2017.v11.n4.a2}}{\textit{Properties
  of the extended graph permanent}},
  \href{\detokenize{http://dx.doi.org/10.4310/CNTP.2017.v11.n4.a2}}{\emph{Commun.
  Num. Theor. Phys.} \textbf{11} (2017), no.~4 }pp.~791--836,
  \eprintlink{http://arxiv.org/abs/1608.01414}{arXiv:1608.01414 [math.CO]}.

\bibitem{CrumpDeVosYeats:Permanent}
I.~Crump, M.~{DeVos} and K.~Yeats,
  \href{\detokenize{http://dx.doi.org/10.4171/AIHPD/35}}{\textit{Period
  preserving properties of an invariant from the permanent of signed incidence
  matrices}},
  \href{\detokenize{http://dx.doi.org/10.4171/AIHPD/35}}{\emph{Ann. Inst. Henri
  Poincar\'{e}} \textbf{3} (2016), no.~4 }pp.~429--454,
  \eprintlink{http://arxiv.org/abs/1505.06987}{arXiv:1505.06987 [math.CO]}.

\bibitem{Cunningham:OnMatroidCon}
W.~H. Cunningham,
  \href{\detokenize{http://dx.doi.org/10.1016/0095-8956(81)90097-6}}{\textit{On
  matroid connectivity}},
  \href{\detokenize{http://dx.doi.org/10.1016/0095-8956(81)90097-6}}{\emph{J.
  Combin. Theory Ser. B} \textbf{30} (1981), no.~1 }pp.~94--99.

\bibitem{CunninghamEdmonds:Decomposition}
W.~H. Cunningham and J.~Edmonds,
  \href{\detokenize{http://dx.doi.org/10.4153/CJM-1980-057-7}}{\textit{A
  combinatorial decomposition theory}},
  \href{\detokenize{http://dx.doi.org/10.4153/CJM-1980-057-7}}{\emph{Canadian
  J. Math.} \textbf{32} (June, 1980) }pp.~734--765.

\bibitem{CalanRivasseau:Phi44}
C.~{de~Calan} and V.~Rivasseau,
  \href{\detokenize{http://dx.doi.org/10.1007/BF01206946}}{\textit{Local
  existence of the {B}orel transform in {E}uclidean $\phi^4_4$}},
  \href{\detokenize{http://dx.doi.org/10.1007/BF01206946}}{\emph{Communications
  in Mathematical Physics} \textbf{82} (1981), no.~1 }pp.~69--100.

\bibitem{DeLoeraHawsKoeppe:Ehrhart}
J.~A. {De Loera}, D.~C. {Haws} and M.~{K{\"o}ppe},
  \href{\detokenize{http://dx.doi.org/10.1007/s00454-008-9080-z}}{\textit{Ehrhart
  polynomials of matroid polytopes and polymatroids}},
  \href{\detokenize{http://dx.doi.org/10.1007/s00454-008-9080-z}}{\emph{Discrete
  {\&} Computational Geometry} \textbf{42} (2009), no.~4 }pp.~670--702,
  \eprintlink{http://arxiv.org/abs/0710.4346}{arXiv:0710.4346 [math.CO]}.

\bibitem{DelDucaDixonMaltoni:KKproof}
V.~{Del Duca}, L.~{Dixon} and F.~{Maltoni},
  \href{\detokenize{http://dx.doi.org/10.1016/S0550-3213(99)00809-3}}{\textit{New
  color decompositions for gauge amplitudes at tree and loop level}},
  \href{\detokenize{http://dx.doi.org/10.1016/S0550-3213(99)00809-3}}{\emph{Nuclear
  Physics B} \textbf{571} (Apr., 2000) }pp.~51--70,
  \eprintlink{http://arxiv.org/abs/hep-ph/9910563}{arXiv:hep-ph/9910563}.

\bibitem{DenhamSchulzeWalther:MatroidConf}
G.~Denham, M.~Schulze and U.~Walther,
  \href{\detokenize{http://dx.doi.org/10.1007/s11005-020-01352-3}}{\textit{Matroid
  connectivity and singularities of configuration hypersurfaces}},
  \href{\detokenize{http://dx.doi.org/10.1007/s11005-020-01352-3}}{\emph{Letters
  in Mathematical Physics} \textbf{111} (Feb., 2021) }p.~11,
  \eprintlink{http://arxiv.org/abs/1902.06507}{arXiv:1902.06507 [math.AG]}.

\bibitem{Derksen:SymQSymMatroids}
H.~Derksen,
  \href{\detokenize{http://dx.doi.org/10.1007/s10801-008-0151-2}}{\textit{Symmetric
  and quasi-symmetric functions associated to polymatroids}},
  \href{\detokenize{http://dx.doi.org/10.1007/s10801-008-0151-2}}{\emph{J.
  Algebr. Comb.} \textbf{30} (Aug., 2009) }pp.~43--86,
  \eprintlink{http://arxiv.org/abs/0801.4393}{arXiv:0801.4393 [math.CO]}.

\bibitem{DerksenFink:ValuativeInvariants}
H.~Derksen and A.~Fink,
  \href{\detokenize{http://dx.doi.org/10.1016/j.aim.2010.04.016}}{\textit{Valuative
  invariants for polymatroids}},
  \href{\detokenize{http://dx.doi.org/10.1016/j.aim.2010.04.016}}{\emph{Adv.
  Math.} \textbf{225} (2010), no.~4 }pp.~1840--1892,
  \eprintlink{http://arxiv.org/abs/0908.2988}{arXiv:0908.2988 [math.CO]}.

\bibitem{Doryn:InvariantInvariant}
D.~Doryn,
  \href{\detokenize{http://dx.doi.org/10.4310/ATMP.2017.v21.n8.a3}}{\textit{The
  {$c_2$} invariant is invariant}},
  \href{\detokenize{http://dx.doi.org/10.4310/ATMP.2017.v21.n8.a3}}{\emph{Advances
  in Theoretical and Mathematical Physics} \textbf{21} (2017), no.~8
  }pp.~1953--1989, \eprintlink{http://arxiv.org/abs/1312.7271}{arXiv:1312.7271
  [math.AG]}.

\bibitem{Doryn:4face}
D.~{Doryn},
  \href{\detokenize{http://dx.doi.org/10.4310/ATMP.2018.v22.n2.a3}}{\textit{Dual
  graph polynomials and a 4-face formula}},
  \href{\detokenize{http://dx.doi.org/10.4310/ATMP.2018.v22.n2.a3}}{\emph{Adv.
  Theor. Math. Phys.} \textbf{22} (2018), no.~2 }pp.~395--427,
  \eprintlink{http://arxiv.org/abs/1508.03484}{arXiv:1508.03484 [math.AG]}.

\bibitem{DrummondHennSmirnovSokatchev:MagicConformal4}
J.~M. Drummond, J.~Henn, V.~A. Smirnov and E.~Sokatchev,
  \href{\detokenize{http://dx.doi.org/10.1088/1126-6708/2007/01/064}}{\textit{Magic
  identities for conformal four-point integrals}},
  \href{\detokenize{http://dx.doi.org/10.1088/1126-6708/2007/01/064}}{\emph{JHEP}
  \textbf{2007} (Jan., 2007) }p.~064,
  \eprintlink{http://arxiv.org/abs/hep-th/0607160}{arXiv:hep-th/0607160}.

\bibitem{Dyson:SmatrixQED}
F.~J. Dyson,
  \href{\detokenize{http://dx.doi.org/10.1103/PhysRev.75.1736}}{\textit{The
  {$S$} matrix in quantum electrodynamics}},
  \href{\detokenize{http://dx.doi.org/10.1103/PhysRev.75.1736}}{\emph{Phys.
  Rev.} \textbf{75} (June, 1949) }pp.~1736--1755.

\bibitem{Edmonds:Greedy}
J.~Edmonds,
  \href{\detokenize{http://dx.doi.org/10.1007/BF01584082}}{\textit{Matroids and
  the greedy algorithm}},
  \href{\detokenize{http://dx.doi.org/10.1007/BF01584082}}{\emph{Math.
  Program.} \textbf{1} (Dec., 1971) }pp.~127--136.

\bibitem{EmirisFisikopoulos:PPVA}
I.~Z. Emiris and V.~Fisikopoulos,
  \href{\detokenize{http://dx.doi.org/10.1145/3194656}}{\textit{Practical
  polytope volume approximation}},
  \href{\detokenize{http://dx.doi.org/10.1145/3194656}}{\emph{ACM Trans. Math.
  Softw.} \textbf{44} (June, 2018) }pp.~38:1--38:21. program available at
  \url{https://github.com/vissarion/volume_approximation}.

\bibitem{FeichtnerSturmfels:MatroidPolytopes}
E.~M. Feichtner and B.~Sturmfels,
  \href{\detokenize{http://arxiv.org/abs/math/0411260}}{\textit{Matroid
  polytopes, nested sets and {B}ergman fans}},
  \href{\detokenize{http://arxiv.org/abs/math/0411260}}{\emph{Port. Math.
  (N.S.)} \textbf{62} (2005), no.~4 }pp.~437--468,
  \eprintlink{http://arxiv.org/abs/math/0411260}{arXiv:math/0411260 [math.CO]}.

\bibitem{FlajoletSedgewick:AnalyticCombinatorics}
P.~Flajolet and R.~Sedgewick,
  \href{\detokenize{http://algo.inria.fr/flajolet/Publications/AnaCombi/anacombi.html}}{\emph{Analytic
  combinatorics}}.
\newblock Cambridge University Press, Jan., 2009.

\bibitem{Fujishige:FacesBasePoly}
S.~Fujishige,
  \href{\detokenize{http://dx.doi.org/10.15807/jorsj.27.112}}{\textit{A
  characterization of faces of the base polyhedron associated with a submodular
  system}},
  \href{\detokenize{http://dx.doi.org/10.15807/jorsj.27.112}}{\emph{J. Oper.
  Res. Soc. Japan} \textbf{27} (June, 1984) }pp.~112--129.

\bibitem{Fukuda:cddr+}
K.~Fukuda,
  \href{\detokenize{http://www.inf.ethz.ch/personal/fukudak/cdd_home/}}{\textit{\texttt{cddr+}
  0.77a}},  2007.
\newblock \texttt{C++} implementation of the double description method (dual
  Fourier-Motzkin), website:
  \url{http://www.inf.ethz.ch/personal/fukudak/cdd_home/}.

\bibitem{GeMa:FastPractical}
C.~Ge and F.~Ma,
  \href{\detokenize{http://dx.doi.org/10.1007/978-3-319-19647-3_6}}{\emph{A
  Fast and Practical Method to Estimate Volumes of Convex Polytopes}},
  pp.~52--65.
\newblock Springer International Publishing, Cham, 2015.
\newblock \eprintlink{http://arxiv.org/abs/1401.0120}{arXiv:1401.0120 [cs.CG]}.
\newblock program \texttt{PolyVest} available at
  \url{http://lcs.ios.ac.cn/~zj/polyvest.html}.

\bibitem{GolzPanzerSchnetz:GfParam}
M.~{Golz}, E.~{Panzer} and O.~{Schnetz},
  \href{\detokenize{http://dx.doi.org/10.1007/s11005-016-0935-6}}{\textit{Graphical
  functions in parametric space}},
  \href{\detokenize{http://dx.doi.org/10.1007/s11005-016-0935-6}}{\emph{Letters
  in Mathematical Physics} \textbf{107} (2017), no.~6 }pp.~1177--1192,
  \eprintlink{http://arxiv.org/abs/1509.07296}{arXiv:1509.07296 [hep-th]}.

\bibitem{Grozin:IBP}
A.~G. {Grozin},
  \href{\detokenize{http://dx.doi.org/10.1142/S0217751X11053687}}{\textit{Integration
  by parts: An introduction}},
  \href{\detokenize{http://dx.doi.org/10.1142/S0217751X11053687}}{\emph{Int. J.
  Mod. Phys. A} \textbf{26} (2011), no.~17 }pp.~2807--2854,
  \eprintlink{http://arxiv.org/abs/1104.3993}{arXiv:1104.3993 [hep-ph]}.
  Extended version of the lectures at the school ``Computer Algebra and
  Particle Physics'' at DESY Zeuthen, Germany, March 21--25, 2011.

\bibitem{Hampe:IntersectionRing}
S.~Hampe,
  \href{\detokenize{http://dx.doi.org/10.1016/j.jctb.2016.08.004}}{\textit{The
  intersection ring of matroids}},
  \href{\detokenize{http://dx.doi.org/10.1016/j.jctb.2016.08.004}}{\emph{J.
  Comb. Theory, Ser. B} \textbf{122} (2017) }pp.~578--614,
  \eprintlink{http://arxiv.org/abs/1602.07167}{arXiv:1602.07167 [math.CO]}.

\bibitem{Handbook:BasicConvex}
M.~Henk, J.~{Richter-Gebert} and G.~M. Ziegler, {\textit{Basic properties of
  convex polytopes}},  in \emph{Handbook of discrete and computational
  geometry} (J.~E. Goodman and J.~O'Rourke, eds.), pp.~243--270.
\newblock CRC Press, Boca Raton, FL, USA, 1997.

\bibitem{Hepp:BP}
K.~Hepp, \href{\detokenize{http://dx.doi.org/10.1007/BF01773358}}{\textit{Proof
  of the {B}ogoliubov-{P}arasiuk theorem on renormalization}},
  \href{\detokenize{http://dx.doi.org/10.1007/BF01773358}}{\emph{Comm. Math.
  Phys.} \textbf{2} (1966), no.~1 }pp.~301--326.

\bibitem{HuSchnetzShawYeats:Further}
S.~{Hu}, O.~{Schnetz}, J.~{Shaw} and K.~{Yeats},
  \href{\detokenize{http://arxiv.org/abs/1812.08751}}{\textit{Further
  investigations into the graph theory of {$\phi^4$}-periods and the {$c_2$}
  invariant}},
  \href{\detokenize{http://arxiv.org/abs/1812.08751}}{\emph{Annales de
  l'institut Henri Poincar\'{e} D} }(2018)
  \eprintlink{http://arxiv.org/abs/1812.08751}{arXiv:1812.08751 [hep-th]}.
  \href{https://www.ems-ph.org/journals/forthcoming.php?jrn=aihpd}{accepted
  April 2020}.

\bibitem{JacksonKempfMorales:RobustLegendre}
D.~M. {Jackson}, A.~{Kempf} and A.~H. {Morales},
  \href{\detokenize{http://dx.doi.org/10.1088/1751-8121/aa6abb}}{\textit{A
  robust generalization of the {L}egendre transform for {QFT}}},
  \href{\detokenize{http://dx.doi.org/10.1088/1751-8121/aa6abb}}{\emph{J. Phys.
  A, Math. Theor.} \textbf{50} (May, 2017) }p.~225201,
  \eprintlink{http://arxiv.org/abs/1612.00462}{arXiv:1612.00462 [hep-th]}.

\bibitem{KanekoUeda:GeometricSector}
T.~{Kaneko} and T.~{Ueda},
  \href{\detokenize{http://dx.doi.org/10.1016/j.cpc.2010.04.001}}{\textit{A
  geometric method of sector decomposition}},
  \href{\detokenize{http://dx.doi.org/10.1016/j.cpc.2010.04.001}}{\emph{Comput.
  Phys. Commun.} \textbf{181} (Aug., 2010) }pp.~1352--1361,
  \eprintlink{http://arxiv.org/abs/0908.2897}{arXiv:0908.2897 [hep-ph]}.

\bibitem{Kazakov:MethodOfUniqueness}
D.~I. {Kazakov},
  \href{\detokenize{http://dx.doi.org/10.1016/0370-2693(83)90816-X}}{\textit{The
  method of uniqueness, a new powerful technique for multiloop calculations}},
  \href{\detokenize{http://dx.doi.org/10.1016/0370-2693(83)90816-X}}{\emph{Physics
  Letters B} \textbf{133} (Dec., 1983) }pp.~406--410.

\bibitem{Kazakov:TwoLectures}
D.~I. Kazakov, {``Analytical methods for multiloop calculations (two lectures
  on the method of uniqueness).''} JINR-E2-84-410, Joint Inst. Nucl. Res.,
  Dubna, 1984.

\bibitem{Kazakov:Uniqueness}
D.~I. Kazakov,
  \href{\detokenize{http://dx.doi.org/10.1007/BF01018044}}{\textit{Calculation
  of {F}eynman integrals by the method of {``Uniqueness''}}},
  \href{\detokenize{http://dx.doi.org/10.1007/BF01018044}}{\emph{Theoret. and
  Math. Phys.} \textbf{58} (Mar., 1984) }pp.~223--230.

\bibitem{Kirchhoff:GalvanischeStroeme}
G.~Kirchhoff,
  \href{\detokenize{http://dx.doi.org/10.1002/andp.18471481202}}{\textit{{Ueber
  die Aufl{\"o}sung der Gleichungen, auf welche man bei der Untersuchung der
  linearen Vertheilung galvanischer Str{\"o}me gef{\"u}hrt wird}}},
  \href{\detokenize{http://dx.doi.org/10.1002/andp.18471481202}}{\emph{Annalen
  der Physik und Chemie} \textbf{72} (1847), no.~12 }pp.~497--508.

\bibitem{KleinertSchulteFrohlinde:CriticalPhi4}
H.~Kleinert and V.~Schulte-Frohlinde, {\emph{Critical Properties of
  {$\varphi^4$}-theories}}.
\newblock World Scientific, 2001.

\bibitem{KleissKuijf:5jet}
R.~Kleiss and H.~Kuijf,
  \href{\detokenize{http://dx.doi.org/10.1016/0550-3213(89)90574-9}}{\textit{Multigluon
  cross sections and 5-jet production at hadron colliders}},
  \href{\detokenize{http://dx.doi.org/10.1016/0550-3213(89)90574-9}}{\emph{Nuclear
  Physics B} \textbf{312} (1989), no.~3 }pp.~616--644.

\bibitem{KompanietsPanzer:phi4eps6}
M.~V. {Kompaniets} and E.~{Panzer},
  \href{\detokenize{http://dx.doi.org/10.1103/PhysRevD.96.036016}}{\textit{Minimally
  subtracted six loop renormalization of {$O(n)$}-symmetric {$\phi^4$} theory
  and critical exponents}},
  \href{\detokenize{http://dx.doi.org/10.1103/PhysRevD.96.036016}}{\emph{Phys.
  Rev. D} \textbf{96} (Aug., 2017) }p.~036016,
  \eprintlink{http://arxiv.org/abs/1705.06483}{arXiv:1705.06483 [hep-th]}.

\bibitem{Periods}
M.~Kontsevich and D.~Zagier,
  \href{\detokenize{http://preprints.ihes.fr/M01/M01-22.ps.gz}}{\textit{Periods}},
  in \emph{{Mathematics} {Unlimited} - 2001 and {Beyond}} (B.~Engquist and
  W.~Schmid, eds.), pp.~771--808.
\newblock Springer, 2001.

\bibitem{KreimerYeats:TensorMatroids}
D.~Kreimer and K.~Yeats,
  \href{\detokenize{http://dx.doi.org/10.1016/j.physletb.2011.03.037}}{\textit{Tensor
  structure from scalar {F}eynman matroids}},
  \href{\detokenize{http://dx.doi.org/10.1016/j.physletb.2011.03.037}}{\emph{Phys.
  Lett. B} \textbf{698} (2011), no.~5 }pp.~443--450,
  \eprintlink{http://arxiv.org/abs/1010.5804}{arXiv:1010.5804 [math-ph]}.

\bibitem{Kruskal:SSTandTSP}
J.~B. Kruskal, \href{\detokenize{http://dx.doi.org/10.2307/2033241}}{\textit{On
  the shortest spanning subtree of a graph and the traveling salesman
  problem}},  \href{\detokenize{http://dx.doi.org/10.2307/2033241}}{\emph{Proc.
  Amer. Math. Soc.} \textbf{7} (1956) }pp.~48--50.

\bibitem{Lasserre:LevelSetsPHF}
J.~B. Lasserre,
  \href{\detokenize{http://dx.doi.org/10.1142/S0219198915400010}}{\textit{Level
  sets and nongaussian integrals of positively homogeneous functions}},
  \href{\detokenize{http://dx.doi.org/10.1142/S0219198915400010}}{\emph{International
  Game Theory Review} \textbf{17} (2015), no.~1 }p.~1540001,
  \eprintlink{http://arxiv.org/abs/1110.6632}{arXiv:1110.6632 [math.OC]}.

\bibitem{MafraSchlotterer:BGandBCJ}
C.~R. Mafra and O.~Schlotterer,
  \href{\detokenize{http://dx.doi.org/10.1007/JHEP03(2016)097}}{\textit{{B}erends-{G}iele
  recursions and the {BCJ} duality in superspace and components}},
  \href{\detokenize{http://dx.doi.org/10.1007/JHEP03(2016)097}}{\emph{JHEP}
  \textbf{2016} (Mar., 2016) }p.~97,
  \eprintlink{http://arxiv.org/abs/1510.08846}{arXiv:1510.08846 [hep-th]}.

\bibitem{McKay:SpanningRegular}
B.~D. McKay,
  \href{\detokenize{http://dx.doi.org/10.1016/S0195-6698(83)80045-6}}{\textit{Spanning
  trees in regular graphs}},
  \href{\detokenize{http://dx.doi.org/10.1016/S0195-6698(83)80045-6}}{\emph{European
  J. Combin.} \textbf{4} (1983), no.~2 }pp.~149--160.

\bibitem{NagaokaYazawa:StrictLog}
T.~{Nagaoka} and A.~{Yazawa},
  \href{\detokenize{http://arxiv.org/abs/1904.01800}}{``Strict log-concavity of
  the {K}irchhoff polynomial and its applications to the strong {L}efschetz
  property.''} preprint, Apr., 2019,
  \eprintlink{http://arxiv.org/abs/1904.01800}{arXiv:1904.01800 [math.AC]}.

\bibitem{Nakanishi:GraphTheoryFeynmanIntegrals}
N.~Nakanishi, {\emph{{Graph} theory and {Feynman} integrals}}, vol.~11 of
  \emph{Mathematics and its applications}.
\newblock Gordon and Breach, New York, 1971.

\bibitem{Oxley:MatroidTheory}
J.~G. Oxley,
  \href{\detokenize{http://dx.doi.org/10.1093/acprof:oso/9780198566946.001.0001}}{\emph{Matroid
  theory}}, vol.~21 of \emph{Oxford Graduate Texts in Mathematics}.
\newblock Oxford University Press, 2nd~ed., 2011.

\bibitem{Panzer:PhD}
E.~Panzer, \href{\detokenize{http://dx.doi.org/10.18452/17157}}{\emph{Feynman
  integrals and hyperlogarithms}}.
\newblock PhD thesis, Humboldt-Universit{\"a}t zu Berlin, 2014.
\newblock \eprintlink{http://arxiv.org/abs/1506.07243}{arXiv:1506.07243
  [math-ph]}.

\bibitem{PanzerSchnetz:Phi4Coaction}
E.~Panzer and O.~Schnetz,
  \href{\detokenize{http://dx.doi.org/10.4310/CNTP.2017.v11.n3.a3}}{\textit{The
  {G}alois coaction on {$\phi^4$} periods}},
  \href{\detokenize{http://dx.doi.org/10.4310/CNTP.2017.v11.n3.a3}}{\emph{Commun.
  Num. Theor. Phys.} \textbf{11} (2017), no.~3 }pp.~657--705,
  \eprintlink{http://arxiv.org/abs/1603.04289}{arXiv:1603.04289 [hep-th]}.

\bibitem{Patterson:SingularStructure}
E.~Patterson,
  \href{\detokenize{http://dx.doi.org/10.4310/CNTP.2010.v4.n4.a3}}{\textit{On
  the singular structure of graph hypersurfaces}},
  \href{\detokenize{http://dx.doi.org/10.4310/CNTP.2010.v4.n4.a3}}{\emph{Commun.
  Num. Theor. Phys.} \textbf{4} (Apr., 2010) }pp.~659--708,
  \eprintlink{http://arxiv.org/abs/1004.5166}{arXiv:1004.5166 [math.AG]}.

\bibitem{PeskinSchroeder:QFT}
M.~E. Peskin and D.~V. Schroeder, {\emph{An Introduction to Quantum Field
  Theory}}.
\newblock Frontiers in Physics. CRC Press, 1995.

\bibitem{Postnikov:GenPermuto}
A.~Postnikov,
  \href{\detokenize{http://dx.doi.org/10.1093/imrn/rnn153}}{\textit{Permutohedra,
  associahedra, and beyond}},
  \href{\detokenize{http://dx.doi.org/10.1093/imrn/rnn153}}{\emph{Int. Math.
  Res. Not.} \textbf{2009} (Jan., 2009) }pp.~1026--1106,
  \eprintlink{http://arxiv.org/abs/math/0507163}{arXiv:math/0507163 [math.CO]}.

\bibitem{RivasseauWang:HowToResum}
V.~{Rivasseau} and Z.~{Wang},
  \href{\detokenize{http://dx.doi.org/10.1007/s00023-013-0299-8}}{\textit{How
  to resum {F}eynman graphs}},
  \href{\detokenize{http://dx.doi.org/10.1007/s00023-013-0299-8}}{\emph{Annales
  Henri Poincar{\'e}} \textbf{15} (Nov., 2014) }pp.~2069--2083,
  \eprintlink{http://arxiv.org/abs/1304.5913}{arXiv:1304.5913 [math-ph]}.

\bibitem{Sachs:GraphsMatroidsLattices}
D.~Sachs,
  \href{\detokenize{http://dx.doi.org/10.1016/S0021-9800(70)80025-4}}{\textit{Graphs,
  matroids, and geometric lattices}},
  \href{\detokenize{http://dx.doi.org/10.1016/S0021-9800(70)80025-4}}{\emph{J.
  Comb. Theory} \textbf{9} (1970), no.~2 }pp.~192--199.

\bibitem{SchlenkZirke:SecDec3.0}
J.~{Schlenk} and T.~{Zirke},
  \href{\detokenize{http://dx.doi.org/10.22323/1.235.0106}}{\textit{Calculation
  of multi-loop integrals with {SecDec-3.0}}},  in \emph{Proceedings, 12th
  International Symposium on Radiative Corrections (Radcor 2015) and LoopFest
  XIV (Radiative Corrections for the LHC and Future Colliders): Los Angeles,
  CA, USA, June 15--19, 2015}, vol.~RADCOR2015, p.~106, 2016.
\newblock \eprintlink{http://arxiv.org/abs/1601.03982}{arXiv:1601.03982
  [hep-ph]}.

\bibitem{Schmitt:Incidence}
W.~R. Schmitt,
  \href{\detokenize{http://dx.doi.org/10.1016/0022-4049(94)90105-8}}{\textit{Incidence
  {H}opf algebras}},
  \href{\detokenize{http://dx.doi.org/10.1016/0022-4049(94)90105-8}}{\emph{J.
  Pure Appl. Algebra} \textbf{96} (1994), no.~3 }pp.~299--330.

\bibitem{Schnetz:Census}
O.~Schnetz,
  \href{\detokenize{http://dx.doi.org/10.4310/CNTP.2010.v4.n1.a1}}{\textit{Quantum
  periods: {A} {Census} of {$\phi^4$}-transcendentals}},
  \href{\detokenize{http://dx.doi.org/10.4310/CNTP.2010.v4.n1.a1}}{\emph{Commun.
  Num. Theor. Phys.} \textbf{4} (2010), no.~1 }pp.~1--47,
  \eprintlink{http://arxiv.org/abs/0801.2856}{arXiv:0801.2856 [hep-th]}.

\bibitem{Schnetz:Fq}
O.~Schnetz, \href{\detokenize{http://arxiv.org/abs/0909.0905}}{\textit{Quantum
  field theory over {$\mathbb{F}_q$}}},
  \href{\detokenize{http://arxiv.org/abs/0909.0905}}{\emph{Electron. J.
  Combin.} \textbf{18} (May, 2011) }p.~P102,
  \eprintlink{http://arxiv.org/abs/0909.0905}{arXiv:0909.0905 [math.CO]}.

\bibitem{Schnetz:GraphicalFunctions}
O.~Schnetz,
  \href{\detokenize{http://dx.doi.org/10.4310/CNTP.2014.v8.n4.a1}}{\textit{Graphical
  functions and single-valued multiple polylogarithms}},
  \href{\detokenize{http://dx.doi.org/10.4310/CNTP.2014.v8.n4.a1}}{\emph{Commun.
  Num. Theor. Phys.} \textbf{8} (2014), no.~4 }pp.~589--675,
  \eprintlink{http://arxiv.org/abs/1302.6445}{arXiv:1302.6445 [math.NT]}.

\bibitem{Schnetz:NumbersAndFunctions}
O.~Schnetz,
  \href{\detokenize{http://dx.doi.org/10.1103/PhysRevD.97.085018}}{\textit{Numbers
  and functions in quantum field theory}},
  \href{\detokenize{http://dx.doi.org/10.1103/PhysRevD.97.085018}}{\emph{Phys.
  Rev. D} \textbf{97} (Apr., 2018) }p.~085018,
  \eprintlink{http://arxiv.org/abs/1606.08598}{arXiv:1606.08598 [hep-th]}.

\bibitem{Schultka:ToricFeynman}
K.~{Schultka}, \href{\detokenize{http://arxiv.org/abs/1806.01086}}{``Toric
  geometry and regularization of {F}eynman integrals.''} preprint, June, 2018,
  \eprintlink{http://arxiv.org/abs/1806.01086}{arXiv:1806.01086 [math-ph]}.

\bibitem{Seymour:DecompositionRegular}
P.~D. Seymour,
  \href{\detokenize{http://dx.doi.org/10.1016/0095-8956(80)90075-1}}{\textit{Decomposition
  of regular matroids}},
  \href{\detokenize{http://dx.doi.org/10.1016/0095-8956(80)90075-1}}{\emph{J.
  Comb. Theory, Ser. B} \textbf{28} (June, 1980) }pp.~305--359.

\bibitem{SmirnovSmirnov:HeppSpeer}
A.~V. {Smirnov} and V.~A. {Smirnov},
  \href{\detokenize{http://dx.doi.org/10.1088/1126-6708/2009/05/004}}{\textit{{H}epp
  and {S}peer sectors within modern strategies of sector decomposition}},
  \href{\detokenize{http://dx.doi.org/10.1088/1126-6708/2009/05/004}}{\emph{JHEP}
  \textbf{2009} (May, 2009) }p.~004,
  \eprintlink{http://arxiv.org/abs/0812.4700}{arXiv:0812.4700 [hep-ph]}.

\bibitem{Speer:GeneralizedAmplitudes}
E.~R. Speer, {\emph{Generalized Feynman Amplitudes}}, vol.~62 of \emph{Annals
  of Mathematics Studies}.
\newblock Princeton University Press, New Jersey, Apr., 1969.

\bibitem{Speer:SingularityStructureGenericFeynmanAmplitudes}
E.~R. Speer, \href{\detokenize{http://eudml.org/doc/75859}}{\textit{Ultraviolet
  and infrared singularity structure of generic {Feynman} amplitudes}},
  \href{\detokenize{http://eudml.org/doc/75859}}{\emph{Ann. Inst. H.
  Poincar{\'e} Sect. A} \textbf{23} (1975), no.~1 }pp.~1--21.

\bibitem{Speyer:MatroidKtheory}
D.~E. Speyer,
  \href{\detokenize{http://dx.doi.org/10.1016/j.aim.2009.01.010}}{\textit{A
  matroid invariant via the {$K$}-theory of the {G}rassmannian}},
  \href{\detokenize{http://dx.doi.org/10.1016/j.aim.2009.01.010}}{\emph{Adv.
  Math.} \textbf{221} (2009), no.~3 }pp.~882--913,
  \eprintlink{http://arxiv.org/abs/math/0603551}{arXiv:math/0603551 [math.AG]}.

\bibitem{Todorov:CausalityPositionSpace}
I.~{Todorov},
  \href{\detokenize{http://dx.doi.org/10.1016/j.nuclphysb.2016.03.022}}{\textit{Relativistic
  causality and position space renormalization}},
  \href{\detokenize{http://dx.doi.org/10.1016/j.nuclphysb.2016.03.022}}{\emph{Nuclear
  Physics B} \textbf{912} (2016) }pp.~79--87,
  \eprintlink{http://arxiv.org/abs/1611.08695}{arXiv:1611.08695 [math-ph]}.
  Mathematical Foundations of Quantum Field Theory: A volume dedicated to the
  Memory of Raymond Stora.

\bibitem{Truemper:OnWhitney}
K.~Truemper,
  \href{\detokenize{http://dx.doi.org/10.1002/jgt.3190040106}}{\textit{On
  {W}hitney's 2-isomorphism theorem for graphs}},
  \href{\detokenize{http://dx.doi.org/10.1002/jgt.3190040106}}{\emph{Journal of
  Graph Theory} \textbf{4} (1980), no.~1 }pp.~43--49.

\bibitem{Tsuchiya:BondModular}
M.~Tsuchiya,
  \href{\detokenize{http://dx.doi.org/10.1016/0012-365X(85)90195-5}}{\textit{On
  determination of graph {$G$} whose bond lattice {$\mathscr{L}(G)$} is
  modular}},
  \href{\detokenize{http://dx.doi.org/10.1016/0012-365X(85)90195-5}}{\emph{Discrete
  Mathematics} \textbf{56} (1985), no.~1 }pp.~79--81.

\bibitem{Tsuchiya:BondLattices}
M.~Tsuchiya,
  \href{\detokenize{http://www.jstor.org/stable/43836478}}{\textit{On bond
  lattices of graphs}},
  \href{\detokenize{http://www.jstor.org/stable/43836478}}{\emph{Chin. J.
  Math.} \textbf{20} (Sept., 1992) }pp.~287--299.

\bibitem{Tutte:LecturesOnMatroids}
W.~T. Tutte,
  \href{\detokenize{http://dx.doi.org/10.6028/jres.069B.001}}{\textit{Lectures
  on matroids}},
  \href{\detokenize{http://dx.doi.org/10.6028/jres.069B.001}}{\emph{J. Res.
  Nat. Bur. Standards Sect. B} \textbf{69B} (1965), no.~1--2 }pp.~1--47.
  reprinted in: Selected papers of W.~T. Tutte, Vol. II (D.\ McCarthy, R.~G.\
  Stanton, eds.), Charles Babbage Research Centre, St. Pierre, Manitoba, 1979,
  pp. 439--496.

\bibitem{Tutte:ConnectivityMatroids}
W.~T. Tutte,
  \href{\detokenize{http://dx.doi.org/10.4153/CJM-1966-129-2}}{\textit{Connectivity
  in matroids}},
  \href{\detokenize{http://dx.doi.org/10.4153/CJM-1966-129-2}}{\emph{Can. J.
  Math.} \textbf{18} (1966) }pp.~1301--1324.

\bibitem{UmerAshraf:AnotherVolumeMatroid}
A.~{Umer Ashraf}, \href{\detokenize{http://arxiv.org/abs/1812.09373}}{``Another
  approach to volume of matroid polytopes.''} preprint, Dec., 2018,
  \eprintlink{http://arxiv.org/abs/1812.09373}{arXiv:1812.09373 [math.CO]}.

\bibitem{ManteuffelStuderus:Reduze2}
A.~{von Manteuffel} and C.~{Studerus},
  \href{\detokenize{http://arxiv.org/abs/1201.4330}}{``Reduze 2 - distributed
  {Feynman} integral reduction.''} preprint, Jan., 2012,
  \eprintlink{http://arxiv.org/abs/1201.4330}{arXiv:1201.4330 [hep-ph]}.

\bibitem{Weinberg:HighEnergy}
S.~Weinberg,
  \href{\detokenize{http://dx.doi.org/10.1103/PhysRev.118.838}}{\textit{High-energy
  behavior in quantum field theory}},
  \href{\detokenize{http://dx.doi.org/10.1103/PhysRev.118.838}}{\emph{Phys.
  Rev.} \textbf{118} (May, 1960) }pp.~838--849.

\bibitem{Whitney:NonSeparablePlanar}
H.~Whitney,
  \href{\detokenize{http://dx.doi.org/10.2307/1989545}}{\textit{Non-separable
  and planar graphs}},
  \href{\detokenize{http://dx.doi.org/10.2307/1989545}}{\emph{Trans. Amer.
  Math. Soc.} \textbf{34} (1932), no.~2 }pp.~339--362.

\bibitem{Whitney:2isomorphic}
H.~Whitney, {\textit{2-isomorphic graphs}},  {\emph{Am. J. Math.} \textbf{55}
  (1933), no.~1 }pp.~245--254.

\bibitem{Yeats:SpecialCaseCompletion}
K.~{Yeats},
  \href{\detokenize{http://dx.doi.org/10.4153/CJM-2018-006-5}}{\textit{A
  special case of completion invariance for the {$c_2$} invariant of a graph}},
   \href{\detokenize{http://dx.doi.org/10.4153/CJM-2018-006-5}}{\emph{Can. J.
  Math.} \textbf{70} (2018), no.~6 }pp.~1416--1435,
  \eprintlink{http://arxiv.org/abs/1706.08857}{arXiv:1706.08857 [math.CO]}.

\bibitem{Ziegler:Lectures01}
G.~M. Ziegler,
  \href{\detokenize{http://dx.doi.org/10.1007/978-3-0348-8438-9_1}}{\textit{Lectures
  on 0/1-polytopes}},  in \emph{Polytopes---Combinatorics and Computation}
  (G.~Kalai and G.~M. Ziegler, eds.), pp.~1--41.
\newblock Birkh{\"a}user, Basel, 2000.
\newblock \eprintlink{http://arxiv.org/abs/math/9909177}{arXiv:math/9909177
  [math.CO]}.

\end{thebibliography}\endgroup

\end{document}